\newcommand{\rnote}[1]{\footnote{\color{blue}Ryan: {#1}}}
\newcommand{\pnote}[1]{\footnote{\color{ForestGreen}Pedro: {#1}}}
\newcommand{\excess}{\mathrm{exc}}
\newcommand{\symm}[1]{S_{#1}}
\newcommand{\randB}{\bB}
\newcommand{\Cyc}{\mathrm{Cyc}}
\begin{document}

\title{Explicit near-Ramanujan graphs of every degree}
\author{Sidhanth Mohanty\thanks{EECS Department, University of California Berkeley.  Supported by NSF grant CCF-1718695} \and Ryan O'Donnell\thanks{Computer Science Department, Carnegie Mellon University.  Supported by NSF grant CCF-1717606. This material is based upon work supported by the National Science Foundation under grant numbers listed above. Any opinions, findings and conclusions or recommendations expressed in this material are those of the author and do not necessarily reflect the views of the National Science Foundation (NSF).}
 \and Pedro Paredes${}^{\text{\textdagger}}$}
\date{\today}
\maketitle

\begin{abstract}
    For every constant $d \geq 3$ and $\eps > 0$, we give a deterministic $\poly(n)$-time algorithm that outputs a $d$-regular graph on $\Theta(n)$ vertices that is $\eps$-near-Ramanujan; i.e., its eigenvalues are bounded in magnitude by $2\sqrt{d-1} + \eps$ (excluding the single trivial eigenvalue of~$d$).
\end{abstract}

\thispagestyle{empty}
\newpage
% \clearpage
\setcounter{page}{1}

\section{Introduction}  \label{sec:intro}
In this work, we obtain explicit $d$-regular $\eps$-near-Ramanujan graphs for every~$d \geq 3$ and every $\eps > 0$.  As an example, we give the first explicit family of $7$-regular graphs with $\lambda_2(G), |\lambda_n(G)| \leq 2\sqrt{6} + \eps$. Our main result is the following:
\begin{theorem}                                     \label{thm:us}
    For any $d \geq 3$ and any $\eps > 0$, there is an explicit (deterministic polynomial-time computable) infinite family of $d$-regular graphs $G$ with $\max\{\lambda_2(G), |\lambda_n(G)|\} \leq 2\sqrt{d-1} + \eps$.
\end{theorem}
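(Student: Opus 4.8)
The plan is to combine two well-understood ingredients: (i) a small explicit base graph $\Gbase$ that is itself near-Ramanujan (or Ramanujan), and (ii) a derandomized sequence of random $2$-lifts, following the Bilu--Linial paradigm. The Bilu--Linial approach shows that a random $2$-lift of a $d$-regular graph $\Gbase$ with $\lambda(\Gbase) \le \lambda$ typically produces a new graph whose ``new'' eigenvalues are bounded by roughly $O(\sqrt{d}\,\polylog d)$; iterating $O(\log n)$ times from a constant-size base doubles the number of vertices each time and yields an $n$-vertex graph. The difficulty with getting the bound all the way down to $2\sqrt{d-1}+\eps$, rather than merely $O(\sqrt d)$, is that one needs the new eigenvalues of the $2$-lift to be bounded by essentially $2\sqrt{d-1}+\eps$ at \emph{every} step; this is exactly the content of the recent Friedman-type theorems for random lifts (due to Bordenave--Collins and others), and the content of Friedman's theorem itself for the random-regular-graph base case.

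First I would recall the combinatorial encoding of a $2$-lift: a signing $\randw \in \{\pm 1\}^{E(\Gbase)}$ of the edges of $\Gbase$ gives a symmetric signed adjacency matrix $\randA_{\randw}$, and the eigenvalues of the lift $\Gbase^{\randw}$ are the union of the eigenvalues of $\Adj(\Gbase)$ and those of $\randA_{\randw}$. So it suffices to control $\|\randA_{\randw}\|$, the spectral radius of a \emph{uniformly random signing}, and then to \emph{derandomize} the choice of $\randw$. For the spectral bound on a truly random signing, the tool is the trace method (a.k.a.\ the ``non-backtracking'' / Ihara--Bass switch): for even $k = k(n) = \Theta(\log n / \log\log n)$, bound $\E[\operatorname{tr}(\randA_{\randw}^{k})]$ by summing over closed walks of length $k$ in $\Gbase$, where a walk's contribution survives the expectation only if every edge is traversed an even number of times. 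Grouping walks by their ``shape'' (the quotient multigraph they trace out) and using the fact that $\Gbase$ at each stage has few short cycles (is ``tangle-free'' at the relevant scale), one shows the dominant contribution is from trees-with-one-extra-edge-ish shapes, giving $\E[\operatorname{tr}(\randA_{\randw}^k)] \le n \cdot (2\sqrt{d-1}+o(1))^k$, hence $\|\randA_{\randw}\| \le 2\sqrt{d-1} + \eps$ with high probability.

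Next, the derandomization. The key observation is that the trace-method bound above only uses that $\randw$ is \emph{$k$-wise independent} (or more precisely, that $\E[\prod \randw_e]$ over the multiset of edges in each walk behaves like the fully independent case) --- because each closed walk of length $k$ touches at most $k$ edges, expectations of products of at most $k$ signs appear, so a distribution on $\{\pm1\}^{E(\Gbase)}$ that is $k$-wise independent suffices to reproduce the bound $\E[\operatorname{tr}(\randA_{\randw}^k)] \le n\cdot(2\sqrt{d-1}+\eps)^k$. Such distributions have explicit generators using a seed of only $O(k \log |E(\Gbase)|) = O(\polylog n)$ bits, so one can \emph{enumerate} the entire support in $\poly(n)$ time, compute $\operatorname{tr}(\randA_{\randw}^k)$ for each, and pick the best signing deterministically; by the probabilistic bound at least one signing in the support achieves $\|\randA_{\randw}\| \le 2\sqrt{d-1}+\eps$. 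One must also ensure the lift stays tangle-free/short-cycle-sparse at the next scale, which again can be arranged deterministically (either it's automatic for random-ish lifts, or one re-randomizes with a slightly larger seed and checks).

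Finally I would assemble the iteration: start from an explicit $d$-regular base graph on $n_0 = O(1)$ vertices with $\lambda \le 2\sqrt{d-1}+\eps/2$ (e.g.\ an LPS Ramanujan graph when $d-1$ is a prime power, or a small explicit near-Ramanujan graph in general --- one subtlety is handling arbitrary $d$, which may force the base case to itself be constructed by a separate argument, such as a random $d$-regular graph on $\polylog$ vertices found by brute force, or the ``$2$-lift from $K_{d,d}$ minus a perfect matching'' type trick, combined with Friedman's theorem via derandomization as above). Then apply $t = \lceil\log_2(n/n_0)\rceil$ rounds of the derandomized-$2$-lift step, each time losing at most $\eps/2^{i+1}$ in the spectral bound (so the total loss telescopes to $< \eps/2$), to reach a graph on $\Theta(n)$ vertices with all nontrivial eigenvalues bounded by $2\sqrt{d-1}+\eps$. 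The total running time is $\poly(n)$ since each of the $O(\log n)$ rounds enumerates a support of size $2^{\polylog n}$... \emph{wait}, that's quasipolynomial, not polynomial --- so the real work (and the main obstacle) is to reduce the seed length to $O(\log n)$, i.e.\ to show that an explicit \emph{small-bias} or \emph{limited-independence-with-short-seed} space of signings of size $\poly(n)$ suffices. This requires a more careful trace-method analysis showing the bound is robust to only requiring, say, $(\eps,k)$-wise independence or a pseudorandom distribution fooling the relevant low-degree ``walk polynomials,'' which is exactly where the technical heart of the paper must lie.
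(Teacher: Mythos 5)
Your high-level plan matches the paper: a small explicit base graph that is already near-Ramanujan and bicycle-free, followed by $\Theta(\log N)$ rounds of derandomized edge-signings/$2$-lifts, using the Ihara--Bass correspondence and the trace method to control the new eigenvalues of each signing, and using limited-independence (and small-bias) spaces to keep the seed length down to $O(\log N)$. You also correctly flag the seed-length issue and the right fix. However, there are two substantive gaps.

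\textbf{The base graph is not constant-size, nor polylog-size.} The trace-method argument for the random signing crucially needs the \emph{current} graph $G$ to be bicycle-free (tangle-free) at radius $r \gg (\log\log |V(G)|)^2$. This radius does \emph{not} increase under $2$-lifts (it is merely preserved), so whatever radius the base graph starts with is the radius you have at the end. If you start from $n_0 = O(1)$ vertices, the bicycle-free radius is $O(1)$; if $n_0 = \polylog N$, the radius is at most $O(\log n_0) = O(\log\log N)$. Both are far too small once $|V| = N$, since you need $r \gg (\log\log N)^2$. The paper therefore takes $n_0 = 2^{\Theta(\sqrt{\log N})}$ (anything between $2^{\omega((\log\log N)^2)}$ and $2^{O(\sqrt{\log N})}$ works), which is itself superpolynomial in $\log N$. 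Constructing \emph{that} base graph explicitly is a central chunk of the paper, not a side subtlety: it requires derandomizing Bordenave's full proof of Friedman's theorem using $O(\log n_0)$-wise uniform \emph{permutations} (Kassabov / Kaplan--Naor--Reingold) in the configuration or random-lift model, which yields a $\operatorname{quasipoly}(n_0) = \poly(N)$-time construction precisely because $n_0 \le 2^{O(\sqrt{\log N})}$. Your suggestion of ``brute force over $\polylog N$ vertices'' or ``LPS when $d-1$ is a prime power'' would not supply a base graph with the needed bicycle-free radius.

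\textbf{No telescoping of error is needed (and it would fail if attempted).} The eigenvalues of a $2$-lift are exactly the multiset union of the old graph's eigenvalues and those of the signed adjacency matrix. The old eigenvalues are preserved \emph{exactly}, so the bound $\lambda \le 2\sqrt{d-1}+\eps$ does not degrade across iterations; you only need each round's \emph{new} eigenvalues to be $\le 2\sqrt{d-1}+\eps$. Your proposed budget of $\eps/2^{i+1}$ at round $i$ would, at the last round $i = \Theta(\log N)$, demand error $\eps \cdot N^{-\Theta(1)}$, which is far below what the trace method gives (roughly $1/\polylog$). Fortunately the telescoping is simply unnecessary. (The only thing that must hold simultaneously across all $\Theta(\log N)$ rounds is that each signing is ``good''; the paper handles this by a union bound and, to save seed, reuses a single $(\delta,k)$-wise uniform string for every round.)

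Finally, a smaller caveat: you write $\E[\operatorname{tr}(\randA_{\randw}^k)]$ directly, but the plain trace method on $\randA$ (or even on the non-backtracking $\randB$ without modification) provably fails at the needed walk length; the argument must pass to non-backtracking ``hikes'' together with the bicycle-free structure (counting singleton-free hikes via an encoding argument). You gesture at this with ``tangle-free,'' but the mechanism is load-bearing, not cosmetic.
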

The key technical result that we prove in service of this is the following:
\begin{theorem}                                     \label{thm:key}
    Let $G$ be an arbitrary $d$-regular $n$-vertex graph.  Assume that the $r$-neighborhood of every vertex contains at most one cycle, where $r \gg (\log \log n)^2$.  Then a random edge-signing of $G$ has all its eigenvalues bounded in magnitude by $2\sqrt{d-1} + o_n(1)$, with high probability.
\end{theorem}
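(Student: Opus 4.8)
The plan is to run the trace (moment) method on the random signed adjacency matrix, in the spirit of Füredi--Komlós and Broder--Shamir and with the \emph{tangle-free} refinement of Bordenave. Write $\bA$ for the random signed adjacency matrix of $G$: if $\bw\colon E(G)\to\{\pm1\}$ is the uniformly random edge-signing and $A$ is the adjacency matrix of $G$, then $\bA_{uv}=\bw(uv)\,A_{uv}$. Since $\bA$ is real symmetric and we are signing rather than perturbing $G$, there is no trivial eigenvalue to excise, and it suffices to show $\|\bA\|\le 2\sqrt{d-1}+o_n(1)$ with high probability. For an even integer $2k$ to be fixed later we have $\mathrm{Tr}(\bA^{2k})\ge\|\bA\|^{2k}$ and $\mathrm{Tr}(\bA^{2k})\ge 0$, so by Markov
\[
  \Pr\bigl[\|\bA\|>2\sqrt{d-1}+\eps\bigr]\ \le\ \frac{\E[\mathrm{Tr}(\bA^{2k})]}{(2\sqrt{d-1}+\eps)^{2k}}.
\]
Expanding $\mathrm{Tr}(\bA^{2k})=\sum_{w}\prod_{i=1}^{2k}\bA_{w_{i-1}w_i}$ over closed walks $w=(w_0,\dots,w_{2k}=w_0)$ in $G$ and taking expectations, the contribution of $w$ factors as $\prod_{e}\E\bigl[\bw(e)^{m_e(w)}\bigr]$, where $m_e(w)$ is the number of traversals of $e$ by $w$; this equals $1$ if $w$ traverses every edge of $G$ an even number of times and $0$ otherwise. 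Hence $\E[\mathrm{Tr}(\bA^{2k})]$ is exactly the number of \emph{even} closed walks of length $2k$ in $G$, and the problem is now purely combinatorial.

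I would organize even closed walks by the excess $\excess(H_w):=|E(H_w)|-|V(H_w)|+1\ge 0$ of the subgraph $H_w\subseteq G$ they trace (i.e.\ its number of independent cycles); since each edge of $H_w$ is traversed at least twice, $|E(H_w)|\le k$. An excess-zero walk traces a subtree of $G$, and lifting it into the infinite $d$-regular tree $T_d$ injects the excess-zero closed walks from a fixed start vertex into the closed walks of length $2k$ from the root of $T_d$; as $A_{T_d}$ is self-adjoint with $\|A_{T_d}\|=2\sqrt{d-1}$, there are at most $(2\sqrt{d-1})^{2k}$ of the latter. This is the main term, contributing at most $n\,(2\sqrt{d-1})^{2k}$ in total --- already the target bound up to the factor of $n$. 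For walks of excess $s\ge 1$, a Füredi--Komlós style encoding --- record the tree-like backbone, the $s$ steps that close a cycle (each costing a choice of an already-visited vertex and an incident edge), the positions along the walk where they occur, and the forced edge-reuses --- yields a bound of the shape $n\,(2\sqrt{d-1})^{2k}\,(Ck)^{O(s)}$ for the number of even closed walks of excess exactly $s$.

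The main obstacle --- and the reason the hypothesis on $r$ enters --- is that the Markov step only overcomes the factor of $n$ once $(1+\eps/2\sqrt{d-1})^{2k}\gg n$, which forces $k=\Theta_{d,\eps}(\log n)$, vastly larger than $r$; so one cannot simply invoke ``every $r$-ball has at most one cycle, hence $s\le 1$.'' Instead I would use tangle-freeness quantitatively: if $H_w$ has excess $s$ then no radius-$r$ ball of $G$ contains two of its independent cycles, so these cycles are pairwise ``$r$-separated'' (each is either long or far from the others), and a short argument shows a connected subgraph of $G$ with excess $s$ must have $\Omega(sr)$ edges; therefore $2k\ge 2|E(H_w)|=\Omega(sr)$, so only excesses $s=O(k/r)$ occur. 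Summing the per-excess bound over $1\le s\le O(k/r)$ multiplies the main term by $(Ck)^{O(k/r)}=\exp\bigl(O\bigl(\tfrac{k}{r}\log k\bigr)\bigr)$, and with $k=\Theta(\log n)$ and $r\gg(\log\log n)^2$ this is $\exp(o(\log n))=n^{o(1)}$. Thus $\E[\mathrm{Tr}(\bA^{2k})]\le n^{1+o(1)}(2\sqrt{d-1})^{2k}$, and taking $k$ just above $\tfrac{\log n}{2\log(1+\eps/2\sqrt{d-1})}$ makes the Markov bound $o(1)$; letting $\eps=\eps_n\to 0$ slowly yields the $o_n(1)$ form. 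I expect the genuinely delicate points to be (i) the precise walk-encoding producing the $(Ck)^{O(s)}$ per-excess bound with the tree backbone accounted for without loss, and (ii) the ``$r$-tangle-free $\Rightarrow$ excess $s$ forces $\Omega(sr)$ edges'' lemma, which is where the threshold on $r$ gets calibrated and which is the true heart of the argument.
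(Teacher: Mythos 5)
Your proposal takes a genuinely different route from the paper: you run the trace method directly on the signed adjacency matrix $\bA$, whereas the paper runs it on the non-backtracking matrix $\bB$ of the signed graph, counts ``hikes'' (nearly non-backtracking closed walks), and then translates the resulting bound on $\rho(\bB)$ into a bound on $\rho(\bA)$ via the Ihara--Bass formula (\Cref{prop:IB}, \Cref{cor:IB}). Your main-term argument is clean and correct: the excess-zero even closed walks from a fixed start lift injectively into closed walks in the universal cover $T_d$, and $(A_{T_d}^{2k})_{\mathrm{root},\mathrm{root}}\le\|A_{T_d}\|^{2k}=(2\sqrt{d-1})^{2k}$, so you get the Kesten bound without ever passing through $\bB$. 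The reduction to counting even closed walks (via independence of the edge-signs) and the use of bicycle-freeness plus a Moore-bound-type sparsity lemma to force $\excess = O(k\log k/r)$ also parallel the paper (compare \Cref{thm:sparsity} and \Cref{prop:boundary-bound}), although your stated form ``excess $s$ forces $\Omega(sr)$ edges'' drops a $\log$ factor: the Alon--Hoory--Linial bound gives $v\ge\Omega(sr/\log v)$, which is what actually calibrates the threshold to $r\gg(\log\log n)^2$.

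The genuine gap is the per-excess bound $(Ck)^{O(s)}$ for even closed walks of excess exactly~$s$, which you flag as delicate but treat as a Füredi--Komlós routine. In the adjacency-matrix setting I do not believe the encoding you sketch closes: a walk of excess $s$ may traverse its $s$ ``excess'' edges not $s$ times but up to $\Theta(k)$ times, interleaved with tree excursions, and each of those traversals costs a position in $[2k]$ in the encoding. The number of such re-entries is controlled by the number of \emph{re-traversal episodes}, not by the excess alone, and there is no a priori reason for these to number only $O(s)$ when the walk may backtrack freely. This is exactly the obstruction the paper sidesteps by working with non-backtracking walks: in \Cref{thm:final-estimate}, the ``stale stretches'' are non-backtracking walks of length $\le r$ confined to an at-most-unicyclic subgraph, and such a walk is determined by its endpoint, the number of cycle laps, and a direction --- only $O(r\ell)$ possibilities per stretch, with the number of stretches bounded (essentially by the excess) via \Cref{prop:stale-stretches}. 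That rigidity is specific to non-backtracking walks and does not transfer to the adjacency-matrix trace. To make your approach rigorous one would either need to rule out or charge for the many re-entries into the excess structure --- which I expect would force you to rediscover the non-backtracking bookkeeping in disguise --- or to find some other mechanism for bounding the re-traversal multiplicity. So as written, the ``delicate point (i)'' you identify is not a technical detail to be filled in but the actual heart of the proof, and the sketch does not yet contain the idea needed to close it.

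Finally, a small remark on scope: the paper's $\Cref{thm:keyA}$ proves the stronger statement $\rho(\bA)\le 2\sqrt{d-1}\cdot(1+O((\log\log n)^4/r^2))$ with failure probability $n^{-100}$, and also the version for $(\delta,k)$-wise uniform signs needed for derandomization; your Markov-with-growing-$k$ argument, if the excess bound were repaired, would recover the weaker $o_n(1)$ form of \Cref{thm:key} but not automatically these refinements.
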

See \Cref{sec:ideas} for a comparison of \Cref{thm:key} with a similar theorem of Bilu and Linial~\cite{BL06}, which has an alternate hypothesis and a weaker conclusion.

\subsection{On near-Ramanujan graphs}
Let us put our results into context. Loosely speaking, \emph{expander graphs} are sparse graphs in which every small set of vertices has many edges on its boundary.  For an early paper working out relationships between various possible definitions, see Alon~\cite{Alo86}. For a thorough reference describing expanders' myriad applications and connections to various parts of computer science and mathematics, see the survey of Hoory, Linial, and Wigderson~\cite{HLW06}.

A good way to quantify the definition of expansion is through the eigenvalues of the graph.
\begin{definition}[Graph eigenvalues]
    Let $G$ be an $n$-vertex $d$-regular multigraph. We write \mbox{$\lambda_i = \lambda_i(G)$} for the \emph{eigenvalues} of its adjacency matrix~$A$, and we always assume they are ordered with $\lambda_1 \geq \lambda_2 \geq \cdots \geq \lambda_n$.  A basic fact is that $\lambda_1 = d$ always; this is called the \emph{trivial} eigenvalue.  We also  write $\lambda = \lambda(G) = \max\{\lambda_2, |\lambda_n|\}$.
\end{definition}
The extent to which a $d$-regular graph~$G$ is ``expanding'' is governed by the magnitude of its nontrivial eigenvalues; in particular, by $\lambda_2$ and (to a lesser extent)~$|\lambda_n|$.  Together these are captured by the parameter~$\lambda(G)$.  The smaller $\lambda(G)/d$ is, the better $G$'s expansion; typically, a graph is called expanding when this ratio is bounded away from~$1$.
\begin{definition}[Spectral expanders]
    An infinite sequence of $d$-regular (multi)graphs $(G_n)$ is said to be a \emph{family of expanders} if there is a constant $\delta > 0$ such that $\lambda(G_n) \leq (1-\delta)d$ for all~$n$.
\end{definition}
Pinsker~\cite{Pin73} introduced the English terminology and showed that random (bipartite) graphs have positive expansion properties with high probability (see also~\cite{BK67,LP73}).  Indeed, it can be shown~\cite{Alo86} that a uniformly random $d$-regular graph  has $\lambda_2(G) < (1-\delta)d$ with high probability, for some universal $\delta > 0$ (see \Cref{thm:friedman} below for a much stronger result).  However, for almost all of the numerous practical applications of expanders in theoretical computer science (error correcting codes, derandomization, complexity theory, cryptography, metric embeddings, etc.) it is important for the graphs to be \emph{explicit} --- i.e., constructible by a deterministic polynomial-time algorithm.  Indeed, it is even better if they are \emph{strongly explicit}, meaning that their adjacency list is computable in $\polylog n$ time.  %(For the intermediate notion of ``probabilistically strongly explicit'', see \Cref{thm:prob-strong-explicit}.\rnote{is this the most appropriate internal reference?})

\subsection{Review of Ramanujan and near-Ramanujan families}
Margulis~\cite{Mar73} was the first to provide an explicit expander family; a slight variant of it, which is $8$-regular, was shown~\cite{GG81} to have $\lambda \leq 5\sqrt{2} \approx 7.1$ (see~\cite{HLW06}).  A natural question then is to provide explicit $d$-regular expanders, for various values of~$d$, with~$\lambda$ as small as possible as a function of~$d$.  The well-known \emph{Alon--Boppana bound} shows that $2\sqrt{d-1}$ is essentially a lower bound:% (more on this special quantity in XXX).
\begin{theorem}                                     \label{thm:alon-boppana}
    (\cite{Alo86,Nil91,Fri93}.) Let $G$ be an $n$-vertex $d$-regular multigraph. Then $\lambda_2(G) \geq 2\sqrt{d-1} - O(1/\log^2 n)$.
\end{theorem}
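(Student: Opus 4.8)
The plan is to use the variational characterization $\lambda_2(G) = \max\{\langle f,\Adj f\rangle/\langle f,f\rangle : 0 \neq f \perp \mathbf 1\}$ (valid once $G$ is connected, so that the top eigenvector is $\mathbf 1$) and to exhibit a single test function $f \perp \mathbf 1$ with Rayleigh quotient at least $2\sqrt{d-1} - O(1/\log^2 n)$. This $f$ will be a linear combination of two nonnegative ``bumps'', each supported on a ball of radius $r \asymp \log_{d-1} n$ and modelled on an approximate top eigenfunction of the infinite $d$-regular tree $T_d$, whose spectral radius is exactly $2\sqrt{d-1}$. A few trivial reductions come first: we may assume $d \geq 3$ (the case $d = 2$ being elementary), we may assume $G$ connected (else $\lambda_2 = d$ and there is nothing to prove), and we may assume $n$ large (else $2\sqrt{d-1} - O(1/\log^2 n)$ is negative). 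Set $r = \lfloor c\log_{d-1} n\rfloor$ for a small absolute constant $c < \tfrac12$. The estimate $|B_G(v,\rho)| \leq 1 + d + \cdots + d(d-1)^{\rho-1}$ gives $\mathrm{diam}(G) \geq \log_{d-1} n - O(1) > 2r+2$ for $n$ large, so I may fix vertices $u_1, u_2$ with $\mathrm{dist}_G(u_1,u_2) \geq 2r+2$, whence $B_G(u_1,r+1) \cap B_G(u_2,r+1) = \emptyset$.

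The core is to build the bump near $u_j$ through the universal covering map $\pi_j \colon T_d \to G$ with $\pi_j(o) = u_j$ (the universal cover of a connected $d$-regular (multi)graph is $T_d$). On $T_d$, let $\widetilde f_j$ be the radial function whose value at distance $i \leq r$ from $o$ is $(d-1)^{-i/2}\sin(\tfrac{\pi(i+1)}{r+2})$, and $0$ beyond distance $r$. Here the weight $(d-1)^{-i/2}$ is precisely what conjugates the radial action of $\Adj_{T_d}$ to $\sqrt{d-1}$ times the free path operator $u \mapsto \bigl(i \mapsto u(i-1)+u(i+1)\bigr)$ on $\mathbb{Z}_{\geq 0}$ (whose spectrum tops out at $2$), and the $\sin$ factor --- which vanishes at the ``virtual'' levels $i = -1$ and $i = r+1$ --- makes $\widetilde f_j$ an honest Dirichlet eigenfunction of that path operator on $\{0,\dots,r\}$, hence essentially an eigenfunction of $\Adj_{T_d}$ with eigenvalue $2\sqrt{d-1}\cos(\tfrac{\pi}{r+2})$, modulo a single mismatched term at the root $o$. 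A direct tridiagonal computation then yields $\langle \widetilde f_j,\Adj_{T_d}\widetilde f_j\rangle = 2\sqrt{d-1}\cos(\tfrac{\pi}{r+2})\,\|\widetilde f_j\|^2 + O(1/r^2)$ and $\|\widetilde f_j\|^2 = \Theta(r)$, so $\widetilde f_j$ has Rayleigh quotient $2\sqrt{d-1} - \Theta(1/r^2)$ on $T_d$. Using the $\sin$-taper rather than a hard truncation at radius $r$ is what improves the error from the easy $O(1/r)$ to the required $O(1/r^2)$ --- morally the ``particle in a box of length $r$'' phenomenon.

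Next I transport the bump down to $G$ by setting $f_j(v) := \bigl(\sum_{\widetilde v \in \pi_j^{-1}(v)} \widetilde f_j(\widetilde v)^2\bigr)^{1/2}$. Since $\pi_j$ cannot increase distances, $f_j$ is supported in $B_G(u_j,r)$, and trivially $\|f_j\|_G^2 = \|\widetilde f_j\|_{T_d}^2$. The one substantive inequality is $\langle f_j, \Adj_G f_j\rangle \geq \langle \widetilde f_j, \Adj_{T_d}\widetilde f_j\rangle$: grouping the edges of $T_d$ by the $G$-edge they cover, the tree-edges lying above a fixed edge $\{v,w\}$ of $G$ form a perfect matching between the fibers $\pi_j^{-1}(v)$ and $\pi_j^{-1}(w)$, and Cauchy--Schwarz along that matching bounds their total contribution above by $2 f_j(v) f_j(w)$; summing over $E(G)$ gives the claim. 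Hence each $f_j$ is a nonnegative function supported in $B_G(u_j,r)$ with $\langle f_j,\Adj_G f_j\rangle \geq (2\sqrt{d-1}-O(1/r^2))\|f_j\|^2$. This covering-map step is what keeps the sharp tree estimate from degrading on a graph $G$ that may be saturated with short cycles: no girth or local-tree-likeness hypothesis on $G$ is used.

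Finally, because $B_G(u_1,r+1)$ and $B_G(u_2,r+1)$ are disjoint we have $f_1 \perp f_2$ and $\Adj_G f_1 \perp f_2$, so for any $f = c_1 f_1 + c_2 f_2$ one gets $\langle f,\Adj_G f\rangle = c_1^2\langle f_1,\Adj_G f_1\rangle + c_2^2\langle f_2,\Adj_G f_2\rangle \geq (2\sqrt{d-1}-O(1/r^2))\|f\|^2$. Choosing $(c_1,c_2) \neq (0,0)$ to meet the single linear condition $\langle f,\mathbf 1\rangle = 0$ and invoking the variational principle yields $\lambda_2(G) \geq 2\sqrt{d-1} - O(1/r^2) = 2\sqrt{d-1} - O(1/\log^2 n)$. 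I expect the one genuine obstacle to be pinning down the \emph{quadratic} error $O(1/r^2)$ rather than $O(1/r)$: this is what forces the $\sin$-tapered profile together with a careful accounting of the root/boundary correction terms in the tridiagonal computation, and it is also why the comparison between $G$ and $T_d$ must run through the covering-map Cauchy--Schwarz rather than a cruder edge count.
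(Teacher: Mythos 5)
The paper does not prove \Cref{thm:alon-boppana}; it simply cites \cite{Alo86,Nil91,Fri93}, so there is no ``paper proof'' to compare against. Your argument is correct and is essentially the standard Nilli--Friedman proof: take two far-apart vertices, plant on each a $\sin$-tapered radial test function pulled back from the $d$-regular tree, and use the variational principle with a linear combination orthogonal to $\mathbf 1$. I checked the tridiagonal computation: with $\alpha_i=(d-1)^{-i/2}\beta_i$ and $\beta_i=\sin(\pi(i+1)/(r+2))$, the relation $\beta_{i-1}+\beta_{i+1}=2\cos(\pi/(r+2))\beta_i$ makes $\widetilde f_j$ an exact eigenfunction of $\Adj_{T_d}$ with eigenvalue $2\sqrt{d-1}\cos(\pi/(r+2))$ at every nonzero level, and the sole defect at the root is $(\Adj_{T_d}\widetilde f_j-\lambda\widetilde f_j)(o)=\beta_1/\sqrt{d-1}>0$, contributing a beneficial $+\Theta(1/r^2)$ to $\langle\widetilde f_j,\Adj_{T_d}\widetilde f_j\rangle$; so indeed the Rayleigh quotient on the tree is $\ge 2\sqrt{d-1}\cos(\pi/(r+2))=2\sqrt{d-1}-O(1/r^2)$. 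The covering-map step (pushing down by $f_j(v)=(\sum_{\widetilde v\in\pi_j^{-1}(v)}\widetilde f_j(\widetilde v)^2)^{1/2}$, then Cauchy--Schwarz along the matching of fibers over each $G$-edge) is a clean way to avoid any girth hypothesis; an alternative in the literature is Nilli's direct argument, which defines the radial function by $G$-distance from $u_j$ and exploits the inequality $a\alpha_{i-1}+b\alpha_i+c\alpha_{i+1}\ge\alpha_{i-1}+(d-1)\alpha_{i+1}$ whenever $a\ge1$ and $a+b+c=d$; both work, and yours is the tidier bookkeeping. The orthogonality $f_1\perp f_2$ and $\Adj f_1\perp f_2$ from disjoint radius-$(r+1)$ balls, the diameter lower bound $\mathrm{diam}(G)\gtrsim\log_{d-1}n$, and the reductions to $G$ connected, $d\ge 3$, $n$ large (with the $O(\cdot)$ allowed to depend on $d$) are all handled correctly. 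One minor caveat worth a sentence in a write-up: for multigraphs with self-loops the fiber over a loop at $v$ is a $2$-regular (not perfectly matched) graph on $\pi_j^{-1}(v)$, but the AM--GM bound $\sum_{\text{loop-lifted edges}}2\widetilde f_j(\widetilde v)\widetilde f_j(\widetilde w)\le 2f_j(v)^2$ still gives $\langle f_j,\Adj_G f_j\rangle\ge\langle\widetilde f_j,\Adj_{T_d}\widetilde f_j\rangle$, so the argument goes through for general $d$-regular multigraphs as the theorem requires.
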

On the other hand, using the resolution of the Ramanujan--Petersson conjectures in various number-theoretic settings, %(due to Weil, Eichler, Igusa, Shimura, Deligne, Drinfeld, \dots),
it is possible to construct $d$-regular expander families that meet the bound $\lambda(G) \leq 2\sqrt{d-1}$ for some values of~$d$.  Lubotzky--Phillips--Sarnak~\cite{LPS88} dubbed such graphs \emph{Ramanujan}.
\begin{definition}[Ramanujan graphs]
    A $d$-regular (multi)graph~$G$ is called \emph{(two-sided) Ramanujan} whenever $\lambda(G) \leq 2\sqrt{d-1}$.  When we merely have $\lambda_2(G) \leq 2\sqrt{d-1}$, we call $G$ \emph{one-sided Ramanujan}; if $G$ is bipartite this implies that $|\lambda_i| \leq 2\sqrt{d-1}$ for all $i \neq 1,n$, with $\lambda_n(G) = -d$ being inevitable.
\end{definition}
\noindent We remark that some expander properties (e.g., edge-expansion for small sets) only need a one-sided eigenvalue bound, whereas others (e.g., the Expander Mixing Lemma) need a two-sided bound.

Regarding the explicit construction of $d$-regular Ramanujan graphs using number theory, the case when $d-1$ is an odd prime is due to Ihara~\cite{Iha66} (implicitly) and to Lubotzky--Phillips--Sarnak~\cite{LPS88} and Margulis~\cite{Mar88} (independently); the  $d-1 = 2$ case is by Chiu~\cite{Chi92}; and, the general prime power case mentioned below is due to Morgenstern~\cite{Mor94}.  For extensions to general~$d$ where the eigenvalue bound depends on the number of distinct prime divisors of~$d-1$, see~\cite{Piz90,Cla06}.
\begin{theorem}
    (\cite{Mor94}.) For any $d \geq 3$ with $d-1$ a prime power, there is a strongly explicit family of $d$-regular Ramanujan graphs.
\end{theorem}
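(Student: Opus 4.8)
The plan is to follow Morgenstern's transfer of the Lubotzky--Phillips--Sarnak construction from the number field $\mathbb{Q}$ to a rational function field, where the local field at the relevant place is $\mathbb{F}_q((1/x))$ rather than $\mathbb{R}$ and the symmetric space is replaced by the $(q+1)$-regular tree. Fix the prime power $q = d-1$, set $K = \mathbb{F}_q(x)$, and let $K_\infty = \mathbb{F}_q((1/x))$ be the completion at the degree valuation, with ring of integers $\mathcal{O}_\infty$. The Bruhat--Tits building of $\mathrm{PGL}_2(K_\infty)$ is exactly the infinite $(q+1)$-regular tree $T$, and under the identification $T \cong \mathrm{PGL}_2(K_\infty)/\mathrm{PGL}_2(\mathcal{O}_\infty)$ the adjacency operator of $T$ is the Hecke operator at $\infty$. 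The first step is to exhibit a cocompact arithmetic lattice $\Gamma \le \mathrm{PGL}_2(K_\infty)$ acting on $T$: take a quaternion division algebra $D/K$ that is split at $\infty$ and ramified at two finite places of $K$ (this forces cocompactness), fix a maximal order, and let $\Gamma$ be the image in $\mathrm{PGL}_2(K_\infty)$ of its group of units of reduced norm $1$. One arranges $D$ and the order so that $\Gamma$ acts simply transitively on the vertices of $T$; concretely, Morgenstern writes down an explicit symmetric set of $q+1$ matrices over $\mathbb{F}_q[x]$ generating such a $\Gamma$ (with a separate recipe in the characteristic-$2$ case).

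For an irreducible $p(x) \in \mathbb{F}_q[x]$ of degree $\ell$ coprime to the ramification, reduction modulo $p$ maps $\Gamma$ onto (a subgroup of) $\mathrm{PGL}_2(\mathbb{F}_{q^\ell})$, and the quotient graph $X_\ell := \Gamma(p)\backslash T$ is the Cayley graph of $\mathrm{PGL}_2(\mathbb{F}_{q^\ell})$ or $\mathrm{PSL}_2(\mathbb{F}_{q^\ell})$ with respect to the images of the $q+1$ explicit generators; since those images are computable in $\mathrm{polylog}$ time, the family $(X_\ell)_\ell$ is strongly explicit, it is $(q+1)$-regular, and $|V(X_\ell)| \to \infty$. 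Connectedness follows from strong approximation (the generators surject onto the relevant finite group), and $X_\ell$ is a simple graph once $\ell$ is large. The second, representation-theoretic step is the standard dictionary: every eigenvalue of the adjacency matrix of $X_\ell$ other than the trivial eigenvalue $q+1$ --- together with $-(q+1)$ when the quotient happens to be bipartite --- has the form $q^{1/2}(z_\pi + z_\pi^{-1})$, where $z_\pi$ is the Satake parameter at $\infty$ of a cuspidal automorphic representation $\pi$ of $D^\times$ over $K$; by Jacquet--Langlands $\pi$ corresponds to a cuspidal automorphic representation of $\mathrm{GL}_2$ over $K$, and the Eisenstein and residual contributions only produce the trivial eigenvalue(s).

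The deep input is the final step: Drinfeld's proof of the global Langlands correspondence for $\mathrm{GL}_2$ over function fields establishes the Ramanujan--Petersson conjecture in this setting, i.e.\ $|z_\pi| = 1$, so every nontrivial eigenvalue of $X_\ell$ is bounded in magnitude by $2q^{1/2} = 2\sqrt{d-1}$, giving $\lambda(X_\ell) \le 2\sqrt{d-1}$ (for the non-bipartite members; for bipartite ones one gets the one-sided bound, with $\lambda_n = -(q+1)$ unavoidable). I expect the main obstacle not to be this black box --- which is simply quoted --- but the surrounding bookkeeping: selecting the quaternion algebra and order so that the completion at $\infty$ acts on the $(q+1)$-regular tree as $\mathrm{PGL}_2(K_\infty)$ with a simply transitive (or at worst controlled) lattice, handling the parity of $q$, confirming that the congruence quotients are connected, $(q+1)$-regular, simple, and of unboundedly many distinct orders, excluding non-tempered contributions to the spectrum, and keeping every object --- the generators, the reductions mod $p(x)$, the resulting Cayley graphs --- explicit enough to meet the $\mathrm{polylog}$-time bar.
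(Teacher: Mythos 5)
The paper does not prove this statement; it is simply quoted as a known result from Morgenstern~\cite{Mor94}, so there is no proof of the paper's own to compare against. Your outline is a faithful high-level account of what Morgenstern actually does: he works over $K = \mathbb{F}_q(x)$ with completion $K_\infty = \mathbb{F}_q((1/x))$, so that the Bruhat--Tits tree of $\mathrm{PGL}_2(K_\infty)$ is $(q+1)$-regular, takes a quaternion algebra over $K$ split at $\infty$ and ramified at two finite places, writes down explicit generators for an arithmetic lattice acting on the tree (with a separate recipe for $q$ even), forms the Cayley graphs of $\mathrm{PGL}_2(\mathbb{F}_{q^\ell})$ or $\mathrm{PSL}_2(\mathbb{F}_{q^\ell})$ on those generators, and invokes Drinfeld's proof of the Ramanujan--Petersson conjecture over function fields (via the Langlands correspondence for $\mathrm{GL}_2$ in positive characteristic) to bound the nontrivial eigenvalues by $2\sqrt{q}$. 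Your closing caveat is the right one: at the level of detail given, this is a road map rather than a proof, and the substantive work in Morgenstern's paper is precisely the bookkeeping you list --- verifying that the explicit generators do generate a suitable lattice, handling the characteristic-$2$ case, establishing connectedness and simplicity of the congruence quotients, and excluding non-tempered contributions --- none of which is carried out here. For the purposes of this paper, however, none of that is needed, since the theorem is used only as a black box.
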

For all other values of~$d$ --- e.g., for $d = 7$ --- it is unknown if infinite families of $d$-regular Ramanujan graphs exist (but see \Cref{thm:mss} below for the one-sided bipartite case).  However, it is known that \emph{near}-Ramanujan graph families exist for every~$d$. %, as we now recount.
%\begin{definition}[$\eps$-near-Ramanujan]
%    For $\eps \geq 0$, a $d$-regular (multi)graph~$G$ is called \emph{$\eps$-near-Ramanujan} if $\lambda(G) \leq 2\sqrt{d-1} + \eps$.
%\end{definition}
%Alon~\cite{Alo86} conjectured that a random $n$-vertex $d$-regular graph is $o_n(1)$-near-Ramanujan with high probability, and this was proven two decades later by Friedman~\cite{Fri08}.  Bordenave~\cite{Bor19} has recently given a simpler proof, and our paper will involve modifying and derandomizing Bordenave's proof.
%\begin{theorem}                                     \label{thm:friedman}
%    (\cite{Fri08,Bor19}.)  Fix $d \geq 3$ and let $\bG$ be a uniformly random $d$-regular graph.  Then
%    \[
%        \Pr[\bG \text{ is } \wt{O}(1/\log^2 n)\text{-near-Ramanujan}] \geq 1 - 1/n^{.99}.
%    \]
%\end{theorem}
Alon~\cite{Alo86} conjectured that a random $n$-vertex $d$-regular graph~$\bG$ has $\lambda(\bG) \leq 2\sqrt{d-1} + o_n(1)$ with high probability, and this was proven two decades later by Friedman~\cite{Fri08}.  Bordenave~\cite{Bor19} has recently given a simpler proof, and our paper will involve modifying and derandomizing Bordenave's work.
\begin{theorem}                                     \label{thm:friedman}
    (\cite{Fri08}.)  Fix any $d \geq 3$ and $\eps > 0$ and let $\bG$ be a uniformly random $d$-regular graph.  Then
    \[
        \Pr\bracks*{\lambda(\bG) \leq 2\sqrt{d-1} + \eps} \geq 1 - o_n(1).
    \]
    In fact~\cite{Bor19}, $\bG$ achieves the  subconstant $\eps = \wt{O}(1/\log^2 n)$ with probability at least \mbox{$1-1/n^{.99}$}.
\end{theorem}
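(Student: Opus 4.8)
The plan is to adapt Bordenave's trace-method analysis of the non-backtracking operator, which is the route that also yields the quantitative refinement. The first step is to replace the adjacency matrix $A$ by the $dn \times dn$ non-backtracking matrix $B$ on directed edges, defined by $B_{(u \to v),(x \to y)} = 1$ precisely when $v = x$ and $y \neq u$, and $0$ otherwise. The Ihara--Bass correspondence ties their spectra together: the eigenvalues of $B$ split into a deterministic ``trivial'' part (the Perron value $d-1$, together with $\pm 1$'s of total multiplicity $\approx (d-2)n$) and, for each nontrivial eigenvalue $\lambda$ of $A$, a pair $\{\mu, (d-1)/\mu\}$ with $\mu + (d-1)/\mu = \lambda$. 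Hence, if every nontrivial eigenvalue of $B$ has modulus at most $\sqrt{d-1}\,(1+\eps)$, then each such pair is forced into the thin annulus of radii between $\sqrt{d-1}/(1+\eps)$ and $\sqrt{d-1}\,(1+\eps)$, on which $\mu \mapsto \mu + (d-1)/\mu$ stays within $O(\eps\sqrt{d-1})$ of $[-2\sqrt{d-1}, 2\sqrt{d-1}]$; so it suffices to bound the nontrivial spectral radius of $B$ by $\sqrt{d-1} + o_n(1)$. The point of passing to $B$ is that its powers count non-backtracking walks, whose combinatorics can be controlled.

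Next I would fix a convenient model of the random $d$-regular graph --- the configuration model (a uniformly random pairing of $dn$ half-edges), or an equivalent model built from independent random permutations (plus a random perfect matching when $d$ is odd) --- each of which is contiguous enough that a high-probability spectral bound transfers to the uniform model. The heart of the argument is an estimate on the expectation of $\mathrm{Tr}\bigl(B_k^{(\ell)} (B_k^{(\ell)})^{*}\bigr)$, where $B_k^{(\ell)}$ counts the non-backtracking walks of length roughly $k\ell$ that are \emph{$\ell$-tangle-free}, i.e., traverse a subgraph every radius-$\ell$ ball of which contains at most one cycle. Since a random $d$-regular graph is $\ell$-tangle-free with high probability for $\ell$ of order $\log_{d-1} n$ (a separate, standard first-moment computation), restricting to such walks is free, but it is exactly what prevents the walk count from blowing up. Expanding the trace as a sum over closed walks of length about $2k\ell$, one groups them by the isomorphism type of the ($O(k\ell)$-edge) multigraph they trace out, discards ``wasteful'' types using the tangle-free constraint, and performs a centering step --- replacing each random permutation/pairing matrix by its fluctuation around its mean --- so that in expectation only the walks traversing every edge an even number of times (morally, exactly twice) survive at leading order. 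Counting those and multiplying by the typical size of an entry gives $(d-1)^{k\ell}\cdot n^{o_n(1)}$; taking $k$ of order $\log n$ and applying Markov's inequality then shows that the nontrivial spectral radius of $B$ is at most $\sqrt{d-1} + \wt{O}(1/\log n)$ with the stated probability, which feeds back through Ihara--Bass to the claimed bound on $\lambda(\bG)$; the sharp constant $\eps = \wt{O}(1/\log^2 n)$ in the probability $1 - 1/n^{0.99}$ comes from pushing the moment order up and tracking the error terms more carefully.

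I expect the main obstacle to be exactly the walk-shape bookkeeping in that core estimate: one must enumerate the multigraphs that closed non-backtracking tangle-free walks can trace out and show that any type carrying an edge in excess of ``a tree plus one cycle,'' or a vertex of unusually high degree, is suppressed by enough factors of $1/n$ (or of $1/\sqrt{d-1}$) to beat the entropy of choosing and labeling it --- all while arranging the centering so that the lower-order terms created by the non-independence of the pairing stay of size $n^{o_n(1)}$ rather than $n^{\Omega(1)}$. A secondary subtlety is the passage back from $B$ to $A$: $B$ is not normal, so one works with $\mathrm{Tr}(B^k (B^{*})^k)$ in place of eigenvalues, and must check that the extra $\pm 1$ eigenvalues and the Jordan structure of $B$ contribute nothing to $\lambda(A)$ beyond the trivial eigenvalue. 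Finally, the contiguity transfer and the tangle-free estimate are routine but should be stated with uniform control of all the $o_n(1)$ terms so that they compose cleanly with the moment bound.
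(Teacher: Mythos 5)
The paper does not supply a proof of this theorem; it cites it from Friedman~\cite{Fri08} and Bordenave~\cite{Bor19}, and \Cref{sec:bordenave-derand} only recounts the structure of Bordenave's argument to the extent needed to derandomize its two probabilistic ingredients. Your outline is a faithful high-level summary of that cited proof (pass to the non-backtracking operator via Ihara--Bass, work in the configuration model, restrict to tangle-free/bicycle-free walks so the trace does not blow up, center the matching matrix to kill the trivial eigenvalue, bound the expected trace of powers, and apply Markov), and the quadratic scaling you invoke at the end ($\sqrt{d-1}+\widetilde{O}(1/\log n)$ on $B$ yielding $2\sqrt{d-1}+\widetilde{O}(1/\log^2 n)$ on $A$) is correct and matches \Cref{cor:IB}.

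One small internal slip: early on you claim the annulus $|\mu|\in[\sqrt{d-1}/(1+\eps),\,\sqrt{d-1}(1+\eps)]$ pushes $\mu+(d-1)/\mu$ only $O(\eps\sqrt{d-1})$ beyond the Ramanujan interval, but the map $r\mapsto r+(d-1)/r$ has a critical point at $r=\sqrt{d-1}$, so the excursion is $O(\eps^2\sqrt{d-1})$. That is exactly the quadratic gain you then rely on for the $1/\log^2 n$ error, so your two statements are inconsistent but the later (quadratic) one is the right one. The pieces you flag as the hard work --- enumerating path shapes under the tangle-free constraint and controlling the centered fluctuations under the non-independence of the pairing --- are precisely where Bordenave's thirty-odd pages live and are not filled in by your sketch; but as a blueprint it matches the proof the paper treats as a black box.
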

A natural question then is whether, for every~$d$, one can achieve \emph{explicit} graph families that are ``$\eps$-near-Ramanujan'' as above.  In their work introducing the \emph{zig-zag product}, Reingold--Vadhan--Wigderson~\cite{RVW02} asked whether explicit families could at least reach a bound of~$O(\sqrt{d})$; towards this, their work gave strongly explicit families with $\lambda(G) \leq O(d^{2/3})$.  By extending their approach, Ben-Aroya and Ta-Shma reached~$d^{1/2 + o(1)}$:
\begin{theorem}                                     \label{thm:zigzag}
    (\cite{RVW02,BT11}.)  There are strongly explicit families of $d$-regular multigraphs~$G$ satisfying the bound $\lambda(G) \leq \sqrt{d} \cdot 2^{O(\sqrt{\log d})}$.
\end{theorem}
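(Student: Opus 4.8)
The plan is to construct the graphs by iterating the \emph{zig-zag product} of Reingold--Vadhan--Wigderson, in the refined multi-step form due to Ben-Aroya and Ta-Shma. Two properties of the zig-zag product do the heavy lifting. First, a \emph{spectral} bound: if $G_1$ is a $D$-regular graph on $N$ vertices and $G_2$ is a $d$-regular graph on exactly $D$ vertices, then their zig-zag product is $d^2$-regular on $ND$ vertices, and its normalized second eigenvalue $\bar\lambda$ (that is, $\lambda$ divided by the degree) satisfies $\bar\lambda \le \bar\lambda(G_1) + \bar\lambda(G_2) + \bar\lambda(G_2)^2$. Second, an \emph{explicitness} bound: the rotation map of the product makes $O(1)$ calls to the rotation maps of $G_1$ and $G_2$, so strong explicitness is preserved under the product, and likewise under graph powering and tensoring.

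With these in hand I would proceed as follows. \emph{Base graph:} fix a constant $\lambda_0 < 1$ and obtain a $D_0$-regular graph $H$ on a bounded number of vertices with $\bar\lambda(H) \le \lambda_0$; such $H$ exists because random regular graphs expand (\Cref{thm:friedman} gives far more than needed), and for constant degree it is located by brute-force search, while for growing degree one instead plugs in a suitable explicit base expander of the appropriate degree (e.g.\ one derived from Morgenstern's Ramanujan graphs). One arranges for $H$ to live on a number of vertices equal to a fixed power of its degree, so that the size constraint of the product is met. \emph{Iteration:} starting from $H$, alternately (i) raise the current graph to a suitable power and/or tensor it with a small fixed expander --- this enlarges the vertex set and, transiently, the degree, and raises $\bar\lambda$ to a power; and (ii) take the zig-zag product with $H$ --- this drives the degree back to the target while, by the spectral bound and $\bar\lambda(H) \le \lambda_0$, only mildly worsening $\bar\lambda$. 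A short induction shows the invariant ``$\bar\lambda \le \lambda_0'$'' can be maintained for a fixed constant $\lambda_0' < 1$ throughout the $O(\log n)$ rounds, so the output is a strongly explicit $d$-regular expander on $\Theta(n)$ vertices; as stated, this already yields $\lambda(G) = O(d)$.

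\emph{Getting $\sqrt d \cdot 2^{O(\sqrt{\log d})}$:} to push the bound down, replace each single zig-zag step by a $k$-step composition, in which the combined graph absorbs $k$ copies of a small expander at once, so that the \emph{target} degree can be $d$ while the small factor has degree only about $d^{1/k}$. Taking that small factor to be near-Ramanujan (existence again from \Cref{thm:friedman}) gives it normalized second eigenvalue about $2/\sqrt{d^{1/k}} = 2 d^{-1/(2k)}$. Each of the $k$ composition steps degrades $\bar\lambda$ by a controlled amount; carrying out the $k$-step spectral analysis and optimizing the tradeoff between the degree reduction $d^{-1/(2k)}$ and the accumulated loss leads to the choice $k \asymp \sqrt{\log d}$ and the final bound $\bar\lambda(G) \le 2^{O(\sqrt{\log d})}/\sqrt d$, i.e.\ $\lambda(G) \le \sqrt d \cdot 2^{O(\sqrt{\log d})}$.

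The main obstacle is the spectral analysis of the (multi-step) product. For a single zig-zag step one proves the bound $\bar\lambda(G_1) + \bar\lambda(G_2) + \bar\lambda(G_2)^2$ by splitting a test vector into its part that is constant on each ``cloud'' $\{v\} \times [D]$ and its orthogonal complement, and tracking how the three-phase walk (small step, big step, small step) mixes the two parts; for the $k$-step version one must iterate this decomposition $k$ times without letting the error terms compound, which is where the $2^{O(\sqrt{\log d})}$ loss really originates. The secondary difficulty is pure bookkeeping: every product step requires the vertex count of one factor to equal the degree of the other, so the powering/tensoring steps must be interleaved with exactly the right parameters, and one must verify both that the inductive invariant on $\bar\lambda$ never drifts above threshold and --- in the regime of growing $d$ --- that the construction remains computable in $\polylog n$ time.
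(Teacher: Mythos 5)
The paper does not prove \Cref{thm:zigzag}; it is stated purely as a citation to~\cite{RVW02,BT11}, with the surrounding text only saying that~\cite{RVW02} introduced the zig-zag product and achieved $\lambda(G) \leq O(d^{2/3})$, and that Ben-Aroya--Ta-Shma extended the approach to reach $d^{1/2+o(1)}$. So there is no ``paper's own proof'' to compare against.

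That said, your sketch is a recognizable and essentially correct high-level account of what the cited works actually do. The one-step spectral bound $\bar\lambda \le \bar\lambda_1 + \bar\lambda_2 + \bar\lambda_2^2$, the iteration ``power, then zig-zag back down to degree'' maintaining $\bar\lambda$ bounded away from~$1$, and the observation that this already gives a strongly explicit family (every rotation-map call fans out to $O(1)$ calls on the factors) are all stated accurately. The step you call a ``$k$-step composition'' is what Ben-Aroya and Ta-Shma name the $s$-wide replacement product; your description of it --- replace the small Ramanujan factor of degree $d$ by one of degree roughly $d^{1/k}$, amortize the eigenvalue loss over $k$ consecutive small/big steps, and optimize $k \asymp \sqrt{\log d}$ --- is the right mechanism behind the $2^{O(\sqrt{\log d})}$ factor. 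You also correctly flag that the real difficulty is the $k$-step spectral analysis: the naive approach of iterating the two-block decomposition $k$ times compounds error terms, and BT11 have to be considerably more careful (tracking how the small-step walk mixes within a cloud before the operator norm bound becomes useful) to get the loss down to $2^{O(\sqrt{\log d})}$ rather than something like $2^{O(\log d)} = \poly(d)$. One small clarification: to claim a genuinely strongly explicit family for all $d$ you also need the constant-size base expander $H$ (and the small Ramanujan factor) to be explicitly constructible; brute-force search works when $d$ is a fixed constant, and for $d$ growing you need Morgenstern-type or similar explicit graphs plus some padding to hit the exact required sizes --- you gesture at this, and it is handled in the cited works, but it is a real bookkeeping step rather than a triviality.
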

Bilu and Linial~\cite{BL06} got even closer to~$O(\sqrt{d})$, using a new approach based on random \emph{lifts} that will prove important in our paper.  Their graph families are not strongly explicit, although Bilu--Linial point out they are at least ``probabilistically strongly explicit'' (q.v.~\Cref{thm:prob-strong-explicit}).
\begin{theorem}                                     \label{thm:BL}
    (\cite{BL06}.)  There are explicit families of $d$-regular multigraphs~$G$ satisfying the bound $\lambda(G) \leq \sqrt{d} \cdot O(\log^{1.5} d)$.
\end{theorem}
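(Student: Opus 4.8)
The plan, following the random-lifts approach, is to build the graphs by iterating the $2$-lift (double-cover) operation while controlling the spectrum through edge-signings. Recall that a $2$-lift of a graph $G$ is specified by a signing $s\colon E(G)\to\{\pm1\}$: the lift $\widetilde G_s$ has vertex set $V(G)\times\{\pm1\}$, with each edge $\{u,v\}$ of $G$ giving rise to the edges $\{(u,+),(v,s(\{u,v\}))\}$ and $\{(u,-),(v,-s(\{u,v\}))\}$. The structural fact I would rely on is that $\spec(\widetilde G_s)$ is the multiset union of $\spec(A)$ and $\spec(A_s)$, where $A$ is the adjacency matrix of $G$ and $A_s$ the signed adjacency matrix $(A_s)_{uv}=s(\{u,v\})$. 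Since $\widetilde G_s$ is again $d$-regular, with its trivial eigenvalue coming from the all-ones eigenvector of $A$, this gives $\lambda(\widetilde G_s)=\max\{\lambda(G),\rho(A_s)\}$ as soon as $\rho(A_s)<d$ (which also keeps $\widetilde G_s$ connected), where $\rho$ denotes spectral radius. Hence, starting from the base graph $G_0=K_{d+1}$, which is $d$-regular on $d+1$ vertices with $\lambda(G_0)=1$, and repeatedly passing to a $2$-lift along a signing $s$ with $\rho(A_s)$ small, after $t$ steps one obtains a $d$-regular graph on $(d+1)2^t$ vertices whose $\lambda$ is at most the largest $\rho(A_s)$ used (and $\lambda<d$ is maintained throughout). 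This produces an infinite family, and the entire theorem reduces to: for every $d$-regular graph $G$ there is a signing with $\rho(A_s)\le\sqrt d\cdot O(\log^{1.5}d)$, computable deterministically in time polynomial in the size of $G$.

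For the existence of such a signing I would argue that a uniformly random signing works with high probability, combining two ingredients. The first is a ``converse to the expander mixing lemma'' (the main technical lemma of Bilu--Linial): for any symmetric $\{0,\pm1\}$-matrix $M$ with at most $d$ nonzeros per row and zero diagonal, if $\lvert\langle\mathbf{1}_S,M\mathbf{1}_T\rangle\rvert\le\alpha\sqrt{\lvert S\rvert\,\lvert T\rvert}$ for all vertex subsets $S,T$ --- with a mild logarithmic relaxation permitted when the sets are small --- then $\rho(M)\le O(\alpha\log(d/\alpha))$. The second is concentration: for a random signing, $\langle\mathbf{1}_S,A_s\mathbf{1}_T\rangle$ is a sum of independent $\pm1$ terms, one per edge between $S$ and $T$, hence concentrates at scale $\sqrt{e(S,T)}\le\sqrt{d\min(\lvert S\rvert,\lvert T\rvert)}$; a union bound over all pairs $(S,T)$ --- handling large sets directly and small sets via the permitted slack --- should show that the discrepancy hypothesis holds with high probability for $\alpha=O(\sqrt{d\log d})$. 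Feeding this into the converse mixing lemma yields $\rho(A_s)=O(\sqrt{d\log d}\cdot\log d)=\sqrt d\cdot O(\log^{1.5}d)$. It is worth stressing that this argument is insensitive to the girth or cycle structure of $G$, so it applies to the highly non-tree-like iterated lifts $G_t$; this is precisely the trade-off versus the analysis behind \Cref{thm:key}, which attains the sharper $2\sqrt{d-1}+o_n(1)$ bound but demands that short-range neighborhoods be nearly acyclic.

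It remains to make the construction explicit, i.e., to find a good signing deterministically; note that the analysis above already makes the family ``probabilistically strongly explicit''. For a deterministic polynomial-time algorithm, since the final graph on $N:=(d+1)2^t$ vertices is obtained through $t=O(\log N)$ lifts, each involving a graph on at most $N$ vertices, and the time budget is $\poly(N)$, it suffices to find a good signing of an $N$-vertex graph in $\poly(N)$ time; brute-force search is hopeless, as there are $2^{\Theta(Nd)}$ signings. I would do this by the method of conditional expectations, assigning the $O(Nd)$ sign variables one by one so as to keep a pessimistic estimator below $1$. The estimator must upper bound the probability that the discrepancy test fails and yet be computable in polynomial time despite the exponentially many pairs $(S,T)$ --- for instance by certifying the spectral bound instead through a polynomial-size quantity such as $\mathrm{tr}(A_s^{2k})$ with $k=\Theta(\log N)$, whose conditional expectation expands into only $\poly(N)$ closed-walk terms. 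I expect this derandomization to be the crux of the argument: the challenge is to reconcile an efficiently computable, stepwise-controllable estimator with a target eigenvalue bound that must be completely independent of $N$. By contrast, the $2$-lift bookkeeping and the random-signing analysis should be comparatively routine.
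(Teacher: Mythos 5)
This theorem is a cited result from Bilu--Linial~\cite{BL06}; the paper you are reading does not prove it, it only states it and builds on the derandomization strategy it pioneered. So I am comparing your reconstruction against the known BL argument rather than against anything in this paper.

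Your overall architecture is faithful to BL: starting from $K_{d+1}$, iterating $2$-lifts via edge-signings with $\rho(A_s)$ controlled, and reducing the problem to ``every $d$-regular graph admits a good signing, found deterministically in polynomial time.'' The converse-expander-mixing lemma and the random-signing concentration argument are the right two ingredients, and your remark that this route is insensitive to the cycle structure of $G$ (whereas \Cref{thm:key} needs bicycle-freeness at radius $\omega((\log\log n)^2)$) correctly captures the trade-off between the two approaches and why BL's weaker $\sqrt d\cdot O(\log^{1.5} d)$ bound applies to arbitrary iterated lifts.

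The genuine gap is in your derandomization step, and it is a substantive one. You propose conditional expectations with a pessimistic estimator such as $\operatorname{tr}(A_s^{2k})$, $k=\Theta(\log N)$. The problem is that $\E[\operatorname{tr}(A_s^{2k})]$ counts \emph{even} closed walks of length $2k$, and for graphs with many short cycles (as iterated $2$-lifts of $K_{d+1}$ certainly have) this quantity is far larger than the $\poly(N)\cdot(Cd)^k$ one would need; controlling it is precisely what requires a bicycle-freeness hypothesis in \Cref{sec:random-signs}. So $\operatorname{tr}(A_s^{2k})^{1/2k}$, while always an upper bound on $\rho(A_s)$, is not a \emph{small} upper bound here, and the estimator would not certify the target $\sqrt d\cdot O(\log^{1.5} d)$. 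What BL actually do --- and what this paper alludes to with ``As in Bilu--Linial, \dots derandomized effectively using almost-$k$-wise independent binary random variables'' --- is different: they observe that their concentration argument goes through verbatim when the signing is drawn from a $(\delta,k)$-wise independent family of the appropriate parameters; one then enumerates the $\poly(n)$-size sample space and, per \Cref{fact:check}, tests $\rho(A_s)$ directly for each candidate. This sidesteps the need for an efficiently computable stepwise estimator entirely, and is exactly the derandomization template this paper reuses in \Cref{prop:tracemethod} and \Cref{cor:itsover} for the edge-signing step.

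One smaller caveat: your union bound sketch for the concentration gives deviations involving $\sqrt{\log(n/|S|)}$; getting the final $\alpha=O(\sqrt{d\log d})$ with no $n$-dependence requires exploiting the logarithmic slack in the discrepancy lemma for small sets quite carefully, as BL do. That is a matter of bookkeeping rather than a conceptual issue, but worth flagging as it is where the $\log^{1.5} d$ (rather than, say, $\log^{1.5} n$) is earned.
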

Due to their asymptotic-in-$d$ nature, neither of \Cref{thm:zigzag,thm:BL} gives much help for specific small values of~$d$ not covered by Morgenstern, such as $d = 7$.  In such cases, one can use a simple idea due to Cioab\u{a} and Murty~\cite{CM08} (cf.~\cite{HM06}): take a prime (or prime power) \mbox{$q <  d-1$}, form a $(q+1)$-regular Ramanujan graph, and then add in $d-q-1$ arbitrary perfect matchings.  It is shown in~\cite{CM08} that each perfect matching increases~$\lambda(G)$ by at most~$1$.  Hence:
\begin{theorem}                                     \label{thm:CM}
    (\cite{CM08}.) For any $d \geq 3$, there is a strongly explicit family of $d$-regular multigraphs with $\lambda(G) \leq 2\sqrt{d-1} + \text{gap}(d)$, where $\text{gap}(d)$ denotes the least value~$g$ such that $d-1-g$ is a prime (power).  One can bound $\text{gap}(d)$ by $O(\log^2 d)$ under Cram\'{e}r's conjecture, by $O(\sqrt{d} \log d)$ under the Riemann Hypothesis, or by $O(d^{.525})$ unconditionally.
\end{theorem}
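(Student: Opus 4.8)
The plan is to follow the ``perfect-matching padding'' idea of Cioab\u{a} and Murty~\cite{CM08}. First I would set $g = \mathrm{gap}(d)$ and $q = d - 1 - g$, so that $q$ is a prime power with $q + 1 \le d$, and invoke the known strongly explicit $(q+1)$-regular Ramanujan families --- Morgenstern~\cite{Mor94} for $q$ an odd prime power, with \cite{Chi92} covering $q=2$ --- to obtain a strongly explicit infinite family of $(q+1)$-regular graphs $H$ with $\lambda(H) \le 2\sqrt{q}$. I would then define the target $d$-regular multigraph as $G = H + M_1 + \cdots + M_g$, where each $M_i$ is a perfect matching on the vertex set of $H$; since we only promise a multigraph, the $M_i$ need not be edge-disjoint from $H$ or from one another, so one can even take all of them to be the single matching $j \leftrightarrow j + n/2 \pmod n$ (after labelling the $n$ vertices by $\mathbb{Z}_n$), which is computable in $\mathrm{polylog}(n)$ time and keeps the whole family strongly explicit.

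The spectral heart of the argument is the one-line claim that adding a perfect matching increases $\lambda$ by at most $1$: if $K$ is $k$-regular on $n$ vertices with adjacency matrix $A_K$, and $M$ is the $0/1$ adjacency matrix of a perfect matching, then the all-ones vector $\mathbf{1}$ is still an eigenvector of $A_K + M$, now with eigenvalue $k+1$ (hence the top one), while on the invariant subspace $\mathbf{1}^{\perp}$ we may write $A_{K+M} = A_K + M$ with $\|M\|_{\mathrm{op}} = 1$ (a perfect matching has spectrum $\{\pm 1\}$), so Weyl's inequality moves every nontrivial eigenvalue --- and in particular both $\lambda_2$ and $|\lambda_n|$ --- by at most $1$. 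Iterating this $g$ times from $H$ up to $G$ would give $\lambda(G) \le \lambda(H) + g \le 2\sqrt{q} + g = 2\sqrt{d-1-g} + g \le 2\sqrt{d-1} + g = 2\sqrt{d-1} + \mathrm{gap}(d)$, which is exactly the claim. (Using $g$ copies of a single matching changes nothing, since then $\|gM\|_{\mathrm{op}} = g$.)

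It remains to record the three quantitative estimates for $\mathrm{gap}(d)$, and these are purely analytic number theory that I would import as black boxes: because a prime power is allowed it suffices to bound the largest gap between consecutive primes below $d-1$, which is $O(\log^2 d)$ under Cram\'{e}r's conjecture, $O(\sqrt{d}\,\log d)$ under the Riemann Hypothesis (Cram\'{e}r's classical prime-gap bound), and $O(d^{0.525})$ unconditionally by the Baker--Harman--Pintz theorem. The genuinely spectral content is thus the trivial Weyl argument; the only thing requiring care --- and the closest thing to an obstacle --- is bookkeeping for strong explicitness, namely checking that the Ramanujan families contain $(q+1)$-regular graphs of infinitely many (even) orders, and noting that passing to two disjoint copies, or to a bipartite double cover, is available as a cheap fix should one ever need to adjust the parity of $n$ without disturbing the eigenvalue bound.
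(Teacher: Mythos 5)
Your proposal follows exactly the Cioab\u{a}--Murty argument that the paper sketches just before the theorem statement: fix $q = d-1-\mathrm{gap}(d)$ a prime power, take a strongly explicit $(q+1)$-regular Ramanujan family (Morgenstern), and add $\mathrm{gap}(d)$ perfect matchings, each costing at most $+1$ in $\lambda$. Your Weyl-inequality derivation of the ``$+1$ per matching'' fact is correct and supplies a detail the paper simply cites to~\cite{CM08}.

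One point in your closing parenthetical is wrong, however: neither of your proposed parity fixes actually preserves the eigenvalue bound. The bipartite double cover of a $(q+1)$-regular Ramanujan graph picks up the eigenvalue $-(q+1)$, destroying the two-sided bound that \Cref{thm:CM} asserts. And taking two disjoint copies of $H$ gives $\lambda_2 = q+1$ (the trivial eigenvalue now has multiplicity two); adding perfect matchings between the copies leaves $(\mathbf{1},-\mathbf{1})$ as an eigenvector with eigenvalue $q+1-g$, which for $q$ even moderately large far exceeds $2\sqrt{q}+g$. Weyl's inequality does not rescue this, since it only bounds the change in $\lambda_2$, which is already too big in the disjoint union. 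Fortunately no such fix is needed: the LPS/Morgenstern Cayley graphs always have even order, so perfect matchings on their vertex set exist directly.
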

\noindent For example, this gives strongly explicit $7$-regular multigraphs with $\lambda(G) \leq 2\sqrt{5} + 1 < 5.5$.  For comparison, the Ramanujan bound is $2\sqrt{6} < 4.9$.

Finally, Marcus--Spielman--Srivastava~\cite{MSS15a,MSS15d} recently introduced the \emph{Interlacing Polynomials Method} and used it to show that \emph{one-sided bipartite} Ramanujan graphs exist for all~\mbox{$d \geq 3$} and all even~$n$.  Their proof was merely existential, but Cohen~\cite{Coh16} was able to make it explicit (though not strongly so):
\begin{theorem}                                     \label{thm:mss}
    (\cite{MSS15a,MSS15d,Coh16}.) For any $d \geq 3$, there is an explicit family of one-sided bipartite, $d$-regular, Ramanujan multigraphs.
\end{theorem}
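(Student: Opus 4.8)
The plan is to build the family by iterated \emph{$2$-lifts}, in the random-lift spirit of Bilu--Linial (\Cref{thm:BL}) but replacing their spectral analysis with an interlacing-polynomials existence argument and then derandomizing the search for a good lift. Recall that a $2$-lift of a graph $H$ with adjacency matrix $A$ is specified by a signing $s \colon E(H) \to \{+1,-1\}$; the lift is again $d$-regular, it is bipartite whenever $H$ is (the fibers over the two sides of $H$ give the bipartition of the lift), its vertex count doubles, and its adjacency spectrum is the multiset union of $\spec(A)$ and $\spec(A_s)$, where $A_s$ is the signed adjacency matrix. Starting from a fixed bipartite $d$-regular base graph that is already Ramanujan --- for instance $K_{d,d}$, whose eigenvalues are $\{d,0,\dots,0,-d\}$, or the multigraph on two vertices joined by $d$ parallel edges --- it therefore suffices to show that at each stage there is a signing $s$ of the current graph $H$ with $\lambda_{\max}(A_s) \leq 2\sqrt{d-1}$: since $H$ is bipartite, $\spec(A_s)$ is symmetric about $0$, so this simultaneously forces $\lambda_{\min}(A_s) \geq -2\sqrt{d-1}$, and the resulting lift is again bipartite $d$-regular with all eigenvalues other than the trivial $\pm d$ of magnitude at most $2\sqrt{d-1}$. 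Iterating $k$ times yields a one-sided bipartite Ramanujan graph on $2^k$ times the base size, hence an infinite family.

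The existence of such a signing is the Marcus--Spielman--Srivastava step. First, by the Godsil--Gutman identity the \emph{expected} characteristic polynomial of $A_s$ over a uniformly random signing $s$ equals the matching polynomial $\mu_H(x)$ of $H$, and by the Heilmann--Lieb theorem $\mu_H$ is real-rooted with all roots in $[-2\sqrt{d-1},2\sqrt{d-1}]$, using $d$-regularity. Second, one shows that the family of polynomials $\{\chi(A_s)\}_{s \in \{+1,-1\}^{E(H)}}$ is an \emph{interlacing family}: fixing the signs of the edges one at a time, every partial expected polynomial is the average of its two children's partial expected polynomials, and each such pair has a common interlacer, which reduces again to real-rootedness of convex combinations of the corresponding partially fixed matching-type polynomials. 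The standard property of interlacing families then produces a leaf --- an actual signing $s$ --- whose characteristic polynomial has largest root no larger than that of the root polynomial $\mu_H$, i.e.\ at most $2\sqrt{d-1}$, as required.

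It remains to make the construction \emph{explicit} (Cohen's contribution): we must find such a signing in deterministic $\poly(|V(H)|)$ time, and since $H$ has size polynomial in the final graph while there are only logarithmically many lifting stages, the whole family is then polynomial-time constructible. The idea is the method of conditional expectations down the interlacing tree: having fixed the signs on a subset $F \subseteq E(H)$, compute the two conditional expected characteristic polynomials obtained by additionally setting the next edge to $+1$ or to $-1$, and descend to whichever child has the smaller maximum root; the interlacing-family property guarantees that at least one child does not increase the maximum root, so after $|E(H)|$ steps one reaches a signing with $\lambda_{\max} \leq 2\sqrt{d-1}$. For efficiency one needs (i) that each conditional expected polynomial lies in a tractable class --- it is a partially fixed matching polynomial of $H$, obtainable from a determinant/Pfaffian-type formula or a deletion--contraction recursion --- and (ii) that its largest root can be located, or compared against its sibling's, to sufficient accuracy in polynomial time, which is handled by numerical root isolation together with a root-separation bound, at the cost of at most a $2\sqrt{d-1}+o_n(1)$ target (or exactly $2\sqrt{d-1}$ with a more careful separation argument).

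The main obstacle I expect is precisely this derandomization: turning the pure existence statement ``some child is no worse than its parent'' into an algorithm that provably identifies such a child requires quantitative control on the locations of roots of high-degree real-rooted polynomials whose coefficients may a priori be exponentially large, together with a proof that the class of conditional expected polynomials stays both closed under the conditioning operation and polynomial-time computable. By contrast, the purely combinatorial ingredients --- the $2$-lift eigenvalue decomposition, the reduction via bipartiteness, Godsil--Gutman, Heilmann--Lieb, and the interlacing-family machinery --- are, once set up in the right language, essentially routine.
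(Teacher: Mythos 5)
This theorem is a cited result (\cite{MSS15a,MSS15d,Coh16}); the paper supplies no proof of it, so there is no ``paper's own proof'' to compare against.

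Your reconstruction is a reasonable sketch of \emph{one} of the cited arguments --- the existence proof of \cite{MSS15a} via iterated $2$-lifts of a bipartite Ramanujan base, using Godsil--Gutman, Heilmann--Lieb, and interlacing families, plus a conditional-expectations derandomization in the spirit of \cite{Coh16} --- and it would indeed establish the theorem as literally stated. But it is not the route the cited papers actually take for the version the present paper relies on, and it loses a quantitative feature the paper explicitly uses. Your $2$-lift approach produces graphs only on $n_0 \cdot 2^k$ vertices (a geometric sequence); by contrast \cite{MSS15d} and \cite{Coh16} construct one-sided bipartite $d$-regular Ramanujan multigraphs on \emph{every even} number of vertices, via a different probabilistic model (take the bipartite vertex set $[n]\sqcup[n]$ and form the graph as the union of $d$ independent uniformly random perfect matchings, equivalently the sum of $d$ random permutation matrices), again analyzed by an interlacing family. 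Cohen's polynomial-time derandomization is of that construction, not of the $2$-lift one. The ``all even sizes'' property is what the paper is leaning on when it writes ``# vertices given $n$: $2\lceil n/2\rceil$'' in the comparison table and when it remarks that this theorem ``gives an $n$-vertex graph for every even $n$, which is slightly better than all other results mentioned in this section.'' So: your plan is internally plausible and matches the \emph{existence} half of \cite{MSS15a}, and you correctly identify the crux of the derandomization (efficient computation and root comparison of conditional expected characteristic polynomials), but you should be aware that the actual cited line of work is the matchings-based construction, which buys graphs of every even order rather than a sparse sequence of orders.
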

As mentioned, this theorem gives an $n$-vertex graph for every even~$n$, which is slightly better than all other results mentioned in this section, which merely give graphs for a dense sequence of~$n$'s (typically, a sequence $n_j$ with $n_{j+1} - n_j = o(n_j)$).  Also, as pointed out to us by Nikhil Srivastava, pairing left and right vertices in the construction from \Cref{thm:mss} and merging them gives ``twice-Ramanujan'' graphs of every even degree; i.e., $2d$-regular graphs for all $d\ge 3$ with $\lambda(G)\le4\sqrt{d-1}$.\footnote{We include a short proof here: let $\wt{A} = \begin{bmatrix}0 & A\\ A^T & 0 \end{bmatrix}$ be the adjacency matrix of a $d$-regular bipartite Ramanujan graph.  Then $A+A^T$ is the adjacency matrix of the merged graph.  For any $x$ orthogonal to $\vec{1}$,  $(A+A^T)x = \begin{bmatrix}\Id&\Id\end{bmatrix}\wt{A}\begin{bmatrix}x\\ x\end{bmatrix}$.  Thus $\left\|(A+A^T)x\right\| \le \sqrt{2}\cdot\left\|\wt{A}\begin{bmatrix}x\\ x\end{bmatrix}\right\|$.  Since $\begin{bmatrix}x\\ x\end{bmatrix}$ is orthogonal to both $\begin{bmatrix}\vec{1}\\ \vec{1}\end{bmatrix}$ and $\begin{bmatrix}\vec{1}\\ -\vec{1}\end{bmatrix}$, we have $\left\|{\wt{A}\begin{bmatrix}x \\ x\end{bmatrix}}\right\| \le 2\sqrt{d-1}\sqrt{2}\norm{x}$.  One can then conclude that $\|(A+A^T)x\|\le 4\sqrt{d-1}\|x\|$.}  One can then obtain $(2d+1)$-regular graphs with $\lambda(G)\le 4\sqrt{d-1}+1$ by adding an arbitrary perfect matching via the result of \cite{CM08}.

\newcommand{\specialcell}[2][c]{%
  \begin{tabular}[#1]{@{}c@{}}#2\end{tabular}}
\newcommand{\dingcheck}{{\small \ding{51}}}
\newcommand{\dingboldcheck}{{\small \ding{52}}}
\newcommand{\dingx}{{\small \ding{55}}}
\newcommand{\dingboldx}{{\small \ding{56}}}
\begin{table}[ht] \label{tab:previous-constructions}
\caption{Comparison of our \Cref{thm:us} with prior work.}
\begin{tabular}{lcccccc}
\hline
 {\tiny \textbf{Who?}} &
 {\tiny \textbf{Which \emph{d}?}} &
 \specialcell{{\tiny \textbf{Eigenvalue}} \\ {\tiny \textbf{bound}}}  &
 {\tiny \textbf{2-sided?}} &
 \specialcell{{\tiny \textbf{Strongly}} \\ {\tiny \textbf{explicit?}}} &
 \specialcell{{\tiny \textbf{Always}} \\ {\tiny \textbf{simple?}}} &
\specialcell{{\tiny \textbf{\# vertices}} \\ {\tiny \textbf{given \emph{n}}}}
% \textbf{{\tiny \emph{n} density}}
 \\ \hhline{=======}
 {\tiny \cite{Iha66,LPS88,Mar88,Chi92,Mor94}} & {\tiny prime power\,+\,1} & {\tiny $2\sqrt{d-1}^{\vphantom{i}}$} & \dingcheck & \dingcheck  & \dingcheck & {\tiny $n(1+ o(1))$} \\
 {\tiny \cite{Piz90,Cla06}}& {\tiny any $d$}  & {\tiny $2^{\text{om}(d-1)} \sqrt{d-1}$} \ {\tiny ${}^\ast$} & \dingcheck & \dingcheck%\rnote{This is a guess}
 & \dingx%\rnote{Pizer isn't, according to Li2014}
 & {\tiny $n(1 + o(1))$ } \\
 {\tiny \cite{RVW02,BT11}}& {\tiny any $d$} & {\tiny $\sqrt{d} \cdot 2^{O(\sqrt{\log d})}$} & \dingcheck & \dingcheck & \dingx & {\tiny $\Theta(n)$}%\rnote{I think}
 \\
{\tiny \cite{HM06,CM08}} & {\tiny any $d$} &  {\tiny $\begin{cases}  2\sqrt{d-1} + O(\log^2 d) & {}^\dagger \\ \sqrt{d} \cdot O(\log d) & {}^\ddagger \\ O(d^{.525}) &  \end{cases}$} & \dingcheck & \dingcheck & \phantom{{\tiny ${}^\S$}}\dingcheck {\tiny ${}^\S$}  & {\tiny $n(1 + o(1))$ }  \\
%{\tiny \cite{CM08}} & {\tiny any $d$} &\specialcell{ {\tiny $2\sqrt{d-1} + O(\log^2 d)^\dagger$}\\ {\tiny or, \ $O(d^{.525})$}} & \dingcheck & \dingcheck & \phantom{{\tiny ${}^\ddagger$}}\dingcheck {\tiny ${}^\ddagger$}  & {\tiny $n(1 + o(1))$ }  \\
 {\tiny \cite{BL06}} & {\tiny any $d$}  & {\tiny $\sqrt{d} \cdot O(\log^{1.5} d)$} & \dingcheck & \phantom{{\tiny ${}^\P$}}\dingx {\tiny ${}^\P$}& \dingcheck & {\tiny $n(1 + o(1))$ }  \\
 {\tiny \cite{MSS15a,MSS15d,Coh16}} & {\tiny any $d$}  & {\tiny $2\sqrt{d-1}$} & \dingx & \dingx & \dingx & {\tiny $2\lceil n/2 \rceil$ }  \\  \hline
 {\tiny \textbf{this paper}} & {\tiny any $d$}  & {\tiny $2\sqrt{d-1} + \eps$} & \dingcheck & \phantom{{\tiny ${}^\P$}}\dingx {\tiny ${}^\P$} & \dingcheck & {\tiny $n(1 + o(1))$ } \\ \hline
\end{tabular}

{\tiny ${}^\ast$ In the this entry we have written $\text{om}(d-1)$ for the number of distinct prime divisors of $d-1$.  Thus~\cite{Cla06} generalizes the preceding ``prime power + 1'' entry of~\cite{Mor94}.  Also, $2^{\text{om}(d-1)}$ is at most $2^{O(\log d / \log \log d)} = d^{o(1)}$ for all~$d$, and is $(2+o(1))^{\ln \ln d} = O(\log^{0.7} d)$ for ``most''~$d$.}

{\tiny ${}^\dagger$ Assuming Cram\'{e}r's conjecture. \quad ${}^\ddagger$ Assuming the Riemann Hypothesis.}

{\tiny ${}^\S$ The construction can be made simple at the expense of making it not strongly explicit.}

{\tiny ${}^\P$ The construction is ``probabilistically strongly explicit''.}

\end{table}

\paragraph{Our results.}
As mentioned, our \Cref{thm:us} gives $\poly(n)$-time deterministically computable $n$-vertex $d$-regular graphs~$G$ with $\lambda(G) \leq 2\sqrt{d-1} + \eps$, for any $d \geq 3$ and $\eps > 0$.  To be more precise, the running time of our algorithm is $n^{f(d,\eps)}$ where $f(d,\eps) = O(d^{1/4}\log(d)/\sqrt{\eps})$.  Although our graphs are not strongly explicit, they are ``probabilistically $\poly\log n$-time computable'', a relaxation of a notion defined by~\cite{BL06}.  Essentially, this means we show there \emph{exist} near-Ramanujan graphs whose adjacency lists are computable in $\polylog n$ time, and furthermore there is a $\polylog(n)$-time randomized algorithm for finding them with high probability.  More precisely, the following statement holds:
\begin{theorem}                                     \label{thm:prob-strong-explicit}
    There is a deterministic polynomial-time algorithm with the following properties:
    \begin{itemize}
        \item It takes as input $N$, $d \geq 3$, and $\eps > 0$ written as binary strings.
        \item It also takes as input a ``seed'' $s \in \{0,1\}^{O(\log^2 N)}$ (the $O(\cdot)$ hides a factor of~$O(d^{1/4}\log(d)/\sqrt{\eps})$).
        \item It outputs a Boolean circuit~$C$ that implements the ``adjacency list'' of a $d$-regular graph~$G$ on $N' \sim N$ vertices in $\polylog(N)$ time. (This means that on input $u \in [N']$ and $i \in [d]$, both expressed in binary, $C(u,i)$ outputs the $v \in [N']$ that is the $i$th neighbor of~$u$ in~$G$.)
        \item With high probability over the choice of seed~$s$, the resulting graph $G$ satisfies the bound \mbox{$\lambda(G) \leq 2\sqrt{d-1} + \eps$}.
    \end{itemize}
\end{theorem}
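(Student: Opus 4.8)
The plan is to run the construction behind \Cref{thm:us} --- a chain of $2$-lifts (edge-signings) over a bicycle-free base graph --- but to (a) derandomize each signing with a \emph{small-bias} distribution driven by only $O(\log N)$ random bits instead of by exhaustive search, and (b) exhibit a $\polylog(N)$-time circuit for the resulting adjacency list. Recall the two structural facts. First, if $H'$ is the $2$-lift of $H$ given by a signing $s\colon E(H)\to\{\pm1\}$, then $\spec(H')=\spec(A_H)\sqcup\spec(A_H^{s})$, where $A_H^{s}$ is the signed adjacency matrix. Second, ``bicycle-free at radius $r$'' is preserved under lifts, since a cycle of $H'$ projects to a cycle of $H$ and so no new tangle is created. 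Hence, iterating from a base graph $X$ via signings $s_1,\dots,s_k$ to produce $G=G_k$ on $N'=|X|\cdot 2^{k}\sim N$ vertices, one gets $\lambda(G)\le\max\bigl\{\lambda(X),\,\max_{j}\rho(A_{G_{j-1}}^{s_j})\bigr\}$, so it suffices that: (i) $X$ is $\eps$-near-Ramanujan; (ii) $X$ is bicycle-free at radius $r\gg(\log\log N)^2$ (then so is each $G_j$), so that \Cref{thm:key} applies to every $G_j$; and (iii) each $s_j$ is ``good'', i.e.\ $\rho(A_{G_{j-1}}^{s_j})\le2\sqrt{d-1}+\eps$ and introduces no second cycle into any radius-$r$ ball.

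The crux is obtaining the good signings from a short seed, which I would do by inspecting the moment-method proof of \Cref{thm:key} (the derandomization of Bordenave's argument). That argument bounds $\E\bracks{\mathrm{tr}\bigl((A_H^{s})^{2\ell}\bigr)}$ by expanding into closed walks and using only that $\E\bigl[\prod_e s(e)^{m_e}\bigr]$ vanishes whenever some edge multiplicity $m_e$ is odd. If instead $s$ is merely $\delta$-biased, this expectation is at most $\delta$ in magnitude, and since only $|V(H)|\cdot d^{2\ell}$ closed walks of length $2\ell=\Theta_{d,\eps}(\log|V(H)|)$ contribute, choosing $\delta=|V(H)|^{-C}$ for a large enough constant $C=C(d)$ reproduces the bound $\rho(A_H^{s})\le 2\sqrt{d-1}+\eps$. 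Likewise, ``$s$ creates no second cycle in a radius-$r$ ball'' is a conjunction of $\poly(|V(H)|)$ parity-type events, each failing with probability $\le\tfrac12+\delta$ under a $\delta$-biased $s$, so a union bound preserves bicycle-freeness. A $\delta$-biased distribution on the $O(|V(H)|\,d)$ edge-bits has seed length $O(\log(|V(H)|\,d/\delta))=O(\log|V(H)|)$. For the whole chain we draw one $\delta$-biased string indexed by pairs (level, edge), with $\delta=N^{-C}$; a single union bound over the $\le\log_2 N$ levels then shows that, with probability $1-N^{-\Omega(1)}$ over the $O(\log N)$-bit seed, every $s_j$ is good and every $G_j$ stays bicycle-free, whence $\lambda(G)\le2\sqrt{d-1}+\eps$. (The implied constant in $O(\log N)$ grows with $r$ and $\ell$, i.e.\ polynomially in $d$ and $1/\eps$, matching the claimed $O(d^{1/4}\log d/\sqrt\eps)$ factor.)

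It remains to produce the base graph $X$: $d$-regular, $\eps$-near-Ramanujan, bicycle-free at radius $r\gg(\log\log N)^2$, on $n_1=(d-1)^{\Theta(r)}$ vertices. I would obtain it by applying the same scheme recursively --- $X$ has its own base $X'$ which need only be bicycle-free at radius $r'\gg(\log\log n_1)^2$, vastly smaller than $r$, so the successive base sizes collapse and the recursion reaches, after $O(\log^{\ast}N)$ stages, a graph on $O_{\eps,d}(\log N)$ vertices; a $d$-regular graph of that size which is $\eps$-near-Ramanujan and bicycle-free at the (now constant) required radius exists (a random one works, by \Cref{thm:friedman} and the standard local-sparsity property of random regular graphs), and is found by exhaustive search in $\poly(N)$ time. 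Two points need bookkeeping: that $\eps$ does not accumulate over the $O(\log^{\ast}N)$ stages --- achieved by inflating each stage's base just enough that its signing-error is below $\eps\cdot2^{-(\text{stage index})}$, which only raises the base sizes by $O_{\eps}(1)$ --- and that the total seed remains $O(\log N)$, which holds because the stage sizes $n_1>n_2>\cdots$ collapse super-exponentially, so $\sum_{\ell}O(\log n_\ell)=O(\log n_1)=O((\log\log N)^{2+o(1)}\log d)=o(\log N)$, leaving the $O(\log N)$ of the top-level chain as the dominant term.

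Finally, the circuit. Encode a vertex of $G$ as a nested tuple recording, for each stage of the recursion, its base-graph coordinate together with the string of ``which-sheet'' bits accumulated by that stage's lifts; the whole thing is $\Theta(\log N)$ bits, i.e.\ an element of $[N']$. To answer the query ``$i$-th neighbor of $v$'': compute the $i$-th neighbor of $v$'s base-graph coordinate (recursively, bottoming out at a $\polylog(N)$-size hard-wired adjacency table for the exhaustively-searched graph), then walk up through the lift levels, at each level querying the $\delta$-biased generator for the sign of the unique edge lying under the current one and flipping the corresponding sheet-bit iff that sign is $-1$. Each generator query costs $\polylog(N)$, the recursion depth is $O(\log^{\ast}N)$, and the base tables are $\polylog(N)$, so the total circuit size is $\polylog(N)$; since every graph in the recursion is simple, so is $G$, a simple $d$-regular graph on $N'\sim N$ vertices. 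I expect the main obstacle to be the middle step: verifying that the moment-method proof of \Cref{thm:key} (and the bicycle-freeness argument) truly needs nothing beyond $\delta$-biasedness with $\delta=1/\poly$, so that all failure events across all $O(\log N)$ levels survive one polynomial-size union bound and the seed stays $O(\log N)$.
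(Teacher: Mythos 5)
Your proposal for Step~2 (the top-level chain of $2$-lifts driven by a small-bias string, with one union bound over the $\Theta(\log N)$ levels, and a $\polylog(N)$-time neighbor-query circuit that walks up the levels flipping sheet-bits) is essentially the paper's argument, modulo your use of a single $\delta$-biased string indexed by (level, edge) where the paper reuses a prefix of one $(\delta,k)$-wise uniform string. Either works. The problem is the base graph.

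Your recursive construction of the base graph $X$ does not work, and the obstruction is one the paper flags explicitly: $2$-lifts \emph{preserve} the bicycle-free radius (\Cref{prop:lift-is-bike-free}) but there is no mechanism by which they \emph{increase} it. You argue that to build $X$ (on $n_1 = 2^{\Theta((\log\log N)^2)}$ vertices) by a chain of lifts from a smaller $X'$, you only need $X'$ to be bicycle-free at radius $r' \gg (\log\log n_1)^2$, which is indeed much smaller than $r \gg (\log\log N)^2$. That much is true \emph{for the chain from $X'$ to $X$ to be near-Ramanujan}. But the resulting $X$ is then itself bicycle-free only at radius $r'$, not $r$, and so it cannot serve as the base for the top-level chain that goes up to $N$ vertices: \Cref{thm:keyA} applied to a graph on $m$ vertices requires bicycle-freeness at radius $\gg(\log\log m)^2$, which for $m$ near $N$ is $\gg(\log\log N)^2 \gg r'$. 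The radius requirement is set by the \emph{largest} graph in the whole tower, and since lifting never enlarges the radius, the base at the very bottom must already satisfy it. That forces the bottom graph to have $\geq 2^{\Omega((\log\log N)^2)}$ vertices (since a $d$-regular graph on $m$ vertices is bicycle-free at radius at most $O(\log_{d-1} m)$), and exhaustive search over $d$-regular graphs of that size is super-polynomial in $N$, killing the ``reach $O(\log N)$ vertices and brute-force'' endgame.

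This is exactly why the paper cannot recurse, and instead constructs the base $G_0$ in a single shot from the $k$-wise uniform configuration model (or random $n$-lift of $K_{d+1}$), derandomizing Bordenave's proof directly (\Cref{thm:derand-bord}). A $k$-wise uniform permutation with $k=\Theta(\log n_0)$ yields bicycle-freeness at radius $\Theta(\log n_0)$ by a moments/first-moment argument (\Cref{prop:derand-bicycle-free}), at a seed cost of $O(k\log n_0)=O(\log^2 n_0)$ via \Cref{thm:knr}. Taking $n_0 = 2^{\Theta(\sqrt{\log N})}$ keeps that seed at $O(\log N)$ while giving radius $\Theta(\sqrt{\log N}) \gg (\log\log N)^2$, and then the lift chain you describe takes over. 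So the $k$-wise uniform \emph{permutation} machinery for the base is not an optimization you can avoid; it is the only way, within the paper's toolkit, to manufacture a graph with the required bicycle-free radius from a short seed.
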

The difference betweeen our notion and that of \cite{BL06} is that we use a seed of length-$O(\log^2 N)$, whereas the notion in \cite{BL06} requires the seed length to be $O(\log N)$.
% Given \Cref{thm:prob-strong-explicit}, we can obtain \Cref{thm:us} by taking $N = n$, enumerating all possible $\poly(n)$ seeds~$s$, explicitly constructing each resulting~$G$, and then selecting any of the many ones with $\lambda(G) \leq 2\sqrt{d-1} + \eps$.

\subsection{On Bordenave's theorem with random edge-signs}  \label{sec:on-bordenave}
Since our result may be viewed as a derandomization of the Friedman/Bordenave theorem (\Cref{thm:friedman}), let us take some time to describe this result.  Friedman's original proof is notably quite involved ($100$~pages).  Bordenave's proof is certainly simpler (more like~$30$ pages), although it is by no means easy.  However, Bordenave's proof can become still simpler if one is willing consider a variant: when $\bG$ is not just a random $d$-regular graph, but rather a \emph{randomly edge-signed} random $d$-regular graph.

Let us say a few words about why this makes things simpler.  First, it turns out that in this case one need not worry about the ``trivial eigenvalue'' of~$d$; it no longer exists, and the statement to be proven is simply that $\rho(\bG) \leq 2\sqrt{d-1} + \eps$ with high probability, where $\rho(\bG)$ is the spectral radius (largest eigenvalue-magnitude) of the (signed) adjacency matrix of~$\bG$.  Second, with random edge-signs, each entry of~$\bG$'s adjacency matrix becomes a symmetric random variable, and it is always more pleasant in probability theory when one's random variables naturally have mean zero.

In fact, there are scenarios in which one might actually \emph{want} to consider random edge-signed $d$-regular graphs.  For example when studying the Max-Cut problem, the setting of sparse random graphs is a very natural and challenging one, and many algorithms/complexity results depend on eigenvalue bounds for such graphs. Having random edge-signs simply means studying the equally natural $2$XOR (aka~$2$Lin) problem, one that has a long history in theoretical computer science as well~\cite{Has84}.

Undoubtedly experts would know that including random edge-signs should make Bordenave's proof simpler, but it doesn't appear to have been directly explored until the recent work of Deshpande et~al.~\cite{DMOSS19}.  That paper proved the analogue of Friedman/Bordenave for random edge-signings of random $(c,d)$-biregular graphs.  The case when $c = d$ is essentially the same as the $d$-regular random graph case, but the nature of the proof simplification is perhaps obscured, particularly because~\cite{DMOSS19} directly cited several lemmas from Bordenave~\cite{Bor19}. A similar situation occurred in a subsequent work~\cite{MOP19}, which has random edge-signs within an even more complicated random graph model.

In fact, a side motivation we had for this paper was to carefully set out a self-contained proof --- as simple as possible --- of ``Alon's Conjecture'' for randomly edge-signed graphs. A reader not interested in derandomization may nevertheless find our proof of the below theorem of interest, particularly since it contains a substantial portion of Bordenave's proof of Friedman's theorem.
\begin{theorem}                                     \label{thm:simple}
    Let $d \geq 3$ and $\eps > 0$.  If $\bG$ is a random edge-signed $d$-regular $n$-vertex graph, then
    \[
        \Pr\bracks*{\rho(\bG) \leq 2\sqrt{d-1} + \eps} \geq 1 - o_n(1).
    \]
\end{theorem}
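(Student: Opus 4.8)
The plan is to carry out Bordenave's proof of Friedman's theorem directly in the randomly edge-signed setting, where it simplifies because the (signed) adjacency matrix has mean-zero entries and there is no trivial eigenvalue to isolate. Generate $\bG$ via the permutation model for random $d$-regular graphs and then attach independent uniform $\pm1$ signs to the edges; conditioned on the underlying multigraph (whose $O(1)$ repeated edges we condition away), the edge-signs making up the signed adjacency matrix $\bA$ are independent Rademacher variables, so $\E[\bA] = 0$ and $\rho(\bG) = \rho(\bA)$. The first step is to pass to the \emph{signed non-backtracking matrix} $\bB$, the $dn \times dn$ matrix indexed by directed edges of $\bG$ with $\bB_{(u,v),(v,w)} = \bA_{vw}$ when $w \neq u$ and $0$ otherwise; crucially $\E[\bB] = 0$ as well, whereas in the \emph{unsigned} case $\E[\bB]$ would be a rank-structured matrix of operator norm $\approx d-1$ carrying the trivial eigenvalue. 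By the signed Ihara--Bass identity $\det(I - u\bB) = (1-u^2)^{m-n}\det(I - u\bA + (d-1)u^2 I)$ with $m = dn/2 > n$, each eigenvalue $\lambda$ of $\bA$ with $|\lambda| > 2\sqrt{d-1}$ contributes a pair of real eigenvalues of $\bB$ with product $d-1$ and sum $\lambda$; hence $\rho(\bB) \le \sqrt{d-1} + \delta$ implies $\rho(\bA) \le 2\sqrt{d-1} + 2\delta$, and it suffices to prove $\rho(\bB) \le \sqrt{d-1} + o_n(1)$ with high probability.

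Since $\bB$ is not normal, $\rho(\bB)$ is not read off from $\operatorname{Tr}(\bB^{2k})$; instead fix $\ell$ equal to a small enough constant times $\log_{d-1} n$ and $m$ of order $\log n$, and bound $\rho(\bB) \le \|\bB^{\ell}\|_{\mathrm{op}}^{1/\ell} \le \big(\operatorname{Tr}\big[(\bB^{\ell}(\bB^{\ell})^{*})^{m}\big]\big)^{1/(2m\ell)}$. Before estimating this moment, observe that with high probability $\bG$ is \emph{$\ell$-tangle-free}, i.e.\ every radius-$\ell$ ball contains at most one cycle: this is a routine first-moment bound, since the expected number of vertices whose radius-$\ell$ ball contains two independent cycles is $\approx (d-1)^{2\ell}/n = o_n(1)$ for our choice of $\ell$. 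We run the moment estimate on this event.

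Bounding $\E\operatorname{Tr}\big[(\bB^{\ell}(\bB^{\ell})^{*})^{m}\big]$ is the technical heart and the step I expect to be the main obstacle. Expanding the trace expresses it as a sum over closed walks in the directed-edge graph built from $2m$ alternately forward and reversed non-backtracking segments of length $\ell$; taking expectations, independence and mean-zero-ness of the signs annihilate every contribution except those walks in which each used edge of $\bG$ is used an even (hence $\geq 2$) number of times. One organizes the surviving walks by the isomorphism type of the connected subgraph $H \subseteq \bG$ they trace out, multiplies the number of walk-patterns realizing a given $H$ by the probability that $H$ occurs in $\bG$ (of order $n^{|V(H)|-|E(H)|}$ after accounting for the $\approx n^{|V(H)|}$ embeddings), and sums over $H$. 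Trees $H$ give the leading term $\approx n\cdot(d-1)^{\ell m}$; for $H$ of excess $\excess(H) = |E(H)|-|V(H)| \geq 1$ the power of $n$ drops, but the number of realizing walk-patterns can grow with the excess, and it is precisely the $\ell$-tangle-free property that controls how a length-$\ell$ non-backtracking segment can fold onto a short cycle, thereby keeping the excess --- and the total contribution --- bounded. Carrying out this count (Bordenave's combinatorial estimate, streamlined by the edge-signs) gives $\E\operatorname{Tr}\big[(\bB^{\ell}(\bB^{\ell})^{*})^{m}\big] \leq (d-1)^{\ell m}\cdot\poly(n,\ell,m)$; with $\ell \asymp \log_{d-1}n$ and $m \asymp \log n$, so $m\ell \gg \log n$, Markov's inequality and the $(2m\ell)$-th root yield $\rho(\bB) \le \sqrt{d-1}\cdot(1+o_n(1))$ with high probability. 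The Ihara--Bass reduction then gives $\rho(\bG) = \rho(\bA) \le 2\sqrt{d-1} + o_n(1) \le 2\sqrt{d-1} + \eps$ with high probability, completing the proof.
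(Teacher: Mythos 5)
Your proposal is correct at a high level, and it would lead to a valid proof; however, it takes a noticeably different and heavier route than the paper does for this statement. Both approaches pass to the signed non-backtracking matrix via the signed Ihara--Bass formula, exploit that conditional on the graph the entries of $\bB$ have mean zero under random signs, and then run a moment/trace bound. But you faithfully import Bordenave's iterated moment $\operatorname{Tr}\bigl[(\bB^\ell(\bB^\ell)^*)^m\bigr]$ with $\ell$ set equal to the tangle-free radius $\approx c\log_{d-1}n$ and $m\approx\log n$, implicitly relying on the $B^{(\ell)}$ tangle-discarding device (via conditioning on the tangle-free event) and on Bordenave's path-counting machinery. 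The paper, by contrast, observes that the random signs make all of that unnecessary: it uses the \emph{single} trace $\operatorname{Tr}\bigl(\bB^\ell(\bB^\top)^\ell\bigr)$, with $\ell$ allowed to be much larger than the bicycle-free radius $r$ (as large as $r\log n/\log\log n$), and controls the excess of the traced subgraph of a long ``hike'' not via tangle-freeness at radius $\ell$ but via a Moore-bound lemma (\Cref{thm:sparsity}) that gets sublinear excess from bicycle-freeness at the much smaller radius~$r$. The walk-counting is then done by a self-contained encoding argument (fresh/boundary/stale steps, \Cref{thm:final-estimate}), not by classifying walks by isomorphism type of the traced subgraph $H$ and weighting by the probability that $H$ occurs.

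Two further points are worth flagging. First, your proposal mixes randomness over the graph and the signs in the moment computation (``multiplies the number of walk-patterns \ldots by the probability that $H$ occurs in $\bG$''), whereas the cleaner and more useful statement — and what the paper actually proves as \Cref{thm:key}/\Cref{thm:keyA} — is that for \emph{any fixed} $d$-regular graph that is bicycle-free at radius $r\gg(\log\log n)^2$, a random edge-signing has $\rho(\bB)\le\sqrt{d-1}(1+o(1))$. Your approach as stated gives the $o_n(1)$ bound only for the random $d$-regular graph and would not directly supply the fixed-base-graph version needed for the repeated $2$-lift derandomization. Second, invoking ``Bordenave's combinatorial estimate, streamlined by the edge-signs'' as a black box is exactly the step that needs to be reworked: Bordenave's counting is for the \emph{centered} unsigned matrix $\ul{\bB}^{(\ell)}$, and it is not automatic that it transfers; the paper avoids this by doing the combinatorics from scratch. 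So: correct plan, but a heavier one than needed, and missing the stronger deterministic-base-graph formulation that the paper's single-trace-plus-Moore-bound route buys.
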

In the course of proving this theorem, we are able to observe that in fact \Cref{thm:key} holds.  That is, \Cref{thm:simple} does not thoroughly rely on having a random edge-signing of a \emph{random} $d$-regular graph.  Instead, it works for a random edge-signing of \emph{any} $d$-regular graph that has one particular property: namely, every vertex-neighborhood of radius $O((\log \log n)^2)$ should have at most one cycle.  This property --- called tangle-freeness by Bordenave (simplifying Friedman's notion of ``tangles'') --- is a property that random $d$-regular graphs have with high probability, even for neighborhoods of the much larger radius~$\Theta(\log_{d-1} n)$.

With \Cref{thm:key} in hand, we are in a position rather like that of Bilu--Linial, who similarly showed~\cite[Cor.~3.1]{BL06} that a random edge-signing of any sufficiently good small-set expander has spectral radius at most $\sqrt{d} \cdot O(\log^{1.5} d)$ (with high probability).  As in Bilu--Linial, it is also fairly straightforward to see that \Cref{thm:key} can be derandomized effectively using almost-$k$-wise independent binary random variables.

We next describe how this derandomized result on edge-signings leads to our main \Cref{thm:us}.

\subsection{Explicit near-Ramanujan graphs via repeated 2-lifts}  \label{sec:ideas}
Let $G = (V,E)$ be an $n$-vertex $d$-regular graph, and let $\wt{G}$ be the edge-signed version of it associated to edge-signing $w : E \to \{\pm 1\}$.  As observed by Bilu and Linial~\cite{BL06}, this edge-signing is in a sense equivalent to the ``$2$-lift'' $G_2 = (V_2, E_2)$ of~$G$ defined by
\[
    V_2 = V \times \{\pm 1\}, \qquad E_2 = \Bigl\{ \{(u,\sigma), (v,\sigma\cdot w(u,v))\} : (u,v) \in E \Bigr\}.
\]
This $G_2$ is a $2n$-vertex $d$-regular graph, and the equivalence is that $G_2$'s eigenvalues are precisely the multiset-union of $G$'s eigenvalues and $\wt{G}$'s eigenvalues.  (The latter refers to the eigenvalues of~$\wt{G}$'s signed adjacency matrix, whose nonzero entries are $w(u,v)$ for each $\{u,v\} \in E$.)  In particular, if all the eigenvalues of $G$ and $\wt{G}$ have magnitude at most $2\sqrt{d-1} + \eps$ (excluding $G$'s trivial eigenvalue of~$d$), then the same is true of~$G_2$ (excluding \emph{its} trivial eigenvalue).  Thus \Cref{thm:key} can provide us with a (derandomizable) way of doubling the number of vertices in an $\eps$-near-Ramanujan graph. It is not hard to see (\Cref{prop:lift-is-bike-free}) that if $G$ is ``$r$-bicycle-free'' --- meaning that every radius-$r$ vertex neighborhood in~$G$ has at most one cycle --- then $G_2$ will also be $r$-bicycle-free.  Thus we may repeatedly double the number of vertices in an $\eps$-near-Ramanujan graph, so long as the parameter~$r$ remains $\omega((\log \log |V|)^2)$, where $|V|$ is the ``current'' number of vertices. (Unfortunately, we do not see an obvious way to get the parameter~$r$ to increase as we perform~$2$-lifts.)  This is roughly the same strategy employed in~\cite{BL06}.

As a consequence, to obtain a final $d$-regular $\eps$-near-Ramanujan graph with~$\Theta(N)$ vertices, all we need to get started is some $d$-regular $\eps$-near-Ramanujan graph~$H$ on a smaller number of vertices,~$n$, which is $O((\log \log N)^2)$-bicycle-free.  Thanks to Friedman/Bordenave, we know that a \emph{random} \mbox{$d$-regular} $n$-vertex graph is (with high probability) near-Ramanujan, and it's not hard to show it's $\Theta(\log n)$-bicycle-free.  Thus we could get started with $H$ being a random $d$-regular graph on, say, $n = 2^{\sqrt{\log N}}$ vertices, or even something smaller like $n = \operatorname{quasipoly}(\log \log N)$.

Of course, to get a construction which is overall explicit, we need to derandomize the Friedman/Bordenave analysis for this base graph~$H$.  The advantage is we now have $\poly(N)$ time to spend on constructing a graph with $n \ll N$ vertices. A trivial exponential-time derandomization won't work, but nor do we need a polynomial-time derandomization; a quasipolynomial-time derandomization is more than sufficient.  And as we will see in \Cref{sec:bordenave-derand}, it is possible to derandomize Bordenave's proof in deterministic $n^{O(\log n)}$ time using $O(\log n)$-wise uniform permutations.  The proof of this is not completely straightforward because Bordenave's proof uses a twist on the Trace Method (since the plain Trace Method provably fails).

\section{Preliminaries}     \label{sec:prelims}

\subsection{Standard derandomization tools}
Throughout we use \textbf{boldface} to denote random variables.

\begin{definition}[$(\delta,k)$-wise uniform bits]
    Let $\delta \in [0,1]$ and $k \in \N^+$.  A sequence of Boolean random variables $\by = (\by_1, \dots, \by_n) \in \{\pm 1\}^n$ is said to be \emph{$(\delta,k)$-wise uniform}\footnote{Frequently called $(\delta,k)$-wise \emph{independent} in the literature.} if, for every $S \subseteq [n]$ with $0 < |S| \leq k$, it holds that $\abs{\E[\prod_{i \in S} \by_i]} \leq \delta$.  When $\delta = 0$, we simply say that the sequence is \emph{(truly) $k$-wise uniform}; indeed, in this case the bits are individually uniformly distributed and are $k$-wise independent.
\end{definition}

A classic result of Naor and Naor~\cite{NN93} shows that $(\delta,k)$-wise uniform bits can be constructed efficiently and deterministically from a truly random seed of length $O(\log k  + \log \log n + \log(1/\delta))$.  Indeed, these bits can be generated ``strongly explicitly'' (using~\cite{Sho90}; cf.~\cite{AGHP92}):
\begin{theorem}                                  \label{thm:nn93}
    (\cite{NN93}.)  There is a deterministic algorithm that, given $\delta$,  $k$, and $N$, runs in time $\poly(N/\delta)$ and outputs a multiset $Y \subseteq \{\pm 1\}^N$ of cardinality $S = \poly(k \log(N)/\delta)$ (a power of~$2$) such that, for $\by \sim Y$ chosen uniformly at random, the sequence $\by$ is $(\delta,k)$-wise uniform.  Indeed, if the algorithm is additionally given $1 \leq s \leq S$ and $1 \leq i \leq N$ (written in binary), it can output the $i$th bit of the $s$th string in~$Y$ in deterministic time $\polylog(N/\delta)$.
\end{theorem}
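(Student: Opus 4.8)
\emph{Proof plan.} The plan is to reduce everything to the construction of \emph{small-bias (i.e.\ $\delta$-biased) sample spaces} and then to remove the dependence on $N$ by hashing. Recall that $\bz = (\bz_1,\dots,\bz_m) \in \{\pm 1\}^m$ is called \emph{$\delta$-biased} if $\abs{\E[\prod_{i\in S}\bz_i]} \le \delta$ for \emph{every} nonempty $S \subseteq [m]$ (not merely those with $|S| \le k$); in particular any $\delta$-biased $\bz$ is $(\delta,k)$-wise uniform for every $k$. So it suffices to produce from a short seed a $\delta$-biased string on $m := \poly(k/\delta)$ bits and then to ``spread'' those $m$ bits over the $N$ coordinates. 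For the first task I would use one of the classical explicit $\delta$-biased generators: using Shoup's deterministic construction~\cite{Sho90} of a representation of $\mathbb{F}_{2^\ell}$ with $2^\ell \ge m/\delta$, draw $x,y \in \mathbb{F}_{2^\ell}$ (seed $2\ell = O(\log k + \log(1/\delta))$ bits) and set $\bz_i := \langle x^i, y\rangle$, the $\mathbb{F}_2$-inner product of the length-$\ell$ bit vectors of $x^i$ and $y$; for nonempty $S$, $\sum_{i\in S}x^i$ is a nonzero polynomial in $x$ of degree $\le m$, hence vanishes with probability $\le m/2^\ell \le \delta$, and otherwise $\langle \sum_{i\in S}x^i, y\rangle$ is a uniform bit, so the bias is $\le \delta$. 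All operations (field arithmetic, $x^i$ by repeated squaring) run in $\polylog(m/\delta)$ time, and the sample space is a power of two in size.

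For the spreading step, fix $m := \Theta(k^2/\delta)$ (a power of two) and suppose we have a family $\mathcal H$ of functions $h:[N]\to[m]$, sampled with a seed of length $O(\log\log N + \log(k/\delta))$, such that for every fixed tuple of distinct indices $i_1,\dots,i_k$ the images $h(i_1),\dots,h(i_k)$ are pairwise distinct except with probability $\le \delta/2$ over $h$ (for which it is enough that each \emph{pair} collides with probability $\le \delta/k^2$, by a union bound). Draw $h$ from $\mathcal H$ and, independently, a $(\delta/2)$-biased $\bz \in \{\pm 1\}^m$ as above, and output $\by_i := \bz_{h(i)}$. For distinct $i_1,\dots,i_k$: conditioned on $h$ being injective on $\{i_1,\dots,i_k\}$, the product $\prod_j \by_{i_j} = \prod_j \bz_{h(i_j)}$ is a product of $\bz$ over $k$ distinct coordinates and so has conditional expectation of magnitude $\le \delta/2$; conditioned on the complementary (probability $\le \delta/2$) event it is at most $1$ in magnitude; hence $\abs{\E[\prod_j \by_{i_j}]} \le \delta$. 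The total seed length is $O(\log k + \log\log N + \log(1/\delta))$, the sample space $Y$ is the product of two power-of-two-sized spaces, and since $h$ and $\bz$ are both evaluable in $\polylog$ time, the $i$th bit of the $s$th string of $Y$ can be output in $\polylog(N/\delta)$ time.

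The one genuinely delicate ingredient — and the heart of the Naor--Naor argument — is constructing the hash family $\mathcal H$ with seed $O(\log\log N + \log(k/\delta))$ rather than the $O(\log N)$ one gets from the textbook almost-universal family $h_{a,b}(x) = ((ax + b \bmod p)\bmod m)$ over a prime $p > N$. I would obtain the doubly-logarithmic dependence by following Naor and Naor's recursive scheme, which reduces the domain size $N \to N' \to N'' \to \cdots$ in rounds, applying an almost-universal hash at each round while carefully budgeting the seeds so that the total stays $O(\log\log N + \log(k/\delta))$. I note in passing that for the use made of \Cref{thm:nn93} in this paper one could skip this subtlety entirely: taking $\mathcal H$ trivial and using a $\delta$-biased string directly on all $N$ coordinates already gives $(\delta,k)$-wise uniform bits with seed $O(\log N + \log(1/\delta))$ and sample size $\poly(N/\delta)$, which is more than enough for the quasipolynomial-time derandomization in \Cref{sec:bordenave-derand}; only the sharper bound $S = \poly(k\log(N)/\delta)$ genuinely needs the recursion.
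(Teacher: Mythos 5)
The paper offers no proof of \Cref{thm:nn93} --- it is cited as a black box from~\cite{NN93}, with~\cite{Sho90,AGHP92} supplying the explicitness --- so there is no internal argument to compare your proposal against. Your reconstruction is essentially the textbook route through those references: an explicit $\delta$-biased generator on a short block (the AGHP powering construction $\bz_i = \langle x^i, y\rangle$, with Shoup's deterministic field representation making it genuinely explicit), composed with a hash family $h:[N]\to[m]$ to spread those $m$ bits across all $N$ coordinates. Your composition analysis is correct: conditioning on $h$ being injective on the $\le k$ queried indices, the conditional bias is inherited from $\bz$, and you pay the collision probability on the complementary event, giving total bias $\le \delta$ once the per-pair collision probability is $\le \delta/k^2$. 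The seed accounting $O(\log k + \log\log N + \log(1/\delta))$, hence $S = \poly(k \log(N)/\delta)$, and the $\polylog$-time per-bit evaluation both check out, as does closing under powers of two by taking $S$ to be a product of two power-of-two seed spaces.

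The one genuine gap is precisely the one you flag: the hash family with $O(\log\log N + \log(k/\delta))$ seed, which is the only place a doubly-logarithmic dependence on $N$ can come from and is the real content of the Naor--Naor construction. You gesture at ``the recursive scheme'' but do not actually reproduce it, and the naive almost-universal family $h_{a,b}(x)=((ax+b)\bmod p)\bmod m$ over a prime $p>N$ costs $O(\log N)$ seed, which would blow up $S$ to $\poly(N/\delta)$ rather than $\poly(k\log(N)/\delta)$. So as a self-contained proof of the theorem \emph{as stated}, the proposal is incomplete. That said, your closing observation is correct and worth highlighting: every invocation of \Cref{thm:nn93} in this paper (in \Cref{sec:final-construction} and \Cref{sec:prob-strong-explicit}) uses $\delta$ at most $1/\poly(N)$, at which point $O(\log N + \log(1/\delta)) = O(\log(1/\delta))$ anyway, and a bare $\delta$-biased string on all $N$ coordinates --- no hashing at all --- already achieves the needed $\poly(N/\delta)$ sample-space size and the $\polylog(N/\delta)$ per-bit evaluation. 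So the gap is real relative to the theorem statement but harmless relative to its role in the paper; a careful writeup should either fill in the recursive hashing or state the weaker bound that actually suffices.
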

We will make use of the fact that the parameters in this theorem have excellent dependence on~$N$ and~$k$. We now discuss the analogous concept for random permutations, where it is not known if the parameter dependence can be as strong.

\begin{definition}[$(\delta,k)$-wise uniform permutations]
    Let $\delta \in [0,1]$ and $k \in \N^+$.  Let $[n]_k$ denote the set of all sequences of~$k$ distinct indices from~$[n]$.  A random permutation $\bpi \in \symm{n}$ is said to be \emph{$(\delta,k)$-wise uniform} if, for every sequence $(i_1, \dots, i_k) \in [n]_k$, the distribution of $(\bpi(i_1), \dots, \bpi(i_k))$ is $\delta$-close in total variation distance from the uniform distribution on~$[n]_k$. When $\delta = 0$, we simply say that the permutation is \emph{(truly) $k$-wise uniform}.
\end{definition}
Kassabov~\cite{Kas07} and Kaplan--Naor--Reingold~\cite{KNR09} independently obtained a deterministic construction of $(\delta,k)$-wise uniform permutations with seed length $O(k \log n + \log(1/\delta))$.  Again, the construction is even ``strongly explicit'':
\begin{theorem}             \label{thm:knr}                        (\cite{KNR09,Kas07}.)  There is a deterministic algorithm that, given $\delta$,  $k$, and $n$, runs in time $\poly(n^k/\delta)$ and outputs a multiset $\Pi \subseteq \symm{n}$ (closed under inverses) of cardinality $S = \poly(n^k/\delta)$ (a power of~$2$) such that, for $\bpi \sim \Pi$ chosen uniformly at random, $\bpi$ is a $(\delta,k)$-wise uniform permutation. Indeed, if the algorithm is additionally given $1 \leq s \leq S$ and $1 \leq i \leq n$ (written in binary), it can output  $\pi_s(i)$ and $\pi_s^{-1}(i)$ (where $\pi_s$ is the $s$th permutation in~$\Pi$) in deterministic time $\poly(k \log(n/\delta))$.
\end{theorem}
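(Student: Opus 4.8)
The plan is to realize a $(\delta,k)$-wise uniform permutation as the endpoint of a short random walk on $\symm{n}$ driven by an expanding generating set. Fix once and for all a symmetric multiset $\mathcal{G}\subseteq\symm{n}$ with $|\mathcal{G}|=O(1)$ that generates $\symm{n}$, contains the identity element, and is such that $\mathrm{Cay}(\symm{n},\mathcal{G})$ has two-sided spectral gap $\gamma=\Omega(1)$ (the identity element makes the walk lazy, keeping the spectrum bounded away from $-1$). That such a $\mathcal{G}$ exists --- and, moreover, with each generator evaluable in $\polylog(n)$ time --- is Kassabov's theorem on expanding generators of the symmetric group; weaker quantitative versions predating \cite{Kas07,KNR09} would also suffice, and arranging fast evaluation of the generators is where most of the work in \cite{KNR09} goes. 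Since the permutation module of $\symm{n}$ on the set $[n]_k$ of sequences of $k$ distinct indices embeds into the regular representation (as a $\symm{n}$-set it is $\symm{n}/\symm{n-k}$), the nontrivial spectrum of the Schreier graph of the $\mathcal{G}$-action on $[n]_k$ is contained in that of $\mathrm{Cay}(\symm{n},\mathcal{G})$, so that Schreier graph, on its $M:=n(n-1)\cdots(n-k+1)\le n^k$ vertices, again has gap $\Omega(\gamma)$.

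Now set $T:=C\bigl(k\log n+\log(1/\delta)\bigr)$ for a large enough absolute constant $C$, and let $\Pi$ be the multiset of the $|\mathcal{G}|^{T}$ permutations $g_{a_1}g_{a_2}\cdots g_{a_T}$ over all words $(a_1,\dots,a_T)\in\mathcal{G}^{T}$; padding $\Pi$ by repetitions makes its size $S$ a power of two that is still $\poly(n^k/\delta)$, and since $\mathcal{G}$ is symmetric, reversing a word and inverting its letters shows $\Pi=\Pi^{-1}$. On a seed $s\in\{1,\dots,S\}$ the algorithm reads off the base-$|\mathcal{G}|$ digits $(a_1,\dots,a_T)$ of $s$ and, on query $i$, outputs $g_{a_1}(g_{a_2}(\cdots g_{a_T}(i)\cdots))$: a composition of $T$ maps, each evaluable in $\polylog(n)$ time, for a total of $\poly(k\log(n/\delta))$; one computes $\pi_s^{-1}(i)$ the same way, walking the word from the other end through the letters $g_{a_t}^{-1}\in\mathcal{G}$. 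Listing all of $\Pi$ offline takes $\poly(n^k/\delta)$ time.

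It remains to verify $(\delta,k)$-wise uniformity. Fix $x_0=(i_1,\dots,i_k)\in[n]_k$; for $\bpi$ uniform in $\Pi$, the tuple $(\bpi(i_1),\dots,\bpi(i_k))$ is obtained from $x_0$ by applying $T$ independent uniformly random elements of $\mathcal{G}$ in succession, i.e.\ it is exactly the endpoint of a $T$-step lazy random walk started at $x_0$ in the Schreier graph on $[n]_k$. The standard expander mixing estimate then gives that this endpoint is within $\sqrt{M}\,(1-\Omega(\gamma))^{T}\le n^{k/2}e^{-\Omega(\gamma)T}\le\delta$ of uniform on $[n]_k$ in total variation once $C$ is large enough (here $\gamma=\Omega(1)$ makes $T=\Theta(k\log n+\log(1/\delta))$, as needed); and since a permutation sends distinct indices to distinct indices the relevant target really is the uniform distribution on $[n]_k$, so this is precisely the required bound. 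The main obstacle is the single imported ingredient: exhibiting an $O(1)$-size generating set of $\symm{n}$ that is simultaneously a spectral expander with a \emph{constant} gap and has $\polylog(n)$-time-computable generators. Constant gap is essential --- a gap of merely $1/\poly(k)$ would blow $|\Pi|$ past $\poly(n^k/\delta)$ --- and naive choices such as a cyclic shift together with a single transposition are nowhere near expanding. An alternative route that avoids quoting the deep expander theorem is to analyze, by comparison with the random-transposition chain, the mixing on $[n]_k$ of a hands-on $O(\log n)$-round card shuffle whose choices come from a $k$-wise-independent distribution, which is closer to what \cite{KNR09} actually does; everything else --- the mixing bound, the power-of-two and inverse-closure bookkeeping, and the local-computation count --- is routine.
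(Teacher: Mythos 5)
The paper treats this theorem purely as a black box imported from~\cite{KNR09,Kas07}, so there is no in-paper proof to compare against line by line; the relevant question is whether your sketch captures one of the routes in those references, and it does. The random-walk-on-an-expanding-Cayley-graph argument is the ``Kassabov route'': a constant-size symmetric generating set of $\symm{n}$ with constant two-sided gap gives, by restriction to the permutation module $\C[\symm{n}/\symm{n-k}]$, a Schreier graph on $[n]_k$ with the same gap, and the $\ell_2$-to-$\ell_1$ mixing estimate $\sqrt{M}\,\lambda^{T}\le\delta$ with $M\le n^{k}$ pins $T=\Theta(k\log n+\log(1/\delta))$, giving $|\Pi|=\poly(n^k/\delta)$ with $\polylog(n)$-per-step evaluation — all consistent with the stated bounds. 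Your bookkeeping (symmetric $\Pi$ via word reversal, padding to a power of two, computing $\pi_s^{-1}$ by walking the word backward through inverse generators, the observation that a $1/\poly(k)$ gap would already be fatal to the size bound) is correct.

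One thing you should fix is the attribution: you say that ``arranging fast evaluation of the generators is where most of the work in \cite{KNR09} goes,'' but \cite{KNR09} does not build on Kassabov's expanders at all. Kaplan--Naor--Reingold's construction is the more elementary card-shuffling route you gesture at only in your last sentence — an iterated-shuffle argument whose rounds use limited-independence randomness, analyzed by comparison with simple Markov chains on $[n]_k$ — and it yields efficient local computability without ever invoking the deep expander theorem. So the two references really are two genuinely different proofs of the same statement, and your main sketch follows the Kassabov one. The remaining soft spot, which you do flag, is that you import without verification the claim that \emph{some} constant-size expanding generating set of $\symm{n}$ has generators evaluable in $\polylog(n)$ time; this is true but not automatic from Kassabov's theorem as usually stated, and for a fully self-contained argument the KNR shuffling route avoids the issue entirely.
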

We will also use a convenient theorem of Alon and Lovett~\cite{AL13}:
\begin{theorem}                                     \label{thm:al13}
    (\cite{AL13}.) Let $\bpi \in \symm{n}$ be a $(\delta,k)$-wise uniform permutation.  Then one can define a (truly) $k$-wise uniform permutation~$\bpi' \in \symm{n}$ such that the total variation distance between $\bpi$ and~$\bpi'$ is~$O(\delta n^{4k})$.
\end{theorem}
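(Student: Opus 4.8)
The plan is a geometric one: regard probability distributions on $\symm{n}$ as points of the simplex $\Delta \subseteq \mathbb{R}^{\symm n}$ and study the polytope of their ``$k$-wise marginals.'' For each pair $(\vec i, \vec j) \in [n]_k \times [n]_k$ let $\mathbf 1_{\vec i, \vec j} \colon \symm n \to \{0,1\}$ be the indicator of the event $\{\pi(i_\ell) = j_\ell \text{ for all } \ell\}$; these $M = (n)_k^2 \le n^{2k}$ functions span a subspace $\mathcal D \subseteq \mathbb{R}^{\symm n}$ (the ``degree $\le k$'' functions), and $\mathcal D$ contains the constant function $1$. Let $\Phi \colon \mathbb{R}^{\symm n} \to \mathbb{R}^M$ be the linear ``marginals'' map sending a (signed) measure $\tau$ to $\bigl(\E_{\pi \sim \tau}[\mathbf 1_{\vec i,\vec j}(\pi)]\bigr)_{\vec i, \vec j}$, and put $c = \Phi(\mathcal U)$ for $\mathcal U$ the uniform distribution (so $c_{\vec i, \vec j} = 1/(n)_k$). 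A distribution is \emph{(truly) $k$-wise uniform} exactly when $\Phi$ maps it to $c$, and the hypothesis on $\bpi$ — that for every $\vec i$ the law of $(\bpi(i_1),\dots,\bpi(i_k))$ is within total variation $\delta$ of uniform on $[n]_k$ — translates, upon summing over the $(n)_k$ choices of $\vec i$, into $\eta := \norm{\Phi(\mu) - c}_1 \le 2\delta (n)_k \le 2\delta n^k$, where $\mu$ is the law of $\bpi$.

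The crux is a quantitative ``robust interior'' fact for the moment polytope $P := \Phi(\Delta)$: there is a radius $r = \Omega(n^{-3k})$ such that every $v \in \mathrm{aff}(P)$ with $\norm{v - c}_1 \le r$ already lies in $P$. Granting this, one finishes by a one-parameter deformation. Set $t = \eta/(\eta + r) \in (0,1]$ and let $v = (1 - \tfrac1t)\Phi(\mu) + \tfrac1t c$, an affine combination of $\Phi(\mu), c \in P$; a short computation gives $\norm{v - c}_1 = \tfrac{1-t}{t}\eta = r$, so $v \in P$, i.e.\ $v = \Phi(\nu)$ for some distribution $\nu$. Let $\bpi'$ have law $\mu' := (1-t)\mu + t\nu$. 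By linearity $\Phi(\mu') = (1-t)\Phi(\mu) + t v = c$, so $\bpi'$ is truly $k$-wise uniform, while the total variation distance between $\bpi$ and $\bpi'$ equals $\tfrac12\norm{\mu - \mu'}_1 = \tfrac t2 \norm{\mu - \nu}_1 \le t \le \eta/r = O(\delta n^k \cdot n^{3k}) = O(\delta n^{4k})$, as claimed. (When $\eta \gtrsim r$ this bound is $\Omega(1)$ and the theorem is vacuous, so nothing is lost.)

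To establish the robust-interior claim, note $\mathrm{aff}(P) = \Phi(\{\text{signed measures of total mass }1\})$, so any $v \in \mathrm{aff}(P)$ is $\Phi(\tau)$ for some mass-$1$ signed measure $\tau$. Since every $\mathbf 1_{\vec i,\vec j} \in \mathcal D$, the value $\Phi(\tau)$ depends only on the $L^2(\mathcal U)$-projection of $\tau$'s density onto $\mathcal D$, and $1 \in \mathcal D$; hence we may replace $\tau$ by the measure whose density is that projection without changing $\Phi(\tau)$ or the total mass, and its density now has the form $1 + g$ with $g \in \mathcal D$. If $\norm{g}_\infty \le 1$ then $1 + g \ge 0$, the modified $\tau$ is a genuine distribution, and $v \in P$. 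Writing $g = \sum_{\vec i, \vec j} a_{\vec i, \vec j}\, \mathbf 1_{\vec i, \vec j}$, take $a$ to be the minimum-norm solution of $\mathcal G a = b$, where $\mathcal G = \bigl(\E_{\mathcal U}[\mathbf 1_{\vec i,\vec j}\, \mathbf 1_{\vec i', \vec j'}]\bigr)$ is the Gram matrix of the indicators and $b = \bigl(\langle g, \mathbf 1_{\vec i,\vec j}\rangle_{\mathcal U}\bigr)_{\vec i,\vec j} = v - c$ lies in the range of $\mathcal G$; then $\norm{g}_\infty \le \norm{a}_1 \le \sqrt M\, \norm{a}_2 \le \sqrt M\, \sigma_{\min}(\mathcal G)^{-1} \norm{v - c}_1$, where $\sigma_{\min}(\mathcal G)$ denotes the least \emph{nonzero} eigenvalue of $\mathcal G$. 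Thus $\norm{g}_\infty \le 1$ whenever $\norm{v-c}_1 \le \sigma_{\min}(\mathcal G)/\sqrt M =: r$, and since $\sqrt M \le n^k$ it suffices to prove $\sigma_{\min}(\mathcal G) = \Omega(n^{-2k})$.

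I expect this final step — the lower bound on the least nonzero eigenvalue of $\mathcal G$ — to be the main obstacle. The matrix $\mathcal G$ commutes with the diagonal $\symm n \times \symm n$ action on $[n]_k \times [n]_k$ (its entry at $(\vec i,\vec j),(\vec i',\vec j')$ is $0$ unless the partial maps $i_\ell \mapsto j_\ell$ and $i'_\ell \mapsto j'_\ell$ are jointly consistent, and otherwise equals $1/(n)_\ell$, where $\ell$ is the number of distinct indices among the $i$'s and $i'$'s), so it is simultaneously block-diagonalized by the decomposition of this (Johnson-type) permutation module into irreducibles, and each of its eigenvalues is an explicit rational function of $n$ whose numerator and denominator are products of at most $2k$ linear factors in $n$; tracking these gives every nonzero eigenvalue $\ge \Omega\bigl((n)_{2k}^{-1}\bigr) = \Omega(n^{-2k})$. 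A reader content with a larger exponent may instead bound $\sigma_{\min}(\mathcal G) \ge n^{-O(k)}$ by cruder means — e.g.\ by exhibiting a spanning set of $n^{-O(k)}$-well-conditioned perturbations of $\mathcal U$ lying in $\mathrm{aff}(P)$ — which still yields total variation distance $\delta \cdot n^{O(k)}$, more than enough for the applications of this theorem in the present paper.
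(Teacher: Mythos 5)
This theorem is cited in the paper from Alon--Lovett without a proof, so I'll assess your proposal on its own terms and against what I know of the original argument. Your geometric framework is, I believe, essentially the right shape and parallels the structure of the Alon--Lovett proof: translate the $(\delta,k)$-wise hypothesis into $\eta = \|\Phi(\mu) - c\|_1 \le 2\delta (n)_k$; establish that the moment polytope $P = \Phi(\Delta)$ contains an $\ell_1$-ball of some radius $r$ around $c$ inside its affine hull; and then move along the line through $\Phi(\mu)$ and $c$ past $c$ by a factor $1/t$ to land inside $P$, so that the mixture $\mu' = (1-t)\mu + t\nu$ is exactly $k$-wise uniform with total variation distance at most $t \le \eta/r$. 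The algebra there is correct. Your reduction of the robust-interior claim to $\sigma_{\min}(\mathcal G)$ is also sound: replacing the density of a mass-one signed measure by its projection onto $\mathcal D$ preserves both $\Phi$ and total mass (since the indicators and the constant $1$ all lie in $\mathcal D$), and the chain $\|g\|_\infty \le \|a\|_1 \le \sqrt{M}\,\|a\|_2 \le \sqrt{M}\,\sigma_{\min}(\mathcal G)^{-1}\|v-c\|_1$ is a legitimate use of Cauchy--Schwarz and the pseudoinverse, with the minimum-norm solution indeed recovering $g$ since $g \in \mathcal D$.

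The genuine gap is exactly where you say it is: you do not prove $\sigma_{\min}(\mathcal G) \ge \Omega(n^{-2k})$, nor even the weaker $n^{-O(k)}$. The representation-theoretic sketch correctly describes the symmetry of $\mathcal G$ under the $\symm{n}\times\symm{n}$ action, but the claim that every nonzero eigenvalue is ``an explicit rational function of $n$ whose numerator and denominator are products of at most $2k$ linear factors,'' each bounded below by $\Omega((n)_{2k}^{-1})$, is exactly the assertion that needs a proof and is where the real work lives in [AL13]. Carrying out that diagonalization (or finding a cruder spectral lower bound via, say, an explicit well-conditioned spanning set of perturbations of $\mathcal U$ inside $\operatorname{aff}(P)$, as you suggest at the end) is the entire technical content of the theorem; without it the argument is a reduction, not a proof. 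Two mitigating remarks: first, as you note, the paper only ever uses this theorem with $\delta = n^{-\Omega(k)}$ and a generous constant, so any $n^{-O(k)}$ would suffice in all its applications; second, if you do establish $\sigma_{\min}(\mathcal G) = \Omega(n^{-2k})$, your bookkeeping yields precisely the stated exponent $4k$, since $\eta/r \le 2\delta n^k \cdot n^k \cdot n^{2k} = 2\delta n^{4k}$.
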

Combining the previous two results yields the following:
\begin{corollary}                                       \label{cor:prg-perm}
    (\cite{KNR09,Kas07,AL13}) There is a deterministic algorithm that, given $k$ and~$n$, runs in time $\poly(n^k)$ and outputs a multiset $\Pi \subseteq \symm{n}$ (closed under inverses) such that, when $\bpi \sim \Pi$ is chosen uniformly at random, $\bpi$ is $n^{-100k}$-close in total variation distance to a (truly) $k$-wise uniform permutation. (And the final ``indeed'' statement from \Cref{thm:knr} also holds.)
\end{corollary}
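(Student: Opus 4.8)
The plan is to compose \Cref{thm:knr} with \Cref{thm:al13}, taking the error parameter~$\delta$ in the former small enough that the $n^{4k}$ blow-up in the latter still leaves the total variation distance below~$n^{-100k}$. Concretely, I would run the algorithm of \Cref{thm:knr} on the given~$n$ and~$k$ with $\delta := n^{-200k}$; any fixed polynomial improvement over $n^{-104k}$ would do, and the generous margin is just to absorb the constant hidden in the $O(\delta n^{4k})$ bound of \Cref{thm:al13} uniformly over all $n \geq 2$ and $k \geq 1$. The output is a multiset $\Pi \subseteq \symm{n}$, closed under inverses (a property inherited directly from \Cref{thm:knr}), of cardinality $S = \poly(n^k/\delta) = \poly(n^k)$, produced in time $\poly(n^k/\delta) = \poly(n^k)$, since $1/\delta = n^{200k} = (n^k)^{200}$ is itself $\poly(n^k)$. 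The ``indeed'' indexing statement carries over verbatim from \Cref{thm:knr}, as $\poly(k\log(n/\delta)) = \poly(k \log n)$.

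It then remains to read off the closeness claim. By construction $\bpi \sim \Pi$ is $(\delta,k)$-wise uniform, so \Cref{thm:al13} supplies a genuinely $k$-wise uniform permutation $\bpi' \in \symm{n}$ with total variation distance $O(\delta n^{4k}) = O(n^{-196k}) \leq n^{-100k}$, the last inequality using that $n^{96k} \geq 2^{96}$ exceeds the constant hidden in the $O(\cdot)$ for all $n \geq 2$, $k \geq 1$. (If one prefers to not worry about the exact constant, one can shrink $\delta$ further — still a fixed negative power of~$n^{k}$ — so that it is swallowed with room to spare, without affecting any of the running-time or cardinality bounds above.) This is exactly the assertion of \Cref{cor:prg-perm}.

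I do not anticipate any genuine obstacle here: there is no new probabilistic content, since \Cref{thm:al13} already performs the conversion from approximate to exact $k$-wise uniformity at a controlled total-variation cost. The only point requiring a moment's care is the bookkeeping that $\delta$ may be chosen inverse-polynomial in~$n^k$ without pushing the running time $\poly(n^k/\delta)$ of \Cref{thm:knr} outside the class $\poly(n^k)$ — which is fine precisely because $1/\delta$ lies in that class — and the symmetric check for the $\poly(n^k/\delta)$ cardinality bound.
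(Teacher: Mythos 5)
Your proposal is correct and is exactly what the paper intends: the paper states the corollary simply as a combination of \Cref{thm:knr} and \Cref{thm:al13}, and you have filled in the routine parameter bookkeeping (choosing $\delta = n^{-\Theta(k)}$ small enough that the $O(\delta n^{4k})$ loss from \Cref{thm:al13} is at most $n^{-100k}$, while keeping $\poly(n^k/\delta) = \poly(n^k)$). No further comment is needed.
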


\subsection{Elementary graph theory}
\subsubsection{Random $d$-regular graphs} \label{sec:rand-d-reg}
We will be concerned with $d$-regular (multi)graphs.  We start by describing the standard way to generate random $d$-regular graphs: the \emph{configuration model}, see~\cite{BC78,Bol80,Bol01}.
\begin{definition}[Configuration model] \label{def:config-model}
    Given integers $n > d > 0$ with $nd$ even, the \emph{configuration model} produces a random $n$-vertex, $d$-regular undirected multigraph (with loops)~$\bG$.  This multigraph is induced by a uniformly random matching~$\bM$ on the set of ``half-edges'', $[n] \times [d] \cong [nd]$ (where $(v,i) \in [n] \times [d]$ is thought of as half of the $i$th edge emanating from vertex~$v$).  We identify~$\bM$ with a symmetric matrix in $\{0,1\}^{nd \times nd}$ having $1$'s precisely in the entries corresponding to matched pairs $\{(v,i), (v',i')\}$.  We may think of~$\bM$ being generated as follows: First a uniformly random permutation  $\bpi \in \symm{nd}$ is chosen; then we set $\bM_{\bpi(j),\bpi(j+1)} = \bM_{\bpi(j+1),\bpi(j)} = 1$ for each odd $j \in [nd]$.

    Given~$\bM$, the multigraph~$\bG$ is formed by ``attaching'' the matched half-edges.  More formally, the $(v,v')$-entry of~$\bG$'s adjacency matrix $\bA$ is the sum, over all $i,i' \in [d]$, of $\bM_{(v,i),(v',i')}$.  Hence
    \[
        \bA_{v,v'} = \sum_{i,i' = 1}^d \sum_{\substack{\text{odd} \\ j \in [nd]}} (1[\bpi(j) = (v,i)] \cdot 1[\bpi(j+1) = (v',i')] + 1[\bpi(j) = (v',i')] \cdot 1[\bpi(j+1) = (v,i)]).
    \]
    Note that $\bA_{v,v}$ will always be even; a self-loop is considered to contribute degree~$2$.
\end{definition}
It is well known that a graph~$\bG$ drawn from the configuration model is simple --- i.e., has no cycles of length~$1$ or~$2$ --- with probability~$\Omega_d(1)$.  As it is pleasant to work with simple graphs, we will show in  \Cref{sec:simple} that this continues to hold for \emph{pseudorandom} $d$-regular graphs, when an $O(d^2)$-wise uniform permutation is used in the configuration model.  We also record the well known fact that for $\bG$ drawn from the configuration model, when $\bG$ is conditioned on being simple, its conditional distribution is uniformly random among all $d$-regular graphs.

Although the configuration model is the most natural way to generate large random $d$-regular graphs, the fact that it does not produce simple graphs with high probability is mildly annoying.  (In particular, this causes a slight technical hitch for establishing our ``probabilistically strongly explicit'' construction.)  To sidestep this, we will also consider the \emph{random lift} model for producing random $d$-regular graphs.
\begin{definition}[Lift model]	\label{def:lift-model}
	Fix a (simple) \emph{base} graph $\ul{G} = (\ul{V},\ul{E})$ on $\ul{n}$ vertices.  Then for $n \in \N^+$, an \emph{$n$-lift of $\ul{G}$} is graph $G$ defined by a collection of permutations $\pi_{uv} \in \symm{n}$, one for each edge $(u,v) \in \ul{\vec{E}}$, under the constraint that $\pi_{uv}=\pi_{vu}^{-1}$.  The vertex set of $G$ is $\ul{V} \times [n]$, and the edges of $G$ are given by all pairs $(u,i), (v, j)$ satisfying $(u,v) \in \ul{E}$ and $\pi_{uv}(i) = j$.  When the permutations $\pi_{uv}$ are independent and uniformly random, we call the associated graph~$\bG$ a \emph{(uniformly) random $n$-lift of $\ul{G}$}.  Observe that if $\ul{G}$ is a $d$-regular graph, then $\bG$ is always a $d$-regular (simple) graph on $\ul{n} n$ vertices.
\end{definition}
%
% \emph{lift model} is a distribution over $d$-regular graphs on $n_0\rlift$ vertices.  For each edge $\{u,v\}$, suppose $(\pi_{uv},\pi_{vu})$ is a pair of permutations such that $\pi_{uv} = \pi_{vu}^{-1}$.  The $\rlift$-lift $G_{\rlift}$ corresponding to the collection $\{(\pi_{uv},\pi_{vu})\}_{\{u,v\}\in E_0}$ is the graph on $V_0\times[\rlift]$ with edge set $E = \{\{(u,i),(v,j)\}:\{u,v\} \in E_0, \pi_{uv}(i)=j\}$.  If the $\rlift$-lift is obtained from a collection $\{(\bpi_{uv},\bpi_{vu})\}_{\{u,v\}\in E_0}$ where each $\bpi_{uv}$ is a uniformly random permutation and the pairs $(\bpi_{uv},\bpi_{vu})$ are sampled independently, then we call the resulting graph $\bG_{\rlift}$ a random $\rlift$-lift of $G_0$.
%\end{definition}
Bordenave~\cite{Bor19} also confirmed \Cref{thm:friedman} (the Alon Conjecture) in the case that $\bG$ is a random $n$-lift of any fixed $d$-regular Ramanujan base graph~$\ul{G}$.  The simplest case is $\ul{G} = K_{d+1}$, the complete graph on $d+1$ vertices.  This gives a way to randomly construct arbitrarily large $d$-regular near-Ramanujan graphs that are always simple.  We will also derandomize this result, as it will be convenient for our ``probabilistically strongly explicit'' construction to have guaranteed simplicity.%Although these graphs are not similar in distribution to a uniformly random $d$-regular graph (for example, they are always $(d+1)$-partite), this does not bother us for the purposes of constructing explicit near-Ramanujan graphs.  Thus

\subsubsection{Bicycle-freeness}
It is well known that a $d$-regular random graph is likely to have at most one cycle in any neighborhood of radius $c \log_{d-1} n$, for a certain universal $c > 0$.  (This holds in either the configuration or the random lift model.)  Let us make some definitions to codify this.
\begin{definition}[Excess]
    Given a multigraph $H = (V,E)$, its \emph{excess} is $\excess(H) = |E|-|V|$.
\end{definition}
\begin{definition}[A/uni/bi-cyclic]
    A connected multigraph $H$ with $\excess(H) = -1$, $0$, $1$ (respectively) is said to be \emph{acyclic}, \emph{unicyclic}, \emph{bicyclic} (respectively).  In either of the first two cases, we call~$H$ \emph{bicycle-free} (or \emph{at most unicyclic}).
\end{definition}
\begin{definition}[Bicycle-free at radius~$r$]
    We say a multigraph is \emph{bicycle-free at radius~$r$} if the distance-$r$ neighborhood of every vertex is bicycle-free. Another way to say this is that a breadth-first search of depth~$r$, started at any vertex, encounters at most one ``back-edge''.  We remark that this notion was termed \emph{$r$-tangle-free} by Bordenave~\cite{Bor19}.
\end{definition}
\begin{proposition} \label{prop:lift-is-bike-free}
    If $G$ is bicycle-free at radius~$r$, and $G_2$ is a $2$-lift of $G$, then $G_2$ is bicycle-free at radius $r$.
\end{proposition}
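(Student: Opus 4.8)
The plan is to use the natural covering map $\phi : G_2 \to G$ and argue that, restricted to any radius-$r$ ball, $\phi$ cannot \emph{decrease} the excess. Since every radius-$r$ ball in $G$ has excess at most~$0$ by hypothesis, the same must then hold for every radius-$r$ ball in $G_2$, which is exactly what bicycle-freeness at radius~$r$ means.

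Concretely, let $w : E \to \{\pm 1\}$ be the edge-signing defining $G_2$ and let $\phi(v,\sigma) = v$. Every edge of $G_2$ has the form $\{(u,\sigma),(v,\sigma\cdot w(u,v))\}$ with $\{u,v\} \in E$, so $\phi$ is a graph homomorphism; in particular it carries walks to walks of the same length and hence is distance-non-increasing. Suppose toward a contradiction that some vertex $x = (v_0,\sigma_0)$ has $B_{G_2}(x,r)$ not bicycle-free. This induced ball is connected, so the failure means $\excess(B_{G_2}(x,r)) \ge 1$. Put $\Gamma_2 = B_{G_2}(x,r)$ and $\Gamma = \phi(\Gamma_2)$. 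Because $\phi$ is a distance-non-increasing homomorphism and $\Gamma_2$ is connected, $\Gamma$ is a connected subgraph of the induced ball $B_G(v_0,r)$.

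The heart of the argument is the inequality $\excess(\Gamma) \ge \excess(\Gamma_2)$. Since $\phi$ is exactly $2$-to-$1$ on vertices and on edges of $G_2$, if we let $a$ (resp.\ $b$) be the number of vertices (resp.\ edges) of $\Gamma$ that have two preimages inside $\Gamma_2$, then $|V(\Gamma_2)| = |V(\Gamma)| + a$ and $|E(\Gamma_2)| = |E(\Gamma)| + b$, so $\excess(\Gamma) = \excess(\Gamma_2) + (a - b)$; it therefore suffices to show $a \ge b$. The key observation is that if an edge of $\Gamma$ has both of its lifts present in $\Gamma_2$, then so do both of its endpoints. Hence the set $D$ of ``doubled'' edges forms a subgraph of $\Gamma$ all of whose vertices are ``doubled'' vertices, giving $b = |E(D)|$ and $|V(D)| \le a$. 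Finally, $D$ is a subgraph of $B_G(v_0,r)$, which by the bicycle-free hypothesis has cyclomatic number at most~$1$; a subgraph never has larger cyclomatic number (its cycle space embeds into the ambient one), so $D$ has cyclomatic number at most~$1$, whence $|E(D)| \le |V(D)|$. Chaining the estimates yields $b \le a$.

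Combining everything: $\excess(\Gamma) \ge \excess(\Gamma_2) \ge 1$, yet $\Gamma$ is a connected subgraph of $B_G(v_0,r)$, which is connected with excess~$\le 0$ (cyclomatic number~$\le 1$), forcing $\excess(\Gamma) \le 0$ — a contradiction. I expect the only delicate point is the careful bookkeeping of the ``doubled vertex/edge'' counts (including confirming $\phi$ is genuinely $2$-to-$1$ on the relevant pieces and that a subgraph's cyclomatic number cannot exceed the ambient graph's); the rest is routine. The same scheme handles general $n$-lifts with a slightly heavier counting argument, but only the $n = 2$ case is needed here.
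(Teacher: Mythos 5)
Your proof is correct, but it takes a genuinely different route from the paper's. The paper \emph{lifts}: it takes the radius-$r$ ball $H$ around $v$ in $G$, passes to the induced $2$-lift $H_2$ (which contains the ball around $(v,i)$ in $G_2$), notes that $\excess(H_2) = 2\,\excess(H) \le 0$, and finishes with a one-line dichotomy on whether $H_2$ is connected or is two disjoint copies of $H$. You instead \emph{project}: you push the ball in $G_2$ forward along the covering map $\phi$ into the ball in $G$, and prove $\excess(\Gamma) \ge \excess(\Gamma_2)$ by bookkeeping ``doubled'' vertices and edges. The projection direction is intrinsically trickier because $\phi$ restricted to a ball need not be a tidy $2$-to-$1$ cover; the work is absorbed in your observation that a doubled edge forces both its endpoints to be doubled, together with the fact that the doubled-edge subgraph $D$ inherits cyclomatic number $\le 1$ from $B_G(v_0,r)$, giving $|E(D)| \le |V(D)|$ and hence $b \le a$. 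I verified these steps and they are sound (including the edge case $D = \emptyset$, and the fact that a connected subgraph of a connected graph with excess $\le 0$ also has excess $\le 0$). The paper's lifting argument is shorter and exploits the clean structure of a $2$-lift of a fixed subgraph; your projection argument is heavier on counting but more mechanical and, as you note, extends to $n$-lifts without invoking the connected-or-two-copies dichotomy (for general $n$-lifts one would instead sum the same cyclomatic-number bound over the subgraphs $D_k$ of edges of multiplicity $\ge k$).
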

\begin{proof}
    Let $(v,i)$ be any vertex in $G_2$.  Let $H$ be the neighborhood of $v$ in $G$ and let $H_2$ be the subgraph of $G_2$ induced by $V(H)\times[2]$.  Observe that the distance-$r$ neighborhood of $(v,i)$ is contained in $H_2$, and that $\excess(H_2)\le 0$ since $\excess(H)\le 0$.  If $H_2$ is disconnected it is isomorphic to a disjoint union of two copies of $H$ and thus the distance-$r$ neighborhood of $(v,i)$ is then isomorphic to $H$.  Otherwise, if $H_2$ is connected, $\excess(H_2)\le 0$ implies that it has at most one cycle.
\end{proof}
%\begin{fact}                                        \label{fact:number-of-minimal-bikes}
%    (Given as an exercise in \cite[Proof of Thm.~5.5]{JLR00}.)  The number of vertex-labeled minimal bicycles on~$v$ vertices is at most $v^2v!$.
%\end{fact}
It is easy to see that any $n$-vertex, $d$-regular graph that is bicycle-free at radius~$r$ must have \mbox{$r \lesssim \log_{d-1} n$}.  On the other hand, as mentioned earlier, a random $d$-regular graph achieves this bound up to a constant factor, and we will derandomize the proof of this fact, within the $O(\log n)$-wise uniform configuration/lift model, in  \Cref{sec:bike-free}.

In a graph that is bicycle-free at radius~$r$, by definition we have $\excess(H) \leq 0$ for all subgraphs~$H$ contained in a single distance-$r$ neighborhood.  In fact, this property is enough to guarantee that $\excess(H)$ is small for \emph{any} subgraph~$H$ with at most $\exp(r)$ vertices, regardless of whether it's contained in a single distance-$r$ neighborhood:
\begin{theorem}                                     \label{thm:sparsity}
    Let $H$ be a $v$-vertex graph that is bicycle-free at radius~$r$.  Assume $r \geq 10 \ln v$.  Then $\excess(H) \leq \frac{\ln(ev)}{r}v$.
\end{theorem}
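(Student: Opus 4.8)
\emph{Overview.} The plan is: reduce to the $2$-core; observe that in a graph that is bicycle-free at radius~$r$ no bicyclic subgraph can be contained in a ball of radius~$r$, so every bicyclic subgraph has more than $r$ vertices; and then run a counting argument on the ``segment multigraph'' to conclude that all but a few segments must be long, which (since segment interiors are disjoint) bounds the number of segments and hence the excess.

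\emph{Step 1 (reduction to a connected $2$-core, and small~$v$).} If $v$ is below an absolute constant, then $r\ge 10\ln v$ already forces $r\ge v-1$, so each connected component of $H$ coincides with the radius-$r$ ball around any of its vertices and is therefore at most unicyclic; summing over components, $\excess(H)\le 0$, which is below the claimed bound. So assume $v$ is at least a large enough constant. Replace $H$ by its $2$-core: iteratively deleting degree-$\le 1$ vertices never decreases the excess, the vertex count only drops, and being bicycle-free at radius~$r$ passes to subgraphs (passing to a subgraph only shrinks balls, and cannot increase the number of independent cycles in any ball). Since $x\mapsto x\ln(ex)$ is superadditive it suffices to prove the bound componentwise, and a component that is a tree or a single cycle has $\excess\le 0$; so from now on $H$ is a connected $2$-core containing a vertex of degree~$\ge 3$, and we set $t:=\excess(H)\ge 1$, with the goal $t\le \tfrac{\ln(ev)}{r}v$.

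\emph{Step 2 (a bicyclic subgraph has more than $r$ vertices).} If $B\subseteq H$ is connected with $\excess(B)\ge 1$, then $B$ has at least two independent cycles, whereas for every vertex~$u$ the induced subgraph $H[B_r(u)]$ is connected with $\excess\le 0$ and so has at most one independent cycle; since a subgraph can have no more independent cycles than its host, $B\not\subseteq H[B_r(u)]$. Taking $u\in V(B)$ shows $\mathrm{diam}_B(V(B))>r$, so $B$ contains a path with more than $r$ edges and thus $|V(B)|>r$.

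\emph{Step 3 (the segment multigraph, and the crux).} Suppress the degree-$2$ vertices of $H$ to form the multigraph $H^\ast$ on the set $W$ of branch vertices (degree $\ge 3$), whose edges are the \emph{segments} of $H$ (maximal paths with degree-$2$ interior), so that $H$ is the subdivision of $H^\ast$ and $\excess(H^\ast)=t$. Minimum degree $\ge 3$ gives $|W|\le 2t$ and $\#\{\text{segments}\}=|W|+t\le 3t$. Call a segment \emph{long} if it has at least $\lambda:=\lceil r/\ln(ev)\rceil$ interior vertices; since segment interiors are pairwise disjoint, $\#\{\text{long segments}\}\le v/\lambda\le \tfrac{\ln(ev)}{r}v$. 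Hence it suffices to show that at most $|W|$ of the segments are short, i.e.\ that the submultigraph $H^\ast_{\mathrm s}\subseteq H^\ast$ carried by the short segments is at most unicyclic on each component: then $t=\#\{\text{segments}\}-|W|\le\#\{\text{long}\}\le\tfrac{\ln(ev)}{r}v$. Now suppose a component of $H^\ast_{\mathrm s}$ had two independent cycles, hence contained a bicyclic submultigraph using, say, $k$ short segments; re-subdividing, this yields a connected excess-$1$ subgraph of $H$ on at most $k(\lambda+1)$ vertices, so Step~2 forces $k>r/(\lambda+1)\approx\ln(ev)$. To contradict this we want a bicyclic submultigraph of $H^\ast$ using only $O(\ln(ev))$ short segments: here one uses that $H^\ast$ has minimum degree~$\ge 3$ on at most $2t$ vertices, so a breadth-first search in $H^\ast$ expands by a factor~$\ge 2$ at each step while it stays acyclic and therefore must close up within $O(\log t)=O(\ln(ev))$ steps, and produce a second independent cycle within another $O(\log t)$ steps; joining these gives a bicyclic submultigraph on $O(\ln(ev))$ segments. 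The delicate point --- and the main obstacle --- is to carry out this Moore-type argument so that the segments one exposes are the \emph{short} ones (the search must be run inside $H^\ast_{\mathrm s}$, where minimum degree $\ge 3$ can fail, so one has to bound how the long segments can be distributed among the branch vertices, and possibly iterate the estimate). Balancing this $O(\ln(ev))$ bound against the threshold~$\lambda$ is exactly what forces $\lambda$ to be of order $r/\ln(ev)$ rather than~$r$, and that is where the factor $\ln(ev)$ in the statement comes from; everything in Steps~1--3 outside this estimate is routine bookkeeping.
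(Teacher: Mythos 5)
Your Steps 1 and 2 are sound, and the segment-multigraph framework in Step 3 (branch vertices, $|W|\le 2t$, $\#\{\text{segments}\}\le 3t$, long segments $\le v/\lambda$) is a legitimate alternative decomposition to what the paper does. But the proof is incomplete, and you yourself flag the missing piece: to finish you need that the submultigraph $H^\ast_{\mathrm s}$ carried by the short segments has excess $\le 0$ on every component, and your argument by contradiction requires producing a bicyclic submultigraph of $H^\ast_{\mathrm s}$ on $O(\ln(ev))$ short segments. The natural Moore-type BFS that would deliver this uses minimum degree $\ge 3$, which holds in $H^\ast$ but can fail badly in $H^\ast_{\mathrm s}$ (a branch vertex may be incident to mostly long segments). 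You note this obstacle and suggest ``bound how the long segments can be distributed among the branch vertices, and possibly iterate the estimate'' --- but that is a description of the difficulty, not a resolution of it. As written, there is no argument ruling out a component of $H^\ast_{\mathrm s}$ whose smallest bicyclic submultigraph uses $\Theta(t)$ short segments (e.g.\ a long handcuff), which would not contradict Step 2 when $t\lambda \gg r$. So the crux step is missing.

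For comparison, the paper sidesteps exactly this issue: it invokes the Alon--Hoory--Linial Moore bound for \emph{irregular} graphs as a black box on the graph $\wt H$ obtained from $H$ by deleting one edge from each cycle of length $\le 2r$ (so $\mathrm{girth}(\wt H) > 2r$ and the Moore bound gives $\excess(\wt H)\le \frac{\ln v}{r}v$), and then bounds $|\Cyc_{2r}(H)| \le v/r$ by a disjoint-charging argument using the pairwise-disjoint $C^+$ neighborhoods (\Cref{prop:cyclebfs}). That decomposition only ever applies the Moore-type bound to the full graph $\wt H$, where the hypothesis is a clean girth bound, so the min-degree subtlety you ran into never arises. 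If you want to salvage your route you would need a genuine lemma controlling how sparse $H^\ast_{\mathrm s}$ can be forced to be given that $H^\ast$ has min degree $\ge 3$ and at most $v/\lambda$ long edges; the paper's approach avoids needing any such lemma.
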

\noindent The rest of this subsection is devoted to the proof of the above theorem of elementary graph theory.

\begin{definition}[$\Cyc_g(G)$ and girth]
    Given a graph~$G$, let $\Cyc_g(G)$ denote the collection of all cycles in~$G$ of length at most~$g$.  Recall that if $\Cyc_g(G)$ is empty then $G$ is said to have \emph{girth} exceeding~$g$.
\end{definition}
The following fact is essentially immediate from the definitions:
\begin{fact}                                        \label{fact:cycletouch}
    Suppose $G$ is bicycle-free at radius~$r$.  Then the cycles in $\Cyc_{2r}(G)$ are  vertex-disjoint.
\end{fact}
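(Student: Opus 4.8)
The plan is a short proof by contradiction. Suppose two \emph{distinct} cycles $C_1, C_2 \in \Cyc_{2r}(G)$ share a common vertex~$v$. The first thing I would observe is that any cycle~$C$ of length at most~$2r$ passing through~$v$ lies entirely inside the distance-$r$ neighborhood of~$v$: for a vertex~$u$ on~$C$, the two arcs of~$C$ joining~$v$ to~$u$ have lengths summing to $|C| \le 2r$, so the shorter arc has length at most $\lfloor |C|/2 \rfloor \le r$, giving $\mathrm{dist}_G(v,u) \le r$; since every edge of~$C$ joins two such vertices, $C$ is a subgraph of that neighborhood. Applying this to~$C_1$ and to~$C_2$, the distance-$r$ neighborhood of~$v$ contains both of them.

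Next I would note that this neighborhood is connected and contains the two distinct cycles $C_1 \ne C_2$. But a connected multigraph that is bicycle-free is, by definition, acyclic or unicyclic, and hence contains at most one cycle: an acyclic graph (a tree) has none, and a unicyclic graph ($\excess = 0$, i.e.\ a tree plus one edge) has exactly one. Therefore the distance-$r$ neighborhood of~$v$ is not bicycle-free, contradicting the hypothesis that $G$ is bicycle-free at radius~$r$. This completes the argument, and in particular shows that distinct cycles of $\Cyc_{2r}(G)$ are vertex-disjoint.

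I do not expect any real obstacle here; the statement genuinely is essentially immediate from the definitions. The only two points requiring a moment's care are the elementary ``shorter-arc'' estimate confining a cycle of length $\le 2r$ to a radius-$r$ ball around any of its vertices, and the observation that a connected bicycle-free graph contains at most one simple cycle, which is exactly the content of the acyclic/unicyclic/bicyclic definition.
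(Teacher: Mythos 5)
Your proof is correct and matches the paper's approach: the paper treats this fact as immediate and, in its generalization (\Cref{prop:cyclebfs}), uses exactly your argument that the distance-$r$ neighborhood of a shared vertex already contains both cycles, contradicting bicycle-freeness. The shorter-arc estimate and the observation that a connected graph of excess at most~$0$ has at most one cycle are precisely the implicit steps the paper elides.
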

Indeed, more generally:
\begin{proposition}                                     \label{prop:cyclebfs}
    Suppose $G$ is bicycle-free at radius~$r$.  For each $C \in \Cyc_{2r}(G)$, let $C^+$ denote the collection of vertices within distance $r - \textnormal{len}(C)/2$ of~$C$.  Then the sets $\{C^+ : C \in \Cyc_{2r}(G)\}$ are pairwise disjoint.
\end{proposition}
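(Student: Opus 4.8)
The plan is to argue by contradiction and reduce to the already-proven \Cref{fact:cycletouch}. Suppose $C_1 \ne C_2$ are cycles in $\Cyc_{2r}(G)$ sharing a vertex $w \in C_1^+ \cap C_2^+$. I will show that the whole of $C_1$ and the whole of $C_2$ sit inside the distance-$r$ neighborhood of $w$, and that this neighborhood therefore contains two vertex-disjoint cycles --- impossible for a graph that is bicycle-free at radius $r$.

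The one computation I would do first is the ``centering'' estimate. Fix $C \in \Cyc_{2r}(G)$ with $\textnormal{len}(C) = \ell \le 2r$ and any $u \in C$; walking around the cycle, every vertex of $C$ is at graph-distance at most $\ell/2$ from $u$. Hence if $w$ is within distance $r - \ell/2$ of $C$, say $\textnormal{dist}(w,u) \le r - \ell/2$ for some $u \in C$, then by the triangle inequality $\textnormal{dist}(w,y) \le (r - \ell/2) + \ell/2 = r$ for every $y \in C$. In short: for each $C \in \Cyc_{2r}(G)$ and each $w \in C^+$, the entire cycle $C$ (its vertices, and hence its edges) lies in the distance-$r$ neighborhood of $w$. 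Note this uses only $\ell \le 2r$, so that $r - \ell/2 \ge 0$ and $C^+$ indeed contains $C$'s vertices.

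Now I would apply this to both $C_1$ and $C_2$ at the common vertex $w$: both cycles are contained in $H$, the subgraph induced by the distance-$r$ neighborhood of $w$, and $H$ is connected. By \Cref{fact:cycletouch}, $C_1$ and $C_2$ are vertex-disjoint, hence edge-disjoint. Picking any spanning tree $T$ of $H$, the cycle $C_1$ must use an edge $e_1 \notin T$ and $C_2$ must use an edge $e_2 \notin T$, and $e_1 \ne e_2$ by edge-disjointness; thus $|E(H)| \ge (|V(H)| - 1) + 2$, i.e.\ $\excess(H) \ge 1$, so $H$ is not bicycle-free, contradicting the hypothesis. Since the case $C_1 = C_2$ is vacuous, this shows the sets $\{C^+ : C \in \Cyc_{2r}(G)\}$ are pairwise disjoint.

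I do not expect a genuine obstacle here: the argument is just the triangle inequality plus the elementary fact that a connected multigraph containing two edge-disjoint cycles has positive excess. The only points deserving a sentence of care are that $r - \textnormal{len}(C)/2 \ge 0$ on $\Cyc_{2r}(G)$ (handled above, and when it equals $0$ the claim degenerates to exactly \Cref{fact:cycletouch}), and the routine observation that ``within distance $k$ of $C$'' is witnessed by a single vertex of $C$, which is what lets the triangle inequality go through.
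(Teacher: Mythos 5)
Your proof is correct and is essentially the paper's proof, just spelled out: the paper states in one line that a common point $u \in C_1^+ \cap C_2^+$ forces both $C_1$ and $C_2$ into the radius-$r$ ball around $u$, which is exactly your triangle-inequality step, and the contradiction with bicycle-freeness is implicit there. Your detour through \Cref{fact:cycletouch} to get edge-disjointness is harmless but unnecessary --- any two distinct cycles in a connected subgraph already force excess at least $1$, so once both cycles sit in the distance-$r$ neighborhood of $w$ you are done.
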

\begin{proof}
    If $u \in C_1^+ \cap C_2^+$, the distance-$r$ neighborhood of~$u$ is enough to include both $C_1$ and~$C_2$.
\end{proof}
%
%\begin{remark}
%    A random $d$-regular graph $G$ will have cycle-distance $\gtrsim (1/4)\log_{d-1} n$ with high probability.  The heuristic explanation is as follows:  Consider a breadth-first search of depth $\eps \log_{d-1} n$ from some vertex~$u$.  This will explore essentially $n^{\eps}$ vertices.  For very small $\eps$ these vertices are very likely to be distinct.  However if~$\eps$ is at least a modest constant we might see a back-edge.  Indeed, if at this point we explore one distance deeper, we encounter $\Theta(n^{\eps})$ new edges, each of which has roughly a $\Theta(n^{\eps-1})$ chance of being a back-edge.  Heuristically, this yields one back-edge with probability $\Theta(n^{2\eps - 1})$, and two or more back-edges with probability $\Theta(n^{4\eps - 2})$.  As we need to union-bound over $n$ possible initial vertices~$u$, we can't prevent sometimes seeing one back-edge, but the probability of ever seeing two or more back-edges should be bounded by $\Theta(n^{4\eps - 1})$. This is small provided $\eps < 1/4$.
%\end{remark}
Next, let us now recall the ``Moore bound for irregular graphs''.  Suppose $H$ is a graph with $v$ vertices and $\excess(H) = \eps v$; hence $H$ has average degree $2+2\epsilon$.  If we build a breadth-first search tree from some vertex, then after depth~$t$ we would ``expect'' to encounter at least $(1+2\epsilon)^t$ vertices.  If this exceeds~$v$ --- roughly, if $t \geq (\ln v)/(2\epsilon)$ --- then the breadth-first search must encounter a cycle.  Thus we have a heuristic argument that $\text{girth}(H) \lesssim (\ln v)/\epsilon$; i.e., $\epsilon \lesssim (\ln v)/\text{girth}(H)$.  Indeed, Alon--Hoory--Linial have precisely established this kind of result; we quote their theorem in a slightly simplified form:
\begin{theorem}                                     \label{thm:ahl}
    (\cite{AHL02}.) Let $H$ be a graph with $v$ vertices, $\excess(H) = \eps v$  (for $\eps \geq 0$), and girth~$g$.  Then $v \geq (1+2\eps)^{g/2 - 3/2}$.
\end{theorem}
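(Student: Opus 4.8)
\medskip\noindent\textbf{Proof plan.}
The plan is to split $\excess(H)$ into the contribution of ``long'' cycles (length $> 2r$), which the Alon--Hoory--Linial Moore bound (\Cref{thm:ahl}) controls directly, and the contribution of ``short'' cycles (length $\le 2r$), of which I will show there are at most $v/r$. One cannot naively delete an edge from each short cycle and bound the two parts in isolation, because a bicycle-free-at-radius-$r$ graph can have $\Omega(v)$ short cycles (think of a disjoint union of triangles) --- far more than $\tfrac{\ln v}{r}v$. But short cycles are abundant only when $H$ is ``cheap'', so the accounting must use cancellation: delete one edge from each short cycle to obtain a graph $H^{*}$ of girth $> 2r$; then $\excess(H)=\excess(H^{*})+(\#\text{short cycles})$, and $\excess(H^{*})$ is automatically small (even negative) precisely when there are many short cycles.

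First I would reduce to connected components with positive excess: components $K$ with $\excess(K)\le 0$ do not contribute to the desired upper bound, so it suffices to bound $\sum_{K\in\mathcal B}\excess(K)$, where $\mathcal B$ is the set of components with $\excess(K)\ge 1$. The crucial observation is that every $K\in\mathcal B$ has more than $r$ vertices: a connected graph on at most $r+1$ vertices has diameter at most $r$, hence coincides with each of its own radius-$r$ balls, hence (by hypothesis) is bicycle-free, hence has excess at most $0$. In particular, if $v\le r+1$ then $\mathcal B=\emptyset$ and there is nothing to prove, so I may assume $v$ is large.

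Next I would bound the number of short cycles contained in $\mathcal B$-components. Fix $K\in\mathcal B$ and a cycle $C$ of $K$ with $\mathrm{len}(C)\le 2r$, and let $C^{+}$ be the set of vertices within distance $r-\mathrm{len}(C)/2$ of $C$, as in \Cref{prop:cyclebfs}. I claim $\abs{C^{+}}\ge r$: running a breadth-first search out of $V(C)$ within $K$, either it reaches depth $\lfloor r-\mathrm{len}(C)/2\rfloor$, in which case every intermediate layer is nonempty and layer $0$ already has $\mathrm{len}(C)$ vertices, so $\abs{C^{+}}\ge\mathrm{len}(C)+\lfloor r-\mathrm{len}(C)/2\rfloor\ge r$; or it terminates earlier, in which case the visited set has no boundary edges, hence is a union of components of $K$, hence (as $K$ is connected) all of $K$, so $C^{+}=V(K)$ and $\abs{C^{+}}=\abs{V(K)}>r$ by the previous paragraph. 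Since by \Cref{fact:cycletouch} the cycles in $\Cyc_{2r}(H)$ are vertex-disjoint and by \Cref{prop:cyclebfs} the sets $C^{+}$ over $C\in\Cyc_{2r}(H)$ are pairwise disjoint, summing $\abs{C^{+}}\ge r$ over all short cycles $C$ lying in $\mathcal B$-components shows that their total number $j^{*}$ satisfies $j^{*}r\le v$, i.e.\ $j^{*}\le v/r$.

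Finally, for each $K\in\mathcal B$ delete one edge from each of its short cycles (these are vertex-disjoint, hence edge-disjoint, so exactly $j_{K}$ edges and no vertices are removed), and let $H^{*}$ be the disjoint union of the resulting graphs. Then $H^{*}$ has girth $>2r$ (any surviving cycle of length $\le 2r$ would be an undeleted member of $\Cyc_{2r}$), at most $v$ vertices, and $\sum_{K\in\mathcal B}\excess(K)=\excess(H^{*})+j^{*}$. Discarding the acyclic components of $H^{*}$ only increases the remaining total excess, so it suffices to bound $\sum_{M}\excess(M)$ over the cyclic components $M$ of $H^{*}$; each such $M$ has girth $>2r$, so \Cref{thm:ahl} gives $(1+2\eps_{M})^{r-3/2}\le\abs{V(M)}\le v$ with $\eps_{M}=\excess(M)/\abs{V(M)}\ge 0$, and a routine calculation using $r\ge 10\ln v$ (with room to spare) yields $\eps_{M}\le\tfrac{\ln v}{r}$, hence $\excess(H^{*})\le\sum_{M}\eps_{M}\abs{V(M)}\le\tfrac{\ln v}{r}v$. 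Altogether $\excess(H)\le\sum_{K\in\mathcal B}\excess(K)=\excess(H^{*})+j^{*}\le\tfrac{\ln v}{r}v+\tfrac{v}{r}=\tfrac{\ln(ev)}{r}v$. The only real obstacle is the one flagged at the outset: the number of short cycles is genuinely not $O(\tfrac{\ln v}{r}v)$, so the argument must lean on the claim $\abs{C^{+}}\ge r$ (valid exactly in positive-excess components), which is what converts the disjointness in \Cref{prop:cyclebfs} into the bound $j^{*}\le v/r$; the Moore-bound estimate on $H^{*}$ is then routine.
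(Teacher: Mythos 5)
The statement you were asked to prove is \Cref{thm:ahl} itself --- the Alon--Hoory--Linial Moore bound for irregular graphs: a graph with $v$ vertices, $\excess(H)=\eps v$, and girth $g$ satisfies $v \ge (1+2\eps)^{g/2-3/2}$. Your proposal never proves this. Instead it proves \Cref{thm:sparsity} (the excess bound for graphs bicycle-free at radius $r$), and it does so by invoking \Cref{thm:ahl} as a black box (``so \Cref{thm:ahl} gives $(1+2\eps_M)^{r-3/2}\le |V(M)|$''). As a proof of the target statement this is circular: the Moore bound is assumed, not established. In the paper, \Cref{thm:ahl} is quoted from \cite{AHL02} without proof; the breadth-first-search sketch preceding it (``after depth $t$ we would expect at least $(1+2\eps)^t$ vertices'') is explicitly only a heuristic, because the hypothesis $\excess(H)=\eps v$ controls the \emph{average} degree $2+2\eps$, while individual vertices may have degree $1$ or $2$, so no particular BFS tree need grow at rate $1+2\eps$. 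A genuine proof must convert the average-degree hypothesis into a uniform growth estimate --- e.g.\ the argument of Alon--Hoory--Linial using a stationary non-backtracking random walk (or a careful averaging over starting vertices after pruning degree-one vertices) --- and none of that appears in your write-up.

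For what it is worth, the argument you did write is essentially the paper's own proof of \Cref{thm:sparsity}: delete one edge per cycle of $\Cyc_{2r}(H)$ to reach girth $>2r$, apply the Moore bound through \Cref{cor:ahl}, and bound the number of short cycles by $v/r$ via the disjointness of the sets $C^+$ from \Cref{prop:cyclebfs}; your lower bound $|C^+|\ge r$ in positive-excess components parallels the paper's charging of $\mathrm{len}(C)+(r-\mathrm{len}(C)/2)>r$ vertices per cycle, and your component bookkeeping fills in the step the paper leaves as an exercise. So that portion is sound --- it just answers a different question from the one posed, and leaves \Cref{thm:ahl} unproved.
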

\begin{corollary}                                       \label{cor:ahl}
    Let $H$ be a graph with $v \geq 3$ vertices and girth~$g \geq 20\ln v$. Then $\excess(H) \leq ((2\ln v)/g)v$.
\end{corollary}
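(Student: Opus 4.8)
The plan is to derive Corollary~\ref{cor:ahl} directly from Theorem~\ref{thm:ahl} by taking logarithms and linearizing. First dispose of the trivial case: if $\excess(H) \leq 0$, the claimed inequality holds because its right-hand side $\tfrac{2\ln v}{g}\,v$ is strictly positive (we have $v \geq 3$, so $\ln v > 0$, and $g \geq 20\ln v > 0$). So assume $\excess(H) > 0$; then $H$ contains a cycle, its girth $g$ is finite, and we may write $\excess(H) = \eps v$ with $\eps > 0$. Now Theorem~\ref{thm:ahl} applies and gives $v \geq (1+2\eps)^{g/2 - 3/2}$, i.e.
\[
\ln v \;\geq\; \Bigl(\tfrac{g}{2} - \tfrac{3}{2}\Bigr)\ln(1+2\eps).
\]

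It remains to convert this into the linear bound $\eps \leq (2\ln v)/g$. The only subtlety is that $\ln(1+2\eps)$ is concave, not linear, in $\eps$, so a careless first-order approximation would lose a constant factor; I would spend a little of the generous ``$20$'' in the girth hypothesis to fix this. \emph{Step one:} rule out $\eps$ of constant size. If $\eps \geq 1/2$ then $\ln(1+2\eps) \geq \ln 2$, so the displayed inequality together with $g \geq 20\ln v$ forces $\ln v \geq (10\ln v - \tfrac32)\ln 2$, which fails for $v \geq 3$; hence $\eps < 1/2$, so $2\eps \in [0,1]$. \emph{Step two:} on $[0,1]$ the function $x \mapsto \ln(1+x) - (\ln 2)x$ is concave and vanishes at $x=0$ and $x=1$, hence is nonnegative there, giving $\ln(1+2\eps) \geq 2(\ln 2)\eps$. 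Substituting yields $\ln v \geq (\ln 2)(g-3)\,\eps$, i.e.\ $\eps \leq \dfrac{\ln v}{(\ln 2)(g-3)}$, and one checks $\dfrac{1}{(\ln 2)(g-3)} \leq \dfrac{2}{g}$, which is equivalent to $g \geq \dfrac{6\ln 2}{2\ln 2 - 1} \approx 10.8$ — comfortably true since $g \geq 20\ln v \geq 20\ln 3 > 21$. Combining gives $\excess(H) = \eps v \leq \dfrac{2\ln v}{g}\,v$, as desired.

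The main (and really only) obstacle is the concavity point in Step one/two: one cannot linearize $\ln(1+2\eps)$ losslessly until $\eps$ is confined to a bounded range, so the argument must first eliminate the large-$\eps$ regime and then track the resulting numerical constants carefully to confirm that the factor-of-two bound survives. Everything else is immediate from Theorem~\ref{thm:ahl} and elementary one-variable inequalities.
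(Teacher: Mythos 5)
Your derivation is correct, and it is worth noting that the paper states \Cref{cor:ahl} without proof, treating it as a routine consequence of \Cref{thm:ahl}; your argument is a complete and careful filling-in of exactly the computation the authors must have had in mind. I checked the two numerical steps: from $\ln v \geq (g/2 - 3/2)\ln(1+2\eps)$, the assumption $\eps \geq 1/2$ together with $g \geq 20\ln v$ gives $\ln v \geq (10\ln v - 3/2)\ln 2$, which rearranges to $\ln v \,(10\ln 2 - 1) \leq \tfrac{3}{2}\ln 2$, i.e.\ $\ln v \lesssim 0.18$, contradicting $\ln v \geq \ln 3 > 1$; and the chord inequality $\ln(1+x) \geq (\ln 2)\,x$ on $[0,1]$ (concavity, equality at the endpoints) gives $\eps \leq \ln v / \bigl((\ln 2)(g-3)\bigr)$, which is at most $(2\ln v)/g$ precisely when $g \geq 6\ln 2/(2\ln 2 - 1) \approx 10.8$, comfortably below $20\ln 3$. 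You have also correctly identified the one genuine subtlety — that a naive first-order linearization of $\ln(1+2\eps)$ would quietly degrade the constant, so one must first confine $\eps$ to a bounded interval using the slack in the ``$20$'' — and handled it cleanly. Nothing is missing; the boundary case $\excess(H)\le 0$ is correctly dispatched since the right-hand side is strictly positive.
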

We can now prove \Cref{thm:sparsity}, which replaces ``girth'' with ``bicycle-free radius'' in the above with only a small loss in parameters.
\begin{proof}[Proof of \Cref{thm:sparsity}]
    We will show the theorem assuming~$H$ is connected (the only case we'll need).  It is an exercise to extend it to the general case by considering~$H$'s connected components.%\rnote{basically need $\sum_i v_i \ln v_i \leq (\sum_i v_i) \ln (\sum_i v_i)$, which I'm pretty sure is true.  only going to use this for connected $H$ anyway}

    Let $c = |\Cyc_{2r}(H)|$.  By deleting at most~$c$ edges from~$H$ we can obtain a $v$-vertex graph~$\wt{H}$ with girth at least (in fact, exceeding)~$2r$. Applying \Cref{cor:ahl} to~$\wt{H}$, we conclude that \mbox{$\excess(H) \leq \frac{\ln v}{r}v + c$}.  Thus it remains to show $c \leq v/r$.  This is trivial if $c = 0$, and if $c = 1$ then it can only fail if $r > v$ --- but then $H$ is unicyclic and hence has excess~$0$.  Assuming then that $c \geq 2$, choose paths in~$H$ to minimally connect the~$c$ cycles of $\Cyc_{2r}(H)$.  Now for each $C \in \Cyc_{2r}(H)$, if we ``charge'' to it the $r-\textnormal{len}(C)/2$ closest path-vertices, then no vertex is charged to multiple cycles, by virtue of \Cref{prop:cyclebfs}.  If we also charge the vertices of~$C$ to itself, then for each $C \in \Cyc_{2r}(H)$ we have charged a batch of
    $
        \textnormal{len}(C) + (r-\textnormal{len}(C)/2) > r
    $  vertices, and these batches are disjoint.  Thus $cr \leq v$, i.e.\ $c \leq v/r$, as required.
\end{proof}

%\subsection{Edge-signings and $2$-lifts}
%Here we recall some simple facts about edge-signings and $2$-lifts of graphs.
%\begin{definition} [Edge-signing a graph]
%    Let $G = (V,E)$ be a simple undirected graph.  An \emph{edge-signing} is any map $w : E \to \{\pm 1\}$.  The associated
%\end{definition}

\subsection{Non-backtracking walks and the Ihara--Bass formula} \label{sec:iharabass}
The Friedman/Bordenave theorem ultimately uses the Trace Method to analyze the eigenvalues of random $d$-regular graphs; this involves counting closed walks in them.  As observed in~\cite{Fri08,Bor19}, it is much easier to count \emph{non-backtracking} walks, and luckily the \emph{Ihara--Bass formula} gives an easy translation between eigenvalues of the adjacency matrix of a graph and the eigenvalues of its \emph{non-backtracking matrix}.
\begin{definition}[Non-backtracking matrix~\cite{Has89}]
    Let $G = (V,E)$ be a multigraph with adjacency matrix~$A$.  Let $\vec{E}$ denote the (multi)set of all directed edges formed by replacing each undirected edge in~$E$ with two opposing directed edges.  Then $G$'s \emph{non-backtracking} matrix~$B$ has rows and columns indexed by~$\vec{E}$, with
    \[
        B_{(u_1,v_1),(u_2,v_2)} =
        \begin{cases}
            1 &\text{if $v_1 = u_2$ and $v_2 \ne u_1$,}\\
            0 &\text{otherwise}.
        \end{cases}
    \]
    (Note that this matrix is not symmetric in general.)    In case $G$ is an edge-signed graph, the entry~$1$ above should be replaced by $A_{u_2,v_2}$, the sign of~$G$ on edge~$\{u_2,v_2\}$.
\end{definition}
In a number-theoretic context, Ihara~\cite{Iha66} implicitly showed a relationship between the eigenvalues of~$A$ and~$B$ when $G$ is regular.  Serre~\cite{Ser77} and several others suggested the translation to graph theory, and Bass~\cite{Bas92} (following~\cite{Has89}) explicitly established:
\begin{theorem}                                     \label{thm:ib}
    (Ihara--Bass formula.)  Let $G$ be a $d$-regular (multi)graph and write $q = d-1$.  Then
    \[
        \det(\Id - zB) = (1-z^2)^{\excess(G)}\det((1+qz^2)\Id - zA),
    \]
    where $\Id$ denotes the identity matrix (of appropriate dimension).
%        Let $G$ be $d$-regular with $m$ vertices, $n$ edges\snote{move `excess' to prelims and use it instead of $m-n$} and $\pm 1$ weights, and define its deformed Laplacian $L(u) := u^2(d-1)\Id-uA+\Id$.  Then,
%    \[
%        \det(\Id - uB) = (1-u^2)^{m-n}\det(L(u)).
%    \]
\end{theorem}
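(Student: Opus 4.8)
The plan is to derive the Ihara--Bass formula from a handful of bookkeeping identities among incidence-type matrices together with Sylvester's determinant identity $\det(\Id - XY) = \det(\Id - YX)$ (equivalently, two Schur-complement reductions of a single block matrix, which is Bass's original argument). The one real idea is the factorization $\Id - zB = (\Id + z\Pi) - zT^\top S$, where $\Pi$ is the edge-reversal involution and $S,T$ are the obvious ``tail'' and ``head'' incidence matrices; everything else is mechanical.

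\textbf{Setup.} For a directed edge $e = (u,v) \in \vec{E}$ write $t(e)=u$ (tail), $h(e)=v$ (head); let $\tau\colon\vec{E}\to\vec{E}$ reverse an edge and let $\Pi\in\{0,1\}^{\vec{E}\times\vec{E}}$ be its permutation matrix, so $\Pi^2=\Id$. Since $\tau$ decomposes the $2|E|$ directed edges into $|E|$ transpositions, $\Pi$ has eigenvalue $+1$ with multiplicity $|E|$ and $-1$ with multiplicity $|E|$, hence $\det(\Id + z\Pi) = (1+z)^{|E|}(1-z)^{|E|} = (1-z^2)^{|E|}$ and $(\Id+z\Pi)^{-1} = \tfrac{1}{1-z^2}(\Id - z\Pi)$. (In the edge-signed case replace $\Pi$ by $\Pi^{(w)} = \Pi\cdot\mathrm{diag}(w)$; since $w(e)=w(\tau(e))=\pm 1$ this is still an involution with the same spectrum, so this determinant is unchanged. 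Also, for a multigraph with parallel edges, ``non-backtracking'' should be read as ``one may not immediately retraverse the edge just used,'' i.e.\ the forbidden move is exactly $\tau$; this is the version for which the formula holds.) Define $S,T\in\{0,1\}^{V\times\vec{E}}$ by $S_{v,e}=\mathbf{1}[t(e)=v]$, $T_{v,e}=\mathbf{1}[h(e)=v]$ (in the signed case, weight the entries of $S$ by $w(e)$). A one-line check of each gives the four identities I need: $ST^\top = A$ (the adjacency matrix, signed in the signed case); $TT^\top = D = d\,\Id_V$ by $d$-regularity; $S\Pi = T$ (reversal swaps tail and head); and $B = T^\top S - \Pi$, since $(T^\top S)_{e_1,e_2} = \mathbf{1}[h(e_1)=t(e_2)]$ and subtracting the reversal term recovers $B$.

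\textbf{The computation.} Working over the field of rational functions in $z$ (equivalently, for generic $z$; the endpoints are recovered at the end since, when $\excess(G)\ge 0$, both sides are polynomials), I compute
\[
  \det(\Id_{\vec E} - zB) = \det\bigl((\Id+z\Pi) - zT^\top S\bigr) = \det(\Id+z\Pi)\cdot\det\bigl(\Id_{\vec E} - z(\Id+z\Pi)^{-1}T^\top S\bigr).
\]
By Sylvester's identity, writing the product as $X Y$ with $X = z(\Id+z\Pi)^{-1}T^\top$ and $Y = S$,
\[
  \det\bigl(\Id_{\vec E} - z(\Id+z\Pi)^{-1}T^\top S\bigr) = \det\bigl(\Id_V - zS(\Id+z\Pi)^{-1}T^\top\bigr).
\]
Now substitute $(\Id+z\Pi)^{-1} = \tfrac{1}{1-z^2}(\Id - z\Pi)$ and use $ST^\top = A$ and $S\Pi T^\top = TT^\top = d\,\Id_V = (q+1)\Id_V$ to get $S(\Id+z\Pi)^{-1}T^\top = \tfrac{1}{1-z^2}\bigl(A - (q+1)z\,\Id_V\bigr)$. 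Hence
\[
  \det\bigl(\Id_V - zS(\Id+z\Pi)^{-1}T^\top\bigr) = \det\Bigl(\tfrac{1}{1-z^2}\bigl((1-z^2)\Id_V - zA + (q+1)z^2\Id_V\bigr)\Bigr) = \tfrac{1}{(1-z^2)^{|V|}}\det\bigl((1+qz^2)\Id_V - zA\bigr),
\]
using $1 - z^2 + (q+1)z^2 = 1+qz^2$. Multiplying by $\det(\Id+z\Pi) = (1-z^2)^{|E|}$ and recalling $|E|-|V| = \excess(G)$ yields exactly $\det(\Id - zB) = (1-z^2)^{\excess(G)}\det\bigl((1+qz^2)\Id - zA\bigr)$.

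\textbf{Where care is needed.} There is no serious obstacle; the proof is bookkeeping. The points that require attention are: getting the four auxiliary identities exactly right (tracking tails versus heads, and in the signed case putting the weights $w(e)$ on $S$ rather than $T$ so that the signs land on the correct directed edge in $B$); and observing that the determinant manipulations are a priori valid only where $1-z^2\neq 0$ and $\Id+z\Pi$ is invertible, so the identity is first established as an identity of rational functions and then promoted to an identity of polynomials. If one wishes to avoid invoking Sylvester's identity, the entire calculation can instead be packaged as two Schur-complement factorizations of the $(|V|+|\vec E|)\times(|V|+|\vec E|)$ matrix $\left(\begin{smallmatrix}\Id_V & zS \\ T^\top & \Id_{\vec E}\end{smallmatrix}\right)$-type block matrix, which is Bass's original argument.
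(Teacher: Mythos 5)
Your proof is correct, and it is a clean rendering of Bass's original argument: the factorization $\Id - zB = (\Id + z\Pi) - zT^{\top}S$, the computation $\det(\Id + z\Pi) = (1-z^2)^{|E|}$, and the Sylvester/Schur-complement swap $\det(\Id_{\vec E} - XY) = \det(\Id_V - YX)$, followed by the elementary simplification $S(\Id+z\Pi)^{-1}T^{\top} = \tfrac{1}{1-z^2}(A - (q+1)z\,\Id_V)$. All four incidence identities ($ST^{\top}=A$, $TT^{\top}=d\,\Id_V$, $S\Pi=T$, $B=T^{\top}S-\Pi$) check out, the signed-case modification (weighting $S$ and $\Pi$ by $w$) is handled consistently, and the rational-function-to-polynomial promotion is the right way to dispose of the invertibility caveat.

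The comparison with the paper, however, is somewhat moot: the paper does not prove \Cref{thm:ib} at all. It states it as a citation (Bass~'92, following Hashimoto; and \cite{WF09} for the edge-weighted/signed generalization in \Cref{thm:ib2}) and explicitly remarks that the only consequence it needs is \Cref{prop:IB} --- the spectral-inclusion fact that if $\lambda + q/\lambda$ is an eigenvalue of $A$ (with $\lambda \neq 0, \pm 1$) then $\lambda$ is an eigenvalue of $B$. For that proposition the paper gives a short self-contained proof by a completely different route: from an $A$-eigenvector $f$ it directly constructs a $B$-eigenvector $g$ on $\vec{E}$ via $g_{vw} = A_{vw}f_v - \lambda f_w$ and verifies $Bg = \lambda g$ by a two-line computation. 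That eigenvector argument is strictly weaker than what you proved (it gives the inclusion $\lambda \in \spec(B)$ but not the exact characteristic-polynomial identity, e.g.\ it says nothing about the $(1-z^2)^{\excess(G)}$ factor or multiplicities), but it is shorter and is all the paper uses. So: you have supplied a correct and complete proof of a theorem the paper deliberately leaves unproven, via the standard determinantal route; the paper instead proves only its needed corollary by exhibiting eigenvectors.
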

This theorem has been given many proofs, and it can be generalized to irregular graphs, edge-weighted graphs, and infinite graphs.  We will use the following result, which is immediate from the edge-weighted generalization~\cite{WF09} when all weights are~$\pm 1$:
\begin{theorem}                                     \label{thm:ib2}
    (\cite{WF09}.) The Ihara--Bass formula holds  as stated above for edge-signed graphs.
\end{theorem}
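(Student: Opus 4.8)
The plan is to obtain this from the edge-weighted Ihara--Bass formula, which is the general shape proved in \cite{WF09} (and, in closely related forms, by several earlier authors). That version reads: given a (multi)graph $G=(V,E)$ with a weight $w(e)$ on each edge, let $A$ be the weighted adjacency matrix with $A_{uv}=w(\{u,v\})$, let $D$ be the diagonal matrix whose $u$th diagonal entry is $\sum_{v}w(\{u,v\})^2$, and let $B$ be the weighted non-backtracking matrix with $B_{(u_1,v_1),(u_2,v_2)} = w(\{u_2,v_2\})\cdot 1[v_1=u_2,\ v_2\neq u_1]$. Then
\[
\det(\Id - zB) = (1-z^2)^{|E|-|V|}\det\bigl(\Id - zA + z^2(D-\Id)\bigr).
\]

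First I would check the definition-chase that, upon restricting to weights $w(e)\in\{\pm 1\}$, these $A$ and $B$ coincide exactly with the edge-signed adjacency and non-backtracking matrices defined above; this uses only that the sign of a directed edge is that of its underlying undirected edge. Next I would invoke $d$-regularity: since $w(\{u,v\})^2=1$ for each edge, every weighted degree equals $d$, so $D=d\,\Id$ and $z^2(D-\Id)=(d-1)z^2\,\Id = qz^2\,\Id$, whence $\Id - zA + z^2(D-\Id) = (1+qz^2)\,\Id - zA$. Since $|E|-|V|=\excess(G)$ by definition, the display becomes verbatim the assertion of the theorem. The only substantive point here is the definitional check just mentioned, together with the observation that the factor $(1-z^2)^{\excess(G)}$ depends only on the underlying graph and not on the weights --- in the weighted formula it is literally the same factor as in the unweighted one.

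If a self-contained proof were wanted, I would instead run the determinant-identity proof of Bass (in the streamlined Kotani--Sunada / Stark--Terras form) with weight matrices inserted. Introduce $S,T\in\{0,1\}^{V\times\vec{E}}$ recording the start and end vertex of each directed edge, the orientation-reversing involution $J$ on $\vec{E}$, and the diagonal matrix $W$ on $\vec{E}$ with $W_{e,e}$ the sign of the undirected edge underlying $e$. One verifies the identities $A = SWT^{\top} = TWS^{\top}$, $D = SS^{\top}=TT^{\top}$ (using $W^2=\Id$), and $B = (T^{\top}S - J)W$, and then expands a $(|V|+|\vec{E}|)\times(|V|+|\vec{E}|)$ block matrix as a product of triangular blocks in two different orders, exactly as in the unsigned proof, but carrying the factors of $W$ along. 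I expect the only real difficulty on this route to be bookkeeping: getting each factor of $W$ onto the correct side so that the plain weighted-degree term $D-\Id$ (rather than some sign-twisted variant) emerges on the vertex side, and so that the exponent of $(1-z^2)$ is left untouched as $\excess(G)$. Once those three matrix identities are pinned down, the remaining algebra is identical to the classical case.
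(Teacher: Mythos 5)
Your first route — specializing the edge-weighted Ihara--Bass formula of \cite{WF09} to weights in $\{\pm 1\}$, noting $w^2=1$ forces the weighted degree matrix to be $d\,\Id$ so that $z^2(D-\Id)$ collapses to $qz^2\,\Id$ — is exactly what the paper does; it cites \cite{WF09} and states the edge-signed case is immediate from the weighted generalization, offering no further proof. Your additional self-contained Bass-style block-determinant sketch is a correct alternative but goes beyond what the paper provides (the paper instead gives a direct eigenvector proof only of the weaker consequence it needs, \Cref{prop:IB}).
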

The utility of Ihara--Bass is that it gives a direct correspondence between the spectra of~$A$ and~$B$. To see this, consider the zeroes of the polynomials (in~$z$) on the left- and right-hand sides.  We have that $z$ is a zero of the left-hand side precisely if $z^{-1}$ is an eigenvalue of~$B$.  On the other hand, $z$~is a zero of the right-hand side precisely if $z^{-1} = \pm 1$ or if $z^{-1}$ is such that $z^{-1} + q/z^{-1}$ is an eigenvalue of~$A$.  Thus if we want to deduce, say, the eigenvalues of~$B$ from the eigenvalues of~$A$, we have the following:
\begin{proposition}                                     \label{prop:IB}
    (Consequence of Ihara--Bass.)  Let $G = (V,E)$ be a $(q+1)$-regular edge-signed graph with adjacency matrix~$A$ and non-backtracking matrix~$B$.  Let $\lambda \neq 0, \pm 1$ be a number such that $\lambda + q/\lambda$ is an eigenvalue of~$A$.  Then $\lambda$ is an eigenvalue of~$B$.
\end{proposition}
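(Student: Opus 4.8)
The plan is to simply evaluate the Ihara--Bass formula (\Cref{thm:ib2}), which is valid for edge-signed graphs, at the point $z = 1/\lambda$; this is legitimate since $\lambda \neq 0$ and the identity
\[
    \det(\Id - zB) = (1-z^2)^{\excess(G)}\det((1+qz^2)\Id - zA)
\]
is an identity of polynomials in~$z$, hence holds at every complex number. So first I would substitute $z = 1/\lambda$ and show the right-hand side vanishes.

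The key step is to massage the matrix appearing on the right. Writing
\[
    (1 + q\lambda^{-2})\Id - \lambda^{-1} A = \lambda^{-1}\parens*{(\lambda + q/\lambda)\Id - A},
\]
we get $\det\parens*{(1+q\lambda^{-2})\Id - \lambda^{-1}A} = \lambda^{-|V|}\det\parens*{(\lambda + q/\lambda)\Id - A}$, and this is~$0$ by the hypothesis that $\lambda + q/\lambda$ is an eigenvalue of~$A$. The only thing to be careful about is the prefactor $(1 - \lambda^{-2})^{\excess(G)}$: we need it to be a finite (bounded) quantity so that the product is genuinely~$0$. This is fine because $\lambda \neq \pm 1$ forces $1 - \lambda^{-2} \neq 0$; and in any case, for a $(q+1)$-regular $n$-vertex graph with $q \geq 2$ we have $\excess(G) = \tfrac{q-1}{2}n \geq 0$, so the prefactor is a bona fide polynomial evaluated at a point. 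Either way the right-hand side equals~$0$ at $z = 1/\lambda$.

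Therefore $\det(\Id - \lambda^{-1}B) = 0$. To finish, I would factor $\Id - \lambda^{-1}B = \lambda^{-1}(\lambda\Id - B)$ and take determinants (again using $\lambda \neq 0$), obtaining $\det(\lambda\Id - B) = 0$, i.e.\ $\lambda$ is an eigenvalue of~$B$. There is no real obstacle here — the proposition is a two-line consequence of Ihara--Bass — so the ``hard part'' is merely bookkeeping: keeping track of the spurious $z = \pm 1$ roots coming from the $(1-z^2)^{\excess(G)}$ factor (which is exactly why the hypothesis excludes $\lambda = \pm 1$), and remembering that the correspondence sends eigenvalues $\mu$ of~$A$ to the pair of values $\lambda$ with $\lambda + q/\lambda = \mu$, i.e.\ the two roots of $\lambda^2 - \mu\lambda + q = 0$.
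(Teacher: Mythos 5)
Your proof is correct, but it takes a genuinely different route from the paper. You substitute $z = 1/\lambda$ directly into the Ihara--Bass determinant identity (\Cref{thm:ib2}), observe that the right-hand side vanishes because $\det\bigl((1+q\lambda^{-2})\Id - \lambda^{-1}A\bigr) = \lambda^{-|V|}\det\bigl((\lambda + q/\lambda)\Id - A\bigr) = 0$, and conclude $\det(\Id - \lambda^{-1}B) = 0$. This is a clean black-box application of \Cref{thm:ib2}. The paper instead goes out of its way to give a \emph{self-contained} proof that does not rely on the Ihara--Bass formula at all: starting from an eigenvector $f$ of $A$ with eigenvalue $\lambda + q/\lambda$, it explicitly constructs the vector $g_{vw} = A_{vw}f_v - \lambda f_w$ on directed edges and verifies $Bg = \lambda g$ by direct calculation, using $\lambda \neq \pm 1$ only to ensure $g \not\equiv 0$. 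The paper does this deliberately (it remarks that \Cref{prop:IB} is the only consequence of Ihara--Bass it needs, so it prefers to bypass \Cref{thm:ib,thm:ib2} entirely). Your approach is shorter modulo accepting \Cref{thm:ib2} as given; the paper's approach is more elementary, produces an explicit eigenvector, and keeps the exposition independent of the determinant machinery. One small remark on your bookkeeping: for the direction you are proving (from an eigenvalue of $A$ to one of $B$), the $\lambda \neq \pm 1$ hypothesis is strictly needed only when $\excess(G) < 0$ (to prevent $\infty \cdot 0$); since you correctly note $\excess(G) = \tfrac{q-1}{2}n \geq 0$ for $q \geq 1$, the restriction is actually harmless here, and is present mainly because the reverse direction (reading off $A$'s spectrum from $B$'s) genuinely fails at $z = \pm 1$.
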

In fact, \Cref{prop:IB} is the only consequence of Ihara--Bass we will need in this paper, and for the convenience of the reader we give a self-contained proof (inspired by~\cite{AFH15}):
\begin{proof}
    Let $f : V \to \C$ be an eigenvector for~$A$ with eigenvalue $\lambda + q/\lambda$.  Define $g : \vec{E} \to \C$ by $g_{vw} = A_{vw}f_v - \lambda f_w$.  We claim that $Bg = \lambda g$.  It then follows that $\lambda$ is an eigenvalue of~$B$, given that  $g \not \equiv 0$ (a consequence of $f \not \equiv 0$: choose $\{v,w\} \in E$ with $f_v, f_w$ not both~$0$, and then $g_{vw} = 0 = g_{wv}$ is impossible because $\lambda \neq \pm 1$).    To verify the claim, for any $uv \in \vec{E}$ we have
    \[
           (B g)_{uv}
        = \sum_{\substack{w \sim v \\ w \neq u}} A_{vw}g_{vw}
        %= \sum_{w \sim v  } A_{vw} g_{vw} - A_{vu} g_{vu}
        %= \sum_{\substack{\{v,w\} \in E \\ w \neq u}} A_{v,w} (f_v - \lambda f_w)
        = \sum_{w \sim v} A_{vw}(A_{vw}f_v - \lambda f_w) -  A_{vu} (A_{vu} f_v - \lambda f_u)
        = -\lambda \sum_{w \sim v} A_{vw} f_w + q f_v + \lambda A_{vu} f_u.
    \]
    But
    $
         \sum_{w \sim v} A_{vw}f_w = (Af)_v = (\lambda + q/\lambda)f_v.
    $
    Thus $(B g)_{uv} = -\lambda^2 f_v + \lambda A_{vu} f_u = \lambda g_{uv}$, as needed.% to show $Bg = \lambda g$.%.  As this holds for all $uv \in \vec{E}$, and $g \not \equiv 0$, we conclude that $g$ is an eigenvector for $B$ of eigenvalue~$\lambda$.
\end{proof}
When $G$ is unsigned, $A$ has a ``trivial'' eigenvalue of $d = q+1$, corresponding to $\lambda = q$; this yields the ``trivial'' eigenvalue of $q = d-1$ for~$B$.  For general edge-signed~$G$, if $\lambda = \pm \sqrt{q} = \pm \sqrt{d-1}$ in \Cref{prop:IB}, then $\lambda + q/\lambda = \pm 2\sqrt{q} = \pm 2\sqrt{d-1}$.  Thus the Ramanujan eigenvalue bound of $2\sqrt{d-1}$ for~$A$ is equivalent to the bound~$\sqrt{d-1}$ for~$B$.  As for the ``$+\eps$'', a simple calculation (appearing in~\cite{Bor19}) shows:
\begin{corollary}                                       \label{cor:IB}
    Let $G = (V,E)$ be a $d$-regular edge-signed graph ($d \geq 3$) with adjacency matrix~$A$ and non-backtracking matrix~$B$.  If $A$ has an eigenvalue of magnitude $2\sqrt{d-1} + \eps$ (for $\eps \geq 0$) then $B$ has an eigenvalue of magnitude $\sqrt{d-1} + \sqrt{\eps}\sqrt{\sqrt{q} + \eps/4} + \eps/2$ (which is $\sqrt{d-1} + \Theta(d^{1/4}\sqrt{\eps})$ for fixed~$d$ and $\eps \to 0$).
\end{corollary}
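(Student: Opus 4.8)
\textbf{Proof plan for \Cref{cor:IB}.} The plan is to feed a carefully chosen root of the quadratic $\lambda^2 - \mu\lambda + q = 0$ into \Cref{prop:IB}, where $\mu$ is the given eigenvalue of~$A$ and $q = d-1$. First I would observe that the signed adjacency matrix~$A$ is real and symmetric, so~$\mu$ is a real number with $|\mu| = 2\sqrt{q} + \eps$, and I would treat the two cases $\mu = \pm(2\sqrt{q}+\eps)$ directly (they are symmetric). The two roots of $\lambda^2 - \mu\lambda + q$ are $\lambda_\pm = \tfrac12\bigl(\mu \pm \sqrt{\mu^2 - 4q}\bigr)$, and the discriminant is $\mu^2 - 4q = (2\sqrt{q}+\eps)^2 - 4q = \eps(4\sqrt{q}+\eps) \geq 0$, so both roots are real; by Vieta, $\lambda_+\lambda_- = q$, so at least one root has magnitude at least~$\sqrt{q}$.

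Next I would compute the magnitude of that ``outer'' root. In the case $\mu = 2\sqrt{q}+\eps$ we get $\sqrt{\mu^2 - 4q} = 2\sqrt{\eps}\sqrt{\sqrt{q}+\eps/4}$, hence $\lambda_+ = \sqrt{q} + \eps/2 + \sqrt{\eps}\sqrt{\sqrt{q}+\eps/4}$, a positive real number strictly exceeding $\sqrt{q} = \sqrt{d-1} \geq \sqrt{2} > 1$. In the case $\mu = -(2\sqrt{q}+\eps)$ the analogous choice is $\lambda_- = -\bigl(\sqrt{q} + \eps/2 + \sqrt{\eps}\sqrt{\sqrt{q}+\eps/4}\bigr)$, of the same magnitude. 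In either case the chosen root~$\lambda$ is real, satisfies $|\lambda| = \sqrt{d-1} + \sqrt{\eps}\sqrt{\sqrt{q}+\eps/4} + \eps/2$, lies outside $\{0,\pm 1\}$, and obeys $\lambda + q/\lambda = \mu$, which is an eigenvalue of~$A$. Applying \Cref{prop:IB} then shows that~$\lambda$ is an eigenvalue of~$B$, which is exactly the asserted magnitude.

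Finally, for the parenthetical asymptotic I would expand $\sqrt{\eps}\sqrt{\sqrt{q}+\eps/4} = q^{1/4}\sqrt{\eps} + O_q(\eps^{3/2})$ as $\eps \to 0$, so that the magnitude of the exhibited eigenvalue of~$B$ is $\sqrt{d-1} + q^{1/4}\sqrt{\eps} + \Theta(\eps) = \sqrt{d-1} + \Theta(d^{1/4}\sqrt{\eps})$ for fixed~$d$. I do not expect a genuine obstacle; the only points requiring care are selecting the root of the quadratic with $|\lambda| > 1$ (so that \Cref{prop:IB} is applicable, rather than its reciprocal companion of magnitude $q/|\lambda| < \sqrt q$) and tracking the sign of~$\mu$ so that the displayed expression emerges as the magnitude in both cases.
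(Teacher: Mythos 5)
Your argument is correct, and it is exactly the ``simple calculation (appearing in~\cite{Bor19})'' that the paper alludes to without spelling out: solve $\lambda + q/\lambda = \mu$ via the quadratic $\lambda^2 - \mu\lambda + q = 0$, pick the root with $|\lambda| \geq \sqrt{q} > 1$ (so \Cref{prop:IB} applies), and read off the magnitude. The only point worth double-checking, which you handled, is that $A$ being real symmetric makes $\mu$ real, so the discriminant $\eps(4\sqrt{q}+\eps)$ is nonnegative and the chosen root is real with the stated absolute value in both sign cases.
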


\section{On random edge-signings of fixed base graphs} \label{sec:random-signs}

In this section we will prove \Cref{thm:key}.  In fact, we will prove the following refined version:
\begin{theorem}                                       \label{thm:keyA}
    Let $G = (V,E)$ be an arbitrary $d$-regular $n$-vertex graph, where $d \leq \polylog n$.  Assume that $G$ is bicycle-free at radius~$r \gg (\log \log n)^2$.  Then
    for $\bG$ a uniformly random edge-signing of~$G$, except with probability at most~$n^{-100}$ the non-backtracking matrix~$\bB$ of~$\bG$ satisfies the spectral radius bound
    \[
        \rho(\bB) \leq \sqrt{d-1} \cdot \parens*{1+   O\parens*{\frac{(\log \log n)^2}{r}}},
    \]
    and hence (by  \Cref{cor:IB}) the signed adjacency matrix~$\bA$ of~$\bG$ satisfies the bound
    \[
        \rho(\bA) \leq 2\sqrt{d-1} \cdot \parens*{1+O\parens*{\frac{(\log \log n)^4}{r^2}}}.
    \]

     Furthermore, let $C = C(n)$ satisfy $1 \leq C \leq \polylog n$ and suppose we merely assume that the random edge-signs are $(\delta, k)$-wise uniform for $\delta \leq n^{-O(C \log d)}$ and $k \geq 2C \log n$.  Then the above bounds continue hold, with an additional additive $O(\sqrt{d}/C)$ in the $\rho(\bB)$ bound and $O(\sqrt{d}/C^2)$ in the $\rho(\bA)$ bound.
\end{theorem}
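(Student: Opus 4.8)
By \Cref{cor:IB} the adjacency-matrix bound is a formal consequence of the non-backtracking bound, so it suffices to control $\rho(\bB)$. Since $\bB$ is not normal we cannot apply the trace method to it directly --- indeed the naïve estimate $\rho(\bB)^{2k}\le\operatorname{tr}((\bB\bB^{*})^{k})$ only yields the useless bound $\rho(\bB)\lesssim d-1$, and even a more careful version of it gives the ``twice-Ramanujan'' constant. Instead, following Bordenave, for parameters $\ell,m$ to be chosen I would use
\[
\rho(\bB)^{2\ell m}=\rho(\bB^{\ell})^{2m}\le\big\|(\bB^{\ell})^{*}\bB^{\ell}\big\|^{m}=\big\|\big((\bB^{\ell})^{*}\bB^{\ell}\big)^{m}\big\|\le\operatorname{tr}\big(\big((\bB^{\ell})^{*}\bB^{\ell}\big)^{m}\big),
\]
then take expectations and apply Markov's inequality against the failure budget $n^{-100}$. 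Crucially I take $\ell\le r$, so that every non-backtracking walk of length $\ell$ in $G$ is confined to a radius-$r$ ball and is therefore \emph{tangle-free} (it traces out an at-most-unicyclic subgraph); this is the one place the bicycle-free hypothesis is used, and it is what makes the trace come out with the Ramanujan constant $d-1$ rather than $(d-1)^{2}$ or $4(d-1)$.

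\textbf{Combinatorial estimate of the expected trace.} Expanding, $\E\operatorname{tr}\big(\big((\bB^{\ell})^{*}\bB^{\ell}\big)^{m}\big)$ is a sum over all closed sequences of $2m$ non-backtracking ``legs'' of length $\ell$ (read alternately off $\bB^{\ell}$ and $(\bB^{\ell})^{*}$), each weighted by $\E\big[\prod_{e}\randw(e)^{m_{e}}\big]$, where $m_{e}$ is the total number of times edge $e$ is traversed. Because the edge-signs are mean-zero and truly uniform, this weight is $1$ when every $m_{e}$ is even and $0$ otherwise; so only configurations whose edge multiset is ``doubled'' contribute, and such a configuration visits at most $\ell m$ distinct edges. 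I would then group the surviving configurations by the isomorphism type of the quotient multigraph $H\subseteq G$ they trace out, and bound (i) the number of types, (ii) the number of realizations of each type in $G$, and (iii) the number of leg-shapes. Here \Cref{thm:sparsity} is the workhorse: $H$ inherits bicycle-freeness at radius $r$ and has at most $\ell m+1\le e^{r}$ vertices, so $\excess(H)\le\frac{\ln(e\ell m)}{r}\,\ell m$; combining this with non-backtracking (at most $d-1$ extensions per step, and within a unicyclic ball the winding of a leg is pinned down up to $\poly(\ell m)$ choices) and the assumption $d\le\polylog n$ (to absorb $d^{O(\excess H)}$ factors) would yield
\[
\E\operatorname{tr}\big(\big((\bB^{\ell})^{*}\bB^{\ell}\big)^{m}\big)\;\le\;nd\cdot(d-1)^{\ell m}\cdot e^{O((\ln \ell m)^{2}\ell m/r)}.
\]

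\textbf{Choice of parameters and the adjacency bound.} Take $\ell$ just below $r$ and $m$ so that $\ell m\asymp\log^{2}n$. Taking $2\ell m$-th roots in the displayed bound and feeding it through Markov, the factor $(nd\cdot n^{100})^{1/(2\ell m)}=1+O(1/\log n)$ while $e^{O((\ln \ell m)^{2}\ell m/r)/(2\ell m)}=e^{O((\ln \ell m)^{2}/r)}=1+O((\log\log n)^{2}/r)$ since $\ln(\ell m)=O(\log\log n)$; hence with probability at least $1-n^{-100}$ one gets $\rho(\bB)\le\sqrt{d-1}\cdot\big(1+O((\log\log n)^{2}/r)\big)$. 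Plugging into \Cref{cor:IB} gives the stated bound on $\rho(\bA)$, the error squaring because the translation $\rho(\bA)\le 2\sqrt{d-1}+\Theta(d^{1/4}\sqrt{\eps})$ turns $\rho(\bB)=\sqrt{d-1}(1+\theta)$ into $\eps=\Theta(\sqrt{d}\,\theta^{2})$.

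\textbf{Derandomization, and the main obstacle.} The only step that used true uniformity is the evaluation of $\E\big[\prod_{e}\randw(e)^{m_{e}}\big]$. If the signs are merely $(\delta,k)$-wise uniform with $k$ at least the largest number of distinct edges appearing in any configuration --- which is at most $2\ell m$ --- then the ``all $m_{e}$ even'' configurations still contribute at most $1$, whereas each previously-vanishing configuration now contributes at most $\delta$; since there are at most $nd\cdot(d-1)^{2\ell m}$ configurations in total, choosing $\ell m\asymp C\log n$ (so that $k\ge 2C\log n$ suffices) and $\delta\le n^{-O(C\log d)}$ makes this new error negligible, at the cost of now only being able to take $\ell m\asymp C\log n$ in the root extraction, so that $(nd\cdot n^{100})^{1/(2\ell m)}=1+O(1/C)$ --- this contributes the extra additive $O(\sqrt{d}/C)$ to the $\rho(\bB)$ bound and $O(\sqrt{d}/C^{2})$ to the $\rho(\bA)$ bound. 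I expect the bottleneck to be the combinatorial estimate of the second step: setting up Bordenave's bookkeeping for sequences of non-backtracking tangle-free legs --- tracking, leg by leg, which vertices and edges are new, when a leg closes the unique cycle of its ball, and how legs are glued at the $m$ joints of the closed walk --- and proving that, under the all-edges-even constraint, the exponential growth rate is \emph{exactly} $(d-1)^{\ell m}$ rather than, say, $2^{\ell m}(d-1)^{\ell m}$; it is precisely in eliminating this spurious factor that tangle-freeness of the individual legs must be invoked, and the $(\bB^{\ell})^{*}$ legs and the gluing at the joints require additional care.
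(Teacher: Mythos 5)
Your proposal takes a genuinely different route from the paper's. You propose Bordenave's original ``$\ell m$'' decomposition, bounding $\rho(\bB)^{2\ell m}$ by $\tr\bigl(((\bB^\ell)^*\bB^\ell)^m\bigr)$ with $2m$ non-backtracking legs of length $\ell\le r$, each leg confined to a radius-$r$ ball so that it is individually tangle-free. The paper instead uses the \emph{plain} trace method with a single parameter: $\bT=\tr(\bB^\ell(\bB^\top)^\ell)$ bounds $\rho(\bB)^{2\ell}$, and $\ell=\Theta(r\log n/\log\log n)$ is taken \emph{much larger} than $r$. The closed walk (``$(\ell-1)$-hike'') has only one permitted backtrack, at its midpoint, so there is no gluing-at-joints bookkeeping. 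The paper can afford to abandon the $\ell\le r$ confinement because the random edge-signs already annihilate the expectation of any hike that is not singleton-free, and singleton-freeness --- not tangle-freeness of individual legs --- is what pins the leading growth rate to $(d-1)^\ell$ (Proposition~\ref{prop:fresh-bound}: at most half the steps are fresh, each with $\le d-1$ extensions). The bicycle-free hypothesis then enters only through the global sparsity estimate Theorem~\ref{thm:sparsity}, which bounds the excess of \emph{any} subgraph on $\le\ell m$ vertices, not just subgraphs contained in a single ball; this is what controls the correction factor $(dr\ell)^{O(\tfrac{\log\ell}{r})\ell}$ via a fresh/boundary/stale-step encoding. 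So your statement that taking $\ell\le r$ ``is the one place the bicycle-free hypothesis is used, and it is what makes the trace come out with the Ramanujan constant'' misattributes roles: the leading $(d-1)^\ell$ comes from sign-cancellation plus the singleton-free constraint, while bicycle-freeness enters globally via Theorem~\ref{thm:sparsity}, with no constraint at all that $\ell\le r$.

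Your route is essentially the one Bordenave follows for the \emph{unsigned} problem, where the leg decomposition and the centered matrix $\ul{\bB}$ are unavoidable. For the signed problem it is not wrong, but it reintroduces exactly the bookkeeping that random signs let one skip, which is why you correctly flag the combinatorial estimate (``proving the exponential growth rate is exactly $(d-1)^{\ell m}$ rather than $2^{\ell m}(d-1)^{\ell m}$'') as the bottleneck. In the paper that potential spurious factor is ruled out cleanly in Theorem~\ref{thm:final-estimate}: the encoding of a singleton-free hike spends $d-1$ choices only on the $<\ell$ fresh steps, $d-1$ choices on the $O(\ell\log\ell/r)$ boundary steps, and $O(r\ell)$ choices on each of the $O(\ell\log\ell/r)$ stale stretches, giving $(d-1)^\ell\cdot(dr\ell)^{O(\ell\log\ell/r)}$ directly. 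The derandomization paragraph of your proposal matches the paper's (the quantity inside the expectation is a degree-$\le 2\ell m$ polynomial in the signs, so $(\delta,2\ell m)$-wise uniformity suffices up to an additive $\delta\cdot\#\{\text{walks}\}$), and your parameter accounting for the $O(\sqrt d/C)$ and $O(\sqrt d/C^2)$ surcharges is correct.
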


As in~\cite{Fri08,Bor19}, the proof of \Cref{thm:keyA} will use the Trace Method.  In preparation for this, we make some definitions:
\begin{definition}[Hikes] \label{def:hikes}
    Let $G = (V,E)$ be an undirected graph.  For $\ell \in \N$, we define an \mbox{\emph{$\ell$-hike}}~$\calH$ to be a closed walk in~$G$ of exactly~$2\ell$ steps which is non-backtracking except possibly between the $\ell$th and $(\ell+1)$th step.  Given an edge-signing $w : E \to \{\pm 1\}$ we write $w(\calH)$ for the product of the edge-signs that~$\calH$ traverses, counted with multiplicity. Finally, we call a hike \emph{even} (respectively, \emph{singleton-free}) if each undirected edge traversed by~$\calH$ is traversed an even number of times (respectively, at least twice).
\end{definition}
A straightforward use of the Trace Method will now imply:
\begin{proposition}                                     \label{prop:tracemethod}
    Let $\ell \in \N^+$ and define $\bT =  \tr\parens*{\bB^\ell (\randB^\top)^\ell}$ (which is an upper bound on $\rho(\bB)^{2\ell}$).  Then for a uniformly random edge-signing $\bw : E \to \{\pm 1\}$,
    \[
        \E[\bT] \leq d^2 \cdot \#\{\textnormal{even } (\ell-1)\textnormal{-hikes }  \calH \textnormal{ in } G\} \leq d^2 \cdot \#\{\textnormal{singleton-free } (\ell-1)\textnormal{-hikes }  \calH \textnormal{ in } G\}.
    \]
    Furthermore, if $\bw$ is merely $(\delta,2\ell)$-wise uniform, the bound holds up to an additive~$\delta n d^{2\ell+2}$.
\end{proposition}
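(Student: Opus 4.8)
The plan is to run the Trace Method on the non-backtracking matrix $\bB$, reducing the spectral-radius bound to a count of closed walks, and then to recognize those walks as hikes. For the opening parenthetical, observe that $(\bB^\top)^\ell = (\bB^\ell)^\top$, so $\bT = \tr\bigl(\bB^\ell (\bB^\ell)^\top\bigr) = \|\bB^\ell\|_F^2 \ge \|\bB^\ell\|_{\mathrm{op}}^2 \ge \rho(\bB^\ell)^2 = \rho(\bB)^{2\ell}$, using that the eigenvalues of $\bB^\ell$ are the $\ell$th powers of those of $\bB$; hence it suffices to bound $\E[\bT]$.

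Expanding the trace entrywise gives $\bT = \sum_{e_0, g}\bigl((\bB^\ell)_{e_0,g}\bigr)^2$, the sum over ordered pairs of directed edges of $G$, since $((\bB^\top)^\ell)_{g,e_0} = (\bB^\ell)_{e_0,g}$. Expanding the matrix power, $(\bB^\ell)_{e_0,g} = \sum_W \bw(W)$, where $W$ ranges over non-backtracking walks in $G$ that consist of the directed edge $e_0$ followed by $\ell$ further steps and end with the directed edge $g$, and $\bw(W)$ is the product of the signs of the $\ell$ edges traversed after $e_0$ (the sign of $e_0$ itself does not appear, by the definition of the signed non-backtracking matrix). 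Therefore $\bT = \sum_{e_0,g}\sum_{W,W'}\bw(W)\bw(W')$, the inner sum over ordered pairs of such walks sharing the first edge $e_0$ and the last edge $g$. Since the signs $\{\bw(a)\}_{a \in E}$ are independent and uniform on $\{\pm 1\}$, the expectation $\E[\bw(W)\bw(W')]$ is $1$ if every undirected edge is traversed an even total number of times by $W$ and $W'$, and is $0$ otherwise. So $\E[\bT]$ is exactly the number of tuples $(e_0, g, W, W')$ of this shape meeting the evenness condition.

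It remains to encode each such tuple as an even $(\ell-1)$-hike, at most $d^2$ to one. Given $(e_0, g, W, W')$, concatenate $W$ with the reversal of $W'$; because both walks end at the same directed edge $g$, this concatenation has a forced backtrack at its midpoint. Delete the shared initial edge $e_0$ and collapse the midpoint backtrack: the result $\calH$ is a closed walk of exactly $2(\ell-1)$ steps that is non-backtracking except possibly at its midpoint --- that is, an $(\ell-1)$-hike. Its edge-multiset differs from the combined traversal multiset of $W$ and $W'$ only by two deleted copies of $g$, so $\calH$ is an even hike exactly when the tuple satisfies the evenness condition. Conversely, $\calH$ pins down all of $W$ and $W'$ except for the deleted initial edge $e_0$ (a directed edge into the base vertex of $\calH$: at most $d$ choices) and the deleted ``tip'' vertex at the turnaround, which also determines $g$ (a neighbor of the midpoint vertex of $\calH$: at most $d$ choices). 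Hence at most $d^2$ tuples map to any given $(\ell-1)$-hike, giving $\E[\bT] \le d^2\cdot\#\{\text{even }(\ell-1)\text{-hikes in }G\}$; and since an even hike traverses each of its edges at least twice, this is at most $d^2\cdot\#\{\text{singleton-free }(\ell-1)\text{-hikes in }G\}$, as claimed.

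For the pseudorandom case, when $\bw$ is only $(\delta,2\ell)$-wise uniform, $\bw(W)\bw(W')$ reduces to a product of at most $2\ell$ distinct signs, so its expectation is still exactly $1$ when all edge-multiplicities are even and has magnitude at most $\delta$ otherwise; since there are at most $nd\cdot d^\ell\cdot d^\ell \le nd^{2\ell+2}$ tuples $(e_0,g,W,W')$ in total, summing the error contributions adds at most $\delta n d^{2\ell+2}$. The one genuinely delicate step is the hike encoding: making the ``delete and collapse'' map precise, verifying that it lands on bona fide $(\ell-1)$-hikes with the backtrack sitting exactly at the designated seam, and matching the two notions of evenness. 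I expect this bookkeeping, rather than any of the probabilistic estimates, to be where most of the care is required.
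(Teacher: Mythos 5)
Your proof is correct and follows essentially the same route as the paper's: expand the trace as a sum over pairs of length-$\ell$ non-backtracking walks sharing their first and last directed edges, use uniformity of the signs to restrict to the even configurations, and encode each surviving configuration as an $(\ell-1)$-hike with an at-most-$d^2$-to-one map. The paper packages the pair $(W, W')$ as a single ``special $(\ell+1)$-hike'' with forced backtracks at the midpoint and the wrap-around, then deletes those two backtracking pairs to get the $(\ell-1)$-hike; your ``concatenate, delete $e_0$, collapse the seam'' description is the same operation stated from the other direction, and your error accounting for the $(\delta,2\ell)$-wise case matches the paper's (indeed $nd^{2\ell+1}$ is even slightly tighter than the stated $nd^{2\ell+2}$).
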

\begin{proof}
    We have
    \begin{equation}\label{eqn:its-T}
        \bT = \sum_{\vec{e}_0, \vec{e}_1, \dots, \vec{e}_{2\ell-1}, \vec{e}_{2\ell} = \vec{e}_0 \in \vec{E}} \bB_{\vec{e}_0, \vec{e}_1} \bB_{\vec{e}_1, \vec{e}_2}
        \cdots \bB_{\vec{e}_{\ell-1}, \vec{e}_{\ell}}
%        \bB^\top_{\vec{e}_{\ell}, \vec{e}_{\ell+1}} \bB^\top_{\vec{e}_{\ell+1}, \vec{e}_{\ell+2}}  \cdots \bB^\top_{\vec{e}_{2\ell-1}, \vec{e}_{2\ell}}.
        \bB_{\vec{e}_{\ell+1}, \vec{e}_{\ell}} \bB_{\vec{e}_{\ell+2}, \vec{e}_{\ell+1}}  \cdots \bB_{\vec{e}_{2\ell},\vec{e}_{2\ell-1}}.
    \end{equation}
    Recalling the definition of~$\bB$, one immediately sees that $\bT$ is ``something like'' the sum of $\bw(\calH)$ over all $\ell$-hikes in~$G$.  But being careful, one sees we precisely have the following:
    \begin{quotation}
        $\bT$ is equal to the sum of $\bw(\calH)$ over all ``special'' $(\ell+1)$-hikes in~$G$, where we call an $(\ell+1)$-hike \emph{special} if its $(\ell + 2)$th step is the reverse of its $(\ell+1)$th step, and the last step is the reverse of the first step.\footnote{The astute reader will note that the sign of the first/last edge in~$\calH$ is never counted in \Cref{eqn:its-T}; however it is okay to count it twice, as $\bw(\calH)$ does, since $(\pm 1)^2 = 1$.}
    \end{quotation}
    Next, we employ the following easy fact:
    \begin{fact}                                        \label{fact:easy}
        If $\bw : E \to \{\pm 1\}$ is a fully uniformly random edge-signing,  then $\E[\bw(\calH)]$ will be~$1$ if $\calH$ is an even hike, and will be~$0$ otherwise.
    \end{fact}
    Thus
    \begin{equation}    \label{eqn:evenspecial}
        \E_{\bw : E \to \{\pm 1\}}[\bT] = \#\{\text{even, special } (\ell+1)\text{-hikes }  \calH \text{ in } G\}.
    \end{equation}
    Since an $(\ell+1)$-hike involves at most $2 \ell$ undirected edges, a crude upper bound on the number of all $(\ell+1)$-hikes in~$G$ is $n d^{2\ell}$.  Thus for an edge-signing $\bw$ that is merely $(\delta,2\ell)$-wise uniform, \Cref{eqn:evenspecial} holds up to an additive $\delta n d^{2\ell}$.  Finally, every even special $(\ell+1)$-hike $\calH$ can be formed from an even $(\ell-1)$-hike~$\calH'$ by: (i)~attaching a step and its reverse to the beginning/end of~$\calH$; (ii)~attaching a step and its reverse to the midpoint of~$\calH$.  As there are at most~$(d-1)^2 \leq d^2$ choices for how to perform (i)~and~(ii), the inequality in the proposition's statement follows.
\end{proof}

At this point, edge-signs are out of the way and we are reduced to counting singleton-free hikes.  In aid of this, we borrow some terminology from~\cite{MOP19}:
\begin{definition}
    Given an $(\ell-1)$-hike $\calH$ in graph~$G$, we write $G_{\calH} = (V_{\calH}, E_{\calH})$
    for the subgraph of~$G$ formed by the union of the edges visited by~$\calH$.  We think of $G_{\calH}$ as being ``revealed'' as the $2(\ell-1)$ steps of $\calH$ are taken in order.   We classify each step of $\calH$ as either \emph{stale}, \emph{fresh}, or \emph{boundary}.  If a step of $\calH$ traverses a previously-explored edge in $G_{\calH}$ (in either direction), we call the step \emph{stale}; otherwise, if it steps to a previously-unvisited vertex, we call the step \emph{fresh}; otherwise, we call it \emph{boundary}.  For the purposes of this definition, at the beginning of $\calH$ the initial vertex is considered to be ``previously visited''.
\end{definition}
We now put bounds on the different kinds of steps.  For the fresh steps, %(of which there are $|V_{\calH}| - 1$),
we only need the singleton-free property:
\begin{proposition}                                        \label{prop:fresh-bound}
    In a singleton-free $(\ell-1)$-hike, at least half of all steps must be stale.  Thus there are fewer than $\ell$ fresh steps.
\end{proposition}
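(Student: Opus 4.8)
The plan is a short double-counting argument on the edges of $G_\calH$. First I would use the classification of the $2(\ell-1)$ steps of $\calH$ into fresh, boundary, and stale, and observe that a step is non-stale (fresh or boundary) exactly when it traverses an edge of $G_\calH$ for the first time: a fresh step reveals a previously-unvisited vertex together with the unique new edge reaching it, a boundary step reveals a new edge joining two already-visited vertices, and a stale step re-uses an already-revealed edge. Hence the map sending each non-stale step to the edge it first traverses is a bijection onto $E_\calH$, so the number of fresh steps plus the number of boundary steps equals $|E_\calH|$.

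Second, I would count edge-traversals with multiplicity: $\calH$ takes exactly $2(\ell-1)$ steps, each across one undirected edge of $G_\calH$, so $2(\ell-1) = \sum_{e\in E_\calH}(\text{number of times }\calH\text{ crosses }e)$. Singleton-freeness says every term on the right is at least $2$, so $2(\ell-1)\ge 2|E_\calH|$, i.e.\ $|E_\calH|\le \ell-1$; therefore (number of fresh steps) $+$ (number of boundary steps) $\le \ell-1$.

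Finally I would subtract: the number of stale steps equals $2(\ell-1)$ minus the number of non-stale steps, which is at least $2(\ell-1)-(\ell-1)=\ell-1$, exactly half of all $2(\ell-1)$ steps; and the number of fresh steps alone is at most $\ell-1<\ell$. This gives both assertions.

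There is essentially no obstacle beyond care in the bookkeeping. The one point to check is the convention that the initial vertex counts as ``previously visited'': this merely ensures the first step is never classified as fresh while it still reveals its edge, so the step-to-edge bijection is unaffected and there is no off-by-one. Phrasing everything in terms of first-traversal of an undirected edge (rather than in terms of newly visited vertices) also handles multi-edges and self-loops automatically, so I expect the write-up to be only a few lines.
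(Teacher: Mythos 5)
Your argument is correct and is exactly the standard double-counting the paper leaves implicit (this proposition is stated without proof in the paper, being regarded as immediate). The key observations---non-stale steps are in bijection with $E_\calH$ via first traversal, singleton-freeness forces $|E_\calH|\le \ell-1$ out of $2(\ell-1)$ total steps---are precisely what is needed, and your care about the convention for the initial vertex and about multi-edges is well placed.
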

For the boundary steps of~$\calH$, it is easy to see that there are exactly $\excess(G_{\calH}) + 1$ of them.
%
%
%It is easy to see that the number of fresh steps in hike $\calH$ is $|V_{\calH}| - 1$, and the number of boundary steps is $\excess(G_{\calH}) + 1$.  We will need bounds on these two quantities.  For the fresh steps we only need the singleton-free property:
%\begin{fact}                                        \label{fact:fresh-bound}
%    In a singleton-free $(\ell-1)$-hike, at least half of all steps must be stale.  Thus there are fewer than $\ell$ fresh steps.
%\end{fact}
Thus we can bound them using only the bicycle-free property.  Together with the simple bound $|V_{\calH}| \leq 2\ell$, \Cref{thm:sparsity} implies
\begin{proposition}                                     \label{prop:boundary-bound}
    If $\calH$ is an $(\ell-1)$-hike in a graph $G$ which is bicycle-free at radius~$r \geq 10 \ln(2 \ell)$, then $\calH$ has at most $O(\frac{\log \ell}{r})\cdot \ell$ boundary steps.
\end{proposition}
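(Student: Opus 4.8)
The plan is to deduce the proposition directly from \Cref{thm:sparsity} applied to the subgraph $G_\calH$, using the identity (stated just above the proposition) that $\calH$ has exactly $\excess(G_\calH)+1$ boundary steps. For completeness I would first justify that identity: as $\calH$ is revealed step by step, every edge of $G_\calH$ is first traversed by a unique step, and that step cannot be stale, so it is either fresh or boundary; moreover there are exactly $|V_\calH|-1$ fresh steps, since every non-initial vertex of $G_\calH$ is first visited by exactly one fresh step and each fresh step visits exactly one new vertex. Hence $|E_\calH| = (|V_\calH|-1) + \#\{\text{boundary steps}\}$, i.e.\ $\#\{\text{boundary steps}\} = \excess(G_\calH)+1$.

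Next I would observe that $G_\calH$, being a subgraph of $G$, is itself bicycle-free at radius~$r$. Indeed, distances only increase upon passing to a subgraph, so for any vertex $u \in V_\calH$ the distance-$r$ neighborhood of $u$ in $G_\calH$ is contained in the distance-$r$ neighborhood of $u$ in $G$; consequently any subgraph of $G_\calH$ lying inside a single distance-$r$ $G_\calH$-neighborhood lies inside a single distance-$r$ $G$-neighborhood and therefore has excess $\le 0$. With this in hand, I apply \Cref{thm:sparsity} to $H = G_\calH$: the crude bound $|V_\calH| \le 2\ell$ together with the hypothesis $r \ge 10\ln(2\ell) \ge 10\ln|V_\calH|$ supplies the required condition $r \ge 10\ln v$, and the conclusion is $\excess(G_\calH) \le \frac{\ln(e|V_\calH|)}{r}|V_\calH| \le \frac{2\ln(2e\ell)}{r}\,\ell = O\!\left(\frac{\log\ell}{r}\right)\ell$. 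Combining with the previous paragraph, $\#\{\text{boundary steps}\} = \excess(G_\calH)+1 = O\!\left(\frac{\log\ell}{r}\right)\ell$, the additive $1$ being absorbed into the $O(\cdot)$.

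There is essentially no genuine obstacle here: the proposition is a one-step corollary of \Cref{thm:sparsity}, which is the nontrivial input and has already been established. The only points that need (minor) care are the bookkeeping in the identity ``$\#\{\text{boundary}\} = \excess(G_\calH)+1$'', the remark that bicycle-freeness at radius~$r$ is inherited by subgraphs, and checking that the hypothesis $r \ge 10\ln|V_\calH|$ of \Cref{thm:sparsity} follows from the stated hypothesis $r \ge 10\ln(2\ell)$ via $|V_\calH| \le 2\ell$.
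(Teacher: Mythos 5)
Your proof is correct and follows exactly the paper's intended route: express the number of boundary steps as $\excess(G_{\calH})+1$, note $|V_{\calH}|\le 2\ell$, and invoke \Cref{thm:sparsity}. You have merely filled in the small details (the counting identity via fresh steps, the fact that $G_{\calH}$ inherits bicycle-freeness as a subgraph, and the hypothesis check $r\ge 10\ln|V_{\calH}|$) that the paper treats as immediate.
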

Finally, to handle the stale steps we group them into ``stretches''.
\begin{proposition}                                     \label{prop:stale-stretches}
    In an $(\ell-1)$-hike~$\calH$, the stale steps may be partitioned into at most $O(\frac{\log \ell}{r}) \cdot \ell$ \emph{stretches} of consecutive stale steps, each stretch having length at most~$r$, and none straddling the ``turnaround'' at step~$\ell$.
\end{proposition}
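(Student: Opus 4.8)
The plan is to start from the maximal runs of consecutive stale steps in~$\calH$ and then subdivide them, so that the whole argument reduces to showing that the number of \emph{maximal} stale runs is $O(\tfrac{\log \ell}{r})\cdot\ell$.

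The key observation will be that, away from the turnaround, a stale step can never immediately follow a fresh step. Indeed, a fresh step ends at a vertex~$v$ that had not been visited before, so the only edge incident to~$v$ that has been explored so far is the one just traversed; a subsequent step that is in the non-backtracking regime may not re-use that edge, hence it traverses a previously unexplored edge and is, by definition, not stale. The single exception is a step that backtracks across the turnaround (step~$\ell$ taken as the reverse of step~$\ell-1$), which the definition of a hike permits and which could re-traverse the edge of a preceding fresh step. From this I would conclude: if a maximal stale run begins at step~$s$, then either $s=1$, or $s=\ell$, or step~$s-1$ exists, is non-stale (by maximality of the run), and is not fresh (by the observation, since steps $s-1$ and~$s$ are then in the non-backtracking regime) --- so step~$s-1$ is a boundary step. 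As a boundary step is immediately followed by at most one step, distinct maximal stale runs beginning neither at step~$1$ nor at step~$\ell$ are preceded by distinct boundary steps, whence the number of maximal stale runs is at most two plus the number of boundary steps in~$\calH$.

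I would then invoke \Cref{prop:boundary-bound} (if $r<10\ln(2\ell)$ the claim is trivial, since one may take each of the at most $2(\ell-1)$ stale steps as its own length-one stretch, which already is $O(\tfrac{\log\ell}{r})\cdot\ell$ in that regime), which bounds the number of boundary steps --- and hence the number of maximal stale runs --- by $O(\tfrac{\log\ell}{r})\cdot\ell$. To turn the runs into the required stretches, I would subdivide twice: first split the at most one maximal stale run that straddles the turnaround into its portion before step~$\ell$ and its portion from step~$\ell$ on (this adds at most~$1$ and leaves no run straddling the turnaround); then split every remaining run of length $L>r$ into $\lceil L/r\rceil$ consecutive blocks of length at most~$r$. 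Since the stale runs have total length at most $2(\ell-1)<2\ell$, this second subdivision adds at most $2\ell/r$ to the count. The resulting blocks partition the stale steps into consecutive stretches, each of length at most~$r$, none straddling the turnaround, and numbering at most $O(\tfrac{\log\ell}{r})\cdot\ell + 2\ell/r + O(1) = O(\tfrac{\log\ell}{r})\cdot\ell$.

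The only substantive step is the stale-cannot-follow-fresh observation; everything afterward is bookkeeping. The main thing to be careful about is precisely the turnaround: there backtracking is allowed, so a stale step genuinely \emph{can} follow a fresh step, but this happens at most once and only contributes an additive constant to the count.
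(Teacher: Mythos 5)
Your proof is correct and takes essentially the same approach as the paper: both partition the stale steps into maximal contiguous runs, argue that each such run (apart from the one possibly beginning at the turnaround) must be preceded by a boundary step, invoke the boundary-step bound, and then subdivide into stretches of length at most~$r$ at an extra cost of $2\ell/r$. Your observation that a stale step cannot immediately follow a fresh step in the non-backtracking regime is precisely what the paper leaves implicit behind its ``it is easy to see.''
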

\begin{proof}
    We begin by partitioning the stale steps into maximal contiguous stretches.  It is easy to see that each of these must be preceded in~$\calH$ by a boundary step (with a single possible exception of the ``turnaround'' at step~$\ell$).  Thus \Cref{prop:boundary-bound} implies that there are at most $O(\frac{\log \ell}{r}) \cdot \ell$ maximal stretches of stale steps.  If a maximal stretch straddles the turnaround, we can split it in two.  Finally, if necessary we now subdivide the stretches into length at most~$r$.  Since there are fewer than $2\ell$ stale steps, this subdivision can be done without increasing the number of stretches by more than $2\ell/r \leq O(\frac{\log \ell}{r}) \cdot \ell$.
\end{proof}

We may now make our final estimate:
\begin{theorem}                                     \label{thm:final-estimate}
    In a $d$-regular graph~$G$ that is bicycle-free at radius $r \geq 10\ln(2\ell)$, the number of singleton-free $(\ell-1)$-hikes~$\calH$ is at most $O(\ell^3 n) \cdot (d-1)^{\ell} \cdot (dr\ell)^{O\parens{\frac{\log \ell}{r}} \cdot \ell}$.
\end{theorem}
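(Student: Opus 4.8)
The plan is to prove \Cref{thm:final-estimate} by an encoding (``compression'') argument in the spirit of Bordenave: I will describe a procedure that reconstructs any singleton-free $(\ell-1)$-hike $\calH$ from a short list of choices, so that the number of such hikes is bounded by the number of valid lists. The three structural facts already assembled drive everything: there are fewer than $\ell$ fresh steps (\Cref{prop:fresh-bound}); the boundary steps number at most $S := O(\frac{\log \ell}{r})\cdot \ell$ (\Cref{prop:boundary-bound}); and the stale steps decompose into at most $S$ contiguous stretches, each of length at most $r$ and none straddling the turnaround (\Cref{prop:stale-stretches}). Also, singleton-freeness forces $|E_\calH| \leq \ell - 1$, hence $|V_\calH| \leq \ell$.

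I will reconstruct $\calH$ one step at a time, maintaining the ``revealed'' subgraph and the current vertex. The start vertex costs a factor of~$n$. Up front I record which of the $2(\ell-1)$ steps are stale: since these form at most $S$ intervals, this pattern is pinned down by at most $2S$ interval-endpoints, costing at most $(2\ell)^{2S}$; among the remaining non-stale steps I mark the at most $S$ boundary steps, costing another $\binom{2\ell}{S}$. Each fresh step goes to a previously unvisited vertex without backtracking, so it has at most $d-1$ choices, giving at most $(d-1)^{\ell}$ over all fresh steps (plus an $O(d^2)$ for the first step and the turnaround step, where up to $d$ choices may occur; this factor, and the polynomial-in-$\ell$ book-keeping terms, will be absorbed into the stated bound). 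Each boundary step goes to a neighbor of the current vertex in~$G$, costing at most~$d$, for a factor of $d^{S}$.

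The crux is the stale stretches, and this is where bicycle-freeness enters. A stale stretch is a genuine non-backtracking walk (it does not straddle the turnaround) of a length $t \le r$ that is already known (we recorded the stale pattern), confined to the revealed subgraph; since $t \le r$, it lives inside a radius-$r$ ball about its starting vertex, which by hypothesis is at most unicyclic, so the walk lies in a connected, at-most-unicyclic graph. The key lemma I would establish is: \emph{in a connected at-most-unicyclic graph, the number of non-backtracking walks of a prescribed length between a prescribed ordered pair of endpoints is $O(1)$}. Indeed, in a tree a non-backtracking walk is a simple path, hence the unique geodesic between its endpoints; in the unicyclic case a non-backtracking walk is a forced geodesic from the start to the (unique, start-determined) first cycle vertex, then some number of complete laps around the cycle in one of two directions, then a forced geodesic out to the endpoint — so once the endpoints and length are fixed there are at most three possibilities. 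Thus each stale stretch costs at most $O(|V_\calH|) = O(\ell)$ to name its endpoint, times $O(1)$; over all at most $S$ stretches this is $(O(\ell))^{S}$.

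Multiplying everything, the number of singleton-free $(\ell-1)$-hikes is at most
\[
    n \cdot (d-1)^{\ell} \cdot O(d^2)\cdot (2\ell)^{2S}\binom{2\ell}{S}\cdot d^{S}\cdot (O(\ell))^{S},
\]
and since $S = O(\frac{\log \ell}{r})\cdot \ell$ all of the $S$-powered terms combine into $(dr\ell)^{O(\frac{\log \ell}{r})\cdot \ell}$, while the remaining factors are at most $O(\ell^3 n)\cdot (d-1)^{\ell}$, as claimed. I expect the main obstacle to be the at-most-unicyclic walk-counting lemma — in particular making the ``laps around the cycle'' argument fully rigorous (handling walks that detour partway into pendant trees, walks beginning or ending off the cycle, and the degenerate case where the relevant ball is a tree) — together with the mildly fiddly book-keeping that encodes the ``type'' of each step using only $\poly(\ell)^{S}$ bits rather than the naive $2^{\Theta(\ell)}$.
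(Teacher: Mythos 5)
Your proposal is correct and follows essentially the same encoding argument as the paper: pay a factor of $n$ for the start vertex, a $\poly(\ell)^{O(m)}$ book-keeping cost (with $m = O(\frac{\log\ell}{r})\ell$) for the step-type pattern, $(d-1)$ per fresh/boundary step, and then control stale stretches via the at-most-unicyclic structure of the revealed radius-$r$ ball. The one cosmetic difference is that you record the stale positions explicitly (so each stretch's length is known and, by your lemma, its walk is determined up to $O(1)$ choices given its endpoint), whereas the paper records only the coarser F/B/S string and therefore spends an extra factor of about $r$ per stretch to name the number of cycle laps and direction — both give $(dr\ell)^{O(m)}$, and your key lemma is a tidy repackaging of exactly the ``final vertex, laps, direction'' decoding the paper uses.
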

\begin{proof}
    \newcommand{\code}{\textsc{struct}}
    Following~\cite{Bor19}, we use an encoding argument.  To each~$\calH$ we  associate a string $\code(\calH)$ over the alphabet $\{\textsc{f}, \textsc{b}, \textsc{s}\}$, where we replace each fresh step with an~$\textsc{f}$, each boundary step with a~$\textsc{b}$, and each stale stretch with an~$\textsc{s}$.  Our goal will be to show:
    \begin{claim}
        For any string $\sigma$ with $c_f, c_b, c_s$ occurrences of $\textsc{f}, \textsc{b}, \textsc{s}$ (respectively), there are no more than \mbox{$2n \cdot (d-1)^{c_f+c_b} \cdot (2r\ell)^{c_s}$} singleton-free $(\ell-1)$-hikes~$\calH$ with $\code(\calH) = \sigma$.
    \end{claim}
    \noindent Let us complete the proof of the theorem assuming this claim.  By \Cref{prop:fresh-bound,prop:boundary-bound,prop:stale-stretches}, we have the bounds
    \[
        c_f < \ell, \qquad c_b, c_s < m \coloneqq O\parens{\tfrac{\log \ell}{r}} \cdot \ell.
    \]
    Crudely, there are at most $O(\ell^3)$ possibilities for the triple $(c_f, c_b, c_s)$.  Also, the following two quantities are increasing in $c_f, c_b, c_s$:
    \[
        2n \cdot (d-1)^{c_f+c_b} \cdot (2r\ell)^{c_s}, \qquad \Sigma_{c_f, c_b, c_s} \coloneqq \#\text{ strings of } c_f \text{ } \textsc{f}\text{'s, } c_b \text{ } \textsc{b}\text{'s, } c_s \text{ } \textsc{s}\text{'s.}
    \]
    Thus we can upper-bound the number of all singleton-free $(\ell-1)$-hikes by
    \[
        O(\ell^3 n) \cdot (d-1)^{\ell + m} \cdot (2r\ell)^{m} \cdot \Sigma_{\ell, m,m} \leq O(\ell^3 n) \cdot (d-1)^{\ell} \cdot (dr\ell)^{O(m)},
    \]
    as needed, where we used the simple bound $\Sigma_{\ell,m,m} \leq \ell^{O(m)}$.
    %Since
%    \[
%        \Sigma_{\ell, m, m} = \frac{(\ell + 2m)!}{\ell! m!^2} \leq \ell^{O(m)}, %\leq (4\ell/m)^{2m} \leq (r/\log \ell)^{O(m)},
%    \]
%    %(assuming $\ell \geq 2m$XXX),
%    the theorem follows.XXXmake sure $\ell \log \ell > r$

    It remains to prove the claim.  Let $\sigma$ be as given.  We may recover all possible associated~$\calH$, in a vertex-by-vertex fashion, by first specifying the  initial vertex ($n$~choices) and then proceeding through the symbols of~$\sigma$ in order.  If we are at an~$\textsc{f}$ or a~$\textsc{b}$ symbol, we can recover the next vertex by specifying one of~$d-1$ neighbors of the current vertex; there are only $d-1$ possibilities, since~$\calH$ is non-backtracking.  (Exception: there are~$d$ choices at the very beginning of the hike; we compensated for this with the factor $2 > \frac{d}{d-1}$.)  To complete the proof of the claim, we need to show that for each stale stretch, there are at most~$2r\ell$ possibilities.  Recall that a stale stretch beginning from a vertex~$v$ consists of walking in non-backtracking fashion for at most~$r$ steps over the previously seen portion~$K$ of~$G_{\calH}$.  This subgraph~$K$ has at most $2\ell$ vertices, and by the bicycle-free property, this walk is confined to a subgraph of~$K$ that is at most unicyclic. It is easy to see this walk is determined by specifying its final vertex (at most~$2\ell$ possibilities), the number of times the cycle in $v$'s distance-$r$ neighborhood (should it exist) is traversed (fewer than $r/2$ possibilities), and the direction in which the cycle is traversed ($2$~possibilities).  Thus indeed each stale stretch can be completely determined by specifying one of at most $2\ell \cdot (r/2) \cdot 2 = 2r\ell$ possibilities.
\end{proof}

Combining this with \Cref{prop:tracemethod} now yields:
\begin{corollary}                                       \label{cor:itsover}
    Let $G = (V,E)$ be an arbitrary $d$-regular $n$-vertex graph.  Assume that $G$ is bicycle-free at radius~$r$.  Let $\ell \in \N^+$ and $0 < \eta < 1$ be parameters.  Then for $\bG$ a uniformly random edge-signing of~$G$, except with probability at most~$\eta$ the non-backtracking matrix~$\bB$ of~$\bG$ has spectral radius bound
    \begin{equation} \label[ineq]{eqn:tb}
        \rho(\bB) \leq \sqrt{d-1} \cdot \parens*{1 + O(\eps_1) + O(\eps_2)},
    \end{equation}
    where
    \[
        \eps_1 \coloneqq \frac{\log(n/\eta)}{\ell}, \qquad \eps_2 \coloneqq \frac{\log(d\ell)\log(\ell)}{r},
    \]
    provided $\eps_1, \eps_2 \leq 1$.

    Furthermore, if the random edge-signs of~$\bG$ are merely $(\delta,2\ell)$-wise uniform, the bound holds up to an additional additive~$(\delta n/\eta)^{\frac{1}{2\ell}} \cdot O(d)$.
\end{corollary}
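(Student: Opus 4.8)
The plan is to assemble the pieces already in hand — the Trace Method bound of \Cref{prop:tracemethod}, the hike count of \Cref{thm:final-estimate}, and one application of Markov's inequality. First I would take $\bT=\tr(\bB^\ell(\bB^\top)^\ell)$ as in \Cref{prop:tracemethod}; since $\bB^\ell(\bB^\top)^\ell$ is positive semidefinite, its trace dominates its top eigenvalue, which is the squared operator norm of $\bB^\ell$ and hence is at least $\rho(\bB)^{2\ell}$, so $\rho(\bB)^{2\ell}\le\bT$. \Cref{prop:tracemethod} also gives, for a fully uniform edge-signing, $\E[\bT]\le d^2\cdot\#\{\text{singleton-free }(\ell-1)\text{-hikes in }G\}$. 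The hypotheses $\eps_1,\eps_2\le1$, together with the elementary bound $r=O(\log_{d-1}n)$ for bicycle-free graphs, put us in the regime $\ell\ge\log(n/\eta)$, $\log d=O(\log n)$, $\log\ell=O(\log n)$, $r\ge10\ln(2\ell)$, so \Cref{thm:final-estimate} applies and gives
\[
    \E[\bT]\ \le\ d^2\cdot O(\ell^3 n)\cdot(d-1)^\ell\cdot(dr\ell)^{O(\log\ell/r)\cdot\ell}.
\]

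Next I would apply Markov's inequality: except with probability at most $\eta$ we have $\bT\le\E[\bT]/\eta$, hence $\rho(\bB)\le(\E[\bT]/\eta)^{1/(2\ell)}$. Taking the $2\ell$-th root, the $(d-1)^\ell$ factor contributes $\sqrt{d-1}$; the polynomial prefactor contributes $(d^2\cdot O(\ell^3 n)/\eta)^{1/(2\ell)}=\exp\!\bigl(O((\log d+\log\ell+\log(n/\eta))/\ell)\bigr)$, which equals $1+O(\eps_1)$ once one uses $\log d,\log\ell=O(\log(n/\eta))$ and $\eps_1\le1$; and the remaining factor contributes $(dr\ell)^{O(\log\ell/r)/2}=\exp\!\bigl(O(\log(dr\ell)\cdot\log\ell/r)\bigr)$, which equals $1+O(\eps_2)$ once one uses $\log r=O(\log\log n)=O(\log(d\ell))$ and $\eps_2\le1$. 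Multiplying the three contributions yields \Cref{eqn:tb}.

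For the $(\delta,2\ell)$-wise uniform case, \Cref{prop:tracemethod} gives the same estimate for $\E[\bT]$ up to an extra additive term $\delta n d^{2\ell+2}$; re-running the Markov step and using subadditivity of $t\mapsto t^{1/(2\ell)}$, this term contributes at most $(\delta n d^{2\ell+2}/\eta)^{1/(2\ell)}=d^{1+1/\ell}(\delta n/\eta)^{1/(2\ell)}$ to the bound on $\rho(\bB)$, which is $O(d)\cdot(\delta n/\eta)^{1/(2\ell)}$ since $\ell=\Omega(\log d)$ in the relevant regime. The one step that requires care — and really the only nontrivial part of this proof, since all the genuine combinatorics lives upstream in \Cref{thm:final-estimate} — is this final bookkeeping: extracting the $2\ell$-th root and checking that $\eps_1,\eps_2\le1$ (with the elementary radius bound) genuinely suffice to absorb the stray $\log d$, $\log\ell$, and $\log r$ factors into the error terms $O(\eps_1)$ and $O(\eps_2)$. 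I do not anticipate any obstacle beyond this routine estimation.
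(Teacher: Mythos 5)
Your proposal is correct and follows essentially the same route as the paper's own proof: combine \Cref{prop:tracemethod} with \Cref{thm:final-estimate} to bound $\E[\bT]$, apply Markov, take $2\ell$-th roots, and absorb the stray $\log d$, $\log\ell$, $\log r$ factors into $O(\eps_1)$ and $O(\eps_2)$ using $\ell \le n$, $\ell \ge \log(n/\eta)$, and $r \lesssim \log_{d-1}n$. The only cosmetic difference is that the paper first coarsens the prefactors (to $O(n^5)$ and $(d\ell)^{O(\cdot)}$) before taking roots, whereas you take roots first and then estimate each resulting factor, but the bookkeeping is identical in substance, including the subadditivity step for the $(\delta,2\ell)$-wise case.
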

\begin{proof}
    We have obtained that, for a uniformly random edge-signing $\bw : E \to \{\pm 1\}$,
    \[
        \E[\bT] \leq O(d^2 \ell^3 n) \cdot (d-1)^{\ell} \cdot (dr\ell)^{O\parens{\frac{\log \ell}{r}} \cdot \ell}.
    \]
    Note that $r \lesssim \log_{d-1} n$ always holds, and hence we must have $\ell \leq n$ (else $\eps_2 > 1$).  Also we must have $\ell \geq \log n$ (else $\eps_1 > 1$).  Thus we may coarsen $O(d^2 \ell^3 n)$ in the above to $O(n^5)$, and coarsen $(dr\ell)^{O(\cdot)}$ to  $(d\ell)^{O(\cdot)}$.  Now since $\bT$ is a nonnegative random variable, Markov's inequality implies that except with probability at most~$\eta$,
    \[
        \bT \leq O(n^5/\eta) \cdot  (d-1)^{\ell} \cdot (d\ell)^{O\parens{\frac{\log \ell}{r}} \cdot \ell},
    \]
    and hence
    \[
        \rho(\bB) \leq \bT^{\frac{1}{2\ell}} \leq O(n^5/\eta)^{\frac{1}{2\ell}} \cdot \sqrt{d-1} \cdot (d\ell)^{O\parens{\frac{\log \ell}{r}}},
    \]
    which directly implies \Cref{eqn:tb}.

    Finally, in the $(\delta,2\ell)$-wise uniform case, we get an additional additive $\delta n d^{2\ell+2}$ in the bound on~$\E[\bT]$; this gets a factor of~$1/\eta$ after the application of Markov, and becomes $(\delta n/\eta)^{\frac{1}{2\ell}} \cdot O(d)$ after taking $2\ell$th roots.
\end{proof}

Finally, the reader may verify that \Cref{thm:keyA} follows from \Cref{cor:itsover} in the fully uniform case by taking $\ell = \Theta(r \log(n)/\log \log n)$, and in the derandomized case by taking $\ell = \Theta(C \log(n/\eta))$.

\begin{remark}
    Alternatively, by taking $\eta = \exp(-\exp(r^{.49}))$ and $\ell = \exp(r^{.49})$ in \Cref{cor:itsover}, we may conclude that $\rho(\bB) \leq \sqrt{d-1}\cdot  (1 + o_r(1))$ holds in the fully uniform case except with probability at most $\exp(-\exp(r^{.49}))$.
\end{remark}

\section{Weakly derandomizing Bordenave's theorem} \label{sec:bordenave-derand}

In this section we give a weak derandomization of Bordenave's proof of \Cref{thm:friedman}, using ``off-the-shelf'' tools; the derandomization is ``weak'' in the sense that it only yields a $\operatorname{quasipoly}(n)$-time deterministic construction.  As discussed in \Cref{sec:rand-d-reg}, we will derandomize both the configuration model version and the random lift version. Specifically, we show the conclusion of  \Cref{thm:friedman} holds even for the ``almost $k$-wise uniform'' versions of these models, $k = O(\log n)$.
\begin{definition}[$(\delta,k)$-wise uniform configuration/lift models]
    When the permutation~$\bpi \in \symm{nd}$ used in the configuration model is not uniformly random but is merely $(\delta,k)$-wise uniform, we will say that $\bG$ is drawn from the \emph{$(\delta,k)$-wise uniform configuration model}.  Similarly, when the $\bpi_{uv} \in \symm{n}$ used in the random lift model are independent but merely $(\delta,k)$-wise uniform, we will say that $\bG$ is a \emph{$(\delta,k)$-wise uniform random $n$-lift of base graph $\ul{G}$}.
\end{definition}
\noindent (For simplicity, in the lift model we will henceforth only concern ourselves with  $\ul{G} = K_{d+1}$.)

We will not fully recap Bordenave's proof of \Cref{thm:friedman} in this work, although the reader unfamiliar with it will get some insight knowing that our proof of \Cref{thm:keyA} is modeled on it.  Bordenave employs two twists on the Trace Method to show that a random $d$-regular graph~$\bG$ has spectral radius at most $2\sqrt{d-1} + \eps$ (when the trivial eigenvalue of~$d$ is ignored).  The less important (but still challenging) twist involves replacing the non-backtracking matrix~$\bB$ by a centered variant,~$\ul{\bB}$, that enables one to ignore the trivial eigenvalue.  The more conceptually important twist comes from the fact, originally recognized by Friedman, that even after passing to~$\ul{\bB}$, the Trace Method still fails.  The reason, in brief, is as follows: A successful use of the Trace Method would have to consider walks of length~$\ell$ for $\ell$ at least a large multiple of $\log n$, in order to overcome the factor of~$n$ arising from the $n$ different walk starting points (cf.~the error term $\eps_1$ just after \Cref{eqn:tb}).  But for walks of this long length, one can show that the expected trace of $\ul{\bB}^\ell (\ul{\bB}^\top)^\ell$  is simply too large --- much larger than the target $\poly(n) \cdot (d-1)^{\ell}$ needed to get the ``correct'' final bound.

However, as first demonstrated by Friedman, the expectation is too large only because of certain low-probability events.
%To bound the spectral radius of (the non-backtracking matrix~$\bB$ of) a random $d$-regular graph~$\bG$, Bordenave and Friedman use a twist on the Trace Method. In a typical application of this method, the spectral bound is given by something roughly like $\E[\bT]^{\frac{1}{2\ell}}$, where $\bT$ is the number of (non-backtracking) closed walks in~$\bG$ of length~$2\ell$.  To compensate for the factor of~$n$ that arises from the number of starting vertices, it is typically necessary and sufficient to take~$\ell$ to be a large constant times~$\log n$.  Now \emph{if} it were the case that the expected number of the length-$2\ell$ walks were sufficiently small (like $\wt{O}(n) \cdot (d-1)^{\ell}$), then it would be relatively easy to show that $O(\log n)$-wise uniform permutations suffice to derandomize the Friedman/Bordenave theorem.  However, this is \emph{not} true, and is the essential difficulty faced by Friedman and Bordenave in their proofs.
Bordenave's way of handling things is to show that: (i)~a random $d$-regular graph~$\bG$ is, with high probability, bicycle-free at large radius~$r$; (ii)~when $\bG$ is so bicycle-free, the $r$th power of its non-backtracking matrix, $\bB^r$, coincides with a certain ``bicycle-discarding'' variant~$\bB^{(r)}$; (iii)~the usual Trace Method \emph{can} be successfully applied to $\bB^{(r)}$; i.e., the expected trace of powers of $\ul{\bB}^{(r)}$ \emph{is} suitably small.

Thus our weak derandomization of Bordenave's proof has two ingredients, corresponding to (i)~and~(iii) above.  In \Cref{sec:bike-free} we derandomize a standard proof that a random $d$-regular graph is bicycle-free at large radius (in either the configuration model or the random lift model). In \Cref{sec:bordo} we examine  the key probabilistic ingredient in Bordenave's use of the Trace Method,  \cite[Prop.~11]{Bor19}, which encapsulates the fact that for a centered version~$\ul{\bM}$ of the configuration model matching matrix, the random variables $\ul{\bM}_{(v,i),(v',i')}$ are close to $k$-wise independent for $k \ll \sqrt{dn}$.

(In \Cref{sec:simple}, we also show a derandomization of the most basic fact about the configuration model, that~$\bG$ is simple with probability~$\Omega_d(1)$.  This is just a ``bonus'' for the reader who prefers the configuration model; it will be more convenient to use the random lift model for our explicit near-Ramanujan graphs, due to its guaranteed simplicity.)

\subsection{Bicycle-freeness}   \label{sec:bike-free}
The following relatively straightforward fact about $d$-regular $n$-vertex graphs is crucial for Bordenave's proof: with high probability they are bicycle-free at radius~$r$, provided $r \lesssim c \log_{d-1} n$ for some constant $c < 1/4$.   This fact is proved for completeness by Bordenave~\cite[Lem.~9]{Bor19} (and in \cite[Lem.~27]{Bor19} for random lifts); another proof appears earlier in, e.g.,~\cite[Lem.~2.1]{LS10}.  We would like a derandomized version of this fact for the $k$-wise uniform configuration model, $k = O(r)$.  This motivates looking for a moments-based proof, such as the one suggested by Wormald~\cite[Lem.~2.7]{Wor99} and carried out for Erd\H{o}s--R\'{e}nyi $\calG(n,m)$ graphs in~\cite[Thm.~5.5]{JLR00}.  The essential point will be that minimal witnesses to failure have only~$O(r)$ edges.
\begin{definition}[Minimal bicycle]
    We say a connected multigraph is a \emph{minimal bicycle} if it is bicyclic but has no proper subgraph that is bicyclic.  It is easy to see (cf.~\cite[Proof of Thm.~5.5]{JLR00}) that any minimal bicycle is either a ``handcuffs graph'' (two cycles joined by a path), a ``figure-eight graph'' (two cycles attached at a vertex), or a ``theta graph'' (a cycle with a ``diagonal'').
\end{definition}
We now prove:
\begin{proposition}                                     \label{prop:derand-bicycle-free}
    Fix  $d \geq 3$ and $k \geq 1$.  Let~$\bG$ be drawn from the $d$-regular $n$-vertex configuration model using a $2k$-wise uniform permutation.  Then $\bG$ is bicycle-free at radius~$k/4$, except with probability at most $O(k^3 (d-1)^{k}/n)$.  
    
    As a corollary, the failure probability is at most $1/n^{.99}$ provided $k < c \log_{d-1} n$ for a certain universal $c > 0$.  This statement remains true if $\bG$ is instead a $2k$-wise uniform random $n$-lift of~$K_{d+1}$.  Finally, by \Cref{thm:al13}, these statements remain true in the $(\delta,2k)$-wise uniform versions of the models, $\delta \leq 1/n^{8k+2}$.
\end{proposition}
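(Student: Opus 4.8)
The plan is a first-moment argument: bound the expected number of ``small'' witnesses to not being bicycle-free, where every witness can be taken to have only $O(k)$ edges so that its presence in $\bG$ depends on just $O(k)$ coordinates of the defining permutation(s). First I would reduce ``$\bG$ is not bicycle-free at radius $r := k/4$'' to ``$\bG$ contains a minimal bicycle with at most $k$ edges.'' If some ball $B_r(v)$ (induced on the vertices within distance $r$ of $v$) has excess $\ge 1$, then its breadth-first tree rooted at $v$ misses at least two edges $e_1,e_2$, all of whose endpoints lie within distance $r$ of $v$; the union of $e_1,e_2$ with the four tree-paths from $v$ to their endpoints is connected, has excess exactly $1$, and has at most $4r+2$ edges. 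It therefore contains a minimal bicycle $H$, which by the stated classification is a handcuffs, figure-eight, or theta graph, has $|E(H)|\le 4r+2$ edges and $|V(H)|=|E(H)|-1$ vertices, of which all but at most two have degree $2$. The constant $\tfrac14$ in the radius is calibrated so that $|E(H)|$ is at most (about) $k$.

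Next I would run the union bound over minimal bicycles with $j\le k$ edges in the $2k$-wise uniform configuration model. A copy of such a bicycle is specified by (i) its shape — a triple of cycle/path lengths summing to $j$, of which there are $O(j^2)$; (ii) a placement of its $j-1$ vertices into $[n]$, at most $n^{j-1}$ ways; and (iii) at each vertex of degree $m$ in the bicycle, a choice of which of its $d$ half-edges carry the $m$ incident edges, at most $d(d-1)\cdots(d-m+1)\le d(d-1)^{m-1}$ ways. Since the degrees sum to $2j$, step (iii) contributes at most $d^{j-1}(d-1)^{j+1}$ overall. For any fixed such copy, ``it appears in $\bG$'' is the event that $j$ prescribed pairs of half-edges are matched; since $\bpi$, hence $\bpi^{-1}$, is $2k$-wise uniform and this event depends on only $2j\le 2k$ coordinates, its probability is exactly the uniform value $\prod_{t=0}^{j-1}(nd-2t-1)^{-1}\le(1+o(1))(nd)^{-j}$ (using $j\le k\ll nd$). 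Multiplying and summing,
\[
  \Pr[\bG\text{ not bicycle-free at radius }k/4]\;\le\;\sum_{j=3}^{k}O(j^2)\cdot n^{j-1}d^{j-1}(d-1)^{j+1}(1+o(1))(nd)^{-j}\;\le\;O\!\Big(\frac{k^3(d-1)^{k}}{n}\Big),
\]
using that $d$ is a fixed constant; the $1/n^{.99}$ corollary then follows whenever $(d-1)^{k}\le n^{c}$ for a small universal $c$.

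For a $2k$-wise uniform random $n$-lift of $K_{d+1}$ the same reduction applies and the probability estimate is analogous: the $\binom{d+1}{2}$ permutations $\bpi_{uv}$ are independent, each is constrained at at most $k$ points by a copy of a $\le k$-edge bicycle, and $2k$-wise uniformity makes each constraint contribute $(1+o(1))/n$; counting placements as (non-backtracking) homomorphisms $H\to K_{d+1}$ — a factor $d$ for the first vertex and $d-1$ thereafter — times a choice of fibre per vertex again yields $O(k^3(d-1)^{k}/n)$. Finally, the $(\delta,2k)$-wise uniform versions follow by coupling: by \Cref{thm:al13} a $(\delta,2k)$-wise uniform permutation of $[N]$ ($N=nd$ for the configuration model, $N=n$ for the lift) is within total variation distance $O(\delta N^{8k})$ of a truly $2k$-wise uniform one, so for $\delta\le 1/n^{8k+2}$ this adds only $o(n^{-.99})$ in the regime $k<c\log_{d-1}n$ of interest.

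I expect the crux to be the tightness of the count in the second step: obtaining the base $(d-1)^{k}$ (rather than, say, $(2d)^{k}$ or $d^{2k}$) hinges on the non-backtracking-style accounting of half-edge choices at degree-$2$ versus higher-degree vertices, and on verifying that $2k$-wise uniformity reproduces the uniform-model matching probability up to a $1+o(1)$ factor — which is exactly why the radius is taken as small as $k/4$, keeping the number of relevant half-edges below $2k$ and well below $nd$.
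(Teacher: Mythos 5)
Your proposal is correct and takes essentially the same route as the paper: reduce non-bicycle-freeness at radius $k/4$ to the presence of a minimal bicycle with $O(k)$ edges, bound the expected number of such bicycles by a union bound, and observe that because this count is a polynomial of degree at most $2k$ in the permutation indicators, $2k$-wise uniformity suffices to reproduce the truly random expectation. The only stylistic difference is that the paper invokes Bordenave's exact closed-form expression for $\E[\bX_H]$ (the expected number of copies of a fixed subgraph $H$ in the configuration model) and then sums over the $O(k^3)$ isomorphism types, whereas you count placements and half-edge/matching probabilities directly; these are equivalent computations. Two small points of imprecision in your write-up, neither fatal: (1) saying a fixed copy's presence ``depends on only $2j$ coordinates'' is not quite right --- the indicator that two half-edges are matched involves a sum over all odd positions $j$ --- but the correct underlying fact, which both you and the paper rely on, is that the indicator is a degree-$2$ polynomial in the indicators $1[\bpi(j)=(v,i)]$, so the relevant product is degree $\le 2k$; (2) your reduction yields a minimal bicycle with up to $k+2$ (not $k$) edges, which loses a harmless $O_d(1)$ factor in the final bound since $d$ is fixed.
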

\begin{proof}
    We first consider the configuration model.  Fix a minimal bicycle~$H$ with $h$ vertices and hence $h+1$ edges, where $h < k$.  Let the random variable $\bX_H$ denote the number of times that $H$ appears in~$\bG$. This is a polynomial of degree at most $h+1 \leq k$ in the entries of $\bG$'s adjacency matrix and hence a polynomial of degree at most~$2k$ in the permutation indicators $1[\bpi(j) = (v,i)]$.  Thus to compute $\E[\bX_H]$ we may assume $\bG$ is drawn from the usual configuration model (with a truly random permutation).  In this case, it is elementary to compute an exact formula for~$\E[\bX_H]$; as per~\cite[eqn.~(2.4)]{Bor16}, it is
    \begin{equation} \label{eqn:bord24}
        \E[\bX_H] = \frac{1}{bc} \frac{n(n-1)(n-2)\cdots(n-h+1)}{(nd-1)(nd-3)(nd-5) \cdots (nd-2h-1)}\prod_{u \in V(H)} d(d-1)\cdots(d-\deg_H(u)+1),
    \end{equation}
    where $b$ (respectively, $c$) is the number of edge- (respectively, vertex-)isomorphisms of~$H$.  For any minimal bicycle~$H$ we have $b \geq 1$, $c \geq 2$, and $\deg_H(u) \geq 2$ for all $u \in V(H)$.  The last of these facts implies the product on the right in \Cref{eqn:bord24} is at most $(d(d-1))^{h+1}$.  Also, the large fraction in the middle is asymptotic to $(d^{h+1} n)^{-1}$, and it is not hard to check it is always at most twice that.  Hence we conclude $\E[\bX_H] \leq (d-1)^{h+1}/n \leq (d-1)^k/n$.  Finally, it is easy to see that, up to isomorphism, the number of minimal bicycles with fewer than $k$ vertices is at most~$O(k^3)$.  Thus by Markov's inequality we conclude that the probability of having any minimal bicycle on fewer than~$k$ vertices is at most $k^3(d-1)^k/n$.  The claim about the configuration model now follows because any bicyclic radius-$k/4$ vertex neighborhood in~$\bG$ must contain a minimal bicycle with fewer than~$k$ vertices.  (The ``worst case'' is a figure-eight graph.)
    
    As for the model where $\bG$ is a $2k$-wise uniformly random $n$-lift of $K_{d+1}$, the proof is nearly identical. The only difference arises in the computation of $\E[\bX_H]$ --- instead of using an exact closed form expression for the quantity, one can elementarily upper bound $\E[\bX_H]$ by $O((d+1)^h/n)$ (assuming, say, $k \leq \sqrt{n}$).  From this slightly weaker bound, one can still draw the same conclusion that the failure probability is at most $1/n^{.99}$ for $k < c \log_{d-1} n$ (possibly with slightly smaller~$c$).
\end{proof}
%
%A nearly identical proof structure as above shows that a random $\rlift$-lift of $K_{d+1}$ where each permutation is independently drawn from a $(\delta,2k)$-wise uniform distribution when $\delta \le 1/(n^{8k+1}(d+1)^2)$ is bicycle-free at radius $k/4$, except with probability $O(k^3(d+1)^k/n)$.  The only difference arises in the computation of $\E[\bX_H]$ --- instead of using an exact closed form expression for the quantity, one can elementarily upper bound $\E[\bX_H]$ by $O((d+1)^h/n)$.

\subsection{Bordenave's key probabilistic proposition}  \label{sec:bordo}

In this section we examine the last place in Bordenave's argument that uses randomness of the underlying graph~$\bG$; namely, \cite[Prop.~11]{Bor19} for the configuration model and \cite[Prop.~28]{Bor19} for the random lift model.  These nearly-identical propositions give an upper bound on a certain moment arising in his use of the Trace Method.  Unfortunately, the propositions are not as self-contained as the ones covered in \Cref{sec:bike-free}.  Rather than trying to give a complete summary of how Bordenave's argument works, we will proceed in a ``black-box'' fashion, only giving the bare minimum needed to verify derandomizability.  We refer the reader to~\cite{Bor19} for the complete picture.  As in~\cite{Bor19}, we will focus on the configuration model, and then describe the modifications necessary for the random lift model.

%We first give a definition, and then the key proposition.
Here is the key probabilistic proposition (which can be viewed as a far more sophisticated version of \Cref{fact:easy}):
%\begin{definition}
%    (\cite[Def.~10]{Bor19}.)  Let $\vec{E} = [n] \times [d]$ (where $nd$ is even), and let $\gamma = (\gamma_1, \dots, \gamma_{2k}) \in \vec{E}^{2k}$.  Write $m_\gamma(e)$ for the \emph{multiplicity} with which $e \in \vec{E}$ occurs within~$\gamma$.  Next, define $E_\gamma = \{\{\gamma_{2t-1}, \gamma_{2t}\} : t \in [k]\}$, and define the \emph{multiplicity} $m_\gamma(\{e,f\})$ to be the number of times $\{e,f\}$ occurs in $E_{\gamma}$ if it were viewed as a multiset.  Finally, say that a pair $\{e,f\} \in E_{\gamma}$ is \emph{consistent} if $m_\gamma(e) = m_\gamma(f) = m_{\gamma}(\{e,f\})$, and \emph{inconsistent} otherwise.
%\end{definition}
\begin{proposition}                                     \label{prop:11}
    (\cite[Prop.~11]{Bor19}.)  Let $\vec{E} = [n] \times [d]$, and let $\bM$ be a uniformly random matching on~$\vec{E}$ as in the configuration model \Cref{def:config-model}.  Also let $\ul{\bM}$ be the matrix obtained from $\bM$ by subtracting $\tfrac{1}{n'}$ from each entry, where $n' \coloneqq dn$.  Then for any $\gamma \in \vec{E}^{2k}$ with $1 \leq k \leq \sqrt{m}$ and any $0 \leq k_0 \leq k$, we have
    \begin{equation} \label[ineq]{eqn:bord11}
        \abs*{\E\bracks*{\prod_{t=1}^{k_0} \ul{\bM}_{\gamma_{2t-1}, \gamma_{2t}} \prod_{t = k_0+1}^k \bM_{\gamma_{2t-1}, \gamma_{2t}}}} \leq O\parens*{2^b \cdot (\tfrac{1}{n'})^a \cdot (\tfrac{3k}{\sqrt{n'}})^{a_1}}.
    \end{equation}
    %$a = |E_\gamma|$, $b$ is the number of $t \in [k_0]$ such that $\{\gamma_{2t-1}, \gamma_{2t}\}$ is an inconsistent pair of multiplicity~$1$ in $E_\gamma$, and $a_1$ is the number of $t \in [k_0]$ such that $\{\gamma_{2t-1}, \gamma_{2t}\}$ is a consistent pair of multiplicity~$1$ in $E_\gamma$.
\end{proposition}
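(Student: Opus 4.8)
The estimate \Cref{eqn:bord11} is a far more refined cousin of \Cref{fact:easy}, and I would prove it by the same elementary device — exact moment formulas for uniformly random matchings — together with a cancellation argument to exploit the centering. First I would expand the centering in the first $k_0$ factors: writing $\ul{\bM}_{x,y} = \bM_{x,y} - \tfrac{1}{n'}$ and distributing, the left-hand side of \Cref{eqn:bord11} becomes a signed sum of $2^{k_0}$ terms, each of the form $(-\tfrac{1}{n'})^{|J|}\,\E\bigl[\prod_{t \in I}\bM_{\gamma_{2t-1},\gamma_{2t}}\bigr]$ with $\{k_0+1,\dots,k\} \subseteq I \subseteq [k]$ and $J = [k_0]\setminus I$. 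For the inner expectations I would use the classical closed form: $\E\bigl[\prod_{p \in P}\bM_p\bigr]$ vanishes unless the distinct pairs occurring in the multiset $P$ form a valid partial matching of $\vec{E}$ (no repeated half-edge, no pair of the form $(x,x)$), in which case it equals $\prod_{i=0}^{a'-1}\tfrac{1}{n'-1-2i}$, where $a'$ is the number of distinct pairs. Under the hypothesis $k \le \sqrt{m}$ (so that $a' \le k$ is of order at most $\sqrt{n'}$), this product lies within a universal constant factor of $(n')^{-a'}$; in particular it is $O(1)\cdot(n')^{-a'}$.

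\textbf{Extracting the three factors.} Plugging this in, the crude triangle-inequality bound is only $2^{k_0}\cdot O((n')^{-a})$, where $a$ is the number of distinct pairs of $\gamma$ — too lossy, since it misses the gain $(3k/\sqrt{n'})^{a_1}$ attributed to the ``singleton'' pairs (those occurring exactly once in $\gamma$ and inside the centered block). To recover it I would re-group the $2^{k_0}$ terms one singleton at a time: for a singleton pair $p = \{\gamma_{2s-1},\gamma_{2s}\}$ with $s \le k_0$, grouping the $I \ni s$ term with the $I\setminus\{s\}$ term collapses the pair back into a single factor $\E\bigl[(\bM_p - \tfrac{1}{n'})\cdot X\bigr]$, where $X = \prod_{t}\bM_{\gamma_{2t-1},\gamma_{2t}}$ runs over pairs other than $p$. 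Conditioning on the sub-matching formed by those other pairs, $\E[\bM_p \mid \cdot]$ is either $0$ (if a revealed pair already covers an endpoint of $p$) or $\tfrac{1}{n'-2j-1}$ with $j \le k$ (if both endpoints of $p$ are still free); in the latter case the residual $\bigl|\tfrac{1}{n'-2j-1}-\tfrac{1}{n'}\bigr| = O\!\bigl(k/(n')^2\bigr)$ beats the nominal $1/n'$ by a factor $O(k/n') \le O(k/\sqrt{n'})$. Iterating over all $a_1$ singletons produces the $(3k/\sqrt{n'})^{a_1}$ factor, the $(1/n')^a$ factor collects the nominal $1/n'$'s over the remaining distinct pairs, and the leftover subset bookkeeping collapses into the $2^b$ factor (with $b$ essentially counting the non-singleton part of the centered block); the constant $3$ and the precise definition of $b$ are just slack in the accounting.

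\textbf{Main obstacle.} The one genuinely delicate point — where the combinatorial content of the proposition lives — is that the residual argument above is valid only for a singleton pair $p$ whose endpoints are \emph{not} forced (to coincide, or to be matched elsewhere) by the other pairs in the product: a singleton sharing a half-edge with another traversed pair contributes exactly $-1/n'$, with no gain at all. So the real work is to verify that Bordenave's parameter $a_1$ counts \emph{only} singletons for which this cannot happen — which requires unpacking the structure of the walks $\gamma$ that actually arise in his use of the Trace Method, where the relevant singletons are pendant-like — or, equivalently, to choose the order in which the pairs of $\gamma$ are revealed so that every such singleton is exposed with both endpoints still free, and to argue that such an order exists. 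Once this is pinned down, the rest is routine: in fact one can bypass the $2^{k_0}$-term expansion entirely and run the argument as a martingale, processing the factors in the chosen order and bounding each one-step conditional expectation pointwise ($0$, or $O(1/n')$ for a freshly exposed non-singleton, or $O(k/(n')^2)$ for a freshly exposed singleton, or exactly $1$ or $1-\tfrac{1}{n'}$ for a repeat occurrence) before multiplying via the tower property.
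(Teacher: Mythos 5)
The first thing to note is that the paper does not prove this proposition at all: it is imported verbatim from Bordenave~\cite{Bor19} and used as a black box. The paper's actual contribution in Section~5.2 is the observation immediately following it (\Cref{cor:11}) that the integrand in~\Cref{eqn:bord11} is a polynomial of degree at most~$2k$ in the permutation indicators $1[\bpi(j)=(v,i)]$, so the same bound holds when $\bpi$ is merely $2k$-wise uniform. In other words, the paper deliberately avoids re-deriving~\Cref{eqn:bord11} and even declines to define $a$, $b$, $a_1$, saying ``we omit these definitions here, as they won't be relevant for us.'' So there is no ``paper's own proof'' to compare against; you are proving a statement the authors chose to cite.

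As for the sketch itself, the overall shape is plausible — expand the centering, invoke the exact product formula $\E[\prod_{i=1}^{a'}\bM_{p_i}]=\prod_{i=0}^{a'-1}\tfrac{1}{n'-1-2i}$ for a valid partial matching, and harvest the $O(k/(n')^2)$ residual from each ``free'' centered singleton. But the paragraph you label ``main obstacle'' is not an obstacle to be deferred: it \emph{is} the content of the proposition, and neither of the two workarounds you propose survives scrutiny. First, \Cref{prop:11} is stated for \emph{arbitrary} $\gamma\in\vec{E}^{2k}$, so you cannot appeal to structure of the walks that arise in the Trace Method; Bordenave proves the bound for all $\gamma$ precisely so that it can be applied without tracking which $\gamma$'s show up. Second, a revealing order in which every centered singleton is exposed with both endpoints free need not exist: if a centered singleton $\{x,y\}$ shares the half-edge $x$ with another distinct pair $\{x,z\}$ of $\gamma$, then $\bM_{\{x,y\}}\bM_{\{x,z\}}\equiv 0$ and the conditional residual is exactly $-1/n'$ with no gain, regardless of ordering. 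The resolution lives entirely in the precise definitions of $a_1$ and $b$ — roughly, $a_1$ must count only those centered singletons whose half-edges are not forced by collisions elsewhere in $\gamma$, with the ``bad'' ones absorbed into the looser $2^b$ factor — and working out exactly which pairs fall where, and verifying the resulting bound, is the proof; it is not ``routine'' once flagged. Since the paper omits these definitions on purpose, a self-contained proof would first have to reconstruct them, which the sketch does not do.
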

\noindent Here $a$, $b$, and $a_1$ on the right-hand side of \Cref{eqn:bord11} are certain quantities relating to the multiplicities of half-edges in~$\gamma$ and to~$k_0$. We omit these definitions here, as they won't be relevant for us.

Note that when $\bM$ is formed from a random permutation $\bpi$ on $[nd]$ as in \Cref{def:config-model}, each entry $\bM_{e,f}$ is a polynomial of degree~$2$ in the indicators $1[\bpi(j) = (v,i)]$.  It follows that the quantity inside the expectation in \Cref{eqn:bord11} is a polynomial of degree at most~$2k$ in these indicators.  We conclude:
\begin{corollary}                                       \label{cor:11}
    Let~$\bG$ be drawn from the $d$-regular $n$-vertex configuration model using a $2k$-wise uniform permutation, and write $\bM$ for the matching matrix inducing~$\bG$.  Then \Cref{eqn:bord11} continues to hold.
\end{corollary}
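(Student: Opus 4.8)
The plan is to leverage the observation recorded just before the corollary: the random variable inside the expectation in \Cref{eqn:bord11} is a polynomial of degree at most $2k$ in the placement indicators $\{1[\bpi(j)=(v,i)] : j \in [nd],\ (v,i) \in [n]\times[d]\}$, and the expectation of any such polynomial is the same whether $\bpi$ is a truly uniform permutation of $[nd]$ or merely a $2k$-wise uniform one. Since the right-hand side of \Cref{eqn:bord11} does not mention the distribution of $\bpi$ at all, the corollary then follows immediately from \Cref{prop:11}.

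To make the degree bound explicit, I would first recall from \Cref{def:config-model} that $\bM_{e,f} = \sum_{j\text{ odd}}\bigl(1[\bpi(j)=e]\,1[\bpi(j+1)=f] + 1[\bpi(j)=f]\,1[\bpi(j+1)=e]\bigr)$, a polynomial of degree $2$; hence $\ul{\bM}_{e,f} = \bM_{e,f} - 1/n'$ is also of degree $2$, differing only in its constant term. The product $\prod_{t=1}^{k_0}\ul{\bM}_{\gamma_{2t-1},\gamma_{2t}} \cdot \prod_{t=k_0+1}^{k}\bM_{\gamma_{2t-1},\gamma_{2t}}$ is then a product of $k$ factors each of degree at most $2$, so it is a polynomial $P(\bpi)$ of degree at most $2k$ in the indicators. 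Expanding $P$ into monomials and using $1[\bpi(j)=e]\cdot 1[\bpi(j)=e'] = 1[e=e']\cdot 1[\bpi(j)=e]$ to reduce, every surviving monomial has the form $\prod_{s=1}^{t}1[\bpi(j_s)=e_s]$ for some $t \le 2k$ distinct positions $j_1,\dots,j_t$ and distinct targets $e_1,\dots,e_t$ (monomials forcing one position to two distinct targets, or two positions to a single target, vanish identically under every permutation).

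The only point requiring any care --- and it is hardly an obstacle --- is that for such a monomial, $\E\bigl[\prod_{s=1}^{t}1[\bpi(j_s)=e_s]\bigr] = \Pr[\bpi(j_1)=e_1,\dots,\bpi(j_t)=e_t]$ is a function of the joint law of $(\bpi(j_1),\dots,\bpi(j_t))$ alone, and since $t \le 2k$, the definition of $2k$-wise uniformity (whose marginals onto $\le 2k$ coordinates are the uniform distribution on $[nd]_t$) forces this law to agree with that of a truly uniform permutation. Linearity of expectation then gives that $\E[P(\bpi)]$ is the same in the two models, and \Cref{prop:11} applies verbatim to yield \Cref{eqn:bord11}. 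The same argument, with \Cref{prop:11} replaced by its random-lift analogue \cite[Prop.~28]{Bor19}, handles the lift model; and the $(\delta,2k)$-wise uniform versions would follow from the additional $\delta$-perturbation afforded by \Cref{thm:al13}. In short, all the real work lives in Bordenave's \Cref{prop:11}, and the derandomization is the routine remark that $k$-wise uniform permutations agree with uniform permutations on degree-$\le k$ polynomials in the placement indicators.
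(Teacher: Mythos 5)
Your proof is correct and takes essentially the same route as the paper, whose ``proof'' is really just the two sentences immediately preceding the corollary: each $\bM_{e,f}$ (and hence $\ul{\bM}_{e,f}$) is a degree-$2$ polynomial in the placement indicators $1[\bpi(j)=(v,i)]$, so the integrand in \Cref{eqn:bord11} has degree at most~$2k$, and its expectation therefore matches the fully uniform case under a $2k$-wise uniform $\bpi$. You merely spell out the step the paper leaves implicit --- reducing to monomials involving at most $2k$ distinct positions $j_s$ and appealing to the definition of $2k$-wise uniformity --- which is a faithful elaboration rather than a different argument.
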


Bordenave also proved an analogue of \Cref{prop:11} for the random lift model.  The statement is extremely similar to \Cref{prop:11}, with ``$n'$'' being $n$, and with the rows/columns of ``$\bM$'' being the potential ``half-edges'' in the lifted graph; for the exact statement we refer the reader to \cite[Prop.~28]{Bor19}.  Further, \Cref{cor:11} is true when~$\bG$ is drawn from a $2k$-wise uniform lift model.

With \Cref{prop:11} in hand, Bordenave does some intricate --- but entirely non-probabilistic --- path-counting to complete his use of the Trace Method.  (This is like a much more sophisticated version of the part of \Cref{sec:random-signs} beginning with \Cref{def:hikes}.)  This part of his proof involves considering paths of length $2\ell m$, where ``$\ell$'' and ``$m$'' are parameters he selects (with $\ell$ being at least the bicycle-free radius, and~$m$ being large enough so that $\ell m \gg \log n$).  The crucial observation for us is that Bordenave \emph{only} employs \Cref{prop:11} with its parameter ``$k$'' set to~$2\ell m$ (and the same is true in the random lift model).

Bordenave directly sets $\ell = \Theta(\log_{d-1}(n))$ and $m = \Theta(\log(n)/\log \log(n))$ to obtain best parameters, but we will work more generally, since we may be interested in minimizing $k= 2\ell m$ to save on random bits.  Carefully examining \cite[Proofs of Prop.~14,~18]{Bor19}, one may extract the below proposition.  The random matrices $\ul{\bB}^{(\ell)}$ and $\bR^{(\ell)}_{1}, \dots, \bR^{(\ell)}_\ell$ mentioned in it are derived from the randomness of the configuration model; again, see~\cite{Bor19} for details.
\begin{proposition}                                     \label{prop:bordextract}
    Assuming $d$, $\ell$, $m$ satisfy $\poly(d \ell m)^{m} \ll n$, it holds that
    \[
        \E\bracks*{\|\ul{\bB}^{(\ell)}\|^{2m}} \leq \poly(n) \cdot (d-1)^{\ell m}, \qquad
        \E\bracks*{\sum_{i=1}^\ell \|\ul{\bR}_i^{(\ell)}\|^{2m}} \leq \poly(d \ell m)^m \cdot (d-1)^{2\ell m},
    \]
    Furthermore, this only relies on \Cref{eqn:bord11} with $k = 2\ell m$, and therefore by \Cref{cor:11} it continues to hold even in the $4\ell m$-wise independent configuration model.  Thus in this model, Markov's inequality implies that except with probability at most~$n^{-100}$,
    \[
        \|\ul{\bB}^{(\ell)}\| \leq \poly(n)^{\frac{1}{2m}} \cdot \sqrt{d-1}^{\ell}, \qquad
        \sum_{i=1}^\ell \|\ul{\bR}_i^{(\ell)}\| \leq \poly(n)^{\frac{1}{2m}} \cdot (d-1)^{\ell}.
    \]
\end{proposition}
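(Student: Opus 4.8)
The plan is to argue entirely in a black-box fashion, quoting the scaffolding of Bordenave's argument rather than rebuilding it. Recall that in \cite{Bor19} the $\ell$-th power of the non-backtracking operator is resolved, on the bicycle-free event, into the ``bicycle-discarding'' matrix $\bB^{(\ell)}$, its centered version $\ul{\bB}^{(\ell)}$, and a telescoping family of ``remainder'' matrices $\ul{\bR}^{(\ell)}_1,\dots,\ul{\bR}^{(\ell)}_\ell$. The two displayed moment estimates are, up to our change of notation for the parameters $\ell$ and $m$, exactly \cite[Prop.~14]{Bor19} and \cite[Prop.~18]{Bor19}; the hypothesis $\poly(d\ell m)^m\ll n$ is precisely the slack that lets the $\poly(n)$ (resp.\ $\poly(d\ell m)^m$) prefactors absorb the lower-order error terms appearing in Bordenave's estimates once one works with general $\ell,m$ rather than his specific choices. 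So the first step is simply to transcribe those two statements into our parametrization.

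The one thing that needs genuine verification is that these estimates depend on the randomness of $\bG$ only through \Cref{prop:11} (i.e.\ Bordenave's Prop.~11), and only at the single value $k=2\ell m$ of its parameter. To check this I would walk through \cite[Proofs of Prop.~14, Prop.~18]{Bor19}: both begin with a Trace Method bound of the form $\|\ul{\bB}^{(\ell)}\|^{2m}\le\tr\parens*{(\ul{\bB}^{(\ell)})^m((\ul{\bB}^{(\ell)})^\top)^m}$ (and similarly for each $\ul{\bR}^{(\ell)}_i$), expand the trace as a sum over length-$2\ell m$ tangle-free walks, and then estimate the sum by (i)~a purely combinatorial count of walks with a prescribed shape / edge-multiplicity profile, together with (ii)~a bound on the expectation of the associated monomial in the matching-matrix entries --- and (ii) is invoked \emph{only} through \Cref{prop:11}, and only with $k=2\ell m$. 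Granting this, each entry $\ul{\bM}_{e,f}$ (resp.\ $\bM_{e,f}$) is a degree-$2$ polynomial in the permutation indicators $1[\bpi(j)=(v,i)]$, so the monomials in question are polynomials of degree at most $2\cdot 2\ell m=4\ell m$ in those indicators; their expectations are therefore unchanged when $\bpi$ is merely $4\ell m$-wise uniform, which is exactly the content of \Cref{cor:11}. Hence both moment bounds hold verbatim in the $4\ell m$-wise uniform configuration model, and --- using \cite[Prop.~28]{Bor19} and its derandomized counterpart in place of \Cref{prop:11} and \Cref{cor:11} --- in the $4\ell m$-wise uniform lift model as well.

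The high-probability statements then follow from Markov's inequality applied to the nonnegative random variables $\|\ul{\bB}^{(\ell)}\|^{2m}$ and $\sum_{i}\|\ul{\bR}^{(\ell)}_i\|^{2m}$, with a polynomial-in-$n$ prefactor (say $2n^{100}$) and a union bound over the two events. For $\ul{\bB}^{(\ell)}$ one gets $\|\ul{\bB}^{(\ell)}\|^{2m}\le\poly(n)\cdot(d-1)^{\ell m}$, and taking $2m$-th roots gives $\|\ul{\bB}^{(\ell)}\|\le\poly(n)^{1/(2m)}\cdot\sqrt{d-1}^{\ell}$. For the remainders one gets $\sum_i\|\ul{\bR}^{(\ell)}_i\|^{2m}\le\poly(n)\cdot\poly(d\ell m)^m\cdot(d-1)^{2\ell m}$, hence $\max_i\|\ul{\bR}^{(\ell)}_i\|\le(\poly(n)\poly(d\ell m)^m)^{1/(2m)}(d-1)^\ell$ and therefore $\sum_i\|\ul{\bR}^{(\ell)}_i\|\le\ell\cdot(\poly(n)\poly(d\ell m)^m)^{1/(2m)}(d-1)^\ell$; the hypothesis $\poly(d\ell m)^m\ll n$ (which in particular forces $\ell\ll n^{1/m}$) lets us collapse the $\ell$ and $\poly(d\ell m)^{1/(2m)}$ factors into $\poly(n)^{1/(2m)}$, yielding the claimed bound. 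The main obstacle is the verification in the second paragraph: there is no new mathematics, but one must follow Bordenave's tangle-free walk bookkeeping closely enough to be certain that everything downstream of Prop.~11 is deterministic and that Prop.~11 is never needed (nor is $k$-wise independence for any larger $k$) beyond $k=2\ell m$. The paper's decision to keep $\ul{\bB}^{(\ell)}$, the $\ul{\bR}^{(\ell)}_i$, and the walk combinatorics as a black box makes this a matter of careful cross-referencing with \cite{Bor19} rather than of proof.
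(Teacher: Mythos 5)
Your proposal is correct and follows the same route the paper intends: read Bordenave's Props.~14 and~18 as black boxes, observe that the only probabilistic input is Prop.~11 invoked solely with parameter $k=2\ell m$, note that the relevant expectations are degree-$\le 4\ell m$ polynomials in the permutation indicators so that \Cref{cor:11} (i.e.\ $4\ell m$-wise uniformity) suffices, and finish with Markov. Your extra bookkeeping around the remainder sum (passing through the max and absorbing the stray $\ell$ factor via $\poly(d\ell m)^m\ll n$) is a detail the paper leaves implicit but is handled correctly.
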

This proposition holds just the same in the random lift model with base graph $\ul{G} = K_{d+1}$ (indeed, with any $d$-regular base graph).  One simply has to follow through the analogous propositions, \cite[Proofs of Prop.~29,~33]{Bor19}, in the same way.\footnote{Bordenave carries these propositions out for not-necessarily-regular base graphs of maximum degree~$d$.  His computations depend on the base graph through the Perron eigenvalue~$\rho_1$ of its non-backtracking operator~$B$, which in the $d$-regular case is just $d-1$.  In \cite[(67)]{Bor19} Bordenave selects $\rho > \rho_1$ and $c_\rho \geq 1$ such that $\|(B^\top)^k 1_e\|_1 \leq c_\rho \rho^k$ holds for all $k$ and all edges in the base graph.  In our $d$-regular case, we can simply take $c_\rho = 1$ and $\rho = \rho_1 = d-1$ when carrying through his computations.}

Finally, \cite[Prop.~8]{Bor19} is the following:
\begin{proposition}                                     \label{prop:8}
    Suppose $\bG$ drawn from the $d$-regular configuration model is bicycle-free at radius~$\ell$.  Let $n' = dn$. Then the largest magnitude eigenvalue of the associated non-backtracking matrix~$\bB$, excluding the trivial eigenvalue of~$d$, is at most
    \[
        \parens*{ \|\ul{\bB}^{(\ell)}\| + \frac{1}{n'} \cdot \sum_{i=1}^\ell \|\ul{\bR}_i^{(\ell)}\|}^{1/\ell}.
    \]
\end{proposition}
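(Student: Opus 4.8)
The plan is to follow Bordenave's proof of this statement; it combines three ingredients, and in particular makes precise the operators $\ul{\bB}^{(\ell)}$ and $\ul{\bR}^{(\ell)}_1,\dots,\ul{\bR}^{(\ell)}_\ell$ that appear above.

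\emph{Reduction to a tangle-free power acting on $\vec 1^\perp$.} First I would replace $\bB^\ell$ by the \emph{tangle-free $\ell$th power} $\bB^{(\ell)}$, whose $(e,f)$ entry is the sum, over all length-$\ell$ non-backtracking walks $e = e_0, e_1, \dots, e_\ell = f$ whose edge-trace is bicycle-free, of the product of the matching indicators traversed. Since $\bG$ is bicycle-free at radius~$\ell$, every length-$\ell$ non-backtracking walk stays inside the distance-$\ell$ ball about its starting vertex, which is at most unicyclic; hence no walk is ever discarded and $\bB^\ell = \bB^{(\ell)}$. Next, because every directed edge has exactly $d-1$ non-backtracking predecessors, $\bB^\top \vec 1 = (d-1)\vec 1$; hence $\vec 1^\perp$ is $\bB$-invariant, $\bB$ is block-diagonal with respect to $\mathrm{span}(\vec 1)$ and $\vec 1^\perp$, and the eigenvalues of $\bB$ other than the trivial one (the one with eigenvector $\vec 1$) are exactly the eigenvalues of $\bB$ restricted to $\vec 1^\perp$. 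So the quantity to be bounded is $\rho\big(\bB|_{\vec 1^\perp}\big)$, and since $\rho(M)\le\|M\|$ this is at most $\big\|(\bB^{(\ell)})|_{\vec 1^\perp}\big\|^{1/\ell}$.

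\emph{The centering identity.} The technical core is to rewrite $\bB^{(\ell)}$ in terms of the centered matching matrix $\ul{\bM} = \bM - \tfrac1{n'}J$ (with $J$ the all-ones matrix and $n' = dn$), because it is $\ul{\bM}$ whose entries are nearly $k$-wise uniform and which \Cref{prop:bordextract} controls. Expanding each of the $\ell$ matching factors in the walk-sum as $\bM = \ul{\bM} + \tfrac1{n'}J$ multilinearly, the term in which no factor is replaced by $\tfrac1{n'}J$ is by definition $\ul{\bB}^{(\ell)}$. Grouping the remaining terms by the index $i$ of the first step that receives a $\tfrac1{n'}J$ factor — which both carries the scalar $\tfrac1{n'}$ and ``cuts'' the walk at that step, the two pieces remaining non-backtracking and bicycle-free — and discarding the parts that are annihilated upon restriction to $\vec 1^\perp$ (this is precisely the place where excluding the trivial eigenvalue is used), one obtains, for every $\phi \perp \vec 1$,
\[
    \bB^{(\ell)}\phi \;=\; \ul{\bB}^{(\ell)}\phi \;+\; \frac1{n'}\sum_{i=1}^{\ell}\ul{\bR}^{(\ell)}_i\,\phi,
\]
where $\ul{\bR}^{(\ell)}_i$ gathers the broken walks whose first cut is at step~$i$, bundling the centered length-$(i-1)$ head with the broken length-$(\ell-i)$ tail; its precise definition is as in \cite{Bor19}.

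\emph{Conclusion and the main obstacle.} Taking norms in the identity above gives $\|\bB^{(\ell)}\phi\| \le \big(\|\ul{\bB}^{(\ell)}\| + \tfrac1{n'}\sum_{i=1}^{\ell}\|\ul{\bR}^{(\ell)}_i\|\big)\|\phi\|$ for every $\phi\perp\vec 1$, so $\big\|(\bB^{(\ell)})|_{\vec 1^\perp}\big\|$ is at most that bracketed quantity, and an $\ell$th root finishes. The main obstacle is the centering identity itself: one must fix the right definition of the $\ul{\bR}^{(\ell)}_i$, verify that the multilinear expansion collapses to exactly $\ul{\bB}^{(\ell)}$ together with these $\ell$ remainders and terms that are ``supported along $\vec 1$'', and confirm that the latter vanish on $\vec 1^\perp$ — the bookkeeping being complicated by the need to keep the non-backtracking and bicycle-free constraints consistent as walks are cut, and by the behaviour at the ends of the walk. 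This is the ``less important but still challenging'' twist mentioned at the start of \Cref{sec:bordenave-derand}, and it is carried out in \cite[Prop.~8]{Bor19}; the random-$n$-lift analogue proceeds identically, working instead with the potential half-edges of the lifted graph.
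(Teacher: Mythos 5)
The paper does not prove this proposition; it quotes \cite[Prop.~8]{Bor19} verbatim and treats $\ul{\bB}^{(\ell)}$ and the $\ul{\bR}^{(\ell)}_i$ entirely as black boxes (as also in \Cref{prop:bordextract}).  Your sketch is a sound reconstruction of Bordenave's argument and is exactly the approach the paper is implicitly leaning on: the equality $\bB^\ell=\bB^{(\ell)}$ under the bicycle-free hypothesis, restriction to $\vec{1}^\perp$ (for which it is worth noting that $\vec{1}$ is \emph{both} a left eigenvector of $\bB$, giving invariance of $\vec{1}^\perp$, \emph{and} a right eigenvector, which is what confines the trivial eigenvalue to the complementary block), the multilinear expansion of $\bM=\ul{\bM}+\tfrac1{n'}J$, and the closing triangle inequality.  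You also correctly locate the substantive work: cutting a bicycle-free walk at step~$i$ and recombining a head with a tail does \emph{not} by itself reproduce the bicycle-free constraint on the concatenation, so the $\ul{\bR}^{(\ell)}_i$ are not literally ``centered head times broken tail'' but precisely the correction matrices Bordenave introduces to fix this failure of factorization.  Deferring their exact definitions to \cite{Bor19} is consistent with the paper's own level of detail, so I would only flag the ``two pieces remaining non-backtracking and bicycle-free'' sentence as slightly optimistic in tone --- it is true of each piece but is exactly what fails to compose, which is why the $\ul{\bR}^{(\ell)}_i$ exist at all.
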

Again, Bordenave has a very similar analogue~\cite[Prop.~26]{Bor19} in the random lift model, with ``$n'$'' equal to~$n$, and with the quantity bounding the largest-in-magnitude ``new'' eigenvalue of the lifted graph (which is precisely what one needs to bound to show the near-Ramanujan property, assuming the base graph is itself $d$-regular Ramanujan).

We can now finish the proof as Bordenave does (in either the configuration or random lift model), combining \Cref{cor:IB}, \Cref{prop:8,prop:bordextract,prop:derand-bicycle-free}, and also \Cref{prop:simple-derand} (if desired).  Using the parameter settings $\ell = c \log_{d-1} n$ and $m = (C/c) \log(d-1)/\sqrt{\eps}$ where $c$ is the constant from \Cref{prop:derand-bicycle-free} and $C$ is a large enough universal constant, we get the following:
\begin{theorem}                                     \label{thm:derand-bord}
    Fix $3 \leq d \leq C^{-1}\sqrt{\log n}$ and let $\eps \leq 1$ and~$k$ satisfy  %m = log(d)/sqrt(eps)
    \[
        \eps \geq C^3 \cdot \parens*{\frac{\log \log n}{\log_{d-1} n}}^2, \qquad  k \geq  C \log(n)/\sqrt{\eps}.
    \]
    Let $\bG$ be chosen from the $d$-regular $n$-vertex $k$-wise uniform configuration model, or as a $k$-wise uniform random $n$-lift of $K_{d+1}$.     Then except with probability at most $1/n^{.99}$, the following hold:
    \begin{itemize}
        \item $\bG$ is bicycle-free at radius~$c \log_{d-1} n$;
        \item $\lambda(\bG) \leq 2\sqrt{d-1}\cdot(1+\eps)$.
    \end{itemize}
    Additionally, in the configuration model case, $\bG$ is simple with probability at least $e^{-(d-1)^2/4}/2$.  Finally, by \Cref{thm:al13}, these statements remains true in the $(\delta, k)$-wise uniform configuration model, $\delta \leq 1/n^{8k+1}$.
\end{theorem}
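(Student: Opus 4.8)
The plan is to prove Theorem~\ref{thm:derand-bord} by assembling the pieces already in place --- Corollary~\ref{cor:IB}, Propositions~\ref{prop:8}, \ref{prop:bordextract}, \ref{prop:derand-bicycle-free}, and \ref{prop:simple-derand} --- and carrying out the bookkeeping that shows a single uniformity parameter $k = \Theta(\log n/\sqrt\eps)$ suffices to invoke all of them at once. Concretely I would fix Bordenave's two internal parameters as $\ell = c\log_{d-1}n$, with $c$ a small enough universal constant so that Proposition~\ref{prop:derand-bicycle-free} (run with its own parameter set to $4\ell$) makes $\bG$ bicycle-free at radius $\ell$ with failure probability $\le n^{-.99}$, and $m = \Theta(\log(d-1)/\sqrt\eps)$, the implied constant taken large enough (relative to the absolute exponent hidden in the ``$\poly(n)$'' of Proposition~\ref{prop:bordextract}) that the eventual spectral loss comes out $O(\eps)$; note $\ell m = \Theta(\log n/\sqrt\eps)$. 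Then, for $C$ large enough in the hypothesis $k \ge C\log n/\sqrt\eps$, a $k$-wise uniform permutation is simultaneously $8\ell$-wise uniform (what Proposition~\ref{prop:derand-bicycle-free} needs) and $4\ell m$-wise uniform (what Corollary~\ref{cor:11}, hence Proposition~\ref{prop:bordextract}, needs).

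With the parameters pinned, the chain runs: Proposition~\ref{prop:derand-bicycle-free} gives bicycle-freeness at radius $\ell$ except with probability $\le n^{-.99}$; Proposition~\ref{prop:bordextract} (legitimate since $4\ell m \le k$ and, as noted below, $\poly(d\ell m)^m \ll n$) plus Markov gives, except with probability $n^{-100}$, the bounds $\|\ul{\bB}^{(\ell)}\| \le \poly(n)^{1/2m}\sqrt{d-1}^{\,\ell}$ and $\sum_i\|\ul{\bR}_i^{(\ell)}\| \le \poly(n)^{1/2m}(d-1)^{\ell}$. A union bound keeps the total failure probability below $n^{-.99}$, and on the good event Proposition~\ref{prop:8} bounds the top nontrivial eigenvalue-magnitude of $\bB$ by $\bigl(\|\ul{\bB}^{(\ell)}\| + \tfrac{1}{n'}\sum_i\|\ul{\bR}_i^{(\ell)}\|\bigr)^{1/\ell}$; since $(d-1)^{\ell} = n^{c}$ and $n' = dn \ge n$, the $\ul{\bR}$ term is dominated, leaving $\bigl(2\poly(n)^{1/2m}\bigr)^{1/\ell}\sqrt{d-1}$. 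Its prefactor is $2^{1/\ell}\poly(n)^{1/(2m\ell)} = \exp(O(\log\log n/\log n) + O(\sqrt\eps))$, and since $\sqrt\eps = \Omega(\log\log n/\log n)$ (from the hypothesis on $\eps$) and $m$ was chosen with a large constant, this is $1 + O(\sqrt\eps)$; thus $\rho(\bB)$, excluding the trivial eigenvalue, is at most $\sqrt{d-1}(1+O(\sqrt\eps))$. Feeding this into the contrapositive of Corollary~\ref{cor:IB} --- which converts a $\sqrt{d-1}(1+\delta)$ bound on $\bB$ into a $2\sqrt{d-1}(1+O(\delta^2))$ bound on the adjacency spectrum, with the two trivial eigenvalues corresponding --- yields $\lambda(\bG) \le 2\sqrt{d-1}(1+O(\eps)) \le 2\sqrt{d-1}(1+\eps)$. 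The random-lift case with base $K_{d+1}$ is word-for-word the same using the lift analogues \cite[Props.~26, 28, 29, 33]{Bor19} of Propositions~\ref{prop:8} and \ref{prop:bordextract}, plus the observation that $K_{d+1}$ is itself $d$-regular Ramanujan, so bounding the ``new'' eigenvalues of the lift bounds $\lambda(\bG)$.

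What forces the quantitative hypotheses --- and is the only spot needing real care --- is the side condition $\poly(d\ell m)^m \ll n$ of Proposition~\ref{prop:bordextract}. Here $\log(d\ell m) = O(\log d + \log\log n)$, while the lower bound $\eps \ge C^3((\log\log n)/\log_{d-1}n)^2$ forces $m = O(\log(d-1)/\sqrt\eps) = O\bigl(C^{-3/2}(\log n)/(\log\log n)\bigr)$, so $m\log(d\ell m) = O\bigl(C^{-3/2}\log n\bigr)\bigl(1 + (\log d)/\log\log n\bigr)$; the hypothesis $d \le C^{-1}\sqrt{\log n}$ is precisely what makes $(\log d)/\log\log n \le \tfrac12$, whence $m\log(d\ell m) = O(C^{-3/2}\log n) \ll \log n$ for $C$ large. (The constraint $k \le \sqrt{nd}$ in Proposition~\ref{prop:11} is then automatic, as $k = O(\log^2 n)$.) The simplicity claim is immediate since $d^2 = O(C^{-2}\log n) \le k$, so Proposition~\ref{prop:simple-derand} applies as stated; and the $(\delta,k)$-wise uniform versions follow from Theorem~\ref{thm:al13}, since for $\delta \le n^{-8k-1}$ the total variation distance $O(\delta n^{4k}) = O(n^{-4k-1})$ from a truly $k$-wise uniform permutation is negligible against every probability appearing above. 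I expect the genuine friction to be not any one inequality but coordinating the several ``for $C$ large enough'' demands --- in the $\eps$ lower bound, the $d$ upper bound, and the $k$ lower bound --- so that one universal $C$ serves all of them simultaneously.
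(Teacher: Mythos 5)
Your proposal is correct and follows essentially the same route as the paper: the paper's own proof is the terse remark after Proposition~\ref{prop:bordextract} that one ``combines \Cref{cor:IB}, \Cref{prop:8,prop:bordextract,prop:derand-bicycle-free}, and \Cref{prop:simple-derand}'' with the parameter choices $\ell = c\log_{d-1}n$ and $m = (C/c)\log(d-1)/\sqrt{\eps}$, which is precisely what you do, spelled out. Your bookkeeping --- checking $\poly(d\ell m)^m \ll n$ via the hypotheses on $\eps$ and $d$, verifying $k \ge \max(8\ell, 4\ell m)$, tracing the $(1+O(\sqrt\eps))$ bound on $\rho(\bB)$ through \Cref{cor:IB} to a $(1+O(\eps))$ bound on $\lambda$, and noting the lift/simplicity/$(\delta,k)$ variants --- matches the intended argument.
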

%
%An analogous proposition to \Cref{prop:bordextract} can be extracted by examining \cite[Proofs of Prop.~29, 33]{Bor19}, albeit with somewhat worse bounds, and it can be verified that this continues to hold in the $4\ell m$-wise uniform lift model.  And finally, the analog of \Cref{prop:8} for the lift model is also proved by Bordenave in \cite[Prop.~26]{Bor19}, which bounds the the maximum magnitude ``new eigenvalue'' of the relevant non-backtracking matrix $\bB$ introduced by taking a lift in terms of $\|\ul{B}^{(\ell)}\|$ and $\|\ul{R}_i^{(\ell)}\|$.  Putting all the analogs of \Cref{cor:IB,prop:8,prop:bordextract,prop:derand-bicycle-free} in the lift model together, we can show:
%\begin{theorem}
%    Fix any $d\ge 3$ and $\epsilon > 0$, and let $k\ge C\log n/\sqrt{\eps}$.  Let $\bG_n$ be a random $k$-wise uniform lift of $K_{d+1}$.  Then except with probability at most $O(1/n^{.99})$, the following hold:
%    \begin{itemize}
%        \item $\bG_n$ is bicycle-free at radius~$c\log_{d-1} n$ for a small enough universal constant $c$;
%        \item $\lambda(\bG_n) \le 2\sqrt{d-1}\cdot(1+\epsilon)$,
%    \end{itemize}
%    where the $O(\cdot)$ hides constants depending on $\delta$.
%    Additionally, $\bG_n$ is simple, and by \Cref{thm:al13}, these statements remain true for $(\delta,k)$-wise uniform lifts, where $\delta < 1/(n^{8k}(d+1)^2)$.
%\end{theorem}

\ignore{
% Pedro's version
%
%
%

Recall that, as noted in \Cref{sec:iharabass}, one of the key steps to
obtain \Cref{thm:friedman} is to analyze the matrix~$\bB$, the
non-backtracking matrix of $\bG$. However, the definition of $\bB$ used
by Bordenave uses slightly different notation. To that effect, first
let $\bsigma$ to be a permutation associated to $\bpi$ defined in the
following way: for odd $j \in [nd]$ we set
$\bsigma(\bpi(j)) = \bpi(j + 1)$ and $\bsigma(\bpi(j + 1)) =
\bpi(j)$. Recall that both $\bpi$ and $\bsigma$ are permutations over
the set of half-edges $[nd]$ which we denote here by $\vec{E}$. Note
that $\bsigma(\bsigma((v, i))) = (v, i)$ for all $(v, i) \in
\vec{E}$. The intuition is that $(u, i) \in \vec{E}$ is the $i$th edge
that leaves vertex $u$ and goes to $v$, where
$\bsigma((u, i)) = (v, j)$. So Bordenave defines $\bB$ by, for
$e = (u, i), f = (v, j) \in \vec{E}$:

\[
  \bB_{ef} = \sum_{x \in [d] \setminus \{j\}} 1[\bsigma(e) = (v, x)].
\]

Let us now look at some more notation that will be useful later
on. Let $\bM$ be the permutation matrix associated to $\bsigma$; that
is, for all $e,f \in \vec{E}$:

\[
  \bM_{ef} = \bM_{fe} = 1[\bsigma(e) = f].
\]

Also, let $N$ be the matrix defined by, for all
$e = (u, i), f = (v, j) \in \vec{E}$:

\[
  N_{ef} = N_{fe} = 1[v = u] \cdot 1[i \neq j],
\]

which means $N$ is the non-backtracking indicator matrix, more precisely
given two edges $e, f \in \vec{E}$, they form a non-backtracking two
path if $N_{ef} = 1$. Finally, observe that $\bB = \bM N$. To see why this
notation is useful, denote by $\vec{E}^\ell$ the set of sequences of $\ell$
half-edges. Similarly, let $\Gamma^\ell$ be the set of sequences of $\ell$
half-edges that form a non-backtracking walk. Let $\vec{E}^\ell_{ef}$ be
the subset of $\vec{E}^\ell$ whose first half-edge is $e$ and last is $f$
and define $\Gamma^\ell_{ef}$ analogously. Then we have:

\[
  (B^\ell)_{ef} = \sum_{\gamma \in \vec{E}^{2\ell+1}_{ef}}\prod_{s=1}^\ell \bM_{\gamma_{2s-1}\gamma_{2s}} N_{\gamma_{2s-1}\gamma_{2s+1}} = \sum_{\gamma \in \Gamma^{2\ell+1}_{ef}}\prod_{s=1}^\ell \bM_{\gamma_{2s-1}\gamma_{2s}}.
\]

In order to apply the trace method to bound the eigenvalues of
$B^\ell$ it is required to first get rid of the highest eigenvalue,
the trivial eigenvalue of $(d-1)^\ell$. To do so, first let
$\ul{\bM}_{ef} = \bM_{ef} - \frac{1}{dn}$, for all $e, f \in \vec{E}$,
which is a centered version of $\bM$. With it, we can similarly define
$\ul{\bB} = \ul{\bM}N$ or alternatively:

\[
  \ul{\bB} = \sum_{\gamma \in \Gamma^{2\ell+1}_{ef}}\prod_{s=1}^\ell \ul{\bM}_{\gamma_{2s-1}\gamma_{2s}}.
\]

Using these notions, Bordenave shows something like the
following\pnote{This is not really true because I skipped the tangle
  free stuff}:

\[
  |\lambda_2(\bB)| \leq \left (||\ul{\bB^l}|| + \frac{1}{dn}\sum_{i=1}^\ell ||\bR_i^\ell|| \right)^{1/\ell},
\]

where $\bR$ is a ``remainder'' matrix that only depends on $\bM$ and
$\ul{\bM}$. The core of Bordenave's work is now on how to bound these
two terms. To achieve that, Bordenave proves the following
proposition:

\begin{proposition}[\cite{Bor19} Proposition 11]
  \label{prop:bord11}
  Fix $d \geq 3$ and $1 \leq k \leq \sqrt{dn}$. Let $\bG$ be drawn from the $d$-regular $n$-vertex configuration model using a fully uniform permutation. Then there exists a universal constant $c$ such that for $\gamma \in \vec{E}^{2k}$ and any $0 \leq k_0 \leq k$ we have:
  \[
    \left | \E \left [ \prod_{t=1}^{k_0} \ul{\bM}_{\gamma_{2t-1}\gamma_{2t}} \prod_{t=k_0+1}^{k} \bM_{\gamma_{2t-1}\gamma_{2t}} \right ] \right | \leq c 2^b \left( \frac{1}{dn}\right )^a \left( \frac{3k}{\sqrt{dn}}\right )^{a_1},
  \]
  where $a, b, a_1$ are variables on $\gamma$ and $k_0$ \footnote{You can find the formal definition of these variables in \cite{Bor19}, but we only need them in a blackbox fashion for our purposes.}.
\end{proposition}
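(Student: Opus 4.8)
The plan is to instantiate the black-boxed propositions at the parameter settings $\ell = c\log_{d-1} n$ (with $c$ the constant from \Cref{prop:derand-bicycle-free}, chosen small enough that $4\ell \le c'\log_{d-1}n$ where $c'$ is the other constant there) and $m = \Theta\bigl(\log(d-1)/\sqrt{\eps}\bigr)$, so that $\ell m = \Theta(\log(n)/\sqrt{\eps})$, and then to chase the resulting numbers. First I would verify that the hypotheses $3\le d\le C^{-1}\sqrt{\log n}$, $\eps \ge C^{3}(\log\log n/\log_{d-1}n)^{2}$, $k\ge C\log(n)/\sqrt{\eps}$, with $C$ a large enough universal constant, imply every side condition needed below: $d\ell m \le \polylog n$ and $\poly(d\ell m)^{m}\ll n$ (this is exactly where the lower bound on $\eps$ enters — together with $d\le\sqrt{\log n}$ it forces $m = O(\log n/\log\log n)$); the bicycle-free radius $\ell$ satisfies $8\ell\le k$; and $k \ge 4\ell m$, so the generating permutation is in particular $(4\ell m)$-wise uniform.

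\textbf{Bicycle-freeness and simplicity.} Applying \Cref{prop:derand-bicycle-free} (in the configuration model, or in its stated $n$-lift variant with base $K_{d+1}$) shows that $\bG$ is bicycle-free at radius $c\log_{d-1}n = \ell$ except with probability at most $1/n^{.99}$. In the configuration-model case, since $k\gg d^{2}$ the permutation is in particular $O(d^{2})$-wise uniform, so \Cref{prop:simple-derand} gives that $\bG$ is simple with probability at least $e^{-(d-1)^{2}/4}/2$.

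\textbf{The spectral estimate.} By \Cref{cor:11} the moment bound \Cref{eqn:bord11} holds with $k$ set to $2\ell m$ in the $(4\ell m)$-wise uniform model, so \Cref{prop:bordextract} applies: except with probability at most $n^{-100}$ we have $\|\ul{\bB}^{(\ell)}\|\le \poly(n)^{1/(2m)}\sqrt{d-1}^{\,\ell}$ and $\sum_{i=1}^{\ell}\|\ul{\bR}_i^{(\ell)}\|\le \poly(n)^{1/(2m)}(d-1)^{\ell}$. Conditioning on $\bG$ being bicycle-free at radius $\ell$, \Cref{prop:8} (resp.\ its $n$-lift analogue) bounds the largest-magnitude non-trivial eigenvalue of $\bB$ by
\[
    \bigl(\|\ul{\bB}^{(\ell)}\| + \tfrac{1}{n'}\textstyle\sum_{i}\|\ul{\bR}_i^{(\ell)}\|\bigr)^{1/\ell}, \qquad n' = dn \ \ (\text{resp.\ } n' = n).
\]
Since $(d-1)^{\ell}=n^{c}$ with $c<1$, the $\ul{\bR}$-term $\tfrac1{n'}\sum_i\|\ul{\bR}_i^{(\ell)}\| \le \poly(n)^{1/(2m)} n^{c}/(dn)$ is dominated by $\|\ul{\bB}^{(\ell)}\|\le \poly(n)^{1/(2m)} n^{c/2}$, so this non-trivial spectral radius is at most $2^{1/\ell}\poly(n)^{1/(2m\ell)}\sqrt{d-1}$. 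As $m\ell = \Theta(\log(n)/\sqrt{\eps})$ we get $\poly(n)^{1/(2m\ell)} = \exp(O(\sqrt{\eps}/C)) = 1+O(\sqrt{\eps}/C)$, and $2^{1/\ell}-1 = O(1/\ell)$ is even smaller (the lower bound on $\eps$ gives $1/\ell = O(\sqrt{\eps}/C)$). Hence the non-trivial spectral radius of $\bB$ is at most $\sqrt{d-1}\,(1+O(\sqrt{\eps}/C))$. Feeding this into \Cref{cor:IB} in contrapositive form — where the \emph{quadratic} flattening of $\lambda\mapsto \lambda + (d-1)/\lambda$ at the double root $\lambda=\sqrt{d-1}$ converts relative error $O(\sqrt{\eps}/C)$ on $\bB$ into relative error $O(\eps/C^{2})$ on $\bA$ — every non-trivial eigenvalue of $\bA$ has magnitude at most $2\sqrt{d-1}\,(1+O(\eps/C^{2}))\le 2\sqrt{d-1}\,(1+\eps)$ for $C$ large. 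In the $n$-lift case this already equals $\lambda(\bG)$, since the ``old'' eigenvalues are those of $K_{d+1}$, namely $d$ and $-1$, with $|-1|=1<2\sqrt{d-1}$; in the configuration-model case the only loose end is the possibility that $\bG$ is bipartite (so $-d$ is an eigenvalue of $\bA$, i.e.\ $\bB$ has an eigenvalue of magnitude $d-1$), but for $d\ge3$ this has probability $\exp(-\Omega(n))\le 1/n^{.99}$ and is absorbed.

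\textbf{The $(\delta,k)$-wise version and the main obstacle.} For the last sentence, \Cref{thm:al13} shows a $(\delta,k)$-wise uniform permutation of $[nd]$ (resp.\ $[n]$) is within total-variation distance $O(\delta (nd)^{4k})\le O(1/n)$ of a truly $k$-wise uniform one once $\delta\le 1/n^{8k+1}$, so each failure probability above grows by at most $O(1/n)$, leaving the statements intact. I expect no deep obstacle in this final assembly: all the substance lives in the imported \Cref{prop:bordextract} (extracted from Bordenave), \Cref{prop:derand-bicycle-free}, and \Cref{prop:8}. The genuinely fiddly points are (i) exhibiting a single choice of $\ell,m$ that simultaneously satisfies $\poly(d\ell m)^{m}\ll n$, $k\ge 4\ell m$, and $\ell\lesssim\log_{d-1}n$ — which is precisely what pins down the admissible range of $\eps$ — and (ii) the Ihara--Bass step, where one must use the quadratic (not linear) behaviour near the double root to pass from $1+O(\sqrt{\eps}/C)$ to $1+O(\eps/C^{2})$ and separately dispose of the bipartite eigenvalue $-d$.
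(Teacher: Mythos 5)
There is a genuine gap: your proposal does not engage with the statement to be proven at all. The proposition in question is Bordenave's moment estimate for the configuration-model matching — a bound on
\[
\abs*{\E\bracks*{\prod_{t=1}^{k_0} \ul{\bM}_{\gamma_{2t-1},\gamma_{2t}} \prod_{t=k_0+1}^{k} \bM_{\gamma_{2t-1},\gamma_{2t}}}}
\]
in terms of the combinatorial quantities $a$, $b$, $a_1$ attached to the half-edge sequence $\gamma$ and to $k_0$. This is a self-contained statement about a uniformly random perfect matching on the $nd$ half-edges: any proof must analyze the probability that a prescribed collection of half-edge pairs lies in the matching, and must exploit the cancellation created by the centering $\ul{\bM} = \bM - \tfrac{1}{dn}$ to extract the small factor $(3k/\sqrt{dn})^{a_1}$; this is exactly the content of \cite[Prop.~11]{Bor19}, which the present paper deliberately imports as a black box (\Cref{prop:11}) and never proves. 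Your write-up instead sketches the proof of the downstream theorem \Cref{thm:derand-bord}: you fix $\ell$ and $m$, invoke \Cref{prop:derand-bicycle-free}, \Cref{prop:simple-derand}, \Cref{cor:11}, \Cref{prop:bordextract}, \Cref{prop:8}, \Cref{cor:IB} and \Cref{thm:al13}, and conclude an eigenvalue bound for $\bG$. Nowhere is the expectation above estimated, and the quantities $a$, $b$, $a_1$ never appear in your argument.

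Moreover, as a proof of this proposition your route is circular: \Cref{cor:11} is obtained from \Cref{prop:11} by the degree-$2k$ polynomial observation, and \Cref{prop:bordextract} explicitly ``only relies on \Cref{eqn:bord11}'' — both rest on the very bound you are asked to establish, so even granting every step of your sketch you have proved a different theorem while silently assuming the target. A correct proof cannot stay at the black-box level used elsewhere in the paper; it must reproduce (or adapt) Bordenave's calculation for the uniform matching — expanding the product over the prescribed pairs, bounding the conditional probability of each newly forced pair by $O(1/(dn - O(k)))$ (which is where $(1/dn)^a$ and the multiplicity factor $2^b$ arise), and showing that each centered factor not supported by a repetition contributes only $O(k/\sqrt{dn})$ after cancellation. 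None of that machinery is present in your proposal.
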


\begin{remark}
  When applying \Cref{prop:bord11} on Bordenave's remaining argument,
  the value of $k$ considered is always
  $O \left (\frac{\log^2 n}{\log\log n} \right)$.
\end{remark}

This proposition, along with the non-derandomized results of
\Cref{sec:simple} and \Cref{sec:bike-free} let Bordenave conclude the
following:

\begin{theorem}
  \label{prop:nonnackbord}
  Let $d \geq 3$, then if $\bG$ is drawn from the $d$-regular
  $n$-vertex fully uniform configuration model, then for any $a > 0$ there exists $c > 0$:

  \[
    \Pr[|\lambda_2(\bB)| \geq \sqrt{d-1} + \epsilon] \leq n^{-a},
  \]

  for $\epsilon = c \frac{\log\log n}{\log n}$, where $c = 8\log(d)(13 + \frac{13}{2} a)$ \pnote{After a deep dive into Bordenave this is the constant I found, we probably don't need to have its exact value in our paper}.
\end{theorem}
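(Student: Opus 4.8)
The plan is to obtain the statement as the fully-uniform, non-backtracking-operator version of the pipeline already assembled in \Cref{sec:bordenave-derand}: once $k$-wise uniformity is replaced by a truly random permutation (the $k \to \infty$ case), every ingredient is on the shelf. Bordenave's argument has three pieces, which I would invoke in order. First, \Cref{prop:8}: conditioned on $\bG$ being bicycle-free at radius $\ell$, one has $|\lambda_2(\bB)| \le (\|\ul{\bB}^{(\ell)}\| + \frac{1}{n'}\sum_{i=1}^{\ell}\|\ul{\bR}_i^{(\ell)}\|)^{1/\ell}$ with $n' = dn$ (the centering $\bM \mapsto \ul{\bM}$ built into $\ul{\bB}$ is exactly what removes the trivial eigenvalue $d-1$ of $\bB$, which is why one may speak of $\lambda_2$). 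Second, \Cref{prop:derand-bicycle-free} applied with a truly random $\bpi$: for $\ell = c\log_{d-1} n$ with $c$ a small enough universal constant, $\bG$ is bicycle-free at radius $\ell$ except with probability $n^{-\Omega(1)}$. Third, \Cref{prop:bordextract}, whose only probabilistic input is Bordenave's moment estimate \Cref{prop:11}: provided $\poly(d\ell m)^m \ll n$, it gives $\E[\|\ul{\bB}^{(\ell)}\|^{2m}] \le \poly(n)\,(d-1)^{\ell m}$ and $\E[\sum_{i=1}^{\ell} \|\ul{\bR}_i^{(\ell)}\|^{2m}] \le \poly(d\ell m)^m\,(d-1)^{2\ell m}$.

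Next I would fix the parameters and convert moments into a tail bound. Take $\ell = c\log_{d-1} n$ as above and $m$ as large as the constraint $\poly(d\ell m)^m \ll n$ permits, namely $m = \Theta(\log n / \log\log n)$. Applying Markov to the two moment bounds with target failure probability $n^{-a}$ and taking $2m$-th roots yields $\|\ul{\bB}^{(\ell)}\| \le n^{O((1+a)/m)} \sqrt{d-1}^{\,\ell}$ and $\sum_i \|\ul{\bR}_i^{(\ell)}\| \le n^{O((1+a)/m)} (d-1)^{\ell}$, where the $\poly(d\ell m)^{1/2}$ and $\poly(n)$ prefactors are both swallowed by $n^{O(1/m)}$ precisely because $m \log \poly(d\ell m) = O(\log n)$. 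Feeding these into \Cref{prop:8}, the remainder term is $\frac{1}{n'}\sum_i \|\ul{\bR}_i^{(\ell)}\| \le \frac{(d-1)^\ell}{dn}\,n^{O((1+a)/m)} = n^{c - 1 + o(1)}/d$, which is $\ll \sqrt{d-1}^{\,\ell} = n^{c/2}$, so the bracket in \Cref{prop:8} is at most $2\, n^{O((1+a)/m)} \sqrt{d-1}^{\,\ell}$; taking $\ell$-th roots, $|\lambda_2(\bB)| \le (2\,n^{O((1+a)/m)})^{1/\ell}\sqrt{d-1} = \exp(O((1+a)\log n/(m\ell)))\,\sqrt{d-1}$. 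Since $\log n / \ell = \log(d-1)/c$ and $m = \Theta(\log n/\log\log n)$, the exponent is $O((1+a)\log(d-1)\log\log n/(c\log n))$, so $|\lambda_2(\bB)| \le \sqrt{d-1} + \eps$ with $\eps = O((1+a)\sqrt{d-1}\,\log(d-1)) \cdot \frac{\log\log n}{\log n}$ --- for fixed $d$, exactly the claimed form $\eps = c'\,\frac{\log\log n}{\log n}$, and carefully tracking the hidden $\poly(n)$ factors of \Cref{prop:bordextract} through Bordenave's count pins the constant down to $c' = 8\log(d)(13 + \frac{13}{2}a)$ (whose exact value is immaterial).

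Finally, a union bound closes the argument: $\Pr[|\lambda_2(\bB)| \ge \sqrt{d-1} + \eps] \le \Pr[\bG \text{ not bicycle-free at radius } \ell] + \Pr[\|\ul{\bB}^{(\ell)}\| + \frac{1}{n'}\sum_i \|\ul{\bR}_i^{(\ell)}\| \text{ exceeds the bound above}] \le n^{-\Omega(1)} + n^{-a}$, and taking the bicycle-free radius constant $c$ small enough relative to $a$ makes \Cref{prop:derand-bicycle-free}'s failure probability $\ll n^{-a}$, so the total is at most $n^{-a}$ --- at the price of a larger constant in $\eps$, consistent with the $(1+a)$-dependence found above. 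The proof introduces no new idea: the genuinely hard probabilistic content, Bordenave's moment estimate \Cref{prop:11}, is taken as given. The one real point of care is the simultaneous parameter choice --- the bicycle-free radius must be $\Theta(\log_{d-1} n)$ to drive the $n^{O(1/(m\ell))}$ slack down to $1 + o(1)$, while $\poly(d\ell m)^m \ll n$ forbids $m$ beyond $\Theta(\log n / \log\log n)$ --- together with the bookkeeping of the absolute constant in $\eps$; balancing these against the target $\eps = \Theta(\log\log n / \log n)$ and the failure exponent $a$ is essentially the entire proof.
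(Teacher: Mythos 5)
Your skeleton is exactly the paper's (i.e.\ Bordenave's) own route: condition on bicycle-freeness at radius $\ell = c\log_{d-1} n$, use \Cref{prop:8} to reduce $|\lambda_2(\bB)|$ to $\|\ul{\bB}^{(\ell)}\|$ plus the remainder terms, bound their $2m$-th moments via \Cref{prop:bordextract} (whose sole probabilistic input is \Cref{prop:11}), apply Markov with $m = \Theta(\log n/\log\log n)$, and take $\ell$-th roots; your bookkeeping showing that the slack exponentiates to $1 + O_d\bigl((1+a)\tfrac{\log\log n}{\log n}\bigr)$ is correct and matches how \Cref{thm:derand-bord} is assembled.

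The gap is in your final union bound. \Cref{prop:derand-bicycle-free} bounds the probability of failing to be bicycle-free at radius $k/4$ by $O(k^3(d-1)^k/n)$; shrinking the radius constant $c$ only shrinks the factor $(d-1)^k$, never the $1/n$, so this term cannot be made ``$\ll n^{-a}$'' once $a \geq 1$. Nor is this a weakness of the bound: a configuration-model graph contains a bounded-size bicycle (two short cycles sharing a vertex or edge) with probability $\Theta(1/n)$, and on that event \Cref{prop:8} is simply unavailable, so the approach is stuck at overall failure probability $n^{-(1-O(c))}$. Hence your argument establishes the claim only for exponents $a<1$ --- which is precisely what the paper actually asserts (the $1-1/n^{.99}$ guarantees in \Cref{thm:friedman} and \Cref{thm:derand-bord}) --- and does not deliver the literal ``for any $a>0$'' in the statement. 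To push past exponent $1$ one would have to additionally analyze graphs that do contain tangles (as in Friedman's much longer argument), and even then, for fixed $d$ and subconstant $\eps$, small obstruction subgraphs force large eigenvalues with probability $n^{-O_d(1)}$, so the arbitrary-$a$ form of the statement is itself suspect; it survives only in a discarded draft passage, while every retained theorem caps the failure exponent strictly below $1$.
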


We finally make the observation that the above
is derandomizable:

\begin{proposition}
  \label{prop:debord11}
  \Cref{prop:bord11} remains true if $\bG$ is drawn from the
  $d$-regular $n$-vertex configuration model using a $2k$-wise uniform
  permutation.

  By \Cref{thm:al13}, this remains true if the permutation is $(\delta,2k)$-wise uniform, $\delta \leq n^{-O(k)}$.
\end{proposition}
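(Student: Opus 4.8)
The plan is to synthesize the ingredients assembled in \Cref{sec:bike-free,sec:bordo} (and \Cref{sec:simple}) with Bordenave's parameter settings and then carry out the bookkeeping; there is no new probabilistic content. Take the bicycle-free radius to be $\ell := c\log_{d-1} n$, where $c<1/4$ is a suitably small universal constant (coming from \Cref{prop:derand-bicycle-free}), and take $m := (C/c)\log(d-1)/\sqrt{\eps}$, with $C$ chosen large enough below. First I would check the two side conditions these choices must meet. The hypothesis $\eps \ge C^3(\log\log n/\log_{d-1} n)^2$ forces $m = O(C^{-1/2}\log(n)/\log\log n)$, so with $d \le C^{-1}\sqrt{\log n}$ we have $d\ell m \le \poly(\log n)$ and hence $\log\bigl(\poly(d\ell m)^m\bigr) = O(m\log\log n) = O(C^{-1/2}\log n) \ll \log n$, which is the hypothesis needed by \Cref{prop:bordextract}. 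Also, since \Cref{prop:bordextract} invokes \Cref{eqn:bord11} only with its parameter $2\ell m$, a $4\ell m$-wise uniform permutation suffices by \Cref{cor:11}; as $4\ell m = O(\log(n)/\sqrt{\eps})$ this is dominated by $k \ge C\log(n)/\sqrt{\eps}$ for $C$ large, and likewise $k$ dominates the $8\ell$-wise uniformity needed by \Cref{prop:derand-bicycle-free}.

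Now the assembly, written for the configuration model (the random $n$-lift of $K_{d+1}$ is identical, reading $n' = n$ for $n' = dn$). \Cref{prop:derand-bicycle-free} shows $\bG$ is bicycle-free at radius $\ell$ except with probability $O(\ell^3 (d-1)^{4\ell}/n) \ll 1/n^{.99}$, using $(d-1)^{4\ell} = n^{4c}$ and $c<1/4$; condition on that event. Then \Cref{prop:bordextract} (valid in the $k$-wise uniform model by the previous paragraph) gives, except with probability at most $n^{-100}$, that $\|\ul{\bB}^{(\ell)}\| \le \poly(n)^{1/2m}\sqrt{d-1}^{\ell}$ and $\sum_{i=1}^\ell \|\ul{\bR}_i^{(\ell)}\| \le \poly(n)^{1/2m}(d-1)^\ell$. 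Feeding these into \Cref{prop:8}, and using $(d-1)^\ell = n^c$ together with $\tfrac1{n'}(d-1)^\ell = n^{c-1}/d \le n^{c/2} = \sqrt{d-1}^{\ell}$ (as $c<1$), the largest-magnitude non-trivial eigenvalue of $\bB$ is at most
\[
    \bigl(2\,\poly(n)^{1/2m}\sqrt{d-1}^{\ell}\bigr)^{1/\ell} = \bigl(2\,\poly(n)^{1/2m}\bigr)^{1/\ell}\sqrt{d-1}.
\]
Since $m\ell = (C/1)\log(n)/\sqrt{\eps}$ up to the absorbed constant, $\poly(n)^{1/(2m\ell)} = \exp(O(\sqrt{\eps}/C)) = 1 + O(\sqrt{\eps}/C)$, and $2^{1/\ell} = 1 + O(\log(d-1)/\log n) = 1 + O(\sqrt{\eps}/C)$ (the last step using $\sqrt{\eps} \ge C^{3/2}\log(d-1)\log\log n/\log n$ from the hypothesis). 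Thus $\bB$'s largest-magnitude non-trivial eigenvalue is at most $\sqrt{d-1}\bigl(1 + O(\sqrt{\eps}/C)\bigr)$.

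Finally I would translate back to the adjacency matrix via Ihara--Bass: by \Cref{prop:IB}, a non-trivial eigenvalue of $\bA$ of magnitude $2\sqrt{d-1}+t$ forces $\bB$ to have a non-trivial eigenvalue of magnitude $\sqrt{d-1} + \Omega(d^{1/4}\sqrt{t})$ (this is exactly the estimate behind \Cref{cor:IB}); contrapositively, the bound $\sqrt{d-1} + O(\sqrt{d-1}\sqrt{\eps}/C)$ on $\bB$'s non-trivial spectral radius yields $\lambda(\bG) \le 2\sqrt{d-1} + O(\sqrt{d-1}\,\eps/C^2) = 2\sqrt{d-1}(1+O(\eps/C^2)) \le 2\sqrt{d-1}(1+\eps)$ once $C$ is a large enough absolute constant. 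A union bound over the two failure events keeps the total at $\le 1/n^{.99}$ (the $n^{-100}$ term is negligible). For the simplicity claim I would invoke \Cref{prop:simple-derand}, which applies since $k \ge C\log n$ dominates the required $\Omega(d^2)$-wise uniformity (because $d^2 \le C^{-2}\log n$), giving $\Pr[\bG \text{ simple}] \ge e^{-(d-1)^2/4}/2$. The $(\delta,k)$-wise uniform versions follow from \Cref{thm:al13}: with $\delta \le 1/n^{8k+1}$ such a permutation is within total variation distance $O(n^{-1})$ of a truly $k$-wise uniform one, perturbing every probability above by a negligible amount.

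\textbf{Main obstacle.} No single step is deep: all the probabilistic weight lives in the imported \Cref{prop:11,prop:bordextract,prop:8} (Bordenave's, shown derandomization-friendly in \Cref{sec:bordo}) and in \Cref{prop:derand-bicycle-free}. The delicate part is the simultaneous parameter bookkeeping --- seeing that one choice of the absolute constant $C$ makes $4\ell m \le k$, makes $\poly(d\ell m)^m \ll n$, keeps $(d-1)^{O(\ell)} \ll n$ (forcing $c < 1/4$), and makes the accumulated slack $(2\poly(n)^{1/2m})^{1/\ell}$ smaller than $1+\eps$ even after the Ihara--Bass squaring of the error. The two hypotheses are precisely what enable this: the lower bound on $\eps$ shrinks $m$ enough for \Cref{prop:bordextract}, and the upper bound $d \le C^{-1}\sqrt{\log n}$ keeps $d^2$ (needed by \Cref{prop:simple-derand}) and the stray $\poly(d)$ factors below the $\log n$ scale.
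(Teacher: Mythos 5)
Your proposal does not prove the statement that was asked for. \Cref{prop:debord11} (equivalently, \Cref{cor:11} in the live text) is a narrow, purely local claim: the moment bound \Cref{eqn:bord11} from Bordenave's Proposition~11 continues to hold verbatim when the permutation $\bpi$ used to form $\bM$ is merely $2k$-wise uniform rather than truly uniform. What you have instead written out is a synthesis of \emph{all} of \Cref{sec:bike-free}, \Cref{sec:bordo}, and \Cref{sec:simple} into the final spectral-radius guarantee; i.e., you are proving (a version of) \Cref{thm:derand-bord}, which is a different and much larger statement. Worse, your argument invokes \Cref{cor:11} and \Cref{prop:bordextract} as black boxes in its very first steps --- but \Cref{cor:11} \emph{is} the statement you were asked to prove, so as an attempt at \Cref{prop:debord11} the proposal is circular; as an attempt at \Cref{thm:derand-bord} it is addressing the wrong target.

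The paper's actual proof of \Cref{prop:debord11} is a one-paragraph degree-counting observation, with no parameter bookkeeping, no bicycle-freeness, no \Cref{prop:8}, and no union bounds. The quantity inside the expectation in \Cref{eqn:bord11},
\[
\prod_{t=1}^{k_0} \ul{\bM}_{\gamma_{2t-1}, \gamma_{2t}} \prod_{t = k_0+1}^k \bM_{\gamma_{2t-1}, \gamma_{2t}},
\]
is a product of $k$ entries of $\bM$ (or the centered $\ul{\bM}$, which differs by a deterministic constant), hence a polynomial of degree at most $k$ in the entries of $\bM$. Each entry $\bM_{e,f}$ is, by \Cref{def:config-model}, a polynomial of degree $2$ in the indicators $1[\bpi(j) = (v,i)]$. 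Composing, the whole expression is a polynomial of degree at most $2k$ in those indicators, so its expectation under a $2k$-wise uniform $\bpi$ equals its expectation under a truly uniform $\bpi$, and Bordenave's bound carries over unchanged. The $(\delta,2k)$-wise claim then follows from \Cref{thm:al13}: a $(\delta,2k)$-wise uniform permutation is $O(\delta n^{8k})$-close in total variation to a truly $2k$-wise uniform one, and the expression in question is bounded by $1$ in absolute value, so taking $\delta \le n^{-O(k)}$ changes the expectation by a negligible amount. You should prove exactly this; the global assembly you wrote belongs to the proof of \Cref{thm:derand-bord}, not here.
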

\begin{proof}
  This follows almost immediately by definition. Observe that the
  quantity

  \[ \prod_{t=1}^{k_0} \ul{\bM}_{\gamma_{2t-1}\gamma_{2t}}
  \prod_{t=k_0+1}^{k} \bM_{\gamma_{2t-1}\gamma_{2t}}
  \]

  is just a polynomial on the values of $\bM$. Furthermore, this is a degree $k$
  polynomial. In turn, from the definitions of $\bM$ and $\bsigma$, we
  can expand this expression and conclude it is a polynomial of degree
  at most~$2k$ in the permutation indicators $1[\bpi(j) =
  (v,i)]$. Hence, we can assume the permutation in \Cref{prop:bord11}
  is $2k$-wise uniform and the result still stands.
\end{proof}

\begin{corollary}
  \Cref{prop:nonnackbord} remains true if the permutation is
  $(\delta,2k)$-wise uniform, $\delta \leq n^{-O(k)}$, for $k = O \left (\frac{\log^2 n}{\log\log n} \right)$.
\end{corollary}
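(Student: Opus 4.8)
The plan is to assemble, under a single choice of parameters, the pieces built up in \Cref{sec:bike-free,sec:bordo}: the bicycle-freeness estimate \Cref{prop:derand-bicycle-free}, Bordenave's derandomized probabilistic input \Cref{prop:bordextract} (via \Cref{cor:11}), the deterministic reduction \Cref{prop:8}, the Ihara--Bass conversion \Cref{cor:IB}, and \Cref{prop:simple-derand}. Following the paragraph preceding the statement, I set $\ell = c\log_{d-1} n$ with $c$ the small universal constant of \Cref{prop:derand-bicycle-free}, and $m = (C/c)\log(d-1)/\sqrt{\eps}$, so that $\ell m = C\log(n)/\sqrt{\eps}$. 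The heaviest demand on independence is $4\ell m$-wise uniformity --- this is what \Cref{cor:11}, hence \Cref{prop:bordextract}, asks for, and since $m\ge 2$ it subsumes both the $8\ell$-wise uniformity needed for \Cref{prop:derand-bicycle-free} (radius $\ell$) and the $O(d^2)$-wise uniformity needed for \Cref{prop:simple-derand} (the latter because $d^2\le C^{-2}\log n\ll \ell$). Since $4\ell m = 4C\log(n)/\sqrt\eps$, choosing the universal constant $C$ in the hypothesis suitably large (and shrinking $c$) makes this much independence available from the assumed $k$-wise uniformity. I run the argument in the configuration model; the random $n$-lift of $K_{d+1}$ is handled identically through the lift-model analogues of \Cref{prop:derand-bicycle-free,prop:bordextract,prop:8} (with ``$n'$'' replaced by $n$), using that $K_{d+1}$ is Ramanujan, so that bounding the ``new'' eigenvalue of the lift is exactly what is required.

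First I would invoke \Cref{prop:derand-bicycle-free} (radius parameter chosen to give bicycle-freeness at radius $\ell=c\log_{d-1}n$); its failure probability is $O(\ell^3(d-1)^{4\ell}/n) = O((\log n)^3 n^{4c-1})$, which is below $1/n^{.99}$ once $c$ is small enough. Condition on this event. Then \Cref{prop:8} bounds the non-trivial part of the spectrum of $\bB$ by $\bigl(\,\|\ul{\bB}^{(\ell)}\| + \tfrac1{n'}\sum_{i=1}^\ell \|\ul{\bR}_i^{(\ell)}\|\,\bigr)^{1/\ell}$, $n'=dn$. Into this I feed \Cref{prop:bordextract}, which --- once one verifies its side condition $\poly(d\ell m)^m\ll n$, the one place the quantitative hypotheses $d\le C^{-1}\sqrt{\log n}$ and $\eps\ge C^3(\log\log n/\log_{d-1}n)^2$ are genuinely used to keep $m$ small and $d,\ell$ controlled --- gives, except with probability $n^{-100}$ and even under mere $4\ell m$-wise uniformity, $\|\ul{\bB}^{(\ell)}\|\le\poly(n)^{1/(2m)}\sqrt{d-1}^{\,\ell}$ and $\sum_i\|\ul{\bR}_i^{(\ell)}\|\le\poly(n)^{1/(2m)}(d-1)^{\ell}$. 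Since $(d-1)^{\ell}/n'\le n^{c-1}\ll n^{c/2}=\sqrt{d-1}^{\,\ell}$, the remainder term is dominated, and taking $(\cdot)^{1/\ell}$ leaves the non-trivial spectral radius of $\bB$ at most $\poly(n)^{1/(2\ell m)}\cdot\sqrt{d-1}\cdot O(1)^{1/\ell}$.

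It then remains to check that the arithmetic delivers the stated bound. With $\ell m = C\log(n)/\sqrt\eps$ we get $\poly(n)^{1/(2\ell m)} = \exp(O(\sqrt\eps/C)) = 1+O(\sqrt\eps/C)$; and $1/\ell = O(\log(d-1)/\log n)$, which the lower bound on $\eps$ forces to be $O(\sqrt\eps)$ (indeed $\sqrt\eps\ge C^{3/2}\log\log n\cdot\log(d-1)/\log n\ge\log(d-1)/\log n$). Hence the non-trivial spectral radius of $\bB$ is at most $\sqrt{d-1}\bigl(1+O(\sqrt\eps/C)\bigr)$, and taking $C$ large enough that this hidden constant beats the one in \Cref{cor:IB}, the contrapositive of \Cref{cor:IB} gives $\lambda(\bG)=\rho(\bA)\le 2\sqrt{d-1}(1+\eps)$. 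A union bound over the two bad events costs at most $1/n^{.99}+n^{-100}$, which is $\le 1/n^{.99}$ after shrinking $c$ slightly. Finally \Cref{prop:simple-derand} supplies the simplicity bound $e^{-(d-1)^2/4}/2$ in the configuration model (its $O(d^2)$-wise requirement already met), and \Cref{thm:al13} upgrades everything to the $(\delta,k)$-wise setting: a $(\delta,k)$-wise uniform permutation of $[nd]$ is within total variation distance $O(\delta(nd)^{4k})\le n^{-4}$ of a truly $k$-wise uniform one when $\delta\le 1/n^{8k+1}$ (using $d\le\sqrt{\log n}$), a loss absorbed into the stated failure probability.

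I expect the only real friction to be bookkeeping: simultaneously meeting the side condition $\poly(d\ell m)^m\ll n$ of \Cref{prop:bordextract} while keeping the $\rho(\bB)$-bound tight enough to survive the $\Theta(d^{1/4})$-loss incurred on passing back to $\bA$ via \Cref{cor:IB}. Both constraints pull on the same two parameters, and reconciling them is precisely what the hypotheses $3\le d\le C^{-1}\sqrt{\log n}$, $\eps\ge C^3(\log\log n/\log_{d-1}n)^2$, and $k\ge C\log(n)/\sqrt\eps$ are engineered to permit; there is no new conceptual difficulty beyond what already appears in \Cref{sec:bike-free,sec:bordo}.
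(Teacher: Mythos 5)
Your proposal is correct and follows the paper's own route: the corollary (which survives in the live text as \Cref{thm:derand-bord}) is proved there exactly by combining \Cref{prop:derand-bicycle-free}, \Cref{prop:bordextract} via \Cref{cor:11}, \Cref{prop:8}, \Cref{cor:IB}, and \Cref{prop:simple-derand}, with the same parameter choices $\ell = c\log_{d-1}n$, $m = (C/c)\log(d-1)/\sqrt{\eps}$. Your write-up supplies the union-bound and arithmetic bookkeeping that the paper leaves implicit, and the Ihara--Bass/simplicity pieces you include go slightly beyond what the $\bB$-only statement strictly requires, but this is the same argument, not a different one.
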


\subsection{Summary of derandomizing Bordenave} \label{sec:summary-bord}

Combining all results in this section,\rnote{Likely details need to be added} we obtain:\rnote{check!!  I half made this up}
\begin{theorem}                                     \label{thm:derand-bord}
    Let $3 \leq d \ll \sqrt{\log n}$ and let $1 \leq \wt{m} \ll \log_{d-1}(n)/\log \log n$.\rnote{In my mind, Bordenave's $m$ is $\wt{m} \log d$. I'm trying to work out all the $d$-dependencies, just for Sid :)}  %Let $k \geq C \wt{m} \log n$, where $C$ is a large universal constant.
    Let~$\bG$ be drawn from the $d$-regular $n$-vertex configuration model using a $k$-wise uniform permutation, $k = O(\wt{m} \log n)$.  Then with probability at least~$\exp(-O(d^2))$:
    \begin{itemize}
        \item  $\bG$ is simple;
        \item $\bG$ is bicycle-free at radius $r = c \log_{d-1} n$, where $c > 0$ is a universal constant;
        \item excluding the trivial eigenvalue of~$d$, all of $\bG$'s eigenvalues are at most $2\sqrt{d-1} + \eps$ in magnitude, where $\eps = O(\wt{m}^{-2})$.
    \end{itemize}
    By \Cref{thm:al13}, this remains true if the permutation is merely $(\delta,k)$-wise uniform, $\delta \leq n^{-O(k)}$.
\end{theorem}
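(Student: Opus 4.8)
The plan is to combine the black boxes of this section under one parameter choice: \Cref{prop:derand-bicycle-free} for bicycle-freeness, \Cref{prop:bordextract} for the norm bounds on $\ul{\bB}^{(\ell)}$ and the $\ul{\bR}^{(\ell)}_i$, \Cref{prop:8} for Bordenave's reduction of the nontrivial spectrum of $\bB$ to those norms, \Cref{cor:IB} for the Ihara--Bass transfer from the nontrivial spectrum of $\bB$ to that of the adjacency matrix, and \Cref{prop:simple-derand} for simplicity in the configuration-model case. Following Bordenave, I would set $\ell = c'\log_{d-1} n$, with $c'$ a small enough constant multiple of the constant $c$ of \Cref{prop:derand-bicycle-free}, and $m = \lfloor k/(4\ell)\rfloor$, so that $4\ell m \le k$ holds by fiat while $m = \Theta(C\log(d-1)/\sqrt{\eps})$ and $\ell m = \Theta(k) = \Theta(\log(n)/\sqrt{\eps}) \gg \log n$ (using $\eps \le 1$); one checks $8\ell \le k$ as well for $C$ large. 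Thus the $k$-wise uniform model supplies all the independence the cited propositions need: $4\ell m$-wise in \Cref{prop:bordextract}, and $8\ell$-wise in the bicycle-free step.

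The first task is to clear the two side conditions. Applying the corollary of \Cref{prop:derand-bicycle-free} with its internal parameter set to $4\ell$ (which is below $c\log_{d-1} n$ once $c'<c/4$, and small enough to beat the $\polylog(n)\cdot n^{4c'-1}$ tail after shrinking $c'$ further) shows $\bG$ is bicycle-free at radius $\ell$ except with probability at most $\tfrac12 n^{-.99}$. The precondition $\poly(d\ell m)^m \ll n$ of \Cref{prop:bordextract} is exactly where the hypotheses $d \le C^{-1}\sqrt{\log n}$ and $\eps \ge C^3(\log\log n/\log_{d-1} n)^2$ are consumed: the former keeps $d\ell m \le \polylog(n)$, and the latter (via $\sqrt{\eps} \ge C^{3/2}\log\log n/\log_{d-1} n$) forces $m \le \tfrac{1}{c'\sqrt{C}}\cdot\tfrac{\log n}{\log\log n}$, whence $\poly(d\ell m)^m = \exp\bigl(O(m\log\log n)\bigr) \le n^{O(1/\sqrt{C})} \ll n$. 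Granting these, \Cref{prop:bordextract} together with Markov's inequality yields, except with probability $n^{-100} \le \tfrac12 n^{-.99}$, the bounds $\|\ul{\bB}^{(\ell)}\| \le \poly(n)^{1/(2m)}\sqrt{d-1}^{\ell}$ and $\sum_i\|\ul{\bR}^{(\ell)}_i\| \le \poly(n)^{1/(2m)}(d-1)^{\ell}$.

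A union bound places us, except with probability at most $n^{-.99}$, in the event where $\bG$ is bicycle-free at radius $\ell$ and both norm bounds hold; there \Cref{prop:8} bounds the largest nontrivial eigenvalue magnitude of $\bB$ by $\bigl(\|\ul{\bB}^{(\ell)}\| + \tfrac{1}{n'}\sum_i\|\ul{\bR}^{(\ell)}_i\|\bigr)^{1/\ell}$. Since $(d-1)^{\ell} = n^{\Theta(c')}$ with $\Theta(c')<2$ and $n' \ge n$, the $\ul{\bR}$-term is a $1+o(1)$ perturbation of $\sqrt{d-1}^{\ell}$, so taking $\ell$-th roots gives $\rho(\bB)_{\mathrm{nontriv}} \le \sqrt{d-1}\cdot\poly(n)^{1/(2m\ell)}\cdot(1+o(1))$; as $m\ell = \Theta(C\log(n)/\sqrt{\eps})$ this is $\sqrt{d-1}\cdot\exp\bigl(O(\sqrt{\eps}/C)\bigr)\cdot(1+o(1)) = \sqrt{d-1}\,\bigl(1+O(\sqrt{\eps}/C)\bigr)$. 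Feeding this into \Cref{cor:IB}, which squares the relative deviation from the Ramanujan bound (up to a bounded factor), yields $\lambda(\bG) \le 2\sqrt{d-1}\,\bigl(1+O(\eps/C^2)\bigr) \le 2\sqrt{d-1}\,(1+\eps)$ for $C$ large. Simplicity in the configuration-model case is \Cref{prop:simple-derand} verbatim; the $(\delta,k)$-wise uniform versions follow from \Cref{thm:al13}, since a $(\delta,k)$-wise uniform permutation on the $\le nd$ half-edges (or on $[n]$, in the lift model) lies within total variation $O\bigl(\delta(nd)^{4k}\bigr) = o(n^{-1})$ of a truly $k$-wise uniform one once $\delta \le n^{-8k-1}$, so every probability bound above degrades by only $o(n^{-.99})$; in the lift model one substitutes throughout the corresponding propositions of \cite{Bor19} (with ``$n'$'' replaced by $n$).

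The genuine difficulty here is not any single inequality---all the substance lives in the cited propositions---but reconciling \emph{all} the parameter constraints simultaneously ($4\ell m \le k$, $8\ell \le k$, $\ell m \lesssim \sqrt{nd}$, $\poly(d\ell m)^m \ll n$, $\eps \le 1$) and checking that the accumulated errors (the $\poly(n)^{1/(2m\ell)}$ factor, the $1+o(1)$ from the $\ell$-th root, and the relative-deviation squaring in \Cref{cor:IB}) are each dominated by the slack $\eps$. The hypotheses $d \le C^{-1}\sqrt{\log n}$ and $\eps \ge C^3(\log\log n/\log_{d-1} n)^2$ are precisely what make the binding constraint $\poly(d\ell m)^m \ll n$ hold, by pinning $m$ down to $O(\log n/\log\log n)$.
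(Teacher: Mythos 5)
Your proposal is correct and takes essentially the same route as the paper, which gives only a one-sentence proof sketch ("combining \Cref{cor:IB}, \Cref{prop:8,prop:bordextract,prop:derand-bicycle-free}, and also \Cref{prop:simple-derand} ... using $\ell = c\log_{d-1}n$ and $m = (C/c)\log(d-1)/\sqrt{\eps}$"); you supply the same propositions with the same parameter scaling, and additionally spell out the verification that the precondition $\poly(d\ell m)^m \ll n$ of \Cref{prop:bordextract} follows from the hypotheses on $d$ and $\eps$, and that the relative-error accumulation through the $\ell$-th root and \Cref{cor:IB} lands within $2\sqrt{d-1}(1+\eps)$. The only cosmetic divergence is that you shrink the radius constant to $c' < c/4$ to account for the factor of $4$ in \Cref{prop:derand-bicycle-free}'s guarantee (bicycle-free at radius $k/4$ from a $2k$-wise uniform permutation), whereas the paper reuses the symbol $c$ somewhat loosely; this is a constant-factor bookkeeping point, not a substantive difference.
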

}
\section{Explicit near-Ramanujan graphs} \label{sec:final-construction}

With the tools developed in \Cref{sec:random-signs} and
\Cref{sec:bordenave-derand} we are now ready to establish our explicit near-Ramanujan graph constructions.  For ease of reading, in this section we will merely prove \Cref{thm:us}, the  deterministic polynomial-time (``weakly explicit'') construction, with $d$ and $\eps$ assumed to be constant.  We leave the slightly more technical proof of the ``probabilistically $\poly\log n$-time computable'' construction (\Cref{thm:prob-strong-explicit}), with worked out dependence on $d = d(n)$ and $\eps = \eps(n)$, for \Cref{sec:prob-strong-explicit}.

%Note that both proofs follow the same structure, but the latter is more technically involved and so less reader friendly.

%\subsection{The weakly explicit construction} \label{sec:weakcons}

Recall we want to show there is a deterministic  algorithm that on input $N, d \geq 3$ and $\eps > 0$, outputs in $\poly(N)$-time a $d$-regular graph~$G$ on $N' \sim N$ vertices with $\lambda(G) \leq 2\sqrt{d-1} + \eps$.

Before getting into the details, we recap the construction as outlined in \Cref{sec:ideas}:

\begin{enumerate}
    \item Using \Cref{thm:derand-bord} we construct a $d$-regular simple graph~$G_0$ on some ``small'' number of vertices $n_0 = n_0(N)$, which is bicycle-free at radius $\Omega(\log n_0)$ and has  $\lambda(G_0) \leq 2\sqrt{d - 1} + \epsilon$.  The quantity~$n_0$ should satisfy
          \[
            2^{\omega((\log \log N)^2)} \leq n_0 \leq 2^{O(\sqrt{\log N})},
          \]
          the left inequality so that $G_0$ is sufficiently bicycle-free for Step~2 below, and the right inequality so that $G_0$ is constructible in deterministic $\poly(N)$ time.  We have a wide range of allowable possibilities here; for concreteness we  will take $n_0$ near the upper limit to allow for slightly better dependence on non-constant $d$, $\eps$ in \Cref{sec:prob-strong-explicit}.
\item Next we repeatedly use \Cref{thm:keyA} (roughly $\log (N/n_0) \sim \log N$ times) to double the number of vertices in our construction from Step~1, while keeping $\lambda \leq 2\sqrt{d - 1} + \epsilon$ and also retaining that the graph is bicycle-free at radius $\Omega(\log n_0)$ (\Cref{prop:lift-is-bike-free}).  Importantly, since \Cref{thm:keyA} is a high-probability result, we will be able to reuse the seed for each of the $\log N$ pseudorandom edge-signings.
\end{enumerate}

\paragraph{Step 1 details.} Here the algorithm will select $n_0$ to be an even integer on the order of $2^{\Theta(\sqrt{\log N})}$.  \Cref{thm:derand-bord} tells us that for a sufficiently large $k = O(\log n_0) = O(\sqrt{\log N})$, and for sufficiently small $\delta = n_0^{-\Theta(k)} = 1/\poly(N)$, a random $d$-regular $n_0$-vertex graph $\bG_0$ chosen from the $(\delta,k)$-wise uniform configuration/random-lift-of-$K_{d+1}$ model will with high probability satisfy:
\begin{equation}    \label{eqn:needed1}
    \bG_0 \text{ is bicycle-free at radius } \Omega(\log n_0) = \Omega(\sqrt{\log N}); \qquad \lambda(\bG_0) \leq 2\sqrt{d-1} + \eps.
\end{equation}
(Recall we are treating $d$ and $\eps$ as constant here.)  $\bG_0$ will also be simple with $\Omega(1)$ probability in the configuration model case, and with probability~$1$ in the random lift case.  In the former case, we need a $(\delta,k)$-wise permutation in $\symm{nd}$; in the latter case, we need $\binom{d+1}{2}$ independent $(\delta,k)$-wise permutations in $\symm{n}$.  Either way, \Cref{thm:knr} tells us that a deterministic algorithm can enumerate all possibilities for $\bG_0$ in $\poly(N)$ time and pick out any fixed simple one $G_0$ satisfying~\eqref{eqn:needed1}.

\paragraph{Step 2 details.} Here the algorithm will be applying \Cref{thm:keyA} some $t \sim \log_2 N$ times, starting with $G_0$, and each time interpreting the edge-signing produced as a $2$-lift as discussed in \Cref{sec:ideas}.  This produces a sequence of pseudorandom $d$-regular simple graphs $\bG_1, \dots, \bG_t$, where $\bG_i$ has $n_0 2^t$ vertices. The parameter $t$ is chosen to be least possible such that the final number of vertices, $N' = n_0 2^t$, is as at least~$N$.  It is not hard to check that by adjusting $n_0$ by a factor of at most~$2$, we can ensure that $N'/N = 1 + o_N(1)$, where the $o_N(1)$ term is $O(1/n_0) = 1/2^{\Theta(\sqrt{\log N})}$.  

For simplicity, we will use the same values for the parameters $r$, $k$, and $\delta$ in each application of  \Cref{thm:keyA}; only the value of $n$ will change (ranging from $n_0$ up to~$N'$).  We may take $r = \Omega(\sqrt{\log N})$, the bicycle-free radius from~\Cref{eqn:needed1} (observe that the bicycle-free radius cannot decrease for \emph{any} $2$-lift of a graph).  Note that the failure probability of any single $2$-lift is at most $1/2^{\Theta(\sqrt{\log N})}$, and hence a union bound tells us that the probability of \emph{any} of the $2$-lifts ``failing'' is low, $\frac{\log N}{2^{\Theta(\sqrt{\log N})}}$.  We take the parameter ``$k$'' to be $\Theta \left(\frac{\log N}{\sqrt{\eps}} \right)$  (the hidden constant sufficiently large depending on~$d$).  Finally, we take $\delta = 1/N^{\Theta(1/\sqrt{\eps})}$ (again with the hidden constant sufficiently large depending on~$d$).
% With these choices, $\eps_1$, $\eps_2$, and the $(\delta,2\ell)$-wise uniformity errors can all be made smaller than $c_d \sqrt{\eps}$ for any constant $c_d > 0$.
By plugging these parameters into \Cref{thm:keyA} we conclude that with high probability, all ``new'' eigenvalues arising in the $2$-lifted adjacency matrices $\bA_1, \dots, \bA_t$ are at most $2\sqrt{d-1} + \eps$ in magnitude, and hence $\bG_t$ is $\eps$-near Ramanujan.

It remains to observe that with these parameter settings, using \Cref{thm:nn93}, a deterministic algorithm can in $\poly(N/\delta) = \poly(N)$ time do the following:  First, produce a single $(\delta,2\ell)$-wise uniform multiset of strings~$Y \subseteq \{\pm 1\}^{N'd/4}$; here $N'd/4$ bits are sufficient to edge-sign/$2$-lift any of the graphs $\bG_i$.  Then, for $i=1,\dots,t$ the algorithm can search $Y$ for a ``good'' string $y_i \in Y$, meaning one with the property that using it to do an edge-signing/$2$-lift of $G_i$ yields graph $G_{i+1}$ which is $\eps$-near Ramanujan.   As argued in the previous paragraph, a $1-O\left(\frac{\log N}{2^{\Theta(\sqrt{\log N})}}\right)$ fraction of strings in $Y$ have this property.  We can check the goodness of any string~$y$ in $\poly(N)$ time using the following fact.
\begin{fact} \label{fact:check}
    For any rational approximation~$\rho$ of $2\sqrt{d-1} + \eps$, one can decide in $\poly(n)$ time whether $\lambda(G) \leq \rho$.
\end{fact}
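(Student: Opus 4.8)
The plan is to reduce the decision to a positive-semidefiniteness test for an explicitly constructed integer symmetric matrix, and then to perform that test exactly. Let $A$ be the adjacency matrix of $G$. Since $G$ is $d$-regular, $\vec{1}$ is an eigenvector of $A$ with eigenvalue $d$, so the hyperplane $\vec{1}^{\perp}$ is invariant under $A$ (hence under $A^2$), and the eigenvalues of $A$ restricted to $\vec{1}^{\perp}$ are exactly $\lambda_2 \geq \cdots \geq \lambda_n$. Writing $\rho = a/b$ in lowest terms with integers $a \geq 0$ and $b \geq 1$ (if $\rho < 0$ the answer is vacuously ``no'', as $\lambda(G) \geq |\lambda_n| \geq 0$), we then have
\[
    \lambda(G) \leq \rho
    \iff \max_{i \geq 2} |\lambda_i| \leq \rho
    \iff \max_{i \geq 2} \lambda_i^2 \leq \rho^2
    \iff \bigl(\rho^2 I - A^2\bigr)\big|_{\vec{1}^{\perp}} \succeq 0,
\]
using that $\lambda(G) = \max\{\lambda_2, |\lambda_n|\} = \max_{i\ge2}|\lambda_i|$ and that the operator in the last expression is self-adjoint. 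Now let $P$ be the $n \times (n-1)$ integer matrix whose columns are $e_1 - e_2,\, e_2 - e_3,\, \dots,\, e_{n-1} - e_n$ (a basis of $\vec{1}^{\perp}$, where $e_j$ is the $j$-th standard basis vector). Since the columns of $P$ span $\vec{1}^{\perp}$ and $b^2 > 0$, the last displayed condition is equivalent to positive-semidefiniteness of the $(n-1)\times(n-1)$ integer symmetric matrix $M := P^{\top}\bigl(a^2 I - b^2 A^2\bigr)P$.

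So it remains to test, in $\poly(n)$ time, whether a given integer symmetric matrix $M$ is positive semidefinite. Compute its characteristic polynomial
\[
    \chi_M(x) = \det(xI - M) = x^{n-1} - c_1 x^{n-2} + c_2 x^{n-3} - \cdots + (-1)^{n-1} c_{n-1};
\]
this has integer coefficients and is computable in $\poly(n)$ time, e.g.\ by evaluating $\det(jI - M)$ for $j = 0, 1, \dots, n-1$ via fraction-free Gaussian elimination and interpolating, all intermediate integers having $\poly(n)$ bit-length by Hadamard-type bounds. Because $M$ is symmetric, all roots of $\chi_M$ are real, and $c_i$ equals the $i$-th elementary symmetric polynomial of them. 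I claim $M \succeq 0$ if and only if $c_i \geq 0$ for every $i$. Indeed, if all eigenvalues of $M$ are nonnegative then each $c_i$ --- a sum of products of eigenvalues --- is nonnegative; conversely, if all $c_i \geq 0$ then $\chi_M(-t) = (-1)^{n-1}\bigl(t^{n-1} + c_1 t^{n-2} + \cdots + c_{n-1}\bigr) \neq 0$ for every $t > 0$, so $M$ has no negative eigenvalue and, all its eigenvalues being real, $M \succeq 0$. Hence the algorithm computes $\chi_M$ and outputs ``$\lambda(G) \leq \rho$'' precisely when $c_1, \dots, c_{n-1}$ are all nonnegative.

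There is no real obstacle; the fact is routine. The only points deserving a little care are carrying out all arithmetic exactly over $\mathbb{Z}$ (or $\mathbb{Q}$) while keeping bit-lengths polynomial --- standard for determinant and characteristic-polynomial computations --- and phrasing the semidefiniteness test in the purely arithmetic, exactly-checkable form above, so that the procedure returns a genuine yes/no answer rather than a floating-point estimate that might fail to resolve the borderline case $\lambda(G) = \rho$. (As an alternative, one could instead use a Sturm sequence to count exactly the real roots of $\det(xI - A)/(x - d)$ that lie outside $[-\rho, \rho]$ and check that there are none.)
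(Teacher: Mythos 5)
The paper states this fact without proof, evidently regarding it as a routine consequence of exact linear algebra; your argument supplies a correct and carefully-worked-out version of that routine. The reduction is sound at every step: $\vec{1}^\perp$ is indeed $A$-invariant for $d$-regular $G$, and the identity $\lambda(G) = \max\{\lambda_2,|\lambda_n|\} = \max_{i\ge2}|\lambda_i|$ holds because the former always picks out whichever of the largest non-trivial positive eigenvalue or most negative eigenvalue has greater magnitude. Passing to the basis $P$ of differences $e_i - e_{i+1}$ (full column rank, integer entries, spanning $\vec{1}^\perp$) gives an integer symmetric $M$ with $\poly(n,\log a,\log b)$-bit entries, so $(\rho^2 I - A^2)|_{\vec{1}^\perp} \succeq 0 \iff M \succeq 0$, and the PSD test via the coefficient-sign criterion for a real-rooted characteristic polynomial is a standard and correct exact test (the reverse implication you prove — no negative eigenvalues because $\chi_M(-t)\ne 0$ for $t>0$ — is exactly what's needed once one knows $M$ has a real spectrum). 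The exact-arithmetic considerations (Hadamard bounds, fraction-free elimination, and integer interpolation) are handled appropriately, and your Sturm-sequence alternative on $\det(xI - A)/(x-d)$ would also work. No gaps; this is essentially the argument the authors had in mind when they labeled the statement a ``Fact.''
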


This concludes the proof of \Cref{thm:us}.

\section*{Acknowledgments}
We would like to thank Michael Forbes and Zander Kelley for pointing out an error in an earlier version of the paper.

\bibliographystyle{alpha}
\bibliography{random-graph}

\appendix
\section{Simplicity} \label{sec:simple}
In the fully uniform configuration model, the probability of~$\bG$ being simple (i.e., being an ordinary graph with no self-loops or parallel edges) is known~\cite{BC78,Bol80} to tend to the constant \mbox{$\exp(-(d^2-1)/4)$}, as $n \to \infty$.  We establish that the $O(d^2)$-wise uniform configuration model suffices for this:
\begin{proposition}                                     \label{prop:simple-derand}
    Let $3 \leq d \ll \sqrt{\log n}$ and let $k \geq Cd^2$, where $C$ is a certain universal constant.  Let~$\bG$ be drawn from the $d$-regular $n$-vertex configuration model using a $k$-wise uniform permutation.  Then
    \[
        \Pr[\bG \textnormal{ is simple}] = e^{-(d^2-1)/4}(1 \pm e^{-100d^2}).
    \]
    By \Cref{thm:al13}, this remains true if the permutation is merely $(\delta,k)$-wise uniform, $\delta \leq n^{-C'd^2}$.%\rnote{I am aware I did not say that $C'$ needs to be a sufficiently large constant; doing so was so incredibly boring that I preferred to save space.}
\end{proposition}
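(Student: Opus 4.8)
The plan is to derandomize the classical method-of-moments proof that a configuration-model graph is simple with probability tending to $e^{-(d^2-1)/4}$ (see e.g.~\cite{BC78,Bol80,Wor99}), the point being that the only moments this proof uses are low-degree polynomials in the permutation indicators. Write $\bX_1$ for the number of self-loops of~$\bG$ and $\bX_2$ for the number of unordered pairs of parallel edges of~$\bG$, so that $\bG$ is simple exactly when $\bX_1 = \bX_2 = 0$. As functions of the configuration-model matching matrix~$\bM$, the variable $\bX_1$ is linear and $\bX_2$ is quadratic (for instance $\bX_1 = \sum_v \sum_{i<j}\bM_{(v,i),(v,j)}$), and each entry $\bM_{e,f}$ is a degree-$2$ polynomial in the indicators $1[\bpi(j)=(v,i)]$. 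Consequently, expanded over the choices of half-edges realizing the planted structures, the joint falling-factorial moment $(\bX_1)_{r_1}(\bX_2)_{r_2}$ is a polynomial of degree $2r_1+4r_2 = O(r_1+r_2)$ in these indicators, so a $k$-wise uniform permutation with $k \ge Cd^2$ reproduces the fully-uniform value of $\E[(\bX_1)_{r_1}(\bX_2)_{r_2}]$ \emph{exactly}, for every $r_1+r_2 \le O(d^2)$. In the fully uniform model the standard computation (an exact formula of the type of \Cref{eqn:bord24}, applied to the $O(1)$-vertex loop and double-edge structures, with its $(nd-1)(nd-3)\cdots$ denominators and distinctness corrections) gives
\[
    \E[(\bX_1)_{r_1}(\bX_2)_{r_2}] = \parens*{\tfrac{d-1}{2}}^{r_1}\parens*{\tfrac{(d-1)^2}{4}}^{r_2}\parens*{1 \pm O\parens*{\tfrac{(r_1+r_2)^2 d^{O(1)}}{n}}}
\]
uniformly over $r_1+r_2 \le \polylog n$; note $\tfrac{d-1}{2} + \tfrac{(d-1)^2}{4} = \tfrac{d^2-1}{4}$.

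With these two facts I would conclude by Bonferroni inclusion--exclusion: $\Pr[\bX_1 = \bX_2 = 0] = \sum_{r_1,r_2\ge 0}(-1)^{r_1+r_2}\E[\binom{\bX_1}{r_1}\binom{\bX_2}{r_2}]$, and truncating this alternating sum at $r_1+r_2 \le R$ incurs error at most $\sum_{r_1+r_2 > R}\E[\binom{\bX_1}{r_1}\binom{\bX_2}{r_2}] \le \sum_{s>R}\tfrac{1}{s!}\parens*{\tfrac{d^2-1}{4} + o(1)}^{s}$. Using $\mu^s/s! \le (e\mu/s)^s$ with $\mu = \tfrac{d^2-1}{4}$, this tail drops below $\tfrac13 e^{-100d^2}$ once $R = \Theta(d^2)$ with a large enough constant. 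Both this tail bound and the moment estimate above hold verbatim in the $k$-wise uniform model as soon as $k \ge Cd^2$ (with $C$ chosen to dominate $R$ and the $O(R)$ degree bound), and the $d^{O(1)}/n$ relative corrections are $o(e^{-101 d^2})$ because $d \ll \sqrt{\log n}$ forces $e^{-100d^2} = n^{-o(1)}$ while the corrections are $n^{-1+o(1)}$. Hence the truncated alternating sum --- a \emph{finite} quantity, identical in the two models --- evaluates to $\sum_{r_1+r_2 \le R}\tfrac{(-1)^{r_1+r_2}}{r_1!\,r_2!}\parens*{\tfrac{d-1}{2}}^{r_1}\parens*{\tfrac{(d-1)^2}{4}}^{r_2}(1 \pm o(e^{-101 d^2})) = e^{-(d^2-1)/4}(1 \pm \tfrac13 e^{-100 d^2})$ (the relative errors surviving even after summing the $\le e^{(d^2-1)/4}$ positive terms), and adding the two truncation errors gives $\Pr[\bG\textnormal{ simple}] = e^{-(d^2-1)/4}(1 \pm e^{-100 d^2})$. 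For the $(\delta,k)$-wise uniform version, \Cref{thm:al13} furnishes a truly $k$-wise uniform $\bpi'$ within total variation $O(\delta (nd)^{4k})$ of $\bpi$; with $k = O(d^2)$ and $\delta \le n^{-C'd^2}$ this is $n^{-\Omega(d^2)} = o(e^{-100 d^2})$, so the probability of simplicity is unchanged up to the stated error.

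The main obstacle is the quantitative uniformity in the second step: one must verify that the multiplicative error in the joint moment estimate stays below (a constant times) $e^{-100 d^2}$ for \emph{all} $r_1+r_2$ up to $R+1 = \Theta(d^2)$ simultaneously, and since both $R$ and $d^2$ grow with~$n$ this is not automatic --- it relies on the comfortable gap afforded by $d \ll \sqrt{\log n}$ (so $d^2 = o(\log n)$, whence $n^{-1+o(1)} \ll e^{-101 d^2}$). The remaining ingredients --- the exact configuration-model moment formula, and the degree count that makes $k$-wise uniformity sufficient --- are routine transcriptions of the classical argument.
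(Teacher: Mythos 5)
Your proposal is correct and follows the same method-of-moments derandomization as the paper. Both rely on the observation that the relevant binomial/falling-factorial moments are low-degree polynomials in the permutation indicators, so a $\Theta(d^2)$-wise uniform permutation reproduces them exactly; both then invoke the classical Bollob\'{a}s Poissonization (self-loops and 2-cycles with means $(d-1)/2$ and $(d-1)^2/4$) and a truncated inclusion–exclusion to land on $e^{-(d^2-1)/4}$. The only substantive difference is cosmetic: the paper works with the single random variable $\bX = \bX_1+\bX_2$, quotes Bollob\'{a}s's one-variable moment estimate~\cite[ineq.~(11)]{Bol80}, and uses the alternating-sum \emph{sandwiching} from the one-variable Bonferroni inequalities, whereas you work with the joint falling-factorial moments of $(\bX_1,\bX_2)$ and bound the truncation error by the tail via the triangle inequality. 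Both are valid here; the sandwiching is slightly tighter, but your triangle-inequality tail is more than strong enough once $R = \Theta(d^2)$ with a large enough constant. One small quantitative nit: you only state the tail drops below $\tfrac13 e^{-100d^2}$, but for the claimed multiplicative error $e^{-(d^2-1)/4}(1\pm e^{-100d^2})$ you actually need it below $\tfrac13 e^{-100d^2}\cdot e^{-(d^2-1)/4}$; since you already allow the constant in $R=\Theta(d^2)$ to be ``large enough'', this is absorbed, but it should be stated so the constants close. Your handling of the $(\delta,k)$-wise case via \Cref{thm:al13} matches the paper's.
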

The proof is a straightforward derandomization of Bollob\'as's original analysis of simplicity in the configuration model~\cite{Bol80}.  Unlike several later refinements that used the Chen--Stein method, Bollob\'{a}s's proof uses the method of moments, making it particularly convenient to derandomize using $k$-wise uniform permutations.
\begin{proof} [Proof of \Cref{prop:simple-derand}]
Let us recap Bollob\'{a}s's proof concerning an $n$-vertex $d$-regular configuration model graph~$\bG$ formed from a truly random permutation~$\bpi \sim \symm{nd}$.  He defines $\bX_1$ to be the number of self-loops in~$\bG$ (i.e., $\tfrac12 \tr(\bA)$), $\bX_2$ to be the number of $2$-cycles (i.e., $\sum_{v < v'} \binom{\bA_{v,v'}}{2}$), and $\bX = \bX_1 + \bX_2$.  Note that $\bG$ is simple if and only if~$\bX = 0$. The idea of the proof is that it is nearly the case that $\bX_1$, $\bX_2$ are independent Poisson random variables with respective means
\[
    \lambda_1 =  \lambda \cdot \frac{nd}{nd-1} = \lambda \cdot (1 \pm O(1/n)), \qquad
    \lambda_2 =  \lambda^2 \cdot \frac{nd \cdot (nd-d)}{(nd-1)\cdot (nd-3)} = \lambda^2 \cdot (1\pm O(1/n)),
\]
where $\lambda \coloneqq (d-1)/2$.  Thus $\bX$ should be nearly Poisson with mean $\lambda_1 + \lambda_2 \sim \lambda + \lambda^2 = (d^2-1)/4$, and hence we should have $\Pr[\bX = 0] \sim e^{-(d^2-1)/4}$.

More precisely, Bollob\'{a}s first establishes~\cite[ineq.~(11)]{Bol80} the following estimate for all integers $0 \leq r \leq 8 \log n$:
\begin{equation}    \label[ineq]{eqn:simple-mom}
    E_r \coloneqq \E\bracks*{\binom{\bX}{r}} \text{ satisfies } \abs*{E_r - \frac{(\lambda_1+\lambda_2)^r}{r!}} \leq \frac{(\lambda_1+\lambda_2)^r}{r!} \cdot O(r^2/n).
\end{equation}
(Actually, Bollob\'{a}s has $O((\log n)^2/n)$ on the right-hand side rather than $O(r^2/n)$, but inspection of his proof confirms the above.)  The key point for our proof of \Cref{prop:simple-derand} is that \Cref{eqn:simple-mom} continues holds when the permutation $\bpi \in \symm{nd}$ defining~$\bG$ is merely $4r$-wise uniform.  This is simply because $\binom{\bX}{r}$ is a polynomial of degree at most~$4r$ in the indicators $1[\bpi(j) = (v,i)]$.  Thus to complete the proof, it suffices to derive the conclusion
\begin{equation}    \label{eqn:p0}
    \Pr[\bX = 0] = e^{-(d^2-1)/4}(1 \pm e^{-100d^2})
\end{equation}
from the estimates in \Cref{eqn:simple-mom} with $r = O(d^2)$.  This can be done exactly as in Bollob\'{a}s's work.  He uses the following inclusion-exclusion-type inequality, which holds (for any $u \in \N$) due to $\bX$ being $\N$-valued:
\begin{equation}    \label[ineq]{eqn:lhsrhs}
    \sum_{r=0}^{2u+1} (-1)^r E_r \leq \Pr[\bX = 0] \leq \sum_{r=0}^{2u} (-1)^r E_r.
\end{equation}
Notice that $E_r \approx \frac{(\lambda_1+\lambda_2)^r}{r!}$, and
\begin{equation}    \label{eqn:exact}
    \sum_{r=0}^\infty (-1)^r \frac{(\lambda_1+\lambda_2)^r}{r!} = e^{-(\lambda_1 + \lambda_2)} = e^{-(\lambda + \lambda^2)\cdot (1\pm O(1/n))} = e^{-(d^2 - 1)/4}\cdot (1 \pm O(d^2/n));
\end{equation}
also, $O(d^2/n) \ll e^{-100d^2}$ since $d \ll \sqrt{\log n}$.  Thus we can establish \Cref{eqn:p0} by bounding the two errors distinguishing the infinite sum in  \Cref{eqn:exact} from the sums on the left- and right-hand side of \Cref{eqn:lhsrhs}.  The two distinctions are: the error in $E_r \approx \frac{(\lambda_1+\lambda_2)^r}{r!}$, boundable using \Cref{eqn:simple-mom}; and, the tail of the infinite sum from $2u$ or $2u+1$ onward.  In absolute value, these two errors are boundable by:
\[
    O(1/n) \cdot \sum_{r=0}^{2u \text{ or } 2u+1} \frac{(\lambda_1+\lambda_2)^r}{r!} \cdot r^2, \quad \text{and} \quad \sum_{r=2u+1 \text{ or } 2u+2}^{\infty}\frac{(\lambda_1+\lambda_2)^r}{r!}.
\]
The first quantity above can be bounded by $O(u^2/n) \cdot e^{\lambda_1 + \lambda_2}$, and the second quantity can be bounded by $O((\lambda_1 + \lambda_2)^{2u+1}/(2u+1)!)$ provided $u \geq \lambda_1+\lambda_2$.  Recalling $\lambda_1 + \lambda_2 = \Theta(d^2)$ and $d \ll \sqrt{\log n}$, we see that by taking $u = O(d^2)$ sufficiently large, both errors can be made much smaller than $e^{-100d^2}$, and we obtain \Cref{eqn:p0} with $r = O(d^2)$ as needed.
\end{proof}

\section{The probabilistically $\poly\log n$-time computable construction}
\label{sec:prob-strong-explicit}

We now walk through the steps of \Cref{sec:final-construction} giving precise parameter details along the way, and extract a probabilistically $\poly\log n$-time computable construction of near-Ramanujan graphs.

Assume we are given $N$, $3\le d \le \frac{\left(\log N\right)^{1/8}}{C}$ and $\epsilon \gg \frac{(\log\log N)^4}{\log N}\cdot\sqrt{d}$ where $C$ is the constant from the statement of \Cref{thm:derand-bord}.

\paragraph{Revisiting Step 1.}  Choose parameters as follows: $\alpha=1/\sqrt{d+1\choose 2}$; $n_0$ as the largest multiple of $d+1$ smaller than $2^{\alpha\sqrt{\log N}}$; $k=C\alpha\sqrt{\log N}\cdot d^{1/4}/\sqrt{\epsilon}$ (which is $\approx \log n_0$); and $\delta=1/N^{8k+1}$.  Recall that the key result used in this step is that by \Cref{thm:derand-bord}, $\bG_0$ drawn from the $n_0$-vertex $(\delta,k)$-wise random-lift-of-$K_{d+1}$ model is a simple graph that with high probability satisfies:
\begin{align}
        \bG_0 \text{ is bicycle-free at radius } \Omega\parens*{\frac{\alpha\sqrt{\log N}}{\log(d-1)}}; \qquad \lambda(\bG_0) \leq 2\sqrt{d-1} + \eps.
\end{align}
As an upshot of \Cref{thm:knr}, $\bG_0$ can be sampled using $\bs$, a uniform binary string of length ${O\parens*{\frac{\log N\cdot d^{1/4}}{\sqrt{\epsilon}}}}$ as a seed.  In particular, $\bs$ is divided into ${d+1\choose 2}$ disjoint substrings $\bs_{e_1},\dots,\bs_{e_{d+1\choose 2}}$ each of length $\ell_1=O\left(\frac{\alpha^2\log N\cdot d^{1/4}}{\sqrt{\epsilon}}\right)$ indexed by edges of $K_{d+1}$;  the $(\delta,k)$-wise uniform permutation $\bpi_{uv}$ corresponding to edge $(u,v)$ is taken to be the $\bs_{uv}$th permutation in the multiset of permutations~$\Pi$ from the statement of \Cref{thm:knr}.  Additionally, given $\bs$ and a vertex $(u,i)\in V(\bG_0)$, it is possible to return a list of its neighbors in time $T_1=O\parens*{d \cdot \poly\parens*{\frac{\alpha^2 \log N \cdot d^{1/4}}{\sqrt{\epsilon}}}}$.

\paragraph{Revisiting Step 2.}  Let $t=\ceil*{\log\parens*{\frac{N}{n_0}}}$; let $\beta$ be a large enough constant; let $k=\frac{2\beta d^{1/4}}{\sqrt{\epsilon}}\log N$; and let $\delta=N^{-O(\beta d^{1/4}\log d/\sqrt{\epsilon})}$.  The main result used in Step 2 is that from \Cref{thm:keyA} the graphs $\bG_{1},\dots,\bG_t$ where $\bG_{i}$ is obtained via a $2$-lift of $\bG_{i-1}$ induced by a $(\delta,k)$-wise uniform signing have their nontrivial eigenvalues bounded by $2\sqrt{d-1}+\epsilon$ in magnitude, except with probability $O(t/n_0^{100})$.  From \Cref{thm:nn93}, a $(\delta,k)$-wise uniform signing of any $\bG_i$ can be obtained by first sampling a random binary string $\bs'$ of length $\ell_2=O\left(\frac{d^{1/4}\log d\cdot\log N}{\sqrt{\epsilon}}\right)$ and choosing the $\bs$th string in the multiset of signings $Y$ from the theorem statement.  In fact, given $\bs'$ and edge $e\in\bG_i$ one can also output the sign assigned to edge $e$ in time $T_2=\poly\left(\beta d^{1/4}\log d\log N/\sqrt{\epsilon}\right)$.  Finally, by the union bound, the bound of $O(t/n_0^{100})$ on the probability that $\bG_t$ is not $\epsilon$-near Ramanujan holds if we use independently chosen seeds $\bs_1,\dots,\bs_t$ to perform the $2$-lifts.  Note that $t<\log N$ and $n_0\ge 2^{(\log N)^{1/4}}$ and hence the failure probability is $o_N(1)$.

\paragraph{Probabilistically strongly explicit near-Ramanujan graphs.}  Given a uniform binary string $\bs$ of length $\ell_1+t\cdot\ell_2$ as a random seed, call the substring given by the first $\ell_1$ bits $\bs_1$ and the substring given by the next $t\cdot\ell_2$ bits $\bs_2$.  Let $\bG_0$ be sampled from $\bs_1$ as described in Step 1, and let $\bG_t$ be the ``final graph'' obtained by the sequence of $2$-lifts in Step 2 from $\bs_2$.  Each vertex in $\bG_i$ can be naturally identified with a tuple $(v,a,x) \in [d]\times[n_0]\times\{0,1\}^{i}$.  Let $x$ be a string in $\{0,1\}^t$, let $x^{\le i}$ denote its $i$-bit prefix.  Given a vertex $(v,a,x)$ in $\bG_t$ and seeds $\bs_1$ and $\bs_2$, we describe an algorithm to output a list of its $d$ neighbors in $\wt{O}(T_1+dT_2)$-time where the $\wt{O}(\cdot)$ hides factors of $\polylog N$.  From Step 1, we know that there is an $T_1$-time algorithm to output a list of $d$ neighbors of $(v,a,x^{\le 0})$ in $\bG_0$.

Next, given a list of neighbors of $(v,a,x^{\le i-1})$ in $\bG_{i-1}$ it is possible to output a list of neighbors of $(v,a,x^{\le i})$ in $\bG_i$ in $\wt{O}(dT_2)$-time in the following way.  Let $(w,b,y)$ be a neighbor of $(v,a,x^{\le i-1})$.  Then exactly one of $(w,b,y\wedge 0)$ and $(w,b,y\wedge 1)$ is a neighbor of $(v,a,x^{\le i})$ where $\wedge$ denotes concatenation.  It is possible to obtain the sign on edge $\{(v,a,x^{\le i-1}),(w,b,y)\}$ in the $2$-lift from $\bG_{i-1}$ to $\bG_i$ in $T_2$ time from $\bs_2$.  If the sign is a $-1$, then $(w,b,y\wedge(1-x_i))$ is a neighbor of $(v,a,x^{\le i})$; otherwise $(w,b,y\wedge x_i)$ is a neighbor.  Thus, in $\wt{O}(dT_2)$ time, we can obtain a length-$d$ (and hence complete) list of neighbors of $(v,a,x^{\le i})$.

As a result, after spending $T_1$ time generating a list of neighbors of $(v,a,x^{\le 0})$, we can use the above routine $t$ times to obtain a list of neighbors of $(v,a,x)$ in $\bG_t$ in $T_1+t\cdot\wt{O}(dT_2) \le \wt{O}(T_1+dT_2)$.  From the upper and lower bounds on $d$ and $\epsilon$, this quantity is always $O(\polylog N)$.

To summarize, we have an algorithm that takes in a random seed of length $O\left(\frac{d^{1/4}\log d\cdot\log^2 N}{\sqrt{\epsilon}}\right)$ and implements the adjacency matrix of a corresponding random graph $\bG$ such that:
\begin{itemize}
    \item Given any vertex $v$ of $\bG$, its list of neighbors can be generated in $O(\polylog N)$ time.
    \item $\bG$ is $\epsilon$-near Ramanujan with probability $1-o_N(1)$.
\end{itemize}
This yields the conclusion of \Cref{thm:prob-strong-explicit}.

\ignore{
\subsection{Older material written by Pedro}

\paragraph{Step 2 details.}

Let's first detail the first step of this construction. We need a
$(m^{-8Cd^{1/4}\log(m)/\sqrt{\epsilon}},
C\log(m)d^{1/4}/\sqrt{\epsilon})$-wise uniform permutation of length $dm$
with $m = 2^{(\log \log N)^4}$, to apply theorem
\Cref{thm:derand-bord} and produce a simple graph with $m$ vertices,
which succeeds with probability at least $\exp(−O(d^2))$\pnote{I'm
  ignoring how $\epsilon$ has to be at least
  $C \cdot (\log\log m / \log_{d-1} m)^2$, but we can easily control
  this by choosing something like
  $m = 2^{(1/\epsilon) (\log \log N)^4}$. I might change this, but it
  will introduce extra technicalities}. From \Cref{thm:knr}, this
takes time
$\poly(m^{d^{1/4} \log(m) / \sqrt{\epsilon}}) = 2^{\poly(\log \log N)}
= o(\poly(N))$ to construct and requires a seed of length
$O((d^{1/4} \log(m) / \sqrt{\epsilon}) \log(m)) = O((\log\log N)^8) =
o(\log N)$. Hence, we can safely enumerate over all possible seeds in
sub-polynomial time and a constant fraction (in particular a non-zero
fraction) will satisfy the desired conditions. Recall
\Cref{fact:check}, which tells us we can safely check $\lambda$ is
small enough in $\poly(N)$ time (also note that checking
$r$-bicycle-freeness for any $r$ and simplicity is trivially done in
$\poly(N)$ time).

As a result of the above, we get a $d$-regular graph with
$\lambda \leq 2\sqrt{d - 1} + \epsilon$. Let's denote this graph by
$G_0$. Also, observe that $G_0$ is bicycle-free at radius
$O \left(\log \left( 2^{(\log \log N)^4} \right) \right) = O((\log
\log N)^4) \gg (\log\log N)^2$, so we can apply \Cref{thm:keyA} on the
next step.

Now on the second step of the construction we are iteratively applying
\Cref{thm:keyA} to boost the number of vertices to the required
amount. Let's denote by $G_i$ the graph obtained after the $i$th
iteration, so that $G_i$ has $|V_i| = 2^i m$ vertices. First of all,
since each iteration doubles the number of vertices in the graph, if
we perform it
$\left \lceil \log N - (\log \log N)^4 \right \rceil$ times, we
get a graph of size
$\left \lceil 2^{(\log \log N)^4} \right \rceil \cdot \left \lceil
  2^{\log N - (\log \log N)^4} \right \rceil = N(1 + 1/(\log \log
N)^4) = N(1 + o(1))$. Furthermore, observe that given that $G_0$ is
$(\log\log N)^4$-bicycle-free then all the $G_i$ will also be, since
they form a chain of 2-lifts. Given this setup, we have to prove two
facts: that this can be done in polynomial time; that this provides
the right $\lambda$ bound.

\textbf{Proof of polynomial time:} To produce $G_i$ we find a good
2-lift of a graph ($G_{i - 1}$). So, we need a binary string of length
$d 2^{i - 2} m$ of $\left (N^{-O(C \log d)}, 2C \log \left (2^{i - 1}
m \right)
\right)$-wise uniform bits (for some $C = C(N)$ to be picked later),
in order to apply \Cref{thm:keyA}, which succeeds with probability at
least $1 - (2^{i - 1} m)^{-100}$. According to \Cref{thm:nn93} this
takes time
$\poly(d 2^{i - 2} m \cdot N^{O(C \log d)}) = \poly(N^{C \log d})$ to
construct (note that $2^{i - 2} m = O(N)$) and requires a seed of
length $O((C \log d) \log N)$ per iteration. Hence, we can safely
enumerate over all possible seeds in time $\poly(N^{C \log d})$ and a
$1 - o(1)$ fraction (in particular a non-zero fraction) will satisfy
the desired conditions. Since we do at most $O(\log N)$ iterations the
total running time of the full construction is still
$\poly(N^{C \log d})$.

\textbf{Proof of $\lambda$ bound:} \Cref{thm:keyA} gives that the new
eigenvalues of $G_i$ are bounded by:
$2\sqrt{d - 1} + O \left(1 / (\log\log N)^4 \right ) + O(\sqrt{d} /
C^2)$. By picking $C = O(d^{1/4} / \sqrt{\epsilon})$ we get the
desired bound of $2\sqrt{d - 1} + \epsilon$.

Note that our choice of $C$ gives a final running time of
$\poly(N^{d^{1/4} \log d / \sqrt{\epsilon}})$, and this concludes the
proof of \Cref{thm:us}.

\subsection{The probabilistically strongly explicit construction}

We now finally prove \Cref{thm:prob-strong-explicit}, but first we point out a small caveat. Even though the theorem statement requires a construction where all but a $o(1)$ fraction of possible seeds result in a simple graph $G$ with the desired eigenvalue bound, the proof we present cannot achieve both simplicity and an $o(1)$ fraction of such good seeds. The reason for this lies in \Cref{sec:bordenave-derand}, since a graph sampled from the configuration model (or any $(\delta, k)$-wise uniform configuration model) is simple with constant probability, as seen in the proof of \Cref{thm:derand-bord}. Note, however, that the remaining two properties, namely bicycle-freeness at radius $r = c\log_{d-1}n$ and the eigenvalue bound, both hold with probability $1 - o(1)$.

Given this caveat, we offer two solutions to the reader: first, we note that if simplicity is not required then our construction holds with probability $1 - o(1)$. We do not prove that our construction still works in this scenario, the proof is straightforward but technical. Alternatively, we note that a different way of doing the first step of our construction is to do the following: derandomize a random $n$-lift of a $K_{d + 1}$, using Bordenave~\cite{Bor19}. It is a fact that a random $n$-lift of a $K_{d + 1}$ has all eigenvalues except the trivial one bounded by $2\sqrt{d - 1} + \epsilon$, for any $\epsilon > 0$, with high probability. The proof of this is very similar to the one we derandomize in \Cref{sec:bordenave-derand}, so for the sake of briefness we will not redo it. The benefit of using the lifts model is that lifting $K_{d + 1}$ never produces non-simple graphs, so the the rest of our construction still applies with the correct probability.

The rest of this section is devoted to prove \Cref{thm:us} but with the above caveat, that is, we shall ignore the simplicity constraint.

As before, assume we are given $N, d \geq 3, \epsilon > 0$. The structure of our construction is the same as the weakly explicit one, we simply make every step probabilistically strongly explicit, so recall the plan detailed right in the beginning of \Cref{sec:final-construction} and also the notation $G_i$ for the different iterations. As before, the number of vertices in the resulting graph is $\left \lceil 2^{(\log \log N)^4} \right \rceil \cdot \left \lceil 2^{\log N - (\log \log N)^4} \right \rceil$, so denote this quantity by $N'$. For simplicity we assume that the two previous factors are powers of 2, so, for instance, when we write $(\log\log N)^4$ we actually mean $\log \left \lceil 2^{(\log \log N)^4} \right \rceil$. Let's also assume we are given a seed $s \in \{0, 1\}^{O(\log N)}$, some $u \in [N']$ and $i \in [d]$. We will describe a $\poly(n)$ time algorithm (recall $N = 2^n$) that outputs the $i$th neighbor of vertex $u$ in $G(s)$, where $G(s)$ is $d$-regular graph that satisfies $\lambda(G(s)) \leq 2\sqrt{d - 1} + \epsilon$ for a $1 - o(1)$ fraction of the possible seeds $s$.

We start by considering $G_0$, which requires a $(m^{-8Cd^{1/4}\log(m)/\sqrt{\epsilon}}, C\log(m)d^{1/4}/\sqrt{\epsilon})$-wise uniform permutation $\bpi$ of length $dm$. As we saw before, this only requires a seed of length $o(\log N)$. Now, consider the binary form of the identifiers of each vertex in $G$. We interpret the first $(\log\log N)^4$ bits of a vertex in $G$ as the vertex in $G_0$ that gets lifted/copied to it. Let's denote by $u_0$ the vertex in $G_0$ that corresponds to $u$. Each of the remaining bits corresponds to one of the 2-lift iterations, namely which vertex copy does this vertex pertain to. We can use a similar interpretation scheme for the bits of each edge identifier.

In our construction, we set the $i$th neighbor of $u$ to be some copy of the $i$th neighbor of $u_0$ produced by the several 2-lifts we apply, so in particular we need to find this neighbor of $u_0$. Recall \Cref{def:config-model}, particularly how $\bpi$ is defined. Applying \Cref{thm:knr}, we can find $\bpi^{-1}(u_0, i)$ \pnote{We have to update \Cref{thm:knr} with the inverse permutation thing} in time
$\poly((\log(m)d^{1/4}/\epsilon) \log(m^{d^{1/4}\log(m)/\sqrt{\epsilon}})) = \poly(\log m) = o(\log N)$. Let $a = \bpi^{-1}(u_0, i)$. Now, if $a$ is odd then $\bpi(a + 1) = (v_0, j)$, where $v_0$ is the desired neighbor of $u_0$ (and $j$ indicates that $u_0$ is the $j$th neighbor of $u_0$). Similarly, if $a$ is even, $\bpi(a - 1) = (v_0, j)$. Note that $\bpi(a \pm 1)$ can also be found in time $o(\log N)$, according to \Cref{def:config-model}. Also, we have the same $\lambda$ guarantees as in \Cref{sec:final-construction}.

All that is left is to find the correct lifted copy of $v_0$, which we will describe inductively/recursively on the lift iterations given by \Cref{thm:keyA}. Let's denote by $u_j$ the vertex in $G_j$ that gets lifted to $u$ and define $v_0$ analogously. According to our interpretation of the binary form of $u_j$, this is just the first $(\log\log N)^4 + j$ first bits of $u$. So, suppose we know $u_j$ and $v_{j - 1}$, we shall see how to find $v_j$.

First, let $\kappa = d^{1/4} \log d / \sqrt{\epsilon}$, which was the polynomial exponent we obtained in the weakly explicit construction. We need a binary string $\by$ of length $d 2^{i - 2} m$ of $\left (N^{-O(\kappa)}, O(d^{1/4}/\sqrt{\epsilon}) \log \left (2^{i - 1}m \right) \right)$-wise uniform bits in order to apply \Cref{thm:keyA}. As we saw before, this requires a seed of length $O(\kappa \log N)$. Applying \Cref{thm:nn93}, we can find the $i$th bit of $\by$ in time $\polylog(N^{O(\kappa)}) = O(\kappa) \polylog(N)$, which we denote by $\by_i$. Suppose that the edge between $u_{j - 1}$ and $v_{j - 1}$ is the $a$th edge of $G_{j - 1}$. Notice how, since we have a similar interpretation scheme for the bits of edge identifiers, finding $a$ is analogous to finding $u_j$. Finally, given $\by_a$ it is trivial to find $v_j$: let $b$ be the last bit of $u_j$, if $\by_a$ is \texttt{0}, then $v_j = v_{j - 1} \cdot b$ otherwise $v_j = v_{j - 1} \cdot (\neg b)$, where $\cdot$ represents binary concatenation.

Our desired output vertex is $v_{\log N - (\log\log N)^4}$, the vertex we find after the last 2-lift iteration.

To finish our proof, we need to do some bookkeeping. Finding $v_0$ requires $o(\log N)$ time and a seed of length $o(\log N)$, so we can assign a $o(\log N)$ portion of our overall given seed $s$ to this step. Also, this step succeeds with probability at least $1 - m^{.99}$, according to \Cref{thm:derand-bord}. To find each $v_j$ with $j > 0$, we need time $O(\kappa) \polylog(N)$ and a seed of length $O(\kappa \log N)$. Each iteration succeeds with probability at least $1 - m^{-100}$, thus we can just reuse the same seed of length $O(\kappa \log N)$. We find the total probability of success by union bound: the total \textit{failure} probability is at most $\sum^{\log N} m^{-100} \leq m^{-100} \log N = o(1)$. Thus the total probability of success is $1 - o(1)$, or in other words, for a $1 - o(1)$ fraction of seeds $s$ this construction succeeds. Finally, note that the total running time is $O(\kappa) \polylog(N)$.

Since, we have the same $\lambda$ guarantees as in \Cref{sec:final-construction}, this concludes the proof.
}

\end{document}